\documentclass[12pt]{article} 
\usepackage[authoryear]{natbib}
\usepackage[resetlabels]{multibib}
\newcites{S}{References}  
\usepackage{setspace}
\usepackage{etoolbox}
\usepackage{adjustbox}
\usepackage{epsfig,color,amsmath,amssymb,euscript}
\usepackage{mathabx} 
\usepackage{subfigure}
\usepackage{fancyhdr}
\usepackage{rotating}
\usepackage{amsmath} 
\usepackage{amsfonts} 
\usepackage{fancyhdr}
\usepackage{lscape}
\usepackage{amsthm}
\usepackage{enumerate} 
\usepackage{threeparttable}
\usepackage{diagbox}
\usepackage{multirow}
\usepackage{makecell}
\usepackage{xcolor}
\usepackage{colortbl} 
\usepackage{url}  
\usepackage{bbm}
\usepackage{booktabs} 
\usepackage{graphicx} 
\usepackage{enumerate}
\usepackage{mathrsfs}
\usepackage{verbatim}
\usepackage{blkarray}
\usepackage{cancel}
 
 \usepackage{graphicx}

\usepackage{authblk} 
\usepackage{threeparttable}
\usepackage{caption}

\usepackage{algorithm,algcompatible} 
 
\AtBeginEnvironment{thebibliography}{\setstretch{1}}

\allowdisplaybreaks

\usepackage[colorlinks=true,
            linkcolor=blue,
            anchorcolor=blue,
            citecolor=blue,
            urlcolor=blue]{hyperref}

\DeclareMathOperator*{\argmax}{\arg\!\max} 
\algnewcommand\INPUT{\item[\textbf{Input:}]} 
\algnewcommand\OUTPUT{\item[\textbf{Output:}]} 
\makeatletter 
\newcommand{\fdsy@scale}{1.0}
\newcommand\fdsy@mweight@normal{Book}
\newcommand\fdsy@mweight@small{Book}
\newcommand\fdsy@bweight@normal{Medium}
\newcommand\fdsy@bweight@small{Medium}

\DeclareFontFamily{U}{FdSymbolC}{}

\DeclareFontShape{U}{FdSymbolC}{m}{n}{
	<-7.1> s * [\fdsy@scale] FdSymbolC-\fdsy@mweight@small
	<7.1-> s * [\fdsy@scale] FdSymbolC-\fdsy@mweight@normal
}{}
\DeclareFontShape{U}{FdSymbolC}{b}{n}{
	<-7.1> s * [\fdsy@scale] FdSymbolC-\fdsy@bweight@small
	<7.1-> s * [\fdsy@scale] FdSymbolC-\fdsy@bweight@normal
}{}

\DeclareSymbolFont{arrows}{U}{FdSymbolC}{m}{n}
\SetSymbolFont{arrows}{bold}{U}{FdSymbolC}{b}{n}
 
\DeclareMathSymbol{\upvDash}{\mathrel}{arrows}{233}

\DeclareMathSymbol{\upmodels}{\mathrel}{arrows}{237}
\makeatother

\def\T{{\mathrm{\scriptscriptstyle \top} }}

\theoremstyle{definition}

\newcommand{\bA}{{\mathbf A}}
\newcommand{\bB}{{\mathbf B}}

\newcommand{\bX}{{\mathbf X}}

\newcommand{\bZ}{{\mathbf Z}}

\newcommand{\bI}{{\mathbf I}}
\newcommand{\bz}{{\mathbf z}}

\newcommand{\bH}{{\mathbf H}}
\newcommand{\bS}{{\mathbf S}}

\newcommand{\bE}{{\mathbf E}}

\newcommand{\bu}{{\mathbf u}}
\newcommand{\bv}{{\mathbf v}}
\newcommand{\bw}{{\mathbf w}}

\newcommand{\bh}{{\mathbf h}}
\newcommand{\bg}{{\mathbf g}}

\newcommand{\bs}{{\mathbf s}}
\newcommand{\ba}{{\mathbf a}}
\newcommand{\bb}{{\mathbf b}}

\newcommand{\be}{{\mathbf e}}

\newcommand{\bchi}{{\boldsymbol \chi}} 
\let\smallhat\hat
\let\hat\widehat

\let\tilde\widetilde
\let\smalltilde\originaltilde 
\let\check\widecheck
\newcommand{\RR}{{\mathbb R}}
\newcommand{\EE}{{\mathbb E}}
\newcommand{\PP}{{\mathbb P}}
\newcommand{\BB}{{\mathbb B}}
\newcommand{\GG}{{\mathbb G}}
 
\newcommand{\sn}{\sum_{i=1}^n} 
\newcommand{\cL}{\mathcal L}
\newcommand{\cX}{\mathcal X}

\newcommand{\cE}{\mathcal E} 
\newcommand{\cF}{\mathcal F}
\newcommand{\cS}{\mathcal S}
\newcommand{\cG}{\mathcal G}
\newcommand{\cN}{\mathcal N}
\newcommand{\cM}{\mathcal M}
\newcommand{\SSS}{\mathbb S}

\ifx\theorem\undefined
\newtheorem{theorem}{Theorem}
\fi
\ifx\example\undefined
\newtheorem{example}{Example}
\fi
\ifx\property\undefined
\newtheorem{property}{Property}
\fi
\ifx\lemma\undefined
\newtheorem{lemma}{Lemma}
\fi
\ifx\proposition\undefined
\newtheorem{proposition}{Proposition}
\fi
\ifx\remark\undefined
\newtheorem{remark}{Remark}
\fi
\ifx\corollary\undefined
\newtheorem{corollary}{Corollary}
\fi
\ifx\definition\undefined
\newtheorem{definition}{Definition}
\fi
\ifx\conjecture\undefined
\newtheorem{conjecture}{Conjecture}
\fi
\ifx\fact\undefined
\newtheorem{fact}{Fact}
\fi
\ifx\claim\undefined
\newtheorem{claim}{Claim}
\fi
\ifx\assumption\undefined
\newtheorem{assumption}{Assumption}
\fi
\ifx\cond\undefined
\newtheorem{cond}{Condition}
\fi

\newcommand{\bx}{\mathbf{x}}

\newcommand{\bbeta}  {\boldsymbol{\beta}}

\newcommand{\bdelta} {\boldsymbol{\delta}}

\newcommand{\bOmega}{\boldsymbol{\Omega}}

\newcommand{\bSigma}{\boldsymbol{\Sigma}}

\newcommand{\bPsi} {\boldsymbol{\Psi}}

\newcommand{\bxi} {\boldsymbol{\xi}}

\newcommand{\bzeta} {\boldsymbol{\zeta}}

\newcommand{\bzero}{{\mathbf 0}}

\newcommand{\bXi}{\boldsymbol{\Xi}}

\newcommand{\var}{\mbox{Var}}

\def\spacingset#1{\renewcommand{\baselinestretch} 
	{#1}\small\normalsize} \spacingset{1}

\newcommand{\blind}{1}
\makeatletter

\newcommand{\nn}{\nonumber}

\def\singlespace{\def\baselinestretch{1}\@normalsize}

\def\wt{\widetilde}

\begin{document} 
\if1\blind
{
\spacingset{1.25}
  \title{\bf \Large Adapting to Noise Tails in Private Linear Regression}
\author[1,2]{Jinyuan Chang}
\author[1]{Lin Yang}
\author[3]{Mengyue Zha}
\author[4]{Wen-Xin Zhou}

\affil[1]{\it \small Joint Laboratory of Data Science and Business
Intelligence, Institute of Statistical Interdisciplinary Research, Southwestern University of Finance and Economics, Chengdu, China}
\affil[2]{\it \small State Key Laboratory of Mathematical Sciences, Academy of Mathematics and Systems Science, Chinese Academy of Sciences, Beijing, China} 
\affil[3]{\it \small Department of Mathematics, Hong Kong University of Science and Technology, Hong Kong}
\affil[4]{\it \small College of Business Administration, University of Illinois Chicago, Chicago, IL, USA}

		\setcounter{Maxaffil}{0}
		
		\renewcommand\Affilfont{\itshape\small}
	 
		\date{\vspace{-5ex}}
		\maketitle
	 
	} \fi
	\if0\blind
	{
		\bigskip
		\bigskip
		\bigskip
            
            \begin{center}
			{
			\Large \bf Adapting to Noise Tails in Private Linear Regression
			}
		\end{center}
       
		\medskip
	} \fi 
\spacingset{1}
\begin{abstract}
While the traditional goal of statistics is to infer population parameters, modern practice increasingly demands protection of individual privacy. One way to address this need is to adapt classical statistical procedures into privacy-preserving algorithms. In this paper, we develop differentially private tail-robust methods for linear regression. The trade-off among bias, privacy, and robustness is controlled by a tunable robustification parameter in the Huber loss. We implement noisy clipped gradient descent for low-dimensional settings and noisy iterative hard thresholding for high-dimensional sparse models. Under sub-Gaussian errors, our method achieves near-optimal convergence rates while relaxing several assumptions required in earlier work. For heavy-tailed errors, we explicitly characterize how the non-asymptotic convergence rate depends on the moment index, privacy parameters, sample size, and intrinsic dimension. Our analysis shows how the moment index influences the choice of robustification parameters and, in turn, the resulting statistical error and privacy cost. By quantifying the interplay among bias, privacy, and robustness, we extend classical perspectives on privacy-preserving robust regression. The proposed methods are evaluated through simulations and two real datasets.
\end{abstract}

\noindent {\sl Keywords}:  Differential privacy, heavy-tailed error, Huber regression, iterative hard thresholding, linear model, sparsity.

\spacingset{1.69}
\setlength{\abovedisplayskip}{0.2\baselineskip}
\setlength{\belowdisplayskip}{0.2\baselineskip}
\setlength{\abovedisplayshortskip}{0.2\baselineskip}
\setlength{\belowdisplayshortskip}{0.2\baselineskip}

\section{Introduction}
\label{sec:1} 

The increasing demand for privacy-preserving statistical methods has brought differential privacy \citep{dwork2006calibrating} to the forefront of statistical research. Informally, a differentially private (DP) algorithm ensures that an attacker cannot determine whether a particular data point is present in the dataset. Pioneering works, such as those by \cite{hardt2010}, \cite{chaudhuri2012convergence}, \cite{duchi2013local}, and \cite{duchi2018minimax},  quantified the cost of privacy in a range of statistical estimation problems. More recently, the role of privacy in statistical inference has been rigorously investigated \citep{sheffet2017differentially, cai2017priv,  awan2018differentially,  karwa2017finite,chang2024}. Interestingly, even prior to these developments, \cite{nissim2007} and \cite{dwork2009robust} recognized a fundamental connection between privacy and robustness \citep{hampel1986}. In particular, \cite{dwork2009robust} introduced the propose-test-release (PTR) framework, which derives DP algorithms from principles of robust statistics. This philosophy of leveraging robustness to achieve privacy has since motivated a series of follow-up works, including \cite{lei2011differentially}, \cite{smith2011}, \cite{chaudhuri2012convergence}, \cite{avella2021inference}, \cite{liu2022differential}, and \cite{GDP-mean}, to name a few. Specifically, \cite{avella2021inference} and \cite{liu2022differential} focused on robustness against small fractions of contamination in the data, whereas \cite{GDP-mean} considered robustness to heavy-tailed sampling distributions.

In this work, we focus on both linear and sparse linear regressions, which are among the most fundamental statistical problems and serve as building blocks for more advanced methodologies. Consider the linear model $y = \bx^{\T}\bbeta^* + \varepsilon$, where $y \in \RR$ is the response variable, $\bx \in \RR^{p}$ denotes the (random) vector of covariates, $\bbeta^* \in \RR^{p}$ is the unknown vector of regression coefficients, and $\varepsilon$ is a mean-zero error variable. Assuming that $\varepsilon$ follows a normal or sub-Gaussian distribution, significant progress has been made in developing DP ordinary least squares (OLS) estimators for $\bbeta^*$ since the works of \cite{vu2009differential} and \cite{kifer2012private}. When the sample size $n$ is much larger than $p$, several studies have made notable advances in terms of accuracy, sample efficiency, and computational efficiency \citep{sheffet2017differentially, wang2018revisiting, sheffet2019old, cai2021cost, varshney2022nearly, brown2024insufficient}. In the case where the covariates and noise are independent and both possess finite moments of polynomial order, \cite{liu2022differential} established the existence of a DP algorithm. However, their proposed high-dimensional PTR algorithm is not computationally efficient. Furthermore, \cite{kifer2012private} and \cite{cai2021cost} have extended DP OLS methods to high-dimensional settings.

In the presence of heavy-tailed errors, the finite-sample performance of the OLS estimator becomes sub-optimal, particularly in terms of how its error bounds depend on $\delta$, which may scale logarithmically or polynomially at a confidence level of $1-\delta$ \citep{catoni2012}. From a privacy-preserving perspective, the construction of DP procedures relies critically on the notion of sensitivity \citep{dwork2006calibrating}, which is closely tied to the boundedness of the loss function’s gradient \citep{bassily2014private}. This characteristic makes OLS-based methods inadequate, as their gradients exhibit heavy tails, which prevents them from being properly bounded. To achieve both robustness against heavy-tailed error distributions and strong privacy guarantees, we consider employing loss functions with bounded derivatives, such as the well-known Huber loss \citep{huber1973robust}. Many other loss functions, particularly smoothed variants of the Huber loss, share desirable properties such as Lipschitz continuity and local strong convexity. In this work, we focus on the Huber loss to more effectively illustrate the core ideas, without pursuing purely technical generalizations. When the noise variable is independent of the covariates and follows a symmetric distribution, using the Huber loss with a fixed parameter (independent of the data scale) is typically sufficient. In more general settings with heteroscedasticity or asymmetry, \cite{fanliwang2017} and \cite{sun2020adaptive} introduced adaptive Huber regression, in which the robustification parameter $\tau$ is adjusted according to the sample size $n$, dimensionality $p$, and noise scale to balance bias and robustness effectively. We provide a brief review of Huber regression in Section~\ref{sec:adaptive-huber} and Section~\ref{review-Huber-reg} of the supplementary material. Building on these works, we aim to develop a unified DP tail-robust regression framework that applies to `general' linear regression models in both low- and high-dimensional settings. Here, `generality' refers to the minimal assumptions that $\EE(\varepsilon \,|\, \bx) = 0$ and $\EE(\varepsilon^2 \,|\, \bx) \leq \sigma_0^2$, without requiring independence between $\varepsilon$ and $\bx$, or symmetry of the noise distribution. To elucidate the effect of tail behavior on the privacy cost, we explicitly characterize the choice of the robustification parameter when the noise variable exhibits either bounded higher-order moments or sub-Gaussian tails. This parameter offers new insight by effectively bridging the trinity of robustness (quantified via non-asymptotic tail bounds), bias (arising from skewness in the response or error distribution), and privacy.

In low-dimensional settings where $n\gg p$, we propose a DP Huber regression estimator implemented via noisy clipped gradient descent, with noise carefully calibrated to ensure the desired privacy guarantee. Our method builds on the framework of private empirical risk minimization (ERM) via gradient perturbation \citep{bassily2014private}. This line of work, together with earlier studies based on objective function perturbation \citep{chaudhuri2011differentially, kifer2012private}, provides utility guarantees in terms of excess risk while preserving privacy, under the assumption that the loss function satisfies specific convexity and differentiability conditions.  In particular, the objective function perturbation framework requires an objective composed of a convex loss with bounded derivative and a differentiable, strongly convex regularizer. The  Huber loss, by contrast, has a derivative bounded in magnitude by $\tau$, which is allowed to increase with the sample size, and neither the low- or high-dimensional Huber regressions employ a strongly convex penalty. As a result, the objective function perturbation method proposed by \cite{chaudhuri2011differentially} is not applicable to our setting. Under standard assumptions such as strong convexity of the loss function and boundedness of the parameter space, \citet{wang2020} and \citet{kamath2022improved} developed general theoretical frameworks for DP stochastic convex optimization problems with heavy-tailed data. \cite{avella2021differentially} incorporated robust statistics into noisy gradient descent to facilitate DP estimation and inference. Nevertheless, the presence of the robustification parameter $\tau$ introduces additional complexity. Existing techniques and theoretical results do not seamlessly extend to this setting, particularly under general noise distributions. To address this gap, we adapt the framework of \cite{avella2021differentially}, explicitly quantifying the influence of  $\tau$ on both the statistical error and the error induced by privacy constraints. We first show that the DP Huber estimator lies, with high probability, in a small neighborhood of its non-private counterpart. We then establish its convergence rate to the true parameter $\bbeta^*$ in the $\ell_2$-norm. Beyond standard differential privacy, we further evaluate the performance of the DP Huber estimator under the framework of Gaussian differential privacy (GDP) \citep{dong2021gaussian}. Compared to $(\epsilon, \delta)$-DP (see Definition \ref{def.DP} in Section \ref{sec:2}), GDP has attracted growing attention in the statistics community due to  
its elegant interpretation of privacy through the lens of hypothesis testing \citep{WassermanZhouDP}.

In high-dimensional settings where $\|\bbeta^*\|_0 \ll \min\{n, p\}$, ensuring privacy becomes more challenging due to the sparse structure of $\bbeta^*$. \cite{wang2019ijcai} and \cite{cai2021cost} proposed a noisy variant of the iterative hard thresholding (HT) algorithm \citep{blumensath2009iterative, jain2014iterative}  for implementing a privatized sparse OLS estimator. 
They demonstrated that, under Gaussian errors and certain conditions on the covariates, their methods may achieve (near-)optimal statistical performance when applied to sparse linear models. In contrast, DP sparse regression that is robust to heavy-tailed errors has remained largely understudied until recent work by \cite{liu2022differential} and \cite{hu2022high}. \cite{liu2022differential} adapted soft thresholding with the absolute loss and $\ell_1$-regularization to achieve a convergence rate that scales with $\sqrt{p}$. \cite{hu2022high} proposed truncating the original data at a fixed threshold to mitigate the influence of heavy-tailed distributions. However, neither \cite{liu2022differential} nor \cite{hu2022high} achieved the optimal convergence rate that accounts for sparsity. Moreover, the interaction among bias, privacy, and robustness remains insufficiently explored. To bridge this gap, we propose a sparse DP Huber estimator based on noisy iterative hard thresholding. Our analysis establishes the convergence rate of the sparse DP Huber estimator to the true parameter $\bbeta^*$ in the $\ell_2$-norm and examines the bias-privacy-robustness triad. Extending our approach to other robust $M$-estimators and iterative algorithms for $\ell_0$-constrained $M$-estimation \citep[e.g.,][]{liu2021high,she2023} is feasible but beyond the scope of the current work and left for future investigation.

Alongside statistical performance, we examine the interaction among estimation accuracy, privacy, and robustness.   
We demonstrate that the robustification parameter $\tau$ affects global sensitivity and, consequently, the privacy cost. In other words, $\tau$ governs the trade-off among bias, privacy, and robustness.  Our theoretical results suggest choosing $\tau$ based on the effective sample size under privacy constraints. Similar to \cite{barber2014privacy} and \cite{kamath2020private}, we study how the moment condition parameter $\iota \geq 0$ influences this trade-off when the noise variables have bounded $2+\iota$ moments. We find that $\iota$ affects the optimal choice of $\tau$, which in turn impacts both the statistical error and the privacy cost. In high-dimensional settings, we further investigate how sparsity shapes the bias-privacy-robustness trade-off. By quantifying the effects of this trade-off on convergence rates and sample size requirements, we provide a new perspective on the relationship between privacy and robustness. This perspective complements prior work that explores the interplay between these two properties \citep{dwork2009robust, avella2021inference, liu2021robust, georgiev2022privacy, asi2023, liu2023labelrobust, hopkins2023robustness}.

The rest of the paper is organized as follows. Section~\ref{sec:2} provides a brief review of $(\epsilon, \delta)$-DP and $\epsilon$-GDP. Section~\ref{sec:3} presents a noisy gradient descent algorithm for low-dimensional DP Huber regression and a noisy iterative hard thresholding method for sparse Huber regression in high dimensions. Section~\ref{sec:4} provides theoretical guarantees for DP Huber regression, including comprehensive non-asymptotic convergence results under either polynomial-moment or sub-Gaussian noise, followed by a discussion of the bias-privacy-robust trade-off. As a byproduct, we also establish the statistical convergence of the sparse  Huber estimator (in the absence of privacy constraints), computed via iterative HT. Section~\ref{sec:5} demonstrates the proposed methods on simulated datasets. The real data analysis, all proofs, and the construction of DP confidence intervals are provided in the supplementary material.  The used
real data and the code for implementing our proposed methods are available at the GitHub repository: \href{https://github.com/JinyuanChang-Lab/DifferentiallyPrivateHuberRegression}{https://github.com/JinyuanChang-Lab/DifferentiallyPrivateHuberRegression}.

{\it Notation}. For every integer $k \geq 1$, denote by $\RR^{k}$ the $k$-dimensional Euclidean space.
Let $\|\bu\|_1$, $\|\bu\|_2$ and $\|\bu\|_{\infty}$ denote the $\ell_1$-norm, the $\ell_2$-norm and the $\ell_{\infty}$-norm of the vector $\bu$, respectively. We denote the (pseudo) $\ell_0$-norm of $\bu=(u_1,\ldots,u_k)^{\T}$ as $ \|\bu\|_0 = \sum_{i=1}^k \mathbbm{1}( u_i \neq 0)$, where $\mathbbm{1}(\cdot)$ is the indicator function. Write $\mathbb{S}^{k-1} := \{\bu \in \RR^k : \|\bu\|_2 = 1\}$, $[k] := \{1, \ldots, k\}$, and $\mathbb{B}^k(r):=\{\bu  \in \RR^k: \|\bu\|_2\leq r\}$.  For a subset $S\subseteq [k]$ with cardinality $|S|$ and a $k$-dimensional vector $\bu \in \RR^k$, we write $\bu_S \in \RR^k$ as the vector obtained by setting all entries in $\bu$ to $0$, except for those indexed by $S$. The inner product of any two vectors $\bu = (u_1, \ldots, u_k)^{\T}$ and $\bv = (v_1, \ldots, v_k)^{\T}$ is defined by $\langle \bu, \bv \rangle = \sum_{i=1}^k u_i v_i$. For a symmetric  matrix $\bA\in \RR^{k \times k}$, we denote the smallest and largest eigenvalues of $ \bA $ by $\lambda_{\min}(\bA)$ and $\lambda_{\max}(\bA)$, respectively. For two sequences of non-negative numbers $\{a_n\}_{n\geq 1}$ and $\{b_n\}_{n\geq 1}$, we say $a_n \lesssim b_n$ or $b_n\gtrsim a_n$ if there exists a constant $C>0$ independent of $n$ such that $a_n\leq Cb_n$.  We write $a_n \asymp b_n$ if $a_n \lesssim b_n$ and $b_n \lesssim a_n$ hold simultaneously, and $a_n\ll b_n$ or $b_n\gg a_n$ if $\lim_{n\rightarrow\infty} a_n/b_n = 0$.   For a    matrix $\bA\in \RR^{k \times k}$, let $\|\bA\|$ denote the spectral norm of $\bA$. For any two matrices  $\bA$ and $\bB$,  we write $\bA \succeq \bB$  if $\bA-\bB$ is positive semi-definite. For any $x\in\RR$, we write $\lceil x\rceil  = \inf\{y \in \mathbb{Z}:y \geq x\}$ and $\Phi(x)=(2\pi)^{-1/2} \int_{-\infty}^x e^{-u^2/2}\, {\rm d} u$.

\section{Background on Differential Privacy}
\label{sec:2}

Differential privacy was originally introduced to provide a formal framework for data privacy. It ensures that a randomized mechanism $\mathcal{M}$ produces similar output distributions for datasets $\bX$ and $\bX'$ that differ by only a single data point. Intuitively, this implies that an attacker cannot determine whether a particular data point $x$ is included in the dataset $\bX$ based on the output of the mechanism. The formal definition is given below.

\begin{definition}[\cite{dwork2006calibrating}]
\label{def.DP}
A randomized mechanism $\cM:\cX\to\RR^d$ is said to be $(\epsilon, \delta)$-DP if, for every pair of adjacent datasets $\bX, \bX'\in\cX$ that differ in exactly one data point, and for every measurable subset $S\subseteq\RR^d$, it holds that 
$\PP\{ \cM(\bX)\in S\} \leq e^\epsilon\cdot\PP\{ \cM(\bX')\in S \} +\delta$. 
\end{definition}

In addition to its privacy guarantees, differential privacy is valued for the simplicity and versatility in the design of private algorithms. Typically, such algorithms are constructed by adding random noise to the output of a non-private algorithm.  Among the various types of noise, Gaussian and Laplace noise are the most commonly used; see Theorems 3.6 and 3.22 of \cite{dwork2014algorithmic}. Let $\cM$ be an algorithm that maps a dataset $\bX$ to $\RR^{d}$. For any $q\geq 1$, the $\ell_q$-sensitivity of $\cM$ is defined as ${\rm sens}_q(\cM) =  \sup_{\bX, \bX'} \| \cM(\bX) - \cM(\bX') \|_q$, where the supremum is taken over all pairs of datasets $\bX$ and $\bX'$ that differ by a single data point.

\begin{lemma}(Gaussian mechanism) 
\label{glmechanism} 
Assume sens$_2(\cM)\leq B$ for some $B>0$. Then $\cM(\bX)+\bg$, with $\bg\sim \cN(0,\sigma^2\bI_{d})$ and $\sigma = B \epsilon^{-1}\sqrt{2 \log(1.25/\delta)}$, is $(\epsilon,\delta)$-DP. 
\end{lemma}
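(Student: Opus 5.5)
The plan is to reduce the statement to a one-dimensional tail bound on the privacy loss random variable. Fix adjacent datasets $\bX,\bX'$, write $\mu=\cM(\bX)$, $\mu'=\cM(\bX')$, $\Delta=\mu-\mu'$, so that $\|\Delta\|_2\le B$. The output distributions are $P=\cN(\mu,\sigma^2\bI_d)$ and $Q=\cN(\mu',\sigma^2\bI_d)$. For $z=\mu+\bg$ the log-likelihood ratio is
\[
L(\bg)=\log\frac{dP}{dQ}(\mu+\bg)=\frac{\|\bg+\Delta\|_2^2-\|\bg\|_2^2}{2\sigma^2}=\frac{2\langle \bg,\Delta\rangle+\|\Delta\|_2^2}{2\sigma^2}.
\]
By rotation invariance, $\langle\bg,\Delta\rangle\sim\cN(0,\sigma^2\|\Delta\|_2^2)$. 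Hence $L\sim \cN(\|\Delta\|_2^2/(2\sigma^2),\|\Delta\|_2^2/\sigma^2)$, and the problem becomes one-dimensional. It is enough to take $\|\Delta\|_2=B$; smaller shifts only make $L$ stochastically smaller, or one can simply rescale.

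The next step is the standard decomposition. Let $E=\{\bg: L(\bg)\le\epsilon\}$. For any measurable $S$,
\[
P(S)\le P(S\cap E)+P(E^c)\le e^\epsilon Q(S)+\PP\{L>\epsilon\},
\]
because on $E$ the density ratio is at most $e^\epsilon$. So the whole proof comes down to showing $\PP\{L>\epsilon\}\le\delta$. Writing $L=B^2/(2\sigma^2)+(B/\sigma)Z$ with $Z\sim\cN(0,1)$, we need
\[
\PP\Bigl\{Z>\frac{\sigma\epsilon}{B}-\frac{B}{2\sigma}\Bigr\}\le\delta .
\]
Substituting $\sigma=B\epsilon^{-1}\sqrt{2\log(1.25/\delta)}$, and setting $c=\sqrt{2\log(1.25/\delta)}$, the threshold becomes $t=c-\epsilon/(2c)$.

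The main obstacle is verifying the Gaussian tail inequality $\PP\{Z>t\}\le\delta$ with the constant $1.25$, following Dwork and Roth (Appendix A). I would use the Mills ratio bound $\PP\{Z>t\}\le (t\sqrt{2\pi})^{-1}e^{-t^2/2}$. Expanding,
\[
t^2/2 = c^2/2-\epsilon/2+\epsilon^2/(8c^2)\ge\log(1.25/\delta)-\epsilon/2 .
\]
This gives $\PP\{Z>t\}\le \delta\, e^{\epsilon/2}/(1.25\, t\sqrt{2\pi})$. It then remains to check that $e^{\epsilon/2}\le 1.25\,t\sqrt{2\pi}$. This is where the regime assumption $\epsilon\in(0,1)$ of Dwork and Roth enters, which the lemma implicitly presupposes. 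With $\delta$ small enough that $c\ge$ about $1$, we get $t\ge c-1/2$, which makes the inequality comfortable. For the edge range of $\delta$, I would do a direct numerical check.

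Finally, I would note that the $\delta$-slack bound holds for every $S$ and every adjacent pair. Since the same argument works with the roles of $\bX$ and $\bX'$ swapped, this yields $(\epsilon,\delta)$-DP in the sense of Definition~\ref{def.DP}.
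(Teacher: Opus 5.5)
Your argument is correct and is essentially the standard Dwork--Roth proof: privacy-loss random variable, reduction to one dimension by rotation invariance, and a Mills-ratio tail bound. The paper does not reproduce this proof but simply invokes it via Theorem~3.22 of Dwork and Roth. Your one-sided requirement $\PP\{L>\epsilon\}\le\delta$ is a bit cleaner than the two-sided bound on the privacy loss used there. You are also right that $\epsilon\le 1$ must be added to the statement; without some such restriction the claim is false for large $\epsilon$.

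The only fix needed is wording. $L$ is not stochastically monotone in $r=\|\Delta\|_2$, because its variance also shrinks with $r$. What you actually use, that $\PP\{L>\epsilon\}$ is nondecreasing in $r$, does hold, since the threshold $\sigma\epsilon/r-r/(2\sigma)$ is decreasing in $r$.
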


The construction of multi-step differential private algorithms is further enhanced by the post-processing property \citep{dwork2014algorithmic} and the composition property \citep{dwork2014algorithmic, kairouz2017} of differential privacy. The composition property characterizes the evolution of privacy parameters under composition. Heuristically, the post-processing property ensures that no external entity can undo the privatization.

\begin{lemma} 
\label{post}
(Post-processing property). Let $\mathcal{M}$ be an $(\epsilon,\delta)$-DP algorithm, and let $g$ be an arbitrary deterministic mapping that takes the output of $\mathcal{M}$ as an input. Then $g(\mathcal{M}(\bX))$ is also $(\epsilon, \delta)$-DP.
\end{lemma}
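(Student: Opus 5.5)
The argument goes through the definition of $(\epsilon,\delta)$-DP (Definition~\ref{def.DP}) directly. The complication is that $g$ is applied to the output of $\mathcal{M}$, so an event for $g(\mathcal{M}(\bX))$ has to be pulled back to an event for $\mathcal{M}(\bX)$. Suppose $g$ maps $\RR^d$ into some measurable space $\mathcal{Y}$, and assume $g$ is measurable; this is the implicit meaning of "deterministic mapping" here, since otherwise the probabilities below are not defined. Fix adjacent datasets $\bX,\bX'$ and a measurable set $T\subseteq\mathcal{Y}$. Put $S := g^{-1}(T)=\{\bu\in\RR^d: g(\bu)\in T\}$. By measurability of $g$, $S$ is a measurable subset of $\RR^d$.

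The key step is the event identity
\[
\{g(\mathcal{M}(\bX))\in T\}=\{\mathcal{M}(\bX)\in S\}.
\]
Combining this with the DP guarantee for $\mathcal{M}$ applied to the set $S$ gives
\[
\PP\{g(\mathcal{M}(\bX))\in T\}=\PP\{\mathcal{M}(\bX)\in S\}\le e^{\epsilon}\,\PP\{\mathcal{M}(\bX')\in S\}+\delta = e^{\epsilon}\,\PP\{g(\mathcal{M}(\bX'))\in T\}+\delta .
\]
Since $\bX,\bX'$ and $T$ were arbitrary, $g\circ\mathcal{M}$ is $(\epsilon,\delta)$-DP.

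The only subtlety is measurability: we must make sure the preimage $S$ is a legitimate test set in Definition~\ref{def.DP}, and we handle this by assuming $g$ is Borel measurable. If one also wants to allow randomized $g$ whose randomness is independent of the data, the extension is short. Write $g(\bu)=h(\bu,W)$ with $W$ independent of $\mathcal{M}$, apply the deterministic case for each fixed value $w$, and integrate over the law of $W$ using Fubini's theorem. The additive $\delta$ survives this averaging unchanged, because it integrates to itself against a probability measure.
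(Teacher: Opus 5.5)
Your proof is correct and is essentially the standard argument. The paper gives no proof of its own; it cites Dwork and Roth (2014), whose post-processing proposition pulls the test set back through the deterministic map exactly as you do, and treats randomized maps as mixtures of deterministic ones, which matches your Fubini remark. Your explicit measurability assumption on $g$ is the usually unstated hypothesis that makes $g^{-1}(T)$ an admissible test set in Definition~\ref{def.DP}.
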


\begin{lemma} 
\label{standardcompositiontheorem}
(Standard composition property \citep{dwork2014algorithmic}). Let $\cM_1$ and $\cM_2$ be $(\epsilon_1,\delta_1)$-DP and $(\epsilon_2,\delta_2)$-DP algorithms, respectively. The composition $\cM_1\circ \cM_2$ is $(\epsilon_1+\epsilon_2, \delta_1+\delta_2)$-DP.
\end{lemma}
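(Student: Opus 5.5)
Here the composition $\cM_1\circ\cM_2$ means releasing the pair of outputs $(\cM_1(\bX),\cM_2(\bX))$, with independent internal randomness; the adaptive version, where $\cM_2$ may also take the output of $\cM_1$, goes the same way by conditioning. The plan is a direct coupling argument on the joint output. Fix adjacent datasets $\bX,\bX'$ and a measurable set $S$ in the product space. The aim is
\[
\PP\{(\cM_1(\bX),\cM_2(\bX))\in S\}\le e^{\epsilon_1+\epsilon_2}\,\PP\{(\cM_1(\bX'),\cM_2(\bX'))\in S\}+\delta_1+\delta_2 .
\]

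The key tool is a reformulation of $(\epsilon,\delta)$-DP in terms of measures. Write $P=\mathcal{L}(\cM_1(\bX))$, $P'=\mathcal{L}(\cM_1(\bX'))$, and similarly $Q,Q'$ for $\cM_2$. Definition~\ref{def.DP} says $P(A)\le e^{\epsilon_1}P'(A)+\delta_1$ for all $A$. This is equivalent to a decomposition $P=\mu+\nu$, where $\mu,\nu$ are nonnegative measures with $\mu\le e^{\epsilon_1}P'$ setwise and $\nu(\text{total})\le\delta_1$. To build it, take densities $p,p'$ with respect to $P+P'$ and set $\mu$ to have density $\min(p,e^{\epsilon_1}p')$. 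Then
\[
\nu(\text{total})=\int (p-e^{\epsilon_1}p')_+=P(A^\ast)-e^{\epsilon_1}P'(A^\ast)\le\delta_1,\qquad A^\ast=\{p>e^{\epsilon_1}p'\}.
\]
Decompose $Q=\mu_2+\nu_2$ in the same way.

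With these decompositions, expand
\[
(P\otimes Q)(S)=(\mu\otimes\mu_2)(S)+(\mu\otimes\nu_2)(S)+(\nu\otimes Q)(S).
\]
Bound each term separately:
\begin{itemize}
\item The first term is at most $e^{\epsilon_1+\epsilon_2}(P'\otimes Q')(S)$, since product measures are monotone in each factor.
\item The second term is at most $\mu(\text{total})\,\nu_2(\text{total})\le\delta_2$, using $\mu(\text{total})\le P(\text{total})=1$.
\item The third term is at most $\delta_1$.
\end{itemize}
Summing the three bounds gives the claim. In the adaptive case, apply the decomposition for $\cM_2(\cdot,y)$ uniformly in $y$ and integrate via Fubini; measurability of $y\mapsto$ the decomposition holds because its density is defined by a measurable formula.

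The main obstacle is the equivalence step, specifically producing the measure $\mu$ rather than merely a setwise inequality. The setwise bound alone does not multiply across the product because of the additive $\delta$. Once the pointwise-minimum density construction is in place, the rest is bookkeeping. Induction on the number of mechanisms then gives the $k$-fold version.
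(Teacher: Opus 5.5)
Your proof is correct. The density $\min(p,e^{\epsilon_1}p')$ gives $\mu\le e^{\epsilon_1}P'$ and $\mu\le P$. The mass of $\nu$ equals $P(A^\ast)-e^{\epsilon_1}P'(A^\ast)\le\delta_1$, and the three-term expansion of $P\otimes Q$ closes the non-adaptive case.

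The paper gives no proof of this lemma; it only cites Dwork and Roth. The useful comparison is therefore with the usual short argument, which needs only half of your machinery.

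Let $Q_y,Q'_y$ be the laws of $\cM_2(\bX,y)$ and $\cM_2(\bX',y)$, with $Q_y\equiv Q$ in the non-adaptive case. Let $S_y=\{z:(y,z)\in S\}$ and $g(y)=Q_y(S_y)$. Then the joint output on $\bX$ lands in $S$ with probability $\int g\,\mathrm{d}P$.

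Privacy of $\cM_2(\cdot,y)$ gives the pointwise bound $g(y)\le\min\{1,e^{\epsilon_2}Q'_y(S_y)\}+\delta_2$. The $(\epsilon_1,\delta_1)$-DP inequality extends from indicators to any measurable $f$ with $0\le f\le 1$, via $\int f\,\mathrm{d}P=\int_0^1P(f>t)\,\mathrm{d}t$. Taking $f(y)=\min\{1,e^{\epsilon_2}Q'_y(S_y)\}$,
\[
\int g\,\mathrm{d}P\le\int f\,\mathrm{d}P+\delta_2\le e^{\epsilon_1}\int f\,\mathrm{d}P'+\delta_1+\delta_2\le e^{\epsilon_1+\epsilon_2}\int Q'_y(S_y)\,P'(\mathrm{d}y)+\delta_1+\delta_2 ,
\]
and the last integral is the corresponding probability for $\bX'$.

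So the decomposition of $\cM_2$ is superfluous. For $\cM_1$, your $\mu/\nu$ split is one way to get the $[0,1]$-valued form of the DP inequality, since $\int f\,\mathrm{d}\nu\le\delta_1$. The obstacle you identify is therefore just the upgrade from indicator sets to $[0,1]$-valued test functions on the first factor, and the layer-cake formula does that in one line.

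The one-sided route pays off in the adaptive case. That is the case the paper actually uses: each iterate of Algorithms~\ref{alg:DPHuberlow} and~\ref{alg:DPHuberhigh} is computed from the data and the previous private iterate.

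Your version needs the densities of $Q_y$ and $Q'_y$ with respect to $Q_y+Q'_y$ to be jointly measurable in $(y,z)$. Only then is $y\mapsto\mu_{2,y}(S_y)$ measurable, and ``defined by a measurable formula'' presupposes such a version. One does exist here because the Borel $\sigma$-field of $\RR^d$ is countably generated, but that is a measurable-version theorem for Radon--Nikodym derivatives of kernels, and you would have to cite it.

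The argument above needs only that $y\mapsto Q_y(S_y)$ and $y\mapsto Q'_y(S_y)$ are measurable. This follows from $\cM_2$ being a Markov kernel by a monotone-class argument. In exchange, your symmetric decomposition makes the non-adaptive case pure product-measure bookkeeping, with no kernels at all.
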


\begin{lemma} 
\label{compositiontheorem}
(Advanced composition property \citep{dwork2014algorithmic}). If $T$  algorithms are run sequentially, each satisfying $(\epsilon,\delta)$-DP, then for any $\delta'>0$, the combined procedure is $(\epsilon\sqrt{2T\log(1/{\delta'})}+T\epsilon(e^{\epsilon}-1),T\delta+\delta')$-DP.
\end{lemma}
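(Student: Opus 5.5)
The advanced composition property is the one in Theorem 3.20 of \cite{dwork2014algorithmic}, and we follow the privacy-loss random variable argument used there. Fix adjacent datasets $\bX,\bX'$ and view the $T$-fold adaptive composition as producing a transcript $\mathbf{o}=(o_1,\ldots,o_T)$. The $t$-th mechanism is allowed to depend on $o_1,\ldots,o_{t-1}$. The cumulative privacy loss is
\[
L(\mathbf{o})=\log\frac{\PP\{\mathbf{o}\mid \bX\}}{\PP\{\mathbf{o}\mid \bX'\}}=\sum_{t=1}^T L_t,\qquad L_t=\log\frac{\PP\{o_t\mid o_{1:t-1},\bX\}}{\PP\{o_t\mid o_{1:t-1},\bX'\}} .
\]
It suffices to show that, outside an event of probability at most $T\delta+\delta'$ under $\bX$, we have $L\le \epsilon'$ with $\epsilon'=\epsilon\sqrt{2T\log(1/\delta')}+T\epsilon(e^\epsilon-1)$. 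The standard conversion then gives $(\epsilon',T\delta+\delta')$-DP: for any measurable $S$, split $S$ into the good part, where the density ratio is at most $e^{\epsilon'}$, and the bad part, whose mass is at most $T\delta+\delta'$.

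\textbf{Step 1: reduce each mechanism to pure DP plus a $\delta$-event.} By the characterization of $(\epsilon,\delta)$-DP (Lemma 3.17 of \cite{dwork2014algorithmic}), for each $t$ and each history there are distributions within total variation $\delta$ of the true conditional output laws under $\bX$ and $\bX'$, respectively, whose likelihood ratio is bounded by $e^{\pm\epsilon}$. Couple each conditional output with these surrogates. With probability at least $1-T\delta$ (union bound over $t$), every step is governed by a pure $\epsilon$-DP pair.

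\textbf{Step 2: bound the conditional mean and range of each loss.} On this event, $|L_t|\le\epsilon$ almost surely. For a pure $\epsilon$-DP pair we also need the KL bound $\EE[L_t\mid o_{1:t-1}]\le \epsilon(e^\epsilon-1)$. This follows from the symmetrized identity $D(P\|Q)\le D(P\|Q)+D(Q\|P)=\int (p-q)\log(p/q)$, together with $|\log(p/q)|\le\epsilon$ and $\int|p-q|\le e^\epsilon-1$ relative to $q$.

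\textbf{Step 3: concentration.} The centered terms $L_t-\EE[L_t\mid o_{1:t-1}]$ form a martingale difference sequence, each lying in an interval of length $2\epsilon$. Azuma--Hoeffding gives
\[
\PP\Bigl\{\textstyle\sum_t L_t > T\epsilon(e^\epsilon-1)+\epsilon\sqrt{2T\log(1/\delta')}\Bigr\}\le \delta'.
\]
Combining this with the $T\delta$ failure probability from Step 1 yields the claim.

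The main obstacle is Step 1. Making the decomposition of approximate DP into pure DP plus a small-probability event rigorous under adaptivity requires a careful coupling, so that the bad events accumulate only additively to $T\delta$. Since this is a known result (Theorem 3.20 of \cite{dwork2014algorithmic}), in the paper we would simply cite it rather than reprove it.
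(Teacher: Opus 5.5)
The paper gives no proof of this lemma; it is quoted from \cite{dwork2014algorithmic}, so your closing decision to cite Theorem~3.20 there matches the paper. As a proof sketch, however, your argument has a genuine gap in Step~1 and in the final combination.

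You define $L_t$ through the \emph{true} conditional laws and then claim $|L_t|\le\epsilon$ on the event where each output is coupled to its surrogate. Coupling does not change the true likelihood ratio at the realized output, and the surrogate can put mass on points where that ratio exceeds $e^{\epsilon}$. In fact your intermediate target, $\PP_{\bX}(L>\epsilon')\le T\delta+\delta'$, is false in general. Take $T=1$ and a mechanism that outputs $a$ with probability $\delta_1$ under $\bX$ and $\delta_1e^{-K}$ under $\bX'$ (and $b$ otherwise), with $K>\epsilon$. For small $\delta_1$ this is $(\epsilon,\delta)$-DP with $\delta=\delta_1(1-e^{\epsilon-K})$. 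Yet $\PP_{\bX}(L>\epsilon')=\delta_1>\delta+\delta'$ whenever $K>\epsilon'$ and $\delta_1e^{\epsilon-K}>\delta'$. Both conditions can hold for small $\delta'$, because $\epsilon'$ grows only like $\sqrt{\log(1/\delta')}$.

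So Azuma must be applied to the surrogate loss under the surrogate transcript law, and the result then transferred back to the true laws. With surrogates on \emph{both} sides that are only $\delta$-close in total variation, that transfer gives
\[
\PP_{\bX}(S)\le\PP'_{\bX}(S)+T\delta\le e^{\epsilon'}\PP'_{\bX'}(S)+\delta'+T\delta\le e^{\epsilon'}\PP_{\bX'}(S)+(1+e^{\epsilon'})T\delta+\delta',
\]
which is weaker than the claimed $T\delta+\delta'$.

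The missing idea is to perturb only the $\bX$ side. Let $p,q$ be the conditional densities at some step, and clip $p$ into the band $[e^{-\epsilon}q,e^{\epsilon}q]$. This removes mass $\int(p-e^{\epsilon}q)_+\le\delta$ and adds mass $e^{-\epsilon}\int(q-e^{\epsilon}p)_+\le\delta$. The total can then be restored to one inside the band, since $e^{-\epsilon}\le1\le e^{\epsilon}$ leaves room. The result $p'$ satisfies $\mathrm{TV}(p,p')\le\delta$ and $e^{-\epsilon}\le p'/q\le e^{\epsilon}$, where $q$ is the \emph{true} $\bX'$-conditional.

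Write $\PP'_{\bX}$ for the transcript law with every $\bX$-conditional replaced in this way. Your Steps~2--3 then apply verbatim to $\tilde L=\log(d\PP'_{\bX}/d\PP_{\bX'})$ under $\PP'_{\bX}$: the increments lie in $[-\epsilon,\epsilon]$ and the conditional means are $D(p'_t\|q_t)\le\epsilon(e^{\epsilon}-1)$. This gives $\PP'_{\bX}(S)\le e^{\epsilon'}\PP_{\bX'}(S)+\delta'$. The only transfer needed is the hybrid bound $\PP_{\bX}(S)\le\PP'_{\bX}(S)+T\delta$.

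Alternatively, two-sided surrogates do work if they come as mixtures $(1-\delta)P'+\delta P''$ with the same weight $\delta$ on both sides, as in the randomized-response simulation of $(\epsilon,\delta)$-DP. Then $\PP_{\bX'}(S)\ge(1-\delta)^T\PP'_{\bX'}(S)$ absorbs the transfer and again yields $T\delta+\delta'$.
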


By the standard composition property, if each step of an iterative algorithm is $(\epsilon/T,\delta/T)$-DP, then after $T$ iterations, the resulting algorithm will be $(\epsilon,\delta)$-DP. Moreover, if each step of an iterative algorithm is $(\epsilon\sqrt{2/\{5T\log(2/\delta)\}},\delta/(2T))$-DP, where 
\begin{align}\label{privacy.constraints}
   0< \epsilon \leq 1~~\mbox{and}~~0<\delta \leq 0.01\,,
\end{align} 
then, by the advanced composition property (with $\delta' = \delta/2$) and a straightforward computation, the combined algorithm after $T$ iterations is also $(\epsilon, \delta)$-DP.

\cite{WassermanZhouDP} presented a statistical perspective that connects differential privacy to hypothesis testing. Briefly, consider the following hypothesis testing problem
\begin{align}
    H_0:\text{ the underlying data is $\bX\quad$ versus} \quad H_1:\text{ the underlying data is $\bX'$}\,.   \label{hypothesis.test}
\end{align} 
Suppose $x_1$ is the only data point present in $\bX$ but not in $\bX'$. Rejecting $H_0$ would reveal $x_1$'s absence, while accepting $H_0$ would confirm its presence. When an $(\epsilon,\delta)$-DP algorithm $\cM$ is used, the power of any test at significance level $\alpha$ is bounded by $\min\{ e^\epsilon \alpha + \delta, 1 - e^{-\epsilon} (1-\alpha-\delta) \}$. When both $\epsilon$ and $\delta$ are small, any $\alpha$-level test becomes nearly powerless. To address this limitation, \cite{dong2021gaussian} introduced the trade-off function to characterize the trade-off between Type I and Type II errors.

\begin{definition}[Trade-off function]
\label{tradeoff.func}
For any two probability distributions $\mathbb{P}$ and $\mathbb{Q}$ on the same space, the trade-off function $T(\mathbb{P}, \mathbb{Q}): [0, 1] \to [0, 1]$ is defined as $T(\mathbb{P}, \mathbb{Q}) (\alpha) = \inf\{ \beta_\phi : \alpha_\phi \leq \alpha\}$, where $\alpha_\phi= \mathbb{E}_{\mathbb{P}}(\phi)$ and $\beta_\phi= 1-\mathbb{E}_{\mathbb{Q}}(\phi)$ represent the Type I and Type II errors associated with the test $\phi$, respectively. The infimum is taken over all measurable tests.
\end{definition}

The larger the trade-off function, the more difficult it becomes to distinguish between the two distributions via hypothesis testing. Building on trade-off functions, \cite{dong2021gaussian} proposed a generalization of differential privacy, referred to as $f$-DP. With a slight abuse of notation, we identify $\cM(\bX)$ and $\cM(\bX')$ with their corresponding probability distributions.

\begin{definition}[$f$-DP and GDP]
\label{def:f-DP}
   (i) Let $f$ be a trade-off function. A mechanism $\cM$ is said to be $f$-DP if $T(\cM(\bX), \cM(\bX')) \geq f$ for all adjacent data sets $\bX, \bX'\in\cX$ that differ by a single data point. (ii) Let $\epsilon >0$. A mechanism $\cM$ is said to be $\epsilon$-Gaussian differentially private ($\epsilon$-GDP) if it is $G_\epsilon$-DP, where $G_\epsilon(\alpha)= \Phi(\Phi^{-1}(1-\alpha) - \epsilon)$.
\end{definition}

A mechanism $\cM$ is $(\epsilon, \delta)$-DP if and only if it is $f_{\epsilon, \delta}$-DP, where $f_{\epsilon, \delta}(\alpha) = \max\{ 0, 1-e^{\epsilon} \alpha - \delta, e^{-\epsilon}(1-\alpha -\delta) \}$ \citep{WassermanZhouDP}. We refer to Figure~3 in \cite{dong2021gaussian} for a visual comparison between  $G_\epsilon(\cdot)$ introduced in Definition~\ref{def:f-DP} and $f_{\epsilon, \delta}(\cdot)$. GDP is a core single-parameter family within the $f$-DP framework, and is defined via testing two shifted Gaussian distributions; see Definition 2.6 in \cite{dong2021gaussian}. Lemma~\ref{gdp.mechanism} below extends Theorem~1 in \cite{dong2021gaussian} from the univariate to the multivariate setting.

\begin{lemma} \label{gdp.mechanism}
The Gaussian mechanism, $\cM(\bX)+ \epsilon^{-1}{\rm sens}_2(\cM)\cdot\bg$ with $\bg\sim \cN(\bzero, \bI_{d})$, is $\epsilon$-GDP.
\end{lemma}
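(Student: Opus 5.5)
We fix adjacent datasets $\bX,\bX'$ and write $\Delta := {\rm sens}_2(\cM)$, $\bmu := \cM(\bX)$, $\bmu' := \cM(\bX')$, so that $\|\bmu-\bmu'\|_2\le \Delta$. Since the privatized outputs are $\cN(\bmu, (\Delta/\epsilon)^2\bI_d)$ and $\cN(\bmu',(\Delta/\epsilon)^2\bI_d)$, we must show that the trade-off function between these two Gaussians is bounded below by $G_\epsilon$. If $\bmu=\bmu'$ the two distributions coincide and $T\equiv 1-\alpha\ge G_\epsilon(\alpha)$, so we assume $\bmu\neq\bmu'$. The case $\Delta=0$ is trivial (the mechanism is then constant on adjacent datasets), so we also assume $\Delta>0$.

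The first step is to reduce the problem to one dimension. Rescaling by $\epsilon/\Delta$ and translating are bijective measurable maps applied to both distributions simultaneously, so they leave the trade-off function unchanged. It therefore suffices to compare $\cN(\bzero,\bI_d)$ with $\cN(\bm{\theta},\bI_d)$, where $\bm{\theta}=\epsilon(\bmu'-\bmu)/\Delta$ satisfies $\|\bm{\theta}\|_2 =: \eta\le \epsilon$. Next we apply an orthogonal matrix that maps $\bm{\theta}$ to $\eta\be_1$; this leaves $\cN(\bzero,\bI_d)$ invariant. The problem becomes $\cN(\bzero,\bI_d)$ versus $\cN(\eta\be_1,\bI_d)$. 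These two distributions share the independent $\cN(\bzero,\bI_{d-1})$ component in coordinates $2,\ldots,d$, and they differ only in the first coordinate.

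The second step is to compute the trade-off function. By the Neyman--Pearson lemma, the optimal tests reject for large values of the likelihood ratio $\exp(\eta z_1-\eta^2/2)$, that is, for large $z_1$. A level-$\alpha$ test therefore rejects when $z_1>\Phi^{-1}(1-\alpha)$. Its Type II error is $\PP(\eta+Z\le\Phi^{-1}(1-\alpha))=\Phi(\Phi^{-1}(1-\alpha)-\eta)$, so $T=G_\eta$. This reproduces the univariate computation in Theorem~1 of \cite{dong2021gaussian}.

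Finally, we use monotonicity. For fixed $\alpha$, the map $\eta\mapsto\Phi(\Phi^{-1}(1-\alpha)-\eta)$ is nonincreasing, so $\eta\le\epsilon$ gives $G_\eta\ge G_\epsilon$. Since the adjacent pair was arbitrary, the mechanism is $\epsilon$-GDP.

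The main obstacle is justifying the invariance of the trade-off function under bijections and under adding the shared independent coordinates. The bijection part is immediate: tests compose with the bijection and its inverse, so the set of achievable (Type I, Type II) pairs is the same. For the extra coordinates, the likelihood ratio depends only on $z_1$, so Neyman--Pearson optimality settles it.
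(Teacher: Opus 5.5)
Your proof is correct. The paper does not write one out; it only remarks that the lemma extends Theorem~1 of \cite{dong2021gaussian} from the univariate to the multivariate setting, and your reduction (affine rescaling plus a rotation sending the mean shift to $\eta\be_1$, then Neyman--Pearson in the first coordinate, then monotonicity of $G_\eta$ in $\eta\le\epsilon$) is exactly the standard way to supply that extension.
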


Under the GDP framework, the composition of $T$ mechanisms with parameters $\{\epsilon_t\}_{t=1}^T$ yields overall $\epsilon$-GDP with $\epsilon= \sqrt{\epsilon_1^2+\cdots+\epsilon_T^2} $. This implies that the individual privacy level $\epsilon_t$ can be set to $\epsilon/\sqrt{T}$.

\section{Differentially Private Huber Regression}
\label{sec:3} 

This section introduces algorithms for DP Huber regression in both low- and high-dimensional settings. We begin by reviewing non-private Huber regression, accompanied by the required notation and relevant background.

\subsection{Huber regression}
\label{sec:adaptive-huber}

Suppose we observe $n$ independent samples $\{ (y_i, \bx_i) \}_{i=1}^n$ satisfying the linear model
\begin{align} \label{linear.model}
	y_i = \bx_i ^\T   \bbeta^* + \varepsilon_i  = \sum_{j=1}^p x_{i,j} \beta^*_j + \varepsilon_i \,, \quad  i\in[n]\,, 
\end{align}
where $\bx_i = (x_{i,1},\ldots, x_{i,p})^\T$ is a $p$-dimensional feature vector with $x_{i,1}\equiv 1$, and $\bbeta^* = (\beta^*_1, \ldots, \beta^*_p)^\T \in \RR^p$ denotes the unknown true coefficient vector.
In the random design setting, let $\bx_1, \ldots, \bx_n$ be independent copies of a random vector $\bx \in \RR^p$. The noise variable $\varepsilon_i$ satisfies $\EE(\varepsilon_i\, | \,\bx_i) = 0$ and $\EE(\varepsilon_i^2\,|\, \bx_i) \leq \sigma_0^2 <\infty$. Under this moment condition, while the OLS estimator, obtained by minimizing $\bbeta \mapsto \sn (y_i - \bx_i^\T \bbeta)^2$, exhibits desirable asymptotic properties as $n\to \infty$ with $p$ fixed, its finite sample performance, particularly in terms of high-probability tail bounds, is suboptimal compared to the case when $\varepsilon_i$ is sub-Gaussian \citep{catoni2012, sun2020adaptive}.

To obtain an estimator that is both asymptotically efficient and exhibits exponential-type concentration bounds in finite-sample settings, we employ the robust loss $\rho_\tau(u) = \tau^2 \rho(u/\tau)$, where $\rho:\RR\to [0,\infty)$ is a continuously differentiable convex function, and $\tau>0$ serves as a robustification parameter. Assume that $\psi(u) = \rho'(u)$ is Lipschitz continuous, concave, and differentiable almost everywhere. We consider the following $M$-estimator:
\begin{align}
	 \hat{\bbeta}_\tau   \in \arg\min_{\bbeta\in \RR^p}  \bigg\{  \hat \cL_\tau(\bbeta)  := \frac{1}{n} \sn \rho_\tau( y_i - \bx_i^\T \bbeta) \bigg\}\, .  \label{def:m.est}
\end{align}
Define  $\psi_\tau(u) = \rho'_\tau(u) = \tau \psi(u/\tau )$ for $u \in \RR$. Due to the convexity of $\rho_\tau(\cdot)$ and, consequently, of $\hat \cL_\tau(\cdot)$, the $M$-estimator $ \hat{\bbeta}_\tau$ can alternatively be characterized as the solution to the first-order condition $ \sn \psi_\tau (   y_i - \bx_i^\T \bbeta   )  \bx_i = \textbf{0}$. For simplicity, we focus on the Huber loss \citep{huber1964robust}, defined as $\rho(u) =  (u^2/2) {\mathbbm{1}}(|u| \leq 1) + (|u| - 1/2) {\mathbbm{1}}(|u| >1)$ for $u\in \RR$. The associated score function is $\psi(u) = {\rm{sign}}(u)\min\{|u|, 1\}$.  To conserve space, we defer the details on Huber regression to Section~\ref{review-Huber-reg} of the supplementary material. The next section demonstrates the construction of a DP Huber estimator in low dimensions.

\subsection{Randomized estimator via noisy gradient descent}
\label{noisyGD.lowd}

We begin by considering the low-dimensional setting where $n\gg p$. To compute the $M$-estimator $\widehat{\bbeta}_{\tau}$ defined in \eqref{def:m.est}, one of the most widely used algorithms is gradient descent, which generates iterates as follows:
$$
 \bbeta^{(t+1)}  = \bbeta^{(t)}  - \eta_0 \nabla \hat{\cL}_\tau( \bbeta^{(t)}) =  \bbeta^{(t)}  + \frac{\eta_0}{n} \sn \psi_\tau(y_i - \bx_i^\T \bbeta^{(t)} ) \bx_i\,  , \quad t\geq 0\, ,  
$$
where $\bbeta^{(0)}$ is an initial estimator and $\eta_0>0$ is the learning rate. For each $t\geq 0$, given the previous iterate, the gradient descent update can be viewed as an algorithm that maps the dataset $\{ (y_i, \bx_i)  \}_{i=1}^n$ to $\RR^p$. However, in the case of random features, the sensitivity of this algorithm may not be bounded. To address this, we use a clipping function $w_\gamma(t) = \min\{ \gamma / t, 1\}$, $t\geq 0$, enabling the privatization of gradient descent.

Motivated by the general concept of noisy gradient descent in the differential privacy literature \citep{avella2021differentially}, we consider a noisy clipped gradient descent method that incorporates the Gaussian mechanism. Let $T\geq 1$ be a prespecified number of iterations, $\gamma>0$ a truncation parameter, and $\sigma>0$ a noise level to be determined. Starting with an initial estimate $ \bbeta^{(0)}$, the noisy clipped gradient descent computes updates as follows:
\begin{align}
 \bbeta^{(t+1)}  = \bbeta^{(t)}  +  \eta_0 \bigg\{   \frac{1}{n}  \sn  \psi_\tau ( y_i - \bx_i^\T \bbeta^{(t)}  ) \bx_i w_ \gamma ( \| \bx_i \|_2 )    +    \sigma    \bg_t \bigg\}\, , \quad t \in  \{0\}\cup [T-1]\,, \label{noisy.gd}
\end{align} 
where $\bg_0, \bg_1, \ldots, \bg_{T-1} \in \RR^p$ are i.i.d. standard normal random vectors.

Recall that $\psi_\tau(u) = \tau  {\rm sign}(u) \min\{ |u/\tau| ,1 \}$ is the derivative of the Huber loss, and it satisfies $\sup_{u\in \RR}  | \psi_\tau (u) | \leq \tau$. Therefore, we have $\sup_{ (y, \bx) \in \RR \times \RR^p , \,  \bbeta \in \RR^p} \|  \psi_\tau ( y - \bx^\T \bbeta  ) \bx w_\gamma( \| \bx \|_2 )  \|_2 \leq \gamma \tau$, which implies that the $\ell_2$-sensitivity of each clipped gradient descent step is bounded by $2\eta_0 \gamma \tau/n$. By Lemmas~\ref{glmechanism} and \ref{gdp.mechanism}, each noisy clipped gradient descent step with noise scales
\begin{align} \label{dp.noise.scale}
    \sigma_{{\rm dp}} =   \frac{2  \gamma \tau}{n\epsilon }\cdot T \sqrt{2 \log\bigg(\frac{1.25 T}{\delta}\bigg) }  ~~~\textrm{and}~~~\sigma_{{\rm gdp}} = \frac{ 2 \gamma \tau}{n\epsilon}\cdot\sqrt{T}
\end{align}
is, respectively, $(\epsilon/T , \delta/T)$-DP and $(\epsilon/\sqrt{T})$-GDP. After $T$ iterations, the standard composition property  implies that the $T$-th iterate $\bbeta^{(T)}$ is either $(\epsilon,\delta)$-DP or $\epsilon$-GDP, depending on the choice of the noise scale. The complete algorithm is presented in Algorithm~\ref{alg:DPHuberlow}.  Alternatively, a gradient descent step with noise scale  $$ \sigma_{{\rm dp}} = \frac{ 2 \gamma \tau }{  n\epsilon } \sqrt{5 T \log\bigg(\frac{2}{\delta}\bigg) \log\bigg(\frac{5T}{2\delta}\bigg) }$$  ensures  $(\epsilon\sqrt{2/\{5T\log(2/\delta)\}},\delta/(2T))$-DP per iteration. By Lemma~\ref{compositiontheorem}, the final iterate $\bbeta^{(T)}$ is $(\epsilon, \delta)$-DP after $T$ iterations, provided that the privacy constraints in \eqref{privacy.constraints} are met. In practice, for a given triplet $(T, \epsilon, \delta)$, we choose the configuration that yields the smaller value of $\sigma_{{\rm dp}}$ between the two alternatives. 

\begin{algorithm}
\caption{DP Huber Regression} \label{alg:DPHuberlow}
\vspace{.2cm}
\textbf{Input:} Dataset $\{(y_i, \bx_i)\}_{i=1}^n$, initial value $\bbeta^{(0)}$, learning rate $\eta_0$, number of iterations $T$, truncation level $\gamma$, robustification parameter $\tau$, and privacy parameters $(\epsilon, \delta)$.

\begin{algorithmic}[1]
 \renewcommand{\algorithmicindent}{0pt}
\FOR{ $t=0,\ldots, T-1$} 
\STATE ~ Generate $\bg_t\sim \cN(\bzero,\bI_p)$ and compute  
\[
\bbeta^{(t+1)}  = \bbeta^{(t)}  +  \eta_0 \bigg\{   \frac{ 1}{n}  \sn  \psi_\tau( y_i - \bx_i^\T \bbeta^{(t)} ) \bx_i   w_ \gamma ( \| \bx_i \|_2 )    +   \sigma  \bg_t \bigg\}\,,
\]
where $\sigma>0$ is set to either $\sigma_{{\rm dp}}$ or $\sigma_{{\rm gdp}}$ specified in \eqref{dp.noise.scale}; 
\ENDFOR
\end{algorithmic}
\textbf{Output:} $\bbeta^{(T)}$.
\end{algorithm}

\begin{remark}
\label{compare.cwz2021}  
Algorithm~\ref{alg:DPHuberlow} extends and strengthens the $(\epsilon, \delta)$-DP least squares algorithm (Algorithm~4.1 in \cite{cai2021cost}) in several important aspects. First, it permits the noise variable in \eqref{linear.model} to follow a general distribution, which may be non-Gaussian and may exhibit heavy tails or asymmetry; see Theorem~\ref{thm:low} in Section \ref{sec:4.1}. This added flexibility is enabled by the careful choice of $\tau$, which governs the trade-off among robustness, bias, and privacy. Second, by combining the Huber loss with an adaptively selected $\tau$ and incorporating covariate clipping, the proposed algorithm accommodates a much broader range of design and parameter settings, while requiring fewer tuning parameters tied to the underlying data-generating process. This stands in contrast to the bounded design and parameter assumptions (D1) and (P1) in \cite{cai2021cost}. Specifically, these assumptions require $\| \bx \|_2 \leq  c_{\bx}$ almost surely and $\|\bbeta^*\|_2 \leq c_0$, where the constants $c_{\bx}, c_0 >0$ are not only essential for the theoretical analysis but also appear explicitly in the algorithm, introducing nontrivial practical constraints. Moreover, the covariate vector $\bx$ must have zero mean with covariance matrix $\bSigma = \mathbb{E}(\bx\bx^{\T})$ satisfying $(p L)^{-1} \leq  \lambda_{\min}(\bSigma) \leq \lambda_{\max}(\bSigma) \leq Lp^{-1}$ for some constant $L>1$, which likewise enters the algorithm.
\end{remark}

\begin{remark}
\label{compare.abl2023}
Algorithm~\ref{alg:DPHuberlow} fits within the general framework of DP $M$-estimation considered in \cite{avella2021differentially}. The key distinction is that, when applied to linear models with heavy-tailed and asymmetric errors, the robustification parameter $\tau$ is no longer a fixed constant. Instead, it is chosen as a function of the sample size, dimensionality, privacy level, and noise scale to balance bias, tail-robustness and privacy. In Section~\ref{sec:4.1}, we demonstrate that under certain assumptions, achieving an optimal trade-off among bias, robustness, and privacy requires selecting $\tau$ to be of order $\sigma_0 (n \epsilon / p)^{1/(2+\iota)}$ in the case of $\epsilon$-GDP.
\end{remark}

\begin{remark}
Algorithm~\ref{alg:DPHuberlow} has the potential to save privacy budget by leveraging privacy amplification techniques \citep{kasiviswanathan2008, bassily2014private, abadi2016, feldman2018, wang2019}. However, to ensure a fair and consistent comparison with existing methods under similar settings \citep{cai2021cost, avella2021differentially}, we have not incorporated privacy amplification into the theoretical analysis presented in this work. We leave the incorporation of privacy amplification into our theoretical framework as a direction for future research. Moreover, we note that the theoretical framework in \citet{feldman2018} relies on the assumption that the underlying feasible set is convex. This assumption is violated in the high-dimensional setting when applying NoisyHT (see Algorithm \ref{alg:NHT} in Section \ref{noisyGD.highd}), as the feasible set defined by sparsity constraints is inherently non-convex. Developing both theoretical and empirical justifications for privacy amplification in non-convex settings remains an open challenge and may require fundamentally different analytical techniques.
\end{remark}

Define $\bXi:=(\Xi^{jk})_{j,k\in[p]}= \bSigma^{-1}\bOmega\bSigma^{-1}$ with $\bSigma = \EE(\bx_i \bx_i^\T)$ and $\bOmega= \EE(\varepsilon_i^2 \bx_i \bx_i^{\T})$. 
We will show in Theorem \ref{thm.DPGA} in the supplementary material that, under certain conditions, the DP Huber estimator is asymptotically normal:
\begin{align*}
     \sqrt{n}(\Xi^{jj})^{-1/2}({\beta}^{(T)}_j-\beta^*_j)  \overset{{\rm d}}{\rightarrow} \cN(0,1)~~\mbox{as}~~n \rightarrow\infty\,.
\end{align*}
 To construct confidence intervals, for some $\tau_1>0$ and $\gamma_1>0$, we define the private estimators  of $\bSigma$  and $\bOmega$, respectively, as $\widehat{\bSigma}_{ \gamma_1,\epsilon}= \widehat\bSigma_{\gamma_1}+\varsigma_{1 }\bE$ 
 and $\widehat\bOmega_{{\tau_1}, \gamma_1,\epsilon}=\widehat\bOmega_{{\tau_1}, \gamma_1}(\bbeta^{(T)})+\varsigma_{2 }\bE$,
 where $\varsigma_{1 }$ and $\varsigma_{2 }$ are two noise scale parameters depending on $\epsilon$ or $(\epsilon, \delta)$, and $\bE\in \RR^{p\times p}$ is a symmetric random matrix whose upper-triangular and diagonal
 entries are i.i.d. $\cN(0,1)$. Here,  $\widehat\bSigma_{ \gamma_1 } = {n}^{-1}\sn  \bx_i\bx_i^{\T}w_{\gamma_1}^2(\|\bx_i\|_2)$, and $\widehat\bOmega_{ \tau_1 , \gamma_1 }(\bbeta) =  {n}^{-1}\sn \psi_{ \tau_1 }^2(y_i-\bx_i^{\T}\bbeta)\bx_i\bx_i^{\T}w_{\gamma_1}^2(\|\bx_i\|_2)$, and their sensitivities are given in Section~\ref{sec.CI} of the supplementary material. To ensure positive definiteness, we further project both $\widehat{\bSigma}_{ \gamma_1 ,\epsilon}$ and $\widehat\bOmega_{ \tau_1 , \gamma_1 ,\epsilon}$ onto the cone of positive definite matrices $\{\bH : \bH \succeq \zeta \bI\}$. Specifically, we obtain $\widehat{\bSigma}_{ \gamma_1 , \epsilon}^{+}= \arg\min_{\bH \succeq \zeta \bI}\|\bH-	\widehat{\bSigma}_{ \gamma_1 ,\epsilon} \|$ and $\widehat\bOmega_{\tau_1, \gamma_1,\epsilon}^{+}= \arg\min_{\bH \succeq \zeta \bI}\|\bH-\widehat\bOmega_{ \tau_1, \gamma_1 ,\epsilon} \|$ for a sufficiently small constant $\zeta>0$. Consequently, we take  
  \begin{align}\label{tildeXi}
 \widetilde\bXi_{\tau_1,\gamma_1,\epsilon} := (\widetilde\Xi_{\tau_1,\gamma_1,\epsilon}^{jk})_{j,k\in[p]} =   (\widehat{\bSigma}_{\gamma_1,\epsilon }^{+} )^{-1}\widehat\bOmega_{\tau_1,\gamma_1,\epsilon }^{+}(\widehat{\bSigma}_{\gamma_1,\epsilon }^{+} )^{-1}
 \end{align}
 as a private estimator of $\bXi$.
 The privacy guarantee and statistical consistency of $\widetilde\bXi_{\tau_1,\gamma_1,\epsilon}$, as well as the theoretical requirements for $\tau_1$ and $\gamma_1$, are established in Section \ref{sec.CI} of the supplementary material.   For any $\alpha \in (0,1)$, we construct the $100(1-\alpha)\%$ confidence interval of $\beta^*_j$ as $\beta^{(T)}_j \pm z_{\alpha/2}(\widetilde\Xi_{\tau_1,\gamma_1,\epsilon}^{jj})^{1/2} / \sqrt{n}$, where $z_{\alpha/2}$ denotes the $(1-\alpha/2)$-th quantile of $\cN(0,1)$.

\subsection{Noisy iterative hard thresholding for sparse Huber regression}
\label{noisyGD.highd}

In this section, we consider the high-dimensional setting, where $\bbeta^* \in \RR^p$ in \eqref{linear.model} is sparse with $\|\bbeta^*\|_0 \ll \min\{n, p\}$. A common approach to induce sparsity is $\ell_1$-penalization, which is well-known for its computational efficiency and desirable theoretical properties. For sparse least absolute deviation (LAD) regression, that is, quantile regression with $\tau=1/2$, \cite{liu2024DP} proposed a private estimator by reformulating the sparse LAD problem as a penalized least squares estimation and adopting a three-stage noise injection mechanism to ensure $(\epsilon, \delta)$-DP. However, the convergence rate of this private estimator is suboptimal, as it scales with $\sqrt{p}$.

From a different perspective, \cite{cai2021cost} proposed a noisy variant of the iterative HT algorithm \citep{blumensath2009iterative,jain2014iterative} for least squares regression and established its statistical (near-)optimality for sparse linear models with Gaussian errors. Instead of relying on soft thresholding as in \cite{liu2024DP}, which is closely associated with $\ell_1$-penalization, a key step in \cite{cai2021cost} is the adoption of the `peeling' procedure from \cite{dwork2018dpfdr}; see Algorithm~\ref{alg:NHT} below. 
To emphasize its connection to HT, we refer to it as NoisyHT throughout the remainder of this paper. This method involves adding independent Laplace random variables to the absolute values of the entries in a given vector  and then selecting the top $s$ largest coordinates from the resulting vector. As demonstrated by Lemma~\ref{lem.HD.noiseht} in the supplementary material, when $0<\epsilon \leq 0.5$, $0<\delta \leq 0.011$ and $s\geq 10$, Algorithm~\ref{alg:NHT} is an $(\epsilon,\delta)$-DP algorithm if the involved parameter $\lambda$ satisfies $\|\bv(\bZ)-\bv(\bZ')\|_{\infty} < \lambda$  for every pair of adjacent datasets $\bZ$ and $\bZ'$. 

\begin{algorithm}
\caption{NoisyHT$(\bv , \bZ, s, \epsilon, \delta, \lambda)$ Algorithm \citep{dwork2018dpfdr}}\label{alg:NHT}
\vspace{.2cm}
\textbf{Input:} Dataset $\bZ$, vector-valued function $\bv=\bv(\bZ)=(v_1,\ldots,v_p)^{\T} \in \mathbb{R}^{p}$, sparsity level $s$, privacy parameters $(\epsilon, \delta)$, and noise scale $\lambda$. 
 	\begin{algorithmic}[1]
    \renewcommand{\algorithmicindent}{0pt}
 			\STATE Initialize $S=\emptyset$;
 			\FOR{$i\in[s]$}
 			\STATE ~ Generate $\bw_{i}=(w_{i, 1},  \ldots, w_{i, p})^{\T}\in \mathbb{R}^{p}$ with $w_{i, 1},   \ldots, w_{i, p} \stackrel{\text{i.i.d.}}{\sim}$  Laplace$({2\epsilon^{-1}\lambda \sqrt{5  s \log (1 / \delta)}})$;
 			\STATE ~ Append $j^{*}=\arg\max_{j \in[p] \setminus S} ( |v_{j}|+w_{i, j} )$ to $S$;
 			\ENDFOR
 			\STATE Set $\tilde{P}_{s}(\bv)=\bv_{S}$;
 			\STATE Generate $\tilde{\bw} =(\tilde{w}_{  1},  \ldots, \tilde{w}_{ p})^{\T}\in \mathbb{R}^{p}$  with $\tilde{w}_{1}, \ldots, \tilde{w}_{p} \stackrel{\text{i.i.d.}}{\sim}$ Laplace$({2\epsilon^{-1}\lambda \sqrt{ 5 s \log (1 / \delta)}})$;
 		\end{algorithmic}
\textbf{Output:} $\tilde{P}_{s}(\bv)+\tilde{\bw}_{S} \in \RR^p$.
\end{algorithm}

In high-dimensional settings, the noisy clipped gradient decent step involves calculating the intermediate update and the noisy update sequentially. The intermediate update involving clipped gradients is defined as
\begin{align}\label{noisy.gdnew}
\tilde \bbeta^{(t+1)}  = \bbeta^{(t)}  +     \frac{ \eta_0}{n}  \sn  \psi_\tau( y_i - \bx_i^\T \bbeta^{(t)} ) \bx_i  w_ \gamma ( \| \bx_i \|_\infty )\,,   \quad t \in  \{0\}\cup [T-1]\,. 
\end{align}
The noisy update $\bbeta^{(t+1)}$ is then obtained by using $\tilde \bbeta^{(t+1)}$ from \eqref{noisy.gdnew} as the input to Algorithm~\ref{alg:NHT}.  
Due to $\sup_{u\in \RR}  | \psi_\tau (u) | \leq \tau$, we know $\sup_{ (y, \bx) \in \RR \times \RR^p , \,  \bbeta \in \RR^p} \|    \psi_\tau( y - \bx^\T \bbeta ) \bx  w_\gamma( \| \bx \|_\infty )  \|_\infty \leq \gamma \tau$, which implies the $\ell_{\infty}$-sensitivity of the procedure for calculating the intermediate update is bounded by $2\eta_0 \gamma \tau/n$. By setting the privacy parameters to $(\epsilon/T, \delta/ T)$   and choosing the noise scale as $\lambda = 2\eta_0 \gamma \tau/n$ in Algorithm~\ref{alg:NHT}, Lemma~\ref{lem.HD.noiseht} in the supplementary material indicates that each noisy clipped gradient descent step is $(\epsilon/T, \delta/T)$-DP. After $T$ iterations, the standard composition property (Lemma \ref{standardcompositiontheorem})  implies that the $T$-th iterate $ \bbeta^{(T)}$ is $(\epsilon, \delta)$-DP.  The complete algorithm is presented in Algorithm~\ref{alg:DPHuberhigh}. Alternatively, by setting the privacy parameters to $(\epsilon\sqrt{2/\{5T\log(2/\delta)\}},\delta/(2T))$ and choosing the noise scale as $\lambda = 2\eta_0 \gamma \tau/n$ in Algorithm~\ref{alg:NHT},  Lemma~\ref{lem.HD.noiseht} and the  advanced composition property (Lemma \ref{compositiontheorem}) imply that the $T$-th iterate $ \bbeta^{(T)}$ is also $(\epsilon, \delta)$-DP,  provided that the privacy constraints  \eqref{privacy.constraints} hold. In the practical implementation, the choice of privacy parameters is guided by which configuration results in a smaller scale of the Laplace random variables injected in NoisyHT.

\begin{algorithm}
\caption{DP Sparse Huber Regression}\label{alg:DPHuberhigh}
\vspace{.2cm}
\textbf{Input:} Dataset $\{(y_i, \bx_i)\}_{i=1}^n$, learning rate $\eta_0$, number of iterations $T$, truncation parameter $\gamma$, robustification parameter $\tau$, privacy parameters $(\epsilon, \delta)$, sparsity level $s$, and initial value $\bbeta^{(0)}$.
\begin{algorithmic}[1]
 \renewcommand{\algorithmicindent}{0pt}
\FOR{ $t=0,\ldots, T-1$}
\STATE ~ Compute  
\begin{align*}
\tilde \bbeta^{(t+1)}  =&~ \bbeta^{(t)}  +     \frac{\eta_0}{n}\sn  \psi_\tau( y_i - \bx_i^\T \bbeta^{(t)}   ) \bx_i w_ \gamma ( \| \bx_i \|_\infty )\,,\\
\bbeta^{(t+1)}  =&~ \text{NoisyHT}\big( \tilde \bbeta^{(t+1)}, \{(y_i, \bx_i)\}_{i=1}^n, s,  \epsilon/T ,   \delta/T ,  2 \eta_0\gamma\tau /n \big)\,.
\end{align*}
\ENDFOR
\end{algorithmic}
\textbf{Output:} $\bbeta^{(T)} $.
\end{algorithm}

\begin{remark}
Algorithm~\ref{alg:DPHuberhigh} integrates a noisy gradient-based descent method with Huber regression to simultaneously ensure differential privacy and robustness against heavy-tailed errors, provided that $\tau$ is properly tuned, as discussed in Section~\ref{sec:4}. It is applicable to a broader class of random features, including those following a normal distribution. Since the Huber loss has a bounded derivative, the residuals are truncated by the function $\psi_\tau(\cdot)$, eliminating the need to truncate the response variables. In practice, truncating the response variable can be problematic, particularly when it is positive and follows a right-skewed distribution, such as wages or prices. In contrast, residuals are expected to fluctuate around zero, making their truncation more reasonable.
\end{remark}

\section{Statistical Analysis of Private Huber Regression}
\label{sec:4}

For the theoretical analysis, we impose the following assumptions on the distributions of the random covariates and errors under the linear model \eqref{linear.model}.

 \begin{assumption}
 \label{assump:design}
 For each $i\in[n]$, the random covariate vector $\bx_i=(x_{i,1}, \ldots, x_{i,p})^\T$ with $x_{i,1}\equiv 1$ satisfies: (i) $\EE(x_{i,j})=0$ for $j\in[p]\setminus \{1\}$, and (ii) $\PP(| \langle \bu, \bSigma^{-1/2} \bx_i\rangle | \geq \upsilon_1 z ) \leq 2 e^{-z^2/2}$ for all $\bu \in \mathbb{S}^{p-1}$ and $z\geq 0$,  where  $\upsilon_1 \geq 1$ is a  dimension-free constant and $\bSigma = \EE(\bx_i \bx_i^\T)$. Moreover, there exist constants $\lambda_1 \geq \lambda_p>0$ such that
 $ \lambda_p \leq \lambda_{\min}(\bSigma) \leq \lambda_{\max} (\bSigma) \leq \lambda_1$. 
 \end{assumption}

\begin{assumption}
\label{assump:heavy-tail}
The regression errors $\{\varepsilon_i\}_{i=1}^n$ are independent random variables satisfying $\mathbb{E}(\varepsilon_i \,|\, \bx_i)=0$ and $\mathbb{E}(\varepsilon_i^{2} \,| \,\bx_i)\leq \sigma_0^{2}$ almost surely. Moreover, $  \mathbb{E}( |\varepsilon_i|^{2+\iota} \,|\, \bx_i)    \leq \sigma_\iota^{2+\iota}$ almost surely  for some $\iota \geq 0$ and $\sigma_0\leq \sigma_\iota <\infty$.
\end{assumption}

Assumption~\ref{assump:design} imposes a sub-Gaussian condition on the random covariates, generalizing the standard one-dimensional sub-Gaussian assumption to random vectors. This condition also complements the bounded design assumptions (D1) and (D$1'$) in \cite{cai2021cost}.   Various types of random vectors fulfill this assumption, for example: (i) Gaussian and Bernoulli random vectors, (ii) random vectors uniformly distributed on the Euclidean sphere or ball centered at the origin with radius $\sqrt{p}$, and (iii) random vectors uniformly distributed on the unit cube $[-1,1]^p$. For more detailed discussions of high-dimensional sub-Gaussian distributions, including discrete cases, we refer to Chapter 3.4 in \cite{vershynin2018high}.  Let $\kappa _4 = \sup_{\bu\in \mathbb{S}^{p-1}}   \EE (\langle \bSigma^{-1/2} \bx_i, \bu \rangle^4 )$, 
which serves as an upper bound on the kurtoses of all marginal projections of the normalized covariate vectors. It can be shown that  $\kappa_4 \leq C \upsilon_1^4$ for some absolute constant $C>1$. 
Assumption~\ref{assump:heavy-tail} relaxes the Gaussian error condition (4.1) in \cite{cai2021cost} by allowing for heavy-tailed error distributions, such as the $t_\nu$-distribution with $\nu>2$, the centered lognormal distribution, and the centered Pareto distribution with shape parameter greater than $2$. Our theoretical results show that finite conditional second moment condition (i.e., $\iota = 0$) suffices for coefficient estimation (see Theorems~\ref{thm:low} and~\ref{thm:high}), while the stronger moment condition $\iota >0$ is required only for inference (see Theorem~C.3 in the supplementary material). Given a dataset $\{ (y_i, \bx_i) \}_{i=1}^n$ satisfying Assumptions~\ref{assump:design} and \ref{assump:heavy-tail}, recall   
$$
\hat \cL_\tau(\bbeta)  = \frac{1}{n} \sn \rho_\tau  ( y_i - \bx_i^\T \bbeta )   ~~\mbox{with}~~ \rho_\tau(u) = \frac{u^2}{2}\mathbbm{1}(|u|\leq \tau) + \bigg(\tau |u| - \frac{\tau^2}{2}\bigg) \mathbbm{1}(|u| > \tau) 
$$
denotes the empirical Huber loss. In the following analysis, all constants depending on $(\upsilon_1, \lambda_1, \lambda_p, \kappa_4)$ are absorbed into the notation $\lesssim$, $\gtrsim$, and $\asymp$.

\subsection{Low-dimensional setting}
\label{sec:4.1}

In this section, we establish the statistical properties of the DP Huber estimator $\bbeta^{(T)}$  as defined in Algorithm~\ref{alg:DPHuberlow}. As a benchmark, let ${\hat{\bbeta}_{\tau}} \in \arg\min_{\bbeta \in \RR^p} \hat{\cL}_\tau(\bbeta)$ denote the non-private Huber estimator in the low-dimensional setting. Given an initial estimate $\bbeta^{(0)}$, the statistical analysis of the noisy clipped gradient descent iterates $\{ \bbeta^{(t)} \}_{t=1}^T$ relies heavily on the properties of the empirical Huber loss $\hat{\cL}_\tau(\cdot)$,  including its local strong convexity and global smoothness. The key observation is that the Huber loss $\rho_\tau(u)$ is strongly convex only when $|u| < \tau$, with its second-order derivative $\rho_\tau''(u) = 1$ in this region. Our analysis consists of two parts. First, we establish that, by conditioning on a series of `good events' associated with the empirical Huber loss, the noisy clipped gradient descent iterates exhibit favorable convergence properties. Second, we prove that these good events occur with high probability under Assumptions~\ref{assump:design} and \ref{assump:heavy-tail}. Due to space limitations, intermediate results that hold conditioned on good events are provided in the supplementary material.

Given that the initial value $\bbeta^{(0)}$ lies within a neighborhood of the non-private Huber estimator $\hat\bbeta_{\tau}$, Theorem \ref{thm:low} below establishes a high-probability bound for $\| \bbeta^{(T)} - {\bbeta}^* \|_2$.  

\begin{theorem} \label{thm:low}
Let Assumptions \ref{assump:design} and \ref{assump:heavy-tail} hold.  Assume that $\bbeta^{(0)} \in \hat \bbeta_{\tau} + \BB^p(r_0)$ for some  $r_0\asymp \tau$ , and that the learning rate satisfies $\eta_0 =\eta/(2\lambda_1) $ for some $\eta\in(0,1]$. Moreover, let $\gamma \asymp \sqrt{p+\log n}$, $T\asymp \log\{r_0 n\epsilon(\sigma_0 p)^{-1}\} $, and  $\tau \asymp  \tau_0\{n\epsilon(p+\log n)^{-1}\}^{1/(2+\iota)}$ for some $\tau_0\geq \sigma_0$. Then, if the noise scale $\sigma$ defined in \eqref{dp.noise.scale} satisfies $\sigma \sqrt{p+\log T+\log n}      \lesssim  r_0$,
the DP Huber estimator $\bbeta^{(T)}$ obtained in Algorithm~\ref{alg:DPHuberlow} satisfies 
\begin{align*}
\|\bbeta^{(T)} -   \bbeta^* \|_2   
   \lesssim &~    \underbrace{ \frac{\sigma_0 p}{n\epsilon} +    \sigma\sqrt{p+ \log n}}_{\|\bbeta^{(T)} - \hat \bbeta_{\tau} \|_2} \\
   &+\underbrace{  \max\{ \sigma_\iota^{2+\iota} \tau_0^{-1-\iota} , \tau_0 \}   \bigg( \frac{p+\log n}{n \epsilon} \bigg)^{(1+\iota)/({2+\iota})} + \sigma_0  \sqrt{\frac{p+\log n} {n }}  }_{\|\hat \bbeta_{\tau} -  \bbeta^* \|_2 }  
\end{align*}
with probability at least $1-Cn^{-1}$, provided that $n\epsilon \gtrsim C_{\tau_0,\sigma_\iota}(p+\log n)$, where $C_{\tau_0,\sigma_\iota}$ is a positive
constant depending only on ($\tau_0,\sigma_\iota$).
\end{theorem}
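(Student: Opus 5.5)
The bound splits as $\|\bbeta^{(T)}-\bbeta^*\|_2\le\|\bbeta^{(T)}-\hat\bbeta_\tau\|_2+\|\hat\bbeta_\tau-\bbeta^*\|_2$, and I will bound the two pieces separately, matching the underbraces in Theorem~\ref{thm:low}.

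\textbf{Statistical part.} For $\|\hat\bbeta_\tau-\bbeta^*\|_2$ I will use the standard adaptive Huber argument (as reviewed in the supplementary section on Huber regression). The first step is a bound on $\|\nabla\hat\cL_\tau(\bbeta^*)\|_{\bSigma^{-1}}$, decomposed into a bias and a fluctuation term.
\begin{itemize}
\item \emph{Bias.} Since $\EE(\varepsilon\,|\,\bx)=0$ and $|\psi_\tau(u)-u|\le|u|\mathbbm 1(|u|>\tau)$, we get $|\EE\{\psi_\tau(\varepsilon)\,|\,\bx\}|\le\sigma_\iota^{2+\iota}\tau^{-1-\iota}$. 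Hence $\|\EE\nabla\hat\cL_\tau(\bbeta^*)\|\lesssim\sigma_\iota^{2+\iota}\tau^{-1-\iota}$.
\item \emph{Fluctuation.} A Bernstein inequality over a $1/2$-net of $\SSS^{p-1}$, using $|\psi_\tau|\le\tau$ and $\EE\psi_\tau^2\le\sigma_0^2$, gives the rate $\sigma_0\sqrt{(p+\log n)/n}+\tau(p+\log n)/n$.
\end{itemize}
Next, a restricted strong convexity bound $\langle\nabla\hat\cL_\tau(\bbeta)-\nabla\hat\cL_\tau(\bbeta^*),\bbeta-\bbeta^*\rangle\gtrsim\|\bbeta-\bbeta^*\|_2^2$ holds uniformly over the ball $\bbeta^*+\BB^p(r)$ with $r\asymp\tau$. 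It is proved by lower bounding the Hessian restricted to the event $\{|\varepsilon_i|\le\tau/2,\ |\bx_i^\T(\bbeta-\bbeta^*)|\le\tau/2\}$. This uses the fourth-moment constant $\kappa_4$ and Chebyshev-type truncation, together with a uniform empirical-process bound. It requires $n\gtrsim p+\log n$ and $\tau\gtrsim\sigma_0$.

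With the localization trick (convexity lets us pass to an intermediate point in the ball), these give
\[
\|\hat\bbeta_\tau-\bbeta^*\|_2\lesssim\sigma_\iota^{2+\iota}\tau^{-1-\iota}+\sigma_0\sqrt{(p+\log n)/n}+\tau(p+\log n)/n .
\]
Substituting $\tau\asymp\tau_0\{n\epsilon/(p+\log n)\}^{1/(2+\iota)}$ and $\epsilon\le1$, the terms $\tau^{-1-\iota}$ and $\tau(p+\log n)/n$ both become of order $\{(p+\log n)/(n\epsilon)\}^{(1+\iota)/(2+\iota)}$, with prefactors $\sigma_\iota^{2+\iota}\tau_0^{-1-\iota}$ and $\tau_0$. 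This produces the $\max\{\cdot,\cdot\}$ term.

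\textbf{Optimization/privacy part.} Let $\bh(\bbeta)=n^{-1}\sum_i\psi_\tau(y_i-\bx_i^\T\bbeta)\bx_i w_\gamma(\|\bx_i\|_2)$.
\begin{itemize}
\item Since $\gamma\asymp\sqrt{p+\log n}$ and $\|\bSigma^{-1/2}\bx_i\|_2\lesssim\sqrt{p+\log n}$ for all $i$ with probability $1-O(n^{-1})$, clipping is inactive on this event (after adjusting constants). So $\bh=-\nabla\hat\cL_\tau$ there. If clipping were occasionally active, I would instead bound the clipped-gradient bias; that is why the extra term $\sigma_0p/(n\epsilon)$ is allowed.
\item On the good events, $\hat\cL_\tau$ is $\lambda_1$-smooth globally and strongly convex on $\hat\bbeta_\tau+\BB^p(2r_0)$. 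With $\eta_0=\eta/(2\lambda_1)$, one step gives
\[
\|\bbeta^{(t+1)}-\hat\bbeta_\tau\|_2\le(1-c\eta)\|\bbeta^{(t)}-\hat\bbeta_\tau\|_2+\eta_0\sigma\|\bg_t\|_2 .
\]
\item A union bound yields $\|\bg_t\|_2\lesssim\sqrt{p+\log T+\log n}$ for all $t\le T$ with probability $1-n^{-1}$.
\item The condition $\sigma\sqrt{p+\log T+\log n}\lesssim r_0$ gives, by induction, that the iterates never leave the $2r_0$ ball where strong convexity holds. This is the delicate point of the whole argument.
\item Unrolling the recursion gives $\|\bbeta^{(T)}-\hat\bbeta_\tau\|_2\lesssim(1-c\eta)^Tr_0+\sigma\sqrt{p+\log n}$, where $\log T$ is absorbed since $T$ is logarithmic.
\item The choice $T\asymp\log\{r_0n\epsilon/(\sigma_0p)\}$ makes $(1-c\eta)^Tr_0\lesssim\sigma_0p/(n\epsilon)$.
\end{itemize}

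\textbf{Main obstacle.} I expect the hardest step to be the uniform local strong convexity on a ball of radius proportional to $\tau$, centered at $\hat\bbeta_\tau$ rather than $\bbeta^*$. It needs $\|\hat\bbeta_\tau-\bbeta^*\|_2\ll\tau$, which follows from the first part once $n\epsilon\gtrsim C_{\tau_0,\sigma_\iota}(p+\log n)$. It also needs $r_0\asymp\tau$ to be small relative to $\tau/\sqrt{\lambda_1}$ in the constants, so that the truncated Hessian stays bounded below. I would handle this with a peeling/contraction argument for the empirical process $\sup_{\bdelta}n^{-1}\sum_i\langle\bx_i,\bdelta\rangle^2\mathbbm 1(\cdot)$, with a smooth surrogate for the indicator, as in \citet{sun2020adaptive}.

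Combining the two parts on the intersection of the good events, each of probability at least $1-Cn^{-1}$, gives the stated bound.
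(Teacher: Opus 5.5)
Your proposal is correct and follows the paper's argument: you decompose through $\hat\bbeta_\tau$, show that the good events (clipping inactive, $\hat\bbeta_\tau$ within $r_0/2$ of $\bbeta^*$, local restricted strong convexity, global smoothness) and the Chen--Zhou-type bound on $\|\hat\bbeta_\tau-\bbeta^*\|_2$ hold with probability $1-Cn^{-1}$ (as in Proposition~\ref{prop:events1}), and then run a conditional contraction for noisy gradient descent (as in Theorem~\ref{thm:dp.convergence}); your direct induction from the contraction step, which keeps the iterates local, is a slightly simpler substitute for the paper's Lemma~\ref{lem:rsc.smootheness}-based contradiction argument and needs the same noise condition~\eqref{private.sample.size.requirement}. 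One aside should be dropped: $\sigma_0 p/(n\epsilon)$ is not an allowance for clipping bias (on the good event $w_\gamma(\|\bx_i\|_2)=1$ for every $i$, so there is none); it is the geometrically decaying initialization error, which the choice of $T$ drives below $\sigma_0 p/(n\epsilon)$, as your final bullet correctly states.
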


The selection of robustification parameter $\tau$ in Theorem \ref{thm:low} is intended to balance bias and robustness, while accounting for the effective sample size $n\epsilon$. See the proof of Proposition \ref{prop:events1} in Section \ref{proof.e1} of the supplementary material for details. The initial condition $\bbeta^{(0)} \in \hat \bbeta_{\tau} + \BB^p(r_0)$ in Theorem~\ref{thm:low} is not restrictive. As shown in Theorem~\ref{prop:initial_value} of the supplementary material, for any initial point $\bbeta^{(0)}$ satisfying $\|\bbeta^{(0)} - \hat \bbeta_{\tau} \|_2 \geq r_0$, we have $\|\bbeta^{(T_0)}- \hat \bbeta_{\tau} \|_2\leq r_0$ with high probability for some $T_0\in \mathbb{N}_{+}$, provided that the sample size $n$ is sufficiently large. Moreover, since $T \asymp \log (n\epsilon)$ and each iteration in  Algorithm~\ref{alg:DPHuberlow} incurs a computational cost of $O(np)$, the total complexity of Algorithm~\ref{alg:DPHuberlow} is $O\{np \log (n\epsilon)\}$.

Let $\tau_0 \asymp \sigma_0$ in Theorem \ref{thm:low}. The robustification parameter $\tau$  is then required to satisfy $\tau \asymp \sigma_0  \{   n\epsilon ( p + \log n )^{-1} \}^{1/(2+\iota)}$. Based on Theorem \ref{thm:low}, we now present the final convergence guarantees for the DP Huber estimator $\bbeta^{(T)}$ obtained from Algorithm~\ref{alg:DPHuberlow}, under different choices of the noise scale $\sigma$ as specified in \eqref{dp.noise.scale}. The results are summarized as follows:
\begin{itemize}
    \item[(i)] ($\epsilon$-GDP) As long as $n\epsilon \gtrsim \sqrt{T}(p+\log n)$, the $\epsilon$-GDP Huber estimator $\bbeta^{(T)}$ obtained in Algorithm \ref{alg:DPHuberlow} by setting $\sigma=\sigma_{\rm gdp}$ satisfies, with probability at least $1-C n^{-1}$, that
    \begin{align} \label{gdp.lowd.rate}
        \|\bbeta^{(T)}-\bbeta^*\|_2 \lesssim  \sigma_0\bigg\{\bigg(\frac{\sigma_{\iota}}{\sigma_0}\bigg)^{2+\iota}+\sqrt{T}\bigg\}   \bigg( \frac{p+\log n}{n \epsilon} \bigg)^{(1+\iota)/(2+\iota)}  +  \sigma_0\sqrt{\frac{p+\log n}{n}}\,.
    \end{align}

    \item[(ii)] ($(\epsilon, \delta)$-DP) As long as $n \epsilon  \gtrsim  T  (p+\log n)\sqrt{ \log(T/\delta)} $, the $(\epsilon, \delta)$-DP Huber estimator $\bbeta^{(T)}$ obtained in Algorithm \ref{alg:DPHuberlow} by setting $\sigma=\sigma_{\rm dp}$ satisfies, with probability at least $1-C n^{-1}$, \begin{align} \label{dp.lowd.rate} 
        \|\bbeta^{(T)}-\bbeta^*\|_2 \lesssim &~  \sigma_0\bigg\{\bigg(\frac{\sigma_{\iota}}{\sigma_0}\bigg)^{2+\iota}+  T\sqrt{   \log\bigg(\frac{T}{\delta}\bigg) }  \bigg\}  \bigg( \frac{p+\log n}{n \epsilon} \bigg)^{(1+\iota)/(2+\iota)}    + \sigma_0 \sqrt{\frac{p+\log n}{n}} \,.
    \end{align}
\end{itemize}

In the error bounds \eqref{gdp.lowd.rate} and \eqref{dp.lowd.rate}, the second term reflects the convergence of the non-private Huber estimator $\widehat{\bbeta}_{\tau}$, while the first term represents the privacy cost. Notably, $n \epsilon$ can be interpreted as the {\rm effective sample size} under privacy constraints, and the choice of $\tau$ is influenced by this effective sample size. For normally distributed errors, \cite{cai2021cost} established a minimax lower bound for $(\epsilon, \delta)$-DP estimation of $\bbeta^*$ under $\ell_2$-risk. The lower bound is of order $\sigma_0\{\sqrt{pn^{-1}} + p(n \epsilon)^{-1}\}$ when $\epsilon \in (0, 1)$ and $\delta < n^{-1-\omega}$ for some fixed constant  $\omega>0$. In comparison, the slower term $\{p(n \epsilon)^{-1}\}^{1-1/(2+\iota)}$, ignoring logarithmic factors, in \eqref{gdp.lowd.rate} and \eqref{dp.lowd.rate} explicitly captures the combined influence of heavy-tailedness and privacy.

\begin{remark}
\label{final.convergence.lowd.subgaussian}
Another notable implication of our results is that DP Huber regression achieves a near-optimal convergence rate for sub-Gaussian errors. In addition to Assumption~\ref{assump:design}, assume that the regression error $\varepsilon_i$ satisfies $\EE (e^{\lambda \varepsilon_i} \,|\, \bx_i) \leq e^{\sigma_0^2 \lambda^2/2}$ for all $\lambda \in \RR$. In this case, we instead choose
$\tau \asymp \sigma_0 \sqrt{\log \{ n \epsilon (p+\log n)^{-1}\}}$. The $\ell_2$-error of the resulting $\epsilon$-GDP Huber estimator is, up to constant factors, upper bounded by
$$
 \frac{p+\log n}{n \epsilon }\cdot\sigma_0  \sqrt{T\log\bigg(\frac{n\epsilon}{p+\log n}\bigg)}  + \sigma_0 \sqrt{\frac{p+\log n}{n}}
$$
with probability at least $1-C n^{-1}$ as long as $n\epsilon \gtrsim \sqrt{T}(p + \log n)$. Compared to Theorem~4.2 in \cite{cai2021cost}, the above results hold without requiring (i) $\| \bbeta^* \|_2 < c_0$ for some constant $c_0>0$ that  appears in Algorithm 4.1 and Theorem 4.2 of \cite{cai2021cost}, and (ii) $\| \bx_i \|_2 < c_{\bx}$ with probability one for some dimension-free constant $c_{\bx}$. Moreover, the sample size requirement is relaxed from $n \epsilon \gtrsim p^{3/2}$ to $n\epsilon \gtrsim p$, ignoring logarithmic factors.
\end{remark}

\subsection{High-dimensional setting}
\label{sec:HD}

In this section, we establish the statistical properties of the sparse DP Huber estimator  $\bbeta^{(T)} $ as defined in Algorithm~\ref{alg:DPHuberhigh}. Compared to the low-dimensional setting, the sparsity level $s$ is prespecified to perform HT on the high-dimensional gradient vector after the injection of Laplace noises. Analogous to the low-dimensional case, our analysis proceeds in two parts.  First, we establish that by conditioning on a series of `good events' associated with the empirical Huber loss, the noisy clipped gradient descent iterates exhibit favorable convergence properties. Second, we prove that these good events occur with high probability under Assumptions \ref{assump:design} and \ref{assump:heavy-tail}. All intermediate results that hold conditionally on the good events are provided in the supplementary material.

Let $\mathbb{H}(s):= \{ \bbeta\in\mathbb{R}^p:\|\bbeta\|_0\leq s \}$ denote the set of $s$-sparse vectors in $\mathbb{R}^p$, and let $\Theta(r) = \{ \bbeta \in \RR^p : \| \bbeta - \bbeta^* \|_2 \leq r \}$ be the ball of radius $r$ centered at $\bbeta^*$. Theorem~\ref{thm:high} below provides a high-probability bound for $\| \bbeta^{(T)} - \bbeta^* \|_2$, given that the initial value $\bbeta^{(0)}$ lies within some neighborhood of $\bbeta^*$.   
 
\begin{theorem} \label{thm:high} 
   Let Assumptions \ref{assump:design} and \ref{assump:heavy-tail} hold. Assume that $\bbeta^{(0)} \in \mathbb{H}(s)\cap \Theta(r_0)$ for some $r_0 \asymp \tau \geq 16\sigma_0$, and the learning rate satisfies $\eta_0 =\eta/(2\lambda_1) $ for some $\eta\in(0,1)$. Moreover, let $\gamma\asymp \sqrt{\log(pn)}$ and $T\asymp \log\{  r_0    n \epsilon (\sigma_0\log p )^{-1}\}$. Write  $s^*:= \|\bbeta^*\|_0$. Then, the sparse DP Huber estimator $\bbeta^{(T)}$  obtained from Algorithm~\ref{alg:DPHuberhigh} satisfies
        \begin{align*}
           \|\bbeta^{(T)}-\bbeta^*\|_2  \lesssim &~\underbrace{\frac{\sigma_{\iota}^{2+\iota}} {\tau^{1+\iota} }   +     \sigma_0 \sqrt{\frac{s\log (ep/s) +\log n}{n}} +  \tau \frac{s\log (ep/s) + \log n}{n}}_{\tilde{r}}  \\
           &~+\frac{\sigma_0\log p}{  n\epsilon}   +C_{\eta,1}\frac{sT\tau\{\log(pn)\}^{3/2}}{n\epsilon}\sqrt{  \log\bigg(\frac{T}{\delta}\bigg) }   
        \end{align*}
     with probability at least $1-Cn^{-1}$,    provided that  $s \geq s^*\max \{ 192(1+\eta^{-2})(8\lambda_1/\lambda_p)^2 , 16\eta/(1-\eta)  \}$, $\smalltilde{r} \lesssim  r_0$, and $n\epsilon\gtrsim C_{\eta,2}sT\{\log(pn)\}^{3/2}\sqrt{ \log(T/\delta)}$,  
    where $C_{\eta,1}$ and $C_{\eta,2}$ are  two  positive constants depending only on $\eta$.
\end{theorem}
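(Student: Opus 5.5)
We follow the two-part strategy the paper announces: first a deterministic contraction argument conditioned on "good events", then a probabilistic verification that these events hold with probability at least $1-Cn^{-1}$. The key deterministic ingredients, on a neighbourhood $\Theta(r_0)\cap\mathbb{H}(2s+s^*)$, are:
\begin{enumerate}[(a)]
\item restricted strong convexity of $\hat\cL_\tau$, namely $\langle \nabla\hat\cL_\tau(\bbeta)-\nabla\hat\cL_\tau(\bbeta'),\bbeta-\bbeta'\rangle\geq (\lambda_p/2)\|\bbeta-\bbeta'\|_2^2$ for sparse $\bbeta,\bbeta'\in\Theta(r_0)$;
\item restricted smoothness with constant $2\lambda_1$;
\item a bound on the clipped score at the truth, $\|\nabla^{\gamma}\hat\cL_\tau(\bbeta^*)\|_\infty$, together with the clipping bias $\|\nabla^\gamma\hat\cL_\tau(\bbeta)-\nabla\hat\cL_\tau(\bbeta)\|_\infty$, which is negligible once $\gamma\asymp\sqrt{\log(pn)}$ exceeds $\max_i\|\bx_i\|_\infty$ with high probability.
\end{enumerate}
Here $\nabla^\gamma$ is the clipped gradient. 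In fact, with $\gamma\asymp\sqrt{\log(pn)}$ and sub-Gaussian covariates, $w_\gamma(\|\bx_i\|_\infty)=1$ for all $i$ with probability $1-Cn^{-1}$, so clipping is inactive and we may work with the true gradient.

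For the iteration step, we adapt the NoisyHT analysis of \cite{cai2021cost} (their Lemma A.3-type bound). For any $\bv$ and any $\hat s$-sparse $\bu$ with $s\geq$ a large multiple of $\hat s$, the output of Algorithm~\ref{alg:NHT} satisfies
\[
\|\tilde P_s(\bv)-\bv\|_2^2\leq (1+1/c)\frac{|S^*|}{s-|S^*|}\|\bv_{S}-\bu\|_2^2+C\sum_{i}\|\bw_i\|_\infty^2+\|\tilde\bw_S\|_2^2 .
\]
Applying this with $\bv=\tilde\bbeta^{(t+1)}$ and $\bu=\bbeta^*$, and combining (a)–(b) on the union support $S^{t}\cup S^{t+1}\cup S^*$ (of size at most $2s+s^*$), a standard IHT computation gives
\[
\|\bbeta^{(t+1)}-\bbeta^*\|_2^2\leq \Big(1-\frac{\eta\lambda_p}{c\lambda_1}\Big)\|\bbeta^{(t)}-\bbeta^*\|_2^2 + C\eta_0^2 s\|\nabla\hat\cL_\tau(\bbeta^*)\|_\infty^2 + W_t .
\]
The condition $s\geq s^*\cdot192(1+\eta^{-2})(8\lambda_1/\lambda_p)^2$ is exactly what makes the projection-expansion factor times $(1-\eta\lambda_p/\lambda_1)$ stay below one. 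The noise term satisfies $W_t\lesssim s\lambda_0^2\log^2 p$ with high probability, where the Laplace scale is $\lambda_0=2\eta_0\gamma\tau T\sqrt{5s\log(T/\delta)}/(n\epsilon)$ (using the maxima of $s+1$ batches of $p$ Laplace variables, union-bounded over $T\lesssim\log n$ steps). Taking square roots, this gives $\sqrt{W_t}\lesssim sT\tau\{\log(pn)\}^{3/2}\sqrt{\log(T/\delta)}/(n\epsilon)$, which is the last term of the bound. We must also check inductively that every iterate remains in $\Theta(r_0)$ so that (a) applies. This follows because the stationary error $\tilde r$ plus the privacy term is $\lesssim r_0$: the assumption $\tilde r\lesssim r_0$ covers the first, and $n\epsilon\gtrsim C_{\eta,2}sT\{\log(pn)\}^{3/2}\sqrt{\log(T/\delta)}$ together with $r_0\asymp\tau$ covers the second. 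Unrolling over $T\asymp\log\{r_0n\epsilon/(\sigma_0\log p)\}$ steps shrinks the initial error $r_0$ to $\sigma_0\log p/(n\epsilon)$.

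It remains to bound the statistical term $\sqrt{s}\|\nabla\hat\cL_\tau(\bbeta^*)\|_\infty$ and its interaction with the geometry to produce $\tilde r$. We split it as $\|\nabla\hat\cL_\tau(\bbeta^*)-\EE\nabla\hat\cL_\tau(\bbeta^*)\|$ plus the bias $\|\EE\{\psi_\tau(\varepsilon)\bx\}\|_2\leq\sigma_\iota^{2+\iota}\tau^{-1-\iota}$, the latter from $|\EE\{\psi_\tau(\varepsilon)\mid\bx\}|\leq\EE\{|\varepsilon|\mathbbm 1(|\varepsilon|>\tau)\mid\bx\}$. For the fluctuation we use Bernstein's inequality, since $|\psi_\tau|\leq\tau$ and the conditional variance is at most $\sigma_0^2$. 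A union bound over sparse directions (an $s$-sparse net of size $\binom{p}{s}9^s$) then yields $\sigma_0\sqrt{\{s\log(ep/s)+\log n\}/n}+\tau\{s\log(ep/s)+\log n\}/n$.

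\textbf{Main obstacle.} I expect the hardest step to be the restricted strong convexity (a) uniformly over $\Theta(r_0)\cap\mathbb{H}(2s+s^*)$. It requires lower-bounding $n^{-1}\sum_i\langle\bx_i,\bdelta\rangle^2\mathbbm 1(|\varepsilon_i|\leq\tau/2,\ |\langle\bx_i,\bdelta\rangle|\leq\tau/2)$. Since $\tau\geq16\sigma_0$, Chebyshev gives $\PP(|\varepsilon|>\tau/2)\leq1/64$, and the fourth-moment bound $\kappa_4$ controls the truncation in $\bdelta$ when $\|\bdelta\|_2\leq r_0\asymp\tau$. We would then pass to uniformity by Rademacher symmetrization, Talagrand contraction applied to a Lipschitz surrogate of the indicator, and a sparse-set peeling argument, which costs $\sqrt{s\log(ep/s)/n}$ and is absorbed under the stated sample-size condition.
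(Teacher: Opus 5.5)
Your argument is sound, but it runs the contraction on a different quantity than the paper does.

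\emph{Your route.} You track the distance $\|\bbeta^{(t)}-\bbeta^*\|_2$ directly. The noisy projection bound on $S^{t+1}\cup S^*$ costs a factor $1+O(\sqrt{s^*/s})$. The gradient step on the subspace indexed by $I=S^t\cup S^{t+1}\cup S^*$ contracts by $\sqrt{1-\eta_0\mu}$. This uses restricted strong convexity at the pair $(\bbeta^{(t)},\bbeta^*)$ together with co-coercivity of the convex, restricted-smooth Huber loss on that subspace. Co-coercivity, not merely a Lipschitz-gradient bound, is what makes this work for every $\eta\in(0,1)$: the crude factor $1-2\eta_0\mu+\eta_0^2L^2$ only contracts when $\eta\lesssim\lambda_p/\lambda_1$, so make that step explicit. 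Localization in $\Theta(r_0)$ then follows from the same recursion by induction.

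\emph{The paper's route.} It tracks the excess empirical risk $\hat\cL_\tau(\bbeta^{(t)})-\hat\cL_\tau(\bbeta^*)$. Proposition~\ref{prop:HD:excess_loss}, built on Lemmas~\ref{lem:HD:prop:1} and \ref{lem:HD:prop:2} (which absorb the peeling noise), gives a loss contraction with rate $\rho=4\eta s^*/s+\eta^2\phi_l/(8\phi_u)$. A separate induction, Proposition~\ref{prop:HD.convergence.region}, keeps the iterates in $\Theta(r_0)$; it needs the extra gradient-difference event $\cE_3$. Only at the end is the loss gap converted into an $\ell_2$ bound through $\cE_1$ and $\cE_4$.

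\emph{Trade-off.} Your version is shorter and dispenses with $\cE_3$ and the separate localization step. The mild cost is that restricted smoothness is needed at sparsity level $2s+s^*$ rather than only on $\mathbb{H}(s)$. The paper's version yields a function-value guarantee along the way. Both need $s\gtrsim\eta^{-2}(\lambda_1/\lambda_p)^2s^*$, and the stated condition on $s$ is more than enough for your recursion.

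Two details need tightening. First, the projection inequality must be stated on a set containing $S^{t+1}\cup S^*$, as in Lemma~\ref{lem.cons}. The unrestricted $\|\tilde P_s(\bv)-\bv\|_2^2$ cannot be bounded by a factor $O(s^*/s)$ times the squared distance from $\bv$ to a sparse vector. Second, the statistical term in your recursion should be $\sup_{|S|\le 2s}\|\{\nabla\hat\cL_\tau(\bbeta^*)\}_S\|_2$. This is what arises naturally from restricting to $S^{t+1}\cup S^*$, and it is what your last paragraph actually bounds. If you literally carry $\sqrt{s}\,\|\nabla\hat\cL_\tau(\bbeta^*)\|_\infty$, the bias contribution becomes $\sqrt{s}\,\sigma_\iota^{2+\iota}\tau^{-1-\iota}$ and you no longer recover $\tilde r$.
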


By setting $\tau \asymp \sigma_0 (n\epsilon )^{1/(2+\iota)}  \{ s\log(ep/s)+\log n \}^{-1/(2+\iota)}\,$, Theorem \ref{thm:high} implies that as long as $n \epsilon  \gtrsim s T   \{\log(pn)\}^{3/2} \sqrt{ \log(T/\delta)}$, the sparse DP Huber estimator $\bbeta^{(T)}$ satisfies, with probability at least $1-C n^{-1}$, that 
    \begin{align}\label{dp.highd.rate}
          \|\bbeta^{(T)}-\bbeta^*\|_2  
         \lesssim &~ \frac{\sigma_\iota^{2+\iota}}{\sigma_0^{1+\iota}}  \bigg\{\frac{s\log(ep/s) +\log n}{n \epsilon} \bigg\}^{(1+\iota)/(2+\iota)}  \{\log (p n  )\}^{3/2}T\sqrt{  \log\bigg(\frac{T}{\delta}\bigg)}    \nn\\
        &+  \sigma_0\sqrt{\frac{s\log(ep/s) + \log n}{n}}\,.
            \end{align}

In the estimation error bound given by \eqref{dp.highd.rate}, the second term represents the convergence rate of the non-private Huber estimator, while the first term accounts for the privacy cost, influenced by a carefully chosen robustification parameter $\tau$. The robustification parameter $\tau$ is specifically chosen based on the effective sample size $n\epsilon$, along with the intrinsic dimensionality and noise level. This selection aims to achieve a balance among robustness, bias, and privacy. For normally distributed errors, \cite{cai2021cost} derived a minimax lower bound for $(\epsilon, \delta)$-DP estimation of $\bbeta^*$ under $\ell_2$-risk in the high dimensions. The lower bound is of the order $\sigma_0 \{ \sqrt{s^*n^{-1}\log p} + s^*({n \epsilon})^{-1}\log p\}$, which holds under the conditions $\epsilon \in (0, 1)$, $s^* \ll  p^{1-\omega} $, and $\delta < n^{-1-\omega}$ for some fixed constant $\omega>0$. In contrast, the slower term $\{s({n \epsilon})^{-1}\log p\}^{1-1/(2+\iota)}$ (ignoring  logarithmic factors) in \eqref{dp.highd.rate} highlights the joint effects of heavy-tailedness and privacy considerations more explicitly. Moreover,  each iteration in Algorithm \ref{alg:DPHuberhigh}
 consists of an intermediate update with complexity $O(np)$ and a noisy update (NoisyHT) with
 complexity $O(ps)$. Since $s \ll n$ and $T\asymp \log (n\epsilon)$, the overall complexity for Algorithm \ref{alg:DPHuberhigh} is $O\{np\log (n\epsilon)\}$.

\begin{remark}
\label{final.convergence.highd.subgaussian} 
The result in \eqref{dp.highd.rate} shows that the sparse $(\epsilon, \delta)$-DP Huber estimator achieves a near-optimal convergence rate for sub-Gaussian errors in high dimensions. Similar to Remark~\ref{final.convergence.lowd.subgaussian}, assume that the regression error $\varepsilon_i$ is sub-Gaussian, satisfying $\EE (e^{\lambda \varepsilon_i}\, |\, \bx_i) \leq e^{\sigma_0^2 \lambda^2/2}$ for all $\lambda \in \RR$. In this case, we set the robustification parameter as 
$$
    \tau \asymp \sigma_0 \sqrt{\log\bigg\{\frac{n \epsilon}{s\log(ep/s)+\log n}\bigg\}}\, .
$$
The $\ell_2$-error of the resulting sparse DP Huber estimator $\bbeta^{(T)}$ obtained in Algorithm~\ref{alg:DPHuberhigh}, up to constant and logarithmic factors, is upper bounded by
$$
 \sigma_0  \frac{s\log(ep/s)+\log n}{n \epsilon } + \sigma_0 \sqrt{\frac{s\log(ep/s)+\log n}{n}}
$$
with probability at least $1-C n^{-1}$, provided that $n\epsilon \gtrsim  s T  \{\log(pn) \}^{3/2}\sqrt{ \log(T/\delta)}$. Compared to Theorem~4.4 in \cite{cai2021cost}, our results do not require the following assumptions: (i) $\| \bbeta^* \|_2  <c_0$ for some constant $c_0>0$, which is needed by Algorithm 4.2 and Theorem 4.4 of \cite{cai2021cost}, and (ii) $\sqrt{|I|}\| \bx_{i,I} \|_{\infty} <c_{\bx}$ with probability one for all subsets $I \subseteq [p]$ with $|I| \ll n$, where $c_{\bx}$ is a dimension-free constant. Additionally, we relax the sample size requirement from $n \epsilon \gtrsim (s^{*})^{3/2}$ to $n\epsilon \gtrsim s^* $, up to logarithmic factors.
\end{remark}

We have analyzed the trade-offs among bias, privacy, and robustness in $\bbeta^{(T)}$, and compared Algorithm~\ref{alg:DPHuberhigh} with the privatized OLS method proposed by \cite{cai2021cost} under sub-Gaussian noise conditions. Our analysis shows that when $\varepsilon_i$ follows a sub-Gaussian distribution, the estimator obtained from Algorithm~\ref{alg:DPHuberhigh} achieves a near-optimal convergence rate, while requiring less restrictive sample size conditions and accommodating broader classes of $\bbeta^*$ and $\bx_i$, all while maintaining differential privacy guarantees. We now examine the statistical convergence of Algorithm~\ref{alg:DPHuberhigh} in a non-private setting. Let $\breve{\bbeta}^{(T)}$ be the non-private sparse Huber estimator obtained from Algorithm~\ref{alg:DPHuberhigh} by eliminating the noise terms $\bw^t_1, \ldots, \bw^t_s, \widetilde{\bw}^t$ for all iterations. Corollary \ref{cor:non.dp.HD.convergence}, which is of independent interest, establishes the convergence rate of $\breve{\bbeta}^{(T)}$ under conditions similar to those in Theorem~\ref{thm:high}. The proof of Corollary \ref{cor:non.dp.HD.convergence} follows closely that of Theorem \ref{thm:high}, with the noise terms $\{\bw^t_1, \ldots, \bw^t_s, \widetilde{\bw}^t\}_{t=0}^{T-1}$ set to zero.

\begin{corollary} \label{cor:non.dp.HD.convergence} 
Assume that $\bbeta^{(0)} \in \mathbb{H}(s)\cap \Theta(r_0)$ for some $r_0>0$, and the learning rate satisfies $\eta_0 = \eta/(2\lambda_1) $ for some $\eta\in(0,1)$. Then,  with $\tau \asymp  \sigma_0   n^{1/(2+\iota)} \{ s\log(ep/s)+ \log n \}^{- 1/(2+\iota)}$, $r_0 \asymp \tau$, and $T\asymp \log(n / \log p)$, the non-private sparse Huber estimator $\breve{\bbeta}^{(T)}$ satisfies  
\begin{align*}
\|\breve{\bbeta}^{(T)}-\bbeta^*\|_2  \lesssim \sigma_0\sqrt{\frac{s\log(ep/s) + \log n}{n} }
\end{align*}
with probability at least  $1-Cn^{-1}$, provided that $n \gtrsim s \log p+ \log n$ and  $s  \gtrsim  s^*$.
\end{corollary}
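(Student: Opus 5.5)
We obtain the bound by specializing the argument behind Theorem~\ref{thm:high}, setting all Laplace noise vectors $\{\bw^t_1,\ldots,\bw^t_s,\widetilde{\bw}^t\}_{t=0}^{T-1}$ to zero. The iteration then reduces to clipped gradient descent followed by exact hard thresholding $P_s$ onto the top-$s$ coordinates. The proof has two parts: a deterministic contraction on a good event, and a probabilistic bound for that event.

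\textbf{Step 1 (deterministic contraction).} Let $S^t=\mathrm{supp}(\bbeta^{(t)})$ and $S^*=\mathrm{supp}(\bbeta^*)$, and take $I=S^{t+1}\cup S^t\cup S^*$, so $|I|\le 2s+s^*$. We use the standard iterative HT lemma of \cite{jain2014iterative}: for $s\gtrsim s^*$, $\|P_s(\bv)-\bbeta^*\|_2^2\le (1+c\sqrt{s^*/s})\|\bv_I-\bbeta^*\|_2^2$. Apply it with $\bv=\tilde\bbeta^{(t+1)}$ and write
\begin{align*}
\tilde\bbeta^{(t+1)}_I-\bbeta^*=\{\bbeta^{(t)}-\bbeta^*-\eta_0(\nabla\hat\cL_\tau(\bbeta^{(t)})-\nabla\hat\cL_\tau(\bbeta^*))\}_I-\eta_0\{\nabla\hat\cL_\tau(\bbeta^*)\}_I+\eta_0\bdelta^t_I,
\end{align*}
where $\bdelta^t$ is the clipping discrepancy between the clipped gradient and $-\nabla\hat\cL_\tau$. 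On the good event, $\hat\cL_\tau$ satisfies restricted strong convexity with constant $\gtrsim\lambda_p$ over $\{\bbeta\in\mathbb{H}(3s+s^*)\cap\Theta(r_0)\}$, because $r_0\asymp\tau$ keeps most residuals in the quadratic region, and restricted smoothness with constant $\le 2\lambda_1$. With $\eta_0=\eta/(2\lambda_1)$ the first term then contracts by $1-c\eta\lambda_p/\lambda_1$. Choosing $s\gtrsim s^*(\lambda_1/\lambda_p)^2$ keeps the overall factor $\kappa<1$. Unrolling gives
\[
\|\bbeta^{(T)}-\bbeta^*\|_2\le\kappa^T r_0+C\Big(\sup_{|I|\le 3s}\|\{\nabla\hat\cL_\tau(\bbeta^*)\}_I\|_2+\sup_t\|\bdelta^t_I\|_2\Big).
\]
By induction every iterate stays in $\Theta(r_0)$, since the error bound is $\lesssim r_0$.

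\textbf{Step 2 (good events).} We bound each term and show the contraction holds with probability $1-Cn^{-1}$.
\begin{itemize}
\item[(a)] \emph{Score bound.} We show $\sup_{|I|\le 3s}\|\{\nabla\hat\cL_\tau(\bbeta^*)\}_I\|_2\lesssim\sigma_\iota^{2+\iota}\tau^{-1-\iota}+\sigma_0\sqrt{\{s\log(ep/s)+\log n\}/n}+\tau\{s\log(ep/s)+\log n\}/n$. The first term is the Huber bias $\|\EE\psi_\tau(\varepsilon)\bx\|$, controlled via $|\EE\psi_\tau(\varepsilon)\,|\,\bx|\le\EE(|\varepsilon|^{2+\iota}|\bx)/\tau^{1+\iota}$. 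The remaining two terms come from Bernstein's inequality plus a union bound over a net of $s$-sparse unit vectors, which has $\binom{p}{3s}9^{3s}$ elements.
\item[(b)] \emph{Clipping error.} With $\gamma\asymp\sqrt{\log(pn)}$, sub-Gaussianity gives $\max_i\|\bx_i\|_\infty\le\gamma$ with probability $1-n^{-1}$, so $\bdelta^t=0$.
\item[(c)] \emph{Restricted strong convexity.} We need $\langle\nabla\hat\cL_\tau(\bbeta)-\nabla\hat\cL_\tau(\bbeta^*),\bbeta-\bbeta^*\rangle\ge c\lambda_p\|\bbeta-\bbeta^*\|_2^2$ uniformly over sparse $\bbeta\in\Theta(r_0)$. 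The plan is to lower bound the integrand by $\mathbbm{1}(|\varepsilon_i|\le\tau/2)\mathbbm{1}(|\bx_i^\T\bdelta|\le\tau/2)(\bx_i^\T\bdelta)^2$. Its expectation is handled by Chebyshev with $\tau\ge16\sigma_0$ together with $\kappa_4$, and it concentrates uniformly by a Lipschitz-truncation and Rademacher-complexity argument over sparse directions, requiring $n\gtrsim s\log p+\log n$.
\end{itemize}
Step 2(c) is the main obstacle, since it must hold uniformly over a non-convex sparse set and scale-invariantly in $r_0\asymp\tau$.

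\textbf{Step 3 (conclusion).} Substitute $\tau\asymp\sigma_0\{n/(s\log(ep/s)+\log n)\}^{1/(2+\iota)}$. Then the bias term $\sigma_0^{-1-\iota}\sigma_\iota^{2+\iota}\{(s\log(ep/s)+\log n)/n\}^{(1+\iota)/(2+\iota)}$ and the term $\tau(s\log(ep/s)+\log n)/n$ are both at most the order $\sigma_0\sqrt{(s\log(ep/s)+\log n)/n}$, with $\sigma_\iota/\sigma_0$ treated as constant as in the corollary. Finally, $T\asymp\log(n/\log p)$ makes $\kappa^T r_0\asymp\tau\kappa^T$ negligible relative to this rate, which yields the stated bound.
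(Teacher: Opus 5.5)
Your argument is correct, but it takes a different route from the paper. The paper obtains the corollary by specializing Theorem~\ref{thm:dp.HD.convergence} to $W_t=0$ and combining it with Proposition~\ref{prop:events2}. It works in three stages:
\begin{itemize}
\item it tracks the excess empirical risk $\hat\cL_\tau(\bbeta^{(t)})-\hat\cL_\tau(\bbeta^*)$ and contracts it by $1-\rho$ per step (Proposition~\ref{prop:HD:excess_loss}), in the style of \cite{jain2014iterative};
\item it proves localization in $\Theta(r_0)$ by a separate argument (Proposition~\ref{prop:HD.convergence.region});
\item only at the end does it convert the risk gap into an $\ell_2$ bound of order $r_1/\phi_l$, via RSC at $\bbeta^*$.
\end{itemize}

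You instead contract $\|\bbeta^{(t)}-\bbeta^*\|_2$ directly. One linear recursion then gives both the localization (by induction) and the rate, and your Step~3 simplification is the same as the paper's.

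What makes your route work is the sharp projection bound $\|P_s(\bv)-\bbeta^*\|_2^2\le(1+c\sqrt{s^*/s})\|\bv_I-\bbeta^*\|_2^2$. This is not the lemma of \cite{jain2014iterative}. Their inequality (the noiseless case of Lemma~\ref{lem.cons}) controls $\|\{P_s(\bv)-\bv\}_I\|_2$ by $\{(|I|-s)/(|I|-s^*)\}^{1/2}\|\bv_I-\bbeta^*\|_2$. Combined with the triangle inequality, that gives a factor as large as roughly $1+2^{-1/2}$, which destroys contraction.

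The paper's risk-based argument is built to work with that weaker inequality, which it already has in noisy form. That is why the corollary is literally the noiseless case of the private theorem. You, by contrast, must prove the sharp bound yourself. It takes a few lines:
\begin{itemize}
\item Let $S$ be the top-$s$ set of $\bz=\bv_I$ and $m=\max_{j\notin S}|z_j|$.
\item The set $S\setminus S^*$ contains at least $s-s^*+|S^*\setminus S|$ entries of magnitude at least $m$, and $\bbeta^*$ vanishes on all of them.
\item Hence $\|\bz_{S^*\setminus S}\|_2\le\sqrt{s^*/s}\,\|\bz_S-\bbeta^*_S\|_2$.
\item Young's inequality then gives the factor $1+\sqrt{s^*/s}+s^*/s$.
\end{itemize}

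Separately, your Step~1 needs co-coercivity for the gradient step to contract for every $\eta\in(0,1)$, not only for $\eta\lesssim\lambda_p/\lambda_1$. Write $\nabla\hat\cL_\tau(\bbeta)-\nabla\hat\cL_\tau(\bbeta^*)=\bA\bdelta$ with $\bA=n^{-1}\sum_i w_i\bx_i\bx_i^\T$ and $w_i\in[0,1]$. Then $\|(\bA\bdelta)_I\|_2^2\le\|\bA_{II}\|\,\bdelta^\T\bA\bdelta$ for $\bdelta$ supported on $I$. Using the RSC constant and a Lipschitz bound $\|(\bA\bdelta)_I\|_2\le2\lambda_1\|\bdelta\|_2$ separately only gives the factor $1-c\eta\lambda_p/\lambda_1+\eta^2$.
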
 

\vspace{-.3cm}
\section{Numerical Studies}
\label{sec:5}

In this section, we conduct simulation studies to evaluate the numerical performance of the DP Huber regression in both low- and high-dimensional settings. The data $\{( y_i,\bx_i)\}_{i=1}^n$ are generated from the linear model $y_i = \bx_i^{\T} \bbeta^* + \sqrt{b}\varepsilon_i$, where $\bx_i = (1, \bx_{i,-1}^{\T})^{\T}$ and $\bx_{i,-1}\in \mathbb{R}^{p-1}$ follows a distribution that varies across different settings. The noise variables $\varepsilon_i$ are i.i.d. and follow either a standard normal distribution $\cN(0, 1)$ or a heavy-tailed $t_{2.25}$ distribution; the constant $b$ determines the noise scale. 
The construction of $\bbeta^*$ differs across scenarios. In the low-dimensional case, each component of $\bbeta^*$ is independently set to $a$ or $-a$ with equal probability. In the high-dimensional setting, only the first $s^*$ entries are generated in the same manner and the remaining $p-s^*$ components are set to zero. The signal-to-noise ratio (SNR) is defined as ${\rm Var}(\bx_i^{\T}\bbeta^*)/\{b{\rm Var}(\varepsilon_i)\}$.  
For a fixed design distribution and noise law, larger values of $a^2/b$ correspond to higher SNR.

\subsection{Selection of tuning parameters}

Recognizing that Algorithms \ref{alg:DPHuberlow} and \ref{alg:DPHuberhigh} both rely on a sequence of tuning parameters, we begin by outlining the selection principles adopted in our implementation, as these choices are essential for practical performance and reproducibility. Although our theory permits a broad class of initial values $\bbeta^{(0)}$ satisfying $\|\bbeta^{(0)} - \bbeta^*\|_2 \leq r_0$ for some $r_0 > 0$ that may even diverge with $n$, our numerical studies, especially in high-dimensional settings, show that a well-chosen initialization can markedly improve finite-sample behavior. Motivated by \cite{liu2024DP}, we therefore use, in both low- and high-dimensional scenarios, a private initial estimator obtained by solving a ridge-penalized Huber regression followed by output perturbation. Additional details on this private initialization are provided below. Throughout this section, we fix $\delta = 10 n^{-1.1}$ and present tuning choices under the $(\epsilon,\delta)$-DP framework. The corresponding rules for $\epsilon$-GDP, along with proofs of the privacy guarantee for the initialization step, are deferred to Section  \ref{privacy.guarantee}  of the supplementary material.

\subsubsection{Selection of tuning parameters in Algorithm \ref{alg:DPHuberlow}}
\label{selection.low}

We choose the initial value $\bbeta^{(0)}$ in a data-dependent manner. To ensure that the overall procedure remains $(\epsilon,\delta)$-DP, we divide the privacy budget between the initialization step and Algorithm~\ref{alg:DPHuberlow}, which we denote by $(\epsilon_{\rm init}, \delta_{\rm init})$ and $(\epsilon_{\rm main}, \delta_{\rm main})$, respectively. Specifically, we set $\epsilon_{\rm init} = \epsilon_{\rm main}/5= \epsilon /6 $ and $\delta_{\rm init} = \delta_{\rm main}/5= \delta /6 $. 

\indent
\underline{{\textbf{Initialization of}} $\bbeta^{(0)}$.}  We estimate $\bbeta^{(0)}$ by solving a ridge-penalized Huber regression with row-wise clipping: 
\begin{align}\label{ridge.beta}
    \widehat\bbeta^{(0)} = \arg\min_{\bbeta \in \mathbb{R}^p}  \bigg\{ \frac{1}{n}\sum_{i=1}^n \rho_{\tau_0}(y_i - \tilde{\bx}_{i}^{\T}\bbeta)  + \frac{\lambda_0}{2} \|\bbeta\|_2^2 \bigg\}\,,
\end{align}
where $\lambda_0 = 0.2$ and $\tilde{\bx}_{i} = (1, \tilde{\bx}_{i,-1}^{\T})^{\T}$ with $\tilde{\bx}_{i,-1} = \bx_{i,-1} \min \{ \sqrt{p}/(6\|\bx_{i,-1}\|_2),1 \} $. For this initialization step, the privacy parameters are allocated as $(\epsilon_{\rm init}/4, 0)$ for selecting $\tau_0$ and $(3\epsilon_{\rm init}/4, \delta_{\rm init})$ for the output perturbation.

\begin{itemize}

    \item Selection of $\tau_0$. For each $i\in[n]$, define $\smalltilde{y}_i = \min\{ \log n, \max\{ -\log n ,y_i \} \}$. Let $m_1 = n^{-1}\sum_{i=1}^n\smalltilde{y}_i  + w_1$ and $m_2 =  n^{-1}\sum_{i=1}^n\smalltilde{y}_i^2 + w_2$, with $w_1 \sim {\rm Laplace}(16(n\epsilon_{\rm init})^{-1}\log n)$ and $w_2 \sim {\rm Laplace}(8(n\epsilon_{\rm init})^{-1}(\log n)^2)$. Set $\tau_0$ as $\sqrt{ m_2 - m_1^2}$ if $m_2 - m_1^2 > 0$ and $2$ otherwise.

    \item Output perturbation. Let  $\smalltilde{\sigma} = 8B\tau_0(3n\epsilon_{\rm init}\lambda_0)^{-1}\sqrt{2\log(1.25/\delta_{\rm init})}$ and $B= \sqrt{1+p/36}$. We then set $\bbeta^{(0)} = \widehat{\bbeta}^{(0)} + \smalltilde{\sigma}\cdot\cN(\bzero,\bI_p)$.
\end{itemize} 

\indent
\underline{{\textbf{Selection of other parameters}}.} 
We set $\eta_0 = 0.2$, $\gamma = 0.5 \sqrt{p+\log n}$, and $T= \lceil 2 \log n \rceil$. Recall that the optimal choice of $\tau$ satisfies $\tau \asymp \sigma_0 \{n\epsilon(p+\log n )^{-1}\}^{1/2}$ (here we fix $\iota=0$ for simplicity). We then set $\tau =0.04 \tau_0\sqrt{ n\epsilon  (p+\log n)^{-1}}$, where $\tau_0$ defined in \eqref{ridge.beta} serves as a rough upper bound for $\sigma_0$.

\subsubsection{Selection of tuning parameters in Algorithm \ref{alg:DPHuberhigh}}
\label{selection.high}

Similar to Section~\ref{selection.low}, we divide the overall privacy budget between the initialization step and Algorithm \ref{alg:DPHuberhigh}, using $(\epsilon_{\rm init}, \delta_{\rm init})$ for initialization and $(\epsilon_{\rm main}, \delta_{\rm main})$ for the main algorithm. We set     $\epsilon_{\rm init}=2\epsilon_{\rm main}=2\epsilon /3 $ and $\delta_{\rm init}=\delta_{\rm main }= \delta /2$, which ensures that the resulting estimator satisfies the desired $(\epsilon,\delta)$-DP guarantee.

\indent
\underline{{\textbf{Initialization of}} $\bbeta^{(0)}$.}  In high-dimensional sparse models, only the covariates corresponding to the true support of $\bbeta^*$ carry meaningful signal. Guided by this observation, we construct the initial estimator $\bbeta^{(0)}$ in two stages. First, we obtain a DP estimate of the support using a clipped gradient procedure. Second, conditional on the selected support, we compute a private initial estimator restricted to that subset of covariates. To ensure privacy, we allocate the budget for these two components as $(\epsilon_{\rm init}/2, 0)$ for the support recovery step and $(\epsilon_{\rm init}/2, \delta_{\rm init})$ for subsequent private estimation step.

\begin{itemize}
     \item Support recovery. For each $j\in \{2,\ldots, p\}$, define  $u_{i,j}= y_i x_{i,j} $. Set ${g}_j =  | {n}^{-1}\sum_{i=1}^n  \smalltilde{u}_{i,j}   |$, $\smalltilde{u}_{i,j}= u_{i,j}\min\{ \sqrt{\log(pn)}/|u_{i,j}|,1 \}$, with the convention $\smalltilde u_{i,j}=0$ when $u_{i,j}=0$. Let $s_0=s-1$. We then apply the Report Noisy Max procedure \citep[Claim 3.9]{dwork2014algorithmic} to $({g}_2,\ldots, {g}_p)$ in a peeling scheme: run it $s_0$ times, each time adding independent Laplace noise ${\rm Laplace}(2s_0 \Delta/ \epsilon_{\rm init})$ with $\Delta= 2n^{-1}\sqrt{\log (pn)}$, and select the top-$s_0$ indices without replacement. This yields the support $\hat{\cS}_0=(j_{(1)},\ldots,j_{(s_0 )})$.

    \item Initialize $\bbeta^{(0)}$. We first obtain an $(\epsilon_{\rm init}/2,\delta_{\rm init})$-DP estimator $\hat{\bbeta}_{\mathrm{init}}$ by applying the initialization procedure described in Section~\ref{selection.low} to $\{(y_i,\bx_{i,\hat{\cS}})\}_{i=1}^n$, where $\hat{\cS}=\{1\} \cup \hat{\cS}_0$. We then embed this estimator into $\bbeta^{(0)} \in \mathbb{R}^p$ by setting $\bbeta^{(0)}_{\hat{\cS}} = \hat{\bbeta}_{\mathrm{init}}$ and $\bbeta^{(0)}_{[p]\setminus \hat{\cS}} = \bzero$.
\end{itemize}

\indent
\underline{{\textbf{Selection of other parameters}}.} We set $\gamma = 0.5\sqrt{ \log (pn)}$, take $T= \lceil 2 \log n \rceil$, and choose the working sparsity level as $s=\lceil 1.2s^*\rceil $. The learning rate is fixed at $\eta_0 = 0.01$. Recall that the optimal order of the robustification parameter satisfies $\tau \asymp \sigma_0 \{n\epsilon(s\log p+\log n )^{-1}\}^{1/2}$ (here we fix $\iota =0$  for simplicity). We then set $\tau = 0.04\tau_0 \sqrt{  n\epsilon(s\log p+\log n)^{-1}}$, where the choice of $\tau_0$ is described in Section~\ref{selection.low}; consistent with the low-dimensional procedure, this same quantity is also used in constructing the initial estimator $\bbeta^{(0)}$.

\subsection{Low-dimensional setting}
\label{sec.simulation.low}

In the low-dimensional setting, each entry of $\bx_{i,-1}$ is independently drawn from either $\cN(0,1)$ or ${\rm Uniform}( -\sqrt{3},\sqrt{3})$. The signal and noise scale parameters $a$ and $b$ take values in $\{0.5, 1, 2\}$. As a benchmark, we consider the non-private Huber estimator obtained by running Algorithm~\ref{alg:DPHuberlow} without initialization, clipping, and noise injection. For this estimator, we set $\eta_0 = 0.5$, $T=\lceil 2\log n\rceil$, and update the robustification parameter according to $\tau  = 0.2\smallhat{\sigma}_0\sqrt{n (p+\log n )^{-1}}$, where $\smallhat{\sigma}_0^2 = n^{-1}\sum_{i=1}^n(y_i-\bar{y})^2$ with $\bar{y}=n^{-1}\sum_{i=1}^n y_i$.

We set the dimension $p\in \{5,10,20\}$ and let the sample size $n$ range from $2500$ to $10000$ to examine how the relative $\ell_2$-error, $\|\mathring\bbeta-\bbeta^*\|_2/\|\bbeta^*\|_2$, varies with both dimension and sample size. All results are computed over 300 repetitions, and $\mathring{\bbeta}$ denotes the estimator being evaluated. We first investigate the effect of the initialization scheme on the accuracy of  the $(\epsilon,\delta)$-DP estimator using the relative 
$\ell_2$-error across different sample sizes; see Figure~\ref{fig.low.init}. Figure~\ref{fig.low.eps} further compares the $(\epsilon,\delta)$-DP Huber estimators at multiple privacy levels with the non-private Huber estimator. As expected, weaker privacy protection (large $\epsilon$) yields improved estimation accuracy. In addition, when the dimension is fixed at $p=10$, Table~\ref{tab.low.est_p10} shows that the estimation error of the $(\epsilon, \delta)$-DP Huber estimator decreases as the SNR increases. Additional results at $p\in \{5, 20\}$, along with finite-sample performance of the $\epsilon$-GDP Huber estimators, are given in supplementary material (see Tables \ref{tab.low.est_p5}, \ref{tab.low.est_p20} and \ref{tab.low.est_GDP_p5}--\ref{tab.low.est_GDP_p20}).

Next, we compare the coverage performance of the privatized confidence intervals (CIs) introduced in Section~\ref{noisyGD.lowd} with both their non-private counterparts and the private bootstrap procedure described in Algorithm~2 of \cite{Ferrando2022}\footnote{\href{https://github.com/ceciliaferrando/PB-DP-CIs}{https://github.com/ceciliaferrando/PB-DP-CIs}}, for which the required data ranges are calibrated using the empirical maxima of each simulated dataset. The parameters $(\tau_1, \gamma_1)$ used to construct $\widetilde{\bXi}_{\tau_1,\gamma_1,\epsilon}$ in \eqref{tildeXi} are set as $\gamma_1 = 0.5\sqrt{p+\log n}$ and $\tau_1 = 0.95\tau_0 \sqrt{n\epsilon (p+\log n)^{-1}}$, where $\tau_0$ is selected according to Section~\ref{selection.low}. The pair 
$$
   (\varsigma_1, \varsigma_2) =  \big( 2\gamma_1^2(n\epsilon)^{-1} \sqrt{2\log(1.25/\delta)}\,, 2\gamma_1^2 \tau_1^2 (n\epsilon)^{-1}\sqrt{2\log(1.25/\delta)}  \big)
$$ 
follows from Lemma~\ref{lem.dp} in the supplementary material. We allocate the total privacy budget $(\epsilon,\delta)$ across the initialization, main estimation, and inference steps as $(\epsilon_{{\rm init}}, \delta_{{\rm init}})$, $(\epsilon_{{\rm main}}, \delta_{{\rm main}})$, and $(\epsilon_{{\rm infer}}, \delta_{{\rm infer}})$, respectively, with $\epsilon_{\rm init}=\epsilon_{\rm infer}=\epsilon_{\rm main}/4=\epsilon/6$, and the same proportions applied to $\delta$. The comparison results for the setting $(n,p,a,b)=(10000,5,1,1)$ are reported in Table~\ref{tab.inference}. All methods achieve nominal coverage across a range of covariate designs and noise distributions. Notably, under heavy-tailed noise, the privatized CIs are substantially shorter than the private bootstrap intervals of \cite{Ferrando2022}.

\begin{figure}[h]
    \centering
\includegraphics[width=0.8\textwidth]{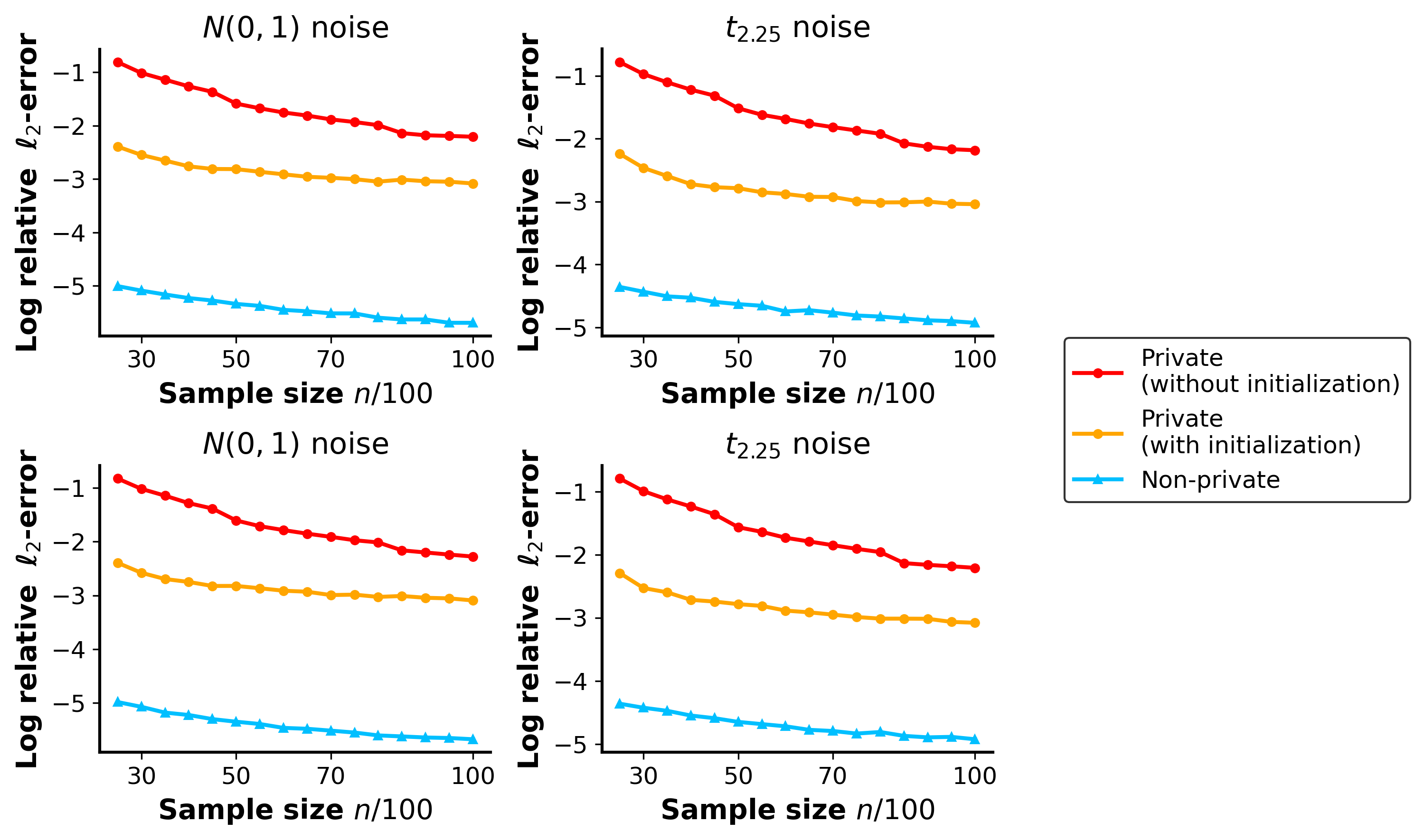}
    \caption{Plots of the average logarithmic relative $\ell_2$-error over 300 repetitions as a function of the sample size under different design settings. The top two panels correspond to Gaussian covariates, and the bottom two panels to uniform covariates, with $\epsilon=0.9$, $p=10$, $a=2$, and $b=0.5$. The noise distribution is either $\cN(0,1)$ or $t_{2.25}$.}
\label{fig.low.init}
\end{figure}

\begin{figure}[h]
    \centering
\includegraphics[width=0.9\textwidth]{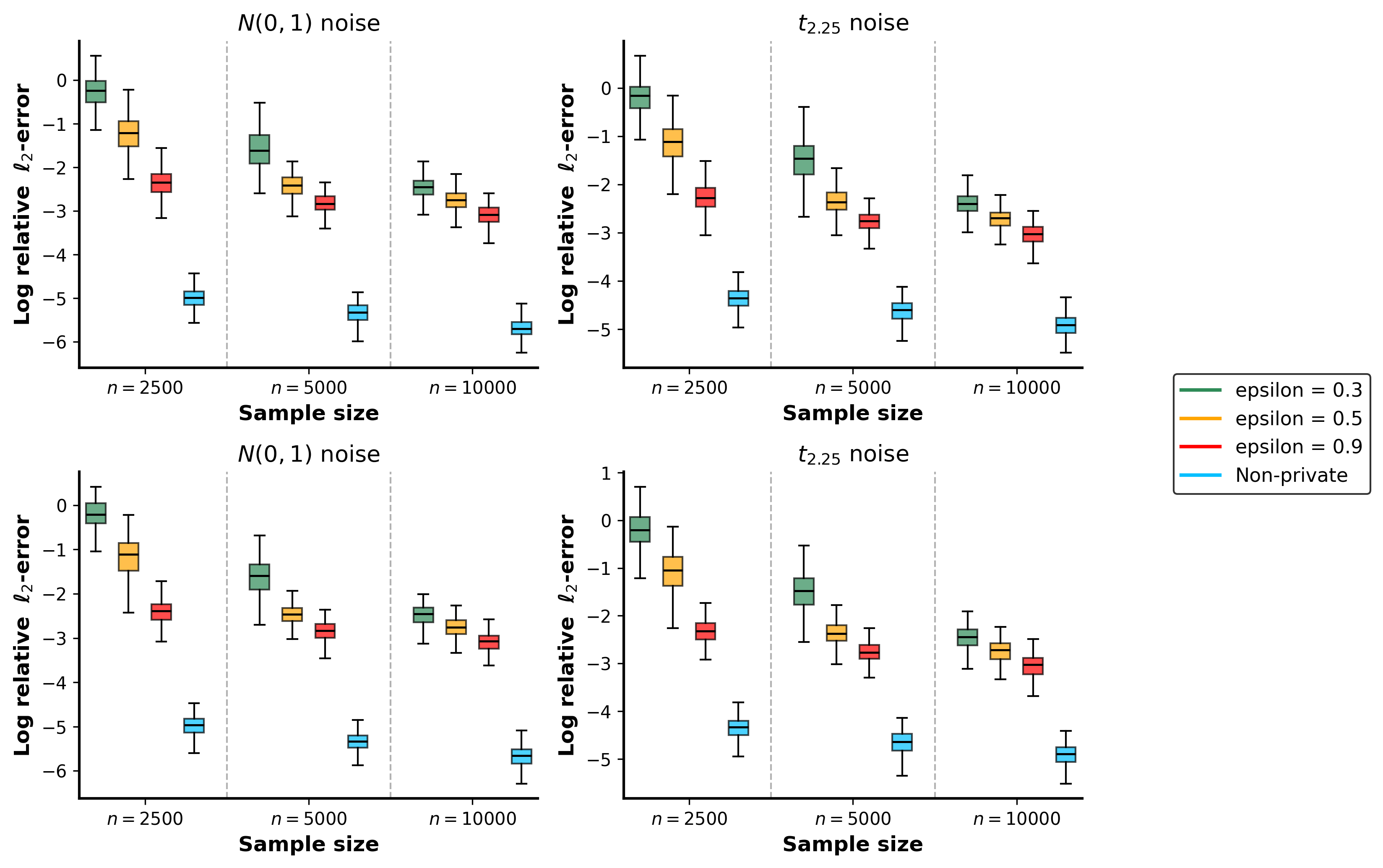}
    \caption{Boxplots of the logarithmic relative $\ell_2$-error, based on 300 repetitions, across three sample sizes under four design settings. The top two panels correspond to Gaussian covariates, and the bottom two to uniform covariates, with $p=10$, $a=2$, and $b=0.5$. The noise distribution is either $\cN(0,1)$ or $t_{2.25}$.}
\label{fig.low.eps}
\end{figure}

\begin{sidewaystable} 
\centering 
\scriptsize 
\setlength{\tabcolsep}{2.5pt}
\caption{Average logarithmic relative $\ell_2$-error over 300 repetitions for the $(\epsilon,\delta)$-DP Huber estimators under various combinations of $(a, b, n, \epsilon)$, with $p = 10$.} 
\renewcommand{\arraystretch}{1.3} 
\begin{tabular}{ccc cccccccc cccccccc} \toprule &&&\multicolumn{8}{c}{Gaussian design} &\multicolumn{8}{c}{ Uniform design} \\\cmidrule(lr){4-11} \cmidrule(lr){12-19} &&&\multicolumn{4}{c}{$ \cN(0,1)$ noise} &\multicolumn{4}{c}{ $ t_{2.25}$ noise}& \multicolumn{4}{c}{$ \cN(0,1)$ noise} &\multicolumn{4}{c}{ $ t_{2.25}$ noise} \\\cmidrule(lr){4-7} \cmidrule(lr){8-11}\cmidrule(lr){12-15} \cmidrule(lr){16-19} $a$&$b$ &$n$ & non-private&$\epsilon=0.3$&$\epsilon=0.5$&$\epsilon=0.9$&non-private&$\epsilon=0.3$&$\epsilon=0.5$&$\epsilon=0.9$&non-private&$\epsilon=0.3$&$\epsilon=0.5$&$\epsilon=0.9$&non-private&$\epsilon=0.3$&$\epsilon=0.5$&$\epsilon=0.9$\\ 
\midrule 0.5&0.5 &2500 &$-3.624$ & $0.272$ & $-0.650$ & $-1.612$ & $-3.018$ & $0.350$ & $-0.489$ & $-1.451$ & $-3.603$ & $0.262$ & $-0.611$ & $-1.693$ & $-3.026$ & $0.305$ & $-0.476$ & $-1.510$ \\
&&5000&$-3.955$ & $-0.950$ & $-1.745$ & $-2.366$ & $-3.290$ & $-0.780$ & $-1.609$ & $-2.168$ & $-3.966$ & $-0.993$ & $-1.819$ & $-2.393$ & $-3.309$ & $-0.788$ & $-1.618$ & $-2.158$ \\
&&10000&$-4.310$ & $-1.890$ & $-2.392$ & $-2.805$ & $-3.578$ & $-1.754$ & $-2.176$ & $-2.533$ & $-4.292$ & $-1.957$ & $-2.408$ & $-2.811$ & $-3.570$ & $-1.788$ & $-2.202$ & $-2.551$ \\
&1&2500&$-3.277$ & $0.309$ & $-0.540$ & $-1.408$ & $-2.692$ & $0.437$ & $-0.348$ & $-1.234$ & $-3.256$ & $0.313$ & $-0.513$ & $-1.485$ & $-2.703$ & $0.404$ & $-0.304$ & $-1.273$ \\
&&5000&$-3.609$ & $-0.797$ & $-1.540$ & $-2.193$ & $-2.964$ & $-0.598$ & $-1.376$ & $-1.951$ & $-3.619$ & $-0.848$ & $-1.607$ & $-2.212$ & $-2.982$ & $-0.600$ & $-1.385$ & $-1.946$ \\
&&10000&$-3.963$ & $-1.708$ & $-2.225$ & $-2.667$ & $-3.249$ & $-1.518$ & $-1.964$ & $-2.342$ & $-3.945$ & $-1.754$ & $-2.238$ & $-2.676$ & $-3.238$ & $-1.568$ & $-1.987$ & $-2.357$ \\
&2&2500&$-2.931$ & $0.387$ & $-0.386$ & $-1.175$ & $-2.360$ & $0.572$ & $-0.144$ & $-0.974$ & $-2.909$ & $0.375$ & $-0.373$ & $-1.247$ & $-2.372$ & $0.566$ & $-0.101$ & $-1.014$ \\
&&5000&$-3.262$ & $-0.613$ & $-1.314$ & $-1.955$ & $-2.631$ & $-0.368$ & $-1.113$ & $-1.707$ & $-3.273$ & $-0.667$ & $-1.367$ & $-1.975$ & $-2.649$ & $-0.373$ & $-1.116$ & $-1.697$ \\
&&10000&$-3.617$ & $-1.480$ & $-1.998$ & $-2.463$ & $-2.913$ & $-1.262$ & $-1.722$ & $-2.118$ & $-3.598$ & $-1.520$ & $-2.021$ & $-2.473$ & $-2.901$ & $-1.315$ & $-1.741$ & $-2.133$ \\
\midrule
1&0.5&2500&$-4.317$ & $-0.069$ & $-0.971$ & $-2.041$ & $-3.650$ & $-0.006$ & $-0.836$ & $-1.905$ & $-4.296$ & $-0.051$ & $-0.914$ & $-2.113$ & $-3.649$ & $0.003$ & $-0.803$ & $-1.943$ \\
&&5000&$-4.648$ & $-1.322$ & $-2.170$ & $-2.635$ & $-3.924$ & $-1.163$ & $-2.035$ & $-2.503$ & $-4.659$ & $-1.366$ & $-2.217$ & $-2.645$ & $-3.940$ & $-1.163$ & $-2.036$ & $-2.491$ \\
&&10000&$-5.003$ & $-2.272$ & $-2.611$ & $-2.934$ & $-4.223$ & $-2.130$ & $-2.478$ & $-2.813$ & $-4.985$ & $-2.307$ & $-2.616$ & $-2.934$ & $-4.218$ & $-2.185$ & $-2.504$ & $-2.825$ \\
&1&2500&$-3.970$ & $-0.039$ & $-0.893$ & $-1.889$ & $-3.335$ & $0.064$ & $-0.734$ & $-1.734$ & $-3.949$ & $-0.019$ & $-0.845$ & $-1.967$ & $-3.341$ & $0.070$ & $-0.692$ & $-1.773$ \\
&&5000&$-4.302$ & $-1.212$ & $-2.039$ & $-2.565$ & $-3.610$ & $-1.023$ & $-1.874$ & $-2.390$ & $-4.312$ & $-1.254$ & $-2.086$ & $-2.579$ & $-3.627$ & $-1.029$ & $-1.878$ & $-2.378$ \\
&&10000&$-4.656$ & $-2.162$ & $-2.555$ & $-2.897$ & $-3.902$ & $-1.984$ & $-2.374$ & $-2.726$ & $-4.638$ & $-2.203$ & $-2.563$ & $-2.900$ & $-3.897$ & $-2.046$ & $-2.398$ & $-2.736$ \\
&2&2500&$-3.624$ & $0.017$ & $-0.782$ & $-1.690$ & $-3.018$ & $0.164$ & $-0.596$ & $-1.529$ & $-3.603$ & $0.036$ & $-0.739$ & $-1.761$ & $-3.026$ & $0.161$ & $-0.556$ & $-1.568$ \\
&&5000&$-3.955$ & $-1.063$ & $-1.847$ & $-2.431$ & $-3.290$ & $-0.856$ & $-1.672$ & $-2.232$ & $-3.966$ & $-1.100$ & $-1.890$ & $-2.449$ & $-3.309$ & $-0.869$ & $-1.677$ & $-2.221$ \\
&&10000&$-4.310$ & $-1.997$ & $-2.441$ & $-2.819$ & $-3.578$ & $-1.803$ & $-2.228$ & $-2.601$ & $-4.292$ & $-2.030$ & $-2.453$ & $-2.825$ & $-3.570$ & $-1.863$ & $-2.250$ & $-2.612$ \\
\midrule
2&0.5&2500&$-5.010$ & $-0.253$ & $-1.233$ & $-2.362$ & $-4.278$ & $-0.204$ & $-1.138$ & $-2.278$ & $-4.989$ & $-0.228$ & $-1.170$ & $-2.418$ & $-4.267$ & $-0.215$ & $-1.102$ & $-2.320$ \\
&&5000&$-5.341$ & $-1.591$ & $-2.430$ & $-2.831$ & $-4.556$ & $-1.485$ & $-2.368$ & $-2.779$ & $-5.352$ & $-1.635$ & $-2.465$ & $-2.836$ & $-4.561$ & $-1.500$ & $-2.374$ & $-2.769$ \\
&&10000&$-5.696$ & $-2.473$ & $-2.771$ & $-3.098$ & $-4.861$ & $-2.409$ & $-2.723$ & $-3.046$ & $-5.678$ & $-2.494$ & $-2.775$ & $-3.094$ & $-4.855$ & $-2.465$ & $-2.742$ & $-3.054$ \\
&1&2500&$-4.664$ & $-0.245$ & $-1.195$ & $-2.296$ & $-3.964$ & $-0.185$ & $-1.079$ & $-2.172$ & $-4.642$ & $-0.221$ & $-1.136$ & $-2.357$ & $-3.957$ & $-0.194$ & $-1.044$ & $-2.217$ \\
&&5000&$-4.995$ & $-1.536$ & $-2.388$ & $-2.820$ & $-4.239$ & $-1.401$ & $-2.286$ & $-2.736$ & $-5.006$ & $-1.579$ & $-2.425$ & $-2.826$ & $-4.249$ & $-1.419$ & $-2.293$ & $-2.727$ \\
&&10000&$-5.349$ & $-2.449$ & $-2.764$ & $-3.090$ & $-4.543$ & $-2.350$ & $-2.687$ & $-3.019$ & $-5.331$ & $-2.472$ & $-2.769$ & $-3.088$ & $-4.536$ & $-2.409$ & $-2.707$ & $-3.027$ \\
&2&2500&$-4.317$ & $-0.230$ & $-1.134$ & $-2.175$ & $-3.650$ & $-0.153$ & $-0.998$ & $-2.023$ & $-4.296$ & $-0.206$ & $-1.081$ & $-2.239$ & $-3.649$ & $-0.163$ & $-0.966$ & $-2.070$ \\
&&5000&$-4.648$ & $-1.448$ & $-2.295$ & $-2.785$ & $-3.924$ & $-1.290$ & $-2.158$ & $-2.661$ & $-4.659$ & $-1.488$ & $-2.337$ & $-2.794$ & $-3.940$ & $-1.311$ & $-2.168$ & $-2.653$ \\
&&10000&$-5.003$ & $-2.388$ & $-2.742$ & $-3.074$ & $-4.223$ & $-2.253$ & $-2.627$ & $-2.971$ & $-4.985$ & $-2.413$ & $-2.748$ & $-3.075$ & $-4.218$ & $-2.315$ & $-2.645$ & $-2.981$ \\

 \bottomrule \end{tabular}  \label{tab.low.est_p10} 
\end{sidewaystable}

\begin{table}[h]
\centering
\scriptsize              
\setlength{\tabcolsep}{2.5pt} 
 \caption{Empirical coverage and interval widths (with standard deviations) of the $100(1-\alpha)\%$ CIs for $\bbeta^*$, constructed using the $(\epsilon,\delta)$-DP Huber estimator, its non-private counterpart, and the private bootstrap  method. Results are reported for $n=10000$, $p=5$, $\epsilon=0.5$, and $a=b=1$.}
 
\renewcommand{\arraystretch}{1.1}
\resizebox{0.7\textwidth}{!}{
\begin{tabular}{cc cccccccc}
\toprule
&&\multicolumn{4}{c}{Gaussian design} &\multicolumn{4}{c}{ Uniform design}  \\\cmidrule(lr){3-6} \cmidrule(lr){7-10}
&&\multicolumn{2}{c}{$ \cN(0,1)$ noise} &\multicolumn{2}{c}{ $ t_{2.25}$ noise}& \multicolumn{2}{c}{$ \cN(0,1)$ noise} &\multicolumn{2}{c}{ $ t_{2.25}$ noise} \\\cmidrule(lr){3-4} \cmidrule(lr){5-6}\cmidrule(lr){7-8} \cmidrule(lr){9-10}
$\alpha$&   & coverage & width (sd)& coverage & width (sd)& coverage & width (sd)& coverage & width (sd)\\
\midrule
$0.05$ & private & 0.942 
 & $ {0.352 
} ~(0.009)$ & 0.943 
& $ {0.430 
} ~(0.017)$ & 0.941 
 &  $ {0.349 
}~(0.005)$ 
	&0.938 
 & $ {0.421 
}~(0.007)$  \\
    & non-private & 0.954 
 &$ ~~~{0.039 
}~(<0.001)$ 
	&0.953 
 &$ ~~~{0.080 
}~(<0.001)$
	&0.949 
 & $~~~ {0.039 
}~(<0.001)$
&	0.952 
 &$ ~~~ {0.080
}~(<0.001)$
  \\
  & bootstrap &0.935 
 & $ {0.557 
}~(0.001  
)$ 
&	0.949 
 & $ {1.827 
}~(0.011
)$
 &	0.949 
 &  $ ~~~{0.101  
}~(<0.001  
)$ &	0.941 
 & $ {0.693  
}~(0.001  
)$\\
\midrule
$0.1$ & private & 0.909 
  
 & $ {0.296  
}~(0.008)$ & 0.916 
& $ {0.361 
}~(0.015)$ & 0.905  
 &  $ {0.293 
}~(0.004)$ 
	&0.912  
 & $ {0.354  
}~(0.005)$  \\
    & non-private & 0.903  
 &$ ~~~{0.033 
}~(<0.001)$ 
	&0.896 
 &$ ~~~ {0.067 
}~(<0.001)$
	&0.889 
 & $~~~{0.033 
}~(<0.001)$
&	0.897 
 &$~~~{0.067 
}~(<0.001)$
  \\
  & bootstrap &0.882  
 & $ {0.453  
}~(0.001  
)$ 
&	0.891  
 & $ {1.434  
}~(0.004
)$
 &	0.899  
 &  $~~~{0.083  
}~(<0.001  
)$ &	0.899  
 & $ {0.540
}~(0.002 
)$\\
\bottomrule
    \end{tabular} 
    }
    \label{tab.inference}
\end{table}

\subsection{High-dimensional setting} 
\label{highdim:simulation}

In the high-dimensional setting, the covariate vectors $\bx_{i,-1}$ are independently generated from $\cN(\bzero, \bPsi)$, where the covariance matrix $\bPsi=(\Psi_{j,k})_{j,k \in [p-1]}$ with $\Psi_{j,k} = 0.1^{|j-k|}$. We set $a=b=1$ and $s^*=10$. As a benchmark, we consider the non-private sparse Huber estimator, Algorithm~\ref{alg:DPHuberhigh} run without initialization, clipping, and  noise injection. For this estimator, we take $\eta_0 = 0.2$, $T=\lceil 2\log n\rceil$, and $s=\lceil 1.2s^*\rceil$, and update the robustification parameter according to  $\tau  =  0.1\smallhat{\sigma}_0 \sqrt{   n  (s\log p+\log n )^{-1}}$, where $\smallhat{\sigma}_0^2= n^{-1}\sum_{i=1}^n(y_i-\bar{y})^2$. 

Figure~\ref{fig.high.eps_p10000} displays the logarithmic relative $\ell_2$-error as a function of the sample size for both the $(\epsilon, \delta)$-DP and non-private sparse Huber estimators, with $p  = 10000$ and $\epsilon \in \{0.5, 0.9\}$. An additional plot for $p = 5000$ is provided in the supplementary material (see Figure~\ref{fig.high.eps_p5000}).

To assess robustness under heavy-tailed noise, we also implement the sparse DP least squares estimator (sparse DP LS) following Algorithm 4.2 of \cite{cai2021cost}. Because their method is designed for models with zero-mean covariates and no intercept, we restrict attention to the DP estimation of the slope coefficients. Accordingly, we center both the response variables and the covariates before applying Algorithm 4.2 of  \cite{cai2021cost}. In addition to $(\eta_0,s,T,\bbeta^{(0)})$, the algorithm requires three extra tuning parameters: the truncation level $R$, the feasibility parameter $c_0$, and the noise scale $B$. As specified in Theorem 4.4 of \cite{cai2021cost}, we set $R={C}_R\sigma\sqrt{2\log n}$ with $ \sigma = 2$, ${C}_R \in \{0.1,0.5,1\}$,  $c_0=1.01\sqrt{s^*}$, $B =  {4(R + c_0 c_{\bx}) c_{\bx}}/{\sqrt{s}}$, and $ c_{\bx} = 0.5\sqrt{ \log (pn)}$. The results for $p=10000$ and $n\in \{5000, 10000, 15000\}$ are summarized in Table~\ref{tab.high.est_p10000}. The proposed sparse DP Huber estimator achieves substantially higher statistical accuracy than the sparse DP LS method, while also requiring far less tuning. Moreover, under sub-Gaussian errors, it attains the same privacy level with a smaller noise scale than the sparse DP LS estimator. Additional results for $p = 5000$ are provided in the supplementary material (see Table~\ref{tab.high.est_p5000}).

 \begin{figure}[h]
    \centering
\includegraphics[width=0.8\textwidth]{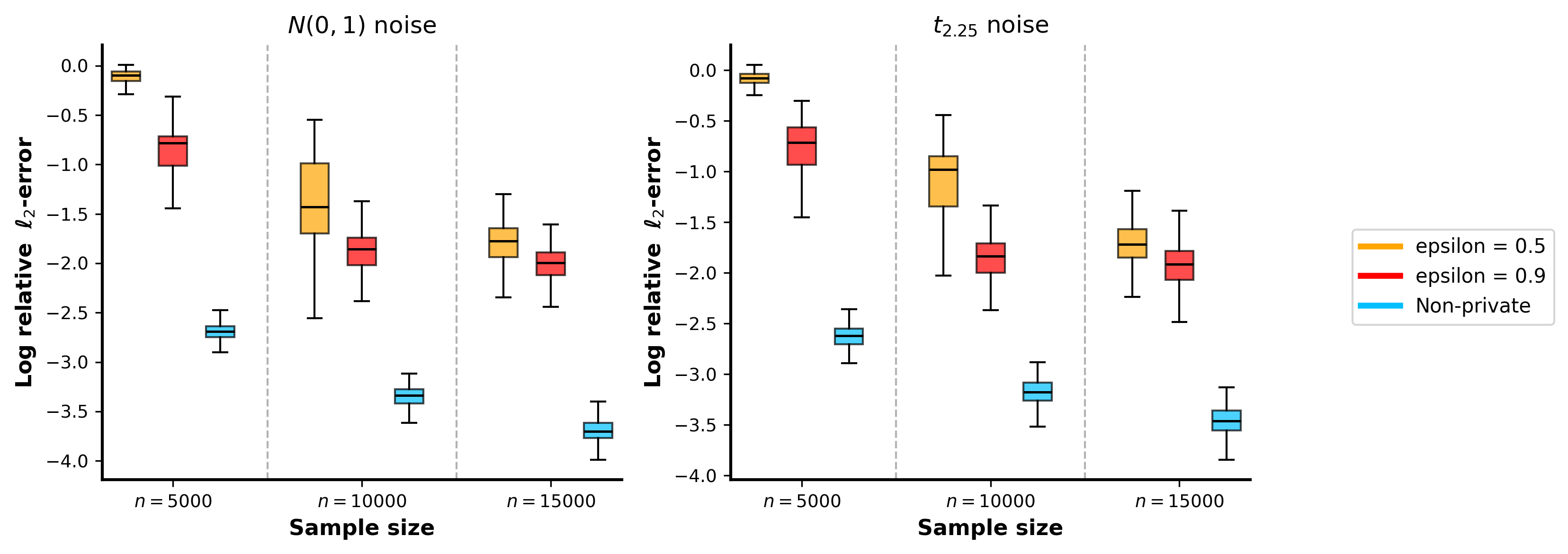}
\caption{Boxplots of the logarithmic relative $\ell_2$-error over 300 repetitions for private and non-private sparse Huber estimators across sample sizes, with $p = 10000$ and $s^* = 10$.}
\label{fig.high.eps_p10000}
\end{figure}

\begin{table}[ht]
\centering
\scriptsize              
\caption{Average logarithmic relative $\ell_2$-error (for slope coefficients) over 300 repetitions across sample sizes, with privacy levels $(0.5, 10n^{-1.1})$ and $p = 10000$.}
\renewcommand{\arraystretch}{1.1}
\resizebox{0.6\textwidth}{!}{
\begin{tabular}{c c c c ccc }
\toprule
\multirow{2}{*}{Noise} & \multirow{2}{*}{$n$} & \multirow{2}{*}{ non-private sparse  Huber} &  \multirow{2}{*}{sparse DP Huber}  & \multicolumn{3}{c}{ sparse DP LS} \\
\cline{5-7}
& & & & $C_R=0.1$ & $C_R=0.5$ & $C_R=1$ \\
\midrule
  $\cN(0,1)$       & 5000  & $-2.671$ &$-0.063$  & $0.066$ & $0.107$ & $0.164$ \\

&10000 & $-3.327$  & $-1.337$  & $-0.079$ & $0.046$ & $0.084$ \\

&15000 & $-3.665$  &$-1.799$  & $-0.301$ & $-0.043$ & $0.047$ \\

\midrule
$t_{2.25}$ &5000 & $-2.609$  &$-0.039$  & $0.067$ & $0.113$ & $0.166$ \\

&10000 & $-3.158$  &$-1.047$  & $-0.078$ & $0.039$ & $0.091$ \\

&15000  & $-3.437$ &$-1.725$  & $-0.308$ & $-0.038$ & $0.047$ \\

\bottomrule
\end{tabular}}
\label{tab.high.est_p10000}
\end{table}

{\small{
 \setlength{\bibsep}{5pt}
\bibliographystyle{my_temp} 
\bibliography{refs_final}
}
}

\newpage
\appendix  
\spacingset{1.7}\selectfont
\setlength{\abovedisplayskip}{0.2\baselineskip}
\setlength{\belowdisplayskip}{0.2\baselineskip}
\setlength{\abovedisplayshortskip}{0.2\baselineskip}
\setlength{\belowdisplayshortskip}{0.2\baselineskip} 
\begin{center}
	{\noindent \bf \Large Supplementary Material for 
                              ``Adapting to Noise Tails in Private Linear Regression"}\\
\end{center} 
\bigskip

\renewcommand{\thepage}{S\arabic{page}}  
\setcounter{page}{1}
\setcounter{section}{0}
\renewcommand\thesection{\Alph{section}}  
\setcounter{lemma}{0}
\renewcommand{\theHlemma}{\thesection.\arabic{lemma}} 
\setcounter{theorem}{0}  
\renewcommand{\theHtheorem}{\thesection.\arabic{theorem}} 
\setcounter{equation}{0}  
\renewcommand{\theHequation}{S.\arabic{equation}}
\renewcommand{\theequation}{S.\arabic{equation}}
\setcounter{definition}{0}
\renewcommand{\theHdefinition}{\thesection.\arabic{definition}}
\setcounter{proposition}{0}
\renewcommand{\theHproposition}{\thesection.\arabic{proposition}}
\setcounter{corollary}{0}
\renewcommand{\theHcorollary}{\thesection.\arabic{corollary}}
\setcounter{remark}{0}
\renewcommand{\theHremark}{\thesection.\arabic{remark}}
\setcounter{table}{0} 
\setcounter{figure}{0}  
\renewcommand{\theHfigure}{S\arabic{figure}}
\renewcommand{\theHtable}{S\arabic{table}}
\renewcommand{\thefigure}{S\arabic{figure}}
\renewcommand{\thetable}{S\arabic{table}}

 \renewcommand{\thetheorem}{\thesection.\arabic{theorem}}
\renewcommand{\thelemma}{\thesection.\arabic{lemma}}
\renewcommand{\thedefinition}{\thesection.\arabic{definition}}
\renewcommand{\theproposition}{\thesection.\arabic{proposition}}
\renewcommand{\thecorollary}{\thesection.\arabic{corollary}}
\renewcommand{\theremark}{\thesection.\arabic{remark}}

In this supplementary material, we denote $\hat \bbeta=\hat \bbeta_\tau$ as the non-private Huber estimator for simplicity and introduce the following additional notation. For $\bu,\bv \in \RR^k$ and a subset $S\subseteq [k]$, we use $\langle \bu, \bv \rangle_S = \bu_S^{\T}\bv_S = \sum_{i\in S} u_i v_i$ to denote the inner product between $\bu_S$ and $\bv_S$. For a    matrix $\bA\in \RR^{k \times k}$, let $\|\bA\|$ denote the spectral norm of $\bA$. Moreover, let $\|\cdot\|_{\bA}$ denote the vector norm induced by the positive semi-definite matrix $\bA$, that is, $\|\bu\|_{\bA} = \|\bA^{1/2}\bu\|_2$ for all $\bu \in \RR^k$. For any $r>0$, we define $\partial \mathbb{B}^k(r) =  \{ \mathbf{u} \in \mathbb{R}^k : \|\mathbf{u}\|_2 = r \}$  and $\Theta(r) = \{ \bbeta \in \RR^p : \| \bbeta - \bbeta^* \|_2 \leq r \}$.

To highlight both the commonalities and differences between the private methods proposed in this paper and the non-private adaptive methods in  \citeS{sun2020adaptiveS}, Table~\ref{tab.tau} summarizes the choice of $\tau$ in the low-dimensional and high-dimensional sparse settings under heavy-tailed noise, along with the corresponding convergence rates. For clarity, note that the results in \citeS{sun2020adaptiveS} are derived under the assumption that the noise variables have bounded $(1+\theta)$-th moments ($\theta>0$), whereas our analysis assumes bounded $(2+\iota)$-th moments ($\iota\geq 0$). Accordingly, we evaluate both methods assuming the existence of moments of order at least two  (i.e., setting $\theta=1+\iota$). 
 
\begin{table}[h]
\centering
\scriptsize
\caption{Comparison of the private and non-private adaptive Huber estimators in terms of the choice of $\tau$ and the high-probability error bound $\|\mathring{\bbeta}-\bbeta^*\|_2$ (up to factors logarithmic in $n$ and $1/\delta$), when the noise variables have polynomial-order moments. Here, $\mathring{\bbeta}$ denotes the estimator of $\bbeta^*$.}
\setlength{\tabcolsep}{4pt}
\begin{tabular}{  c c c c}
\toprule
  Setting & Method \& noise assumption 
& $\tau$ 
&   $\|\mathring{\bbeta}-\bbeta^*\|_2$ \\
\midrule
  \multirow[t]{2}{*}{Low-dim}
& Our work ($\iota \geq 0$) 
& $\displaystyle\Big(\frac{n\epsilon}{p+\log n}\Big)^{1/(2+\iota)}$
& $\displaystyle\Big(\frac{p}{n\epsilon}\Big)^{ (1+\iota)/ ({2+\iota})} 
   + \sqrt{\frac{p}{n}}$ \\[2mm]
  
& \citeS{sun2020adaptiveS}  
& $\displaystyle \sqrt{\frac{n}{p+\log n} }$
& $\displaystyle \sqrt{ \frac{p}{n}  } $ \\[2mm]
\midrule
  \multirow[t]{2}{*}{High-dim}
& Our work ($\iota \geq 0$)
& $\displaystyle\Big(\frac{n\epsilon}{s\log p+\log n}\Big)^{1/(2+\iota)}$
& $\displaystyle\Big(\frac{s\log p}{n\epsilon}\Big)^{ (1+\iota) /({2+\iota})} 
   + \sqrt{\frac{s\log p}{n}}$ \\[2mm]
 
& \citeS{sun2020adaptiveS}  
& $\displaystyle \sqrt{\frac{n}{\log p } }$
& $\displaystyle \sqrt{\frac{s\log p}{n}} $ \\[2mm]  
\bottomrule
\end{tabular}
\label{tab.tau}
\end{table}

\section{Real Data Analysis}
\label{sec:6}
To illustrate the performance of our methods, we consider two real datasets: (i) the California housing price dataset \citepS{pace1997sparseS}\footnote{\href{http://lib.stat.cmu.edu/datasets/houses.zip}{http://lib.stat.cmu.edu/datasets/houses.zip}}, which contains data from the 1990 California census on 20640 block groups,  and (ii) the Communities and Crime dataset \citepS{dataUCIS}\footnote{\href{https://archive.ics.uci.edu/static/public/183/communities+and+crime.zip}{https://archive.ics.uci.edu/static/public/183/communities+and+crime.zip}}, which combines socio-economic data from the 1990 US census, law enforcement data from the 1990 US LEMAS survey, and crime data from the 1995 FBI UCR. For each dataset, we consider the linear model \eqref{linear.model} and evaluate the performance of our proposed method via evaluating the predictive accuracy. More specifically, given an positive integer $m$, we first randomly select $\lceil0.8m\rceil$ subjects and $\lceil0.2m\rceil$ subjects from the dataset, respectively, as the training set $\mathcal{D}_{\rm train}$ and the test set $\mathcal{D}_{\rm test}$. We then estimate the unknown parameter $\bbeta^*$ in \eqref{linear.model} by various methods based on $\mathcal{D}_{\rm train}$ and compute the associated mean squared prediction error (MSPE) of the response variables on $\mathcal{D}_{\rm test}$, i.e., ${\rm MSPE} =|\mathcal{D}_{\rm test}|^{-1} \sum_{i\in\mathcal{D}_{\rm test}} (y_i - \smallhat{y}_i)^2$,  where $\smallhat{y}_i$ represents the predicted value of the response variable $y_i$ based on the considered estimate of $\bbeta^*$. For each $m$, we set $\delta_m = 10(\lceil 0.8m\rceil)^{-1.1}$.

In the California housing price dataset, the response variable is the logarithm of the median house value per block group, and the covariates include  median income, housing median age, total population, number of households, and total rooms. We vary $m\in \{5000,10000,15000,20000\}$. The privacy parameters are set to $\epsilon\in\{0.3,0.5,0.9\}$. For each $\mathcal{D}_{\rm train}$, we compute the  $(\epsilon,\delta_m)$-DP Huber estimator ${\bbeta}^{(T)}$, obtained using Algorithm \ref{alg:DPHuberlow} with $\sigma=\sigma_{\rm dp}$,  and the non-private Huber estimator $\check\bbeta$, where the   parameters involved in Algorithm \ref{alg:DPHuberlow} are determined by the principle given in   Section \ref{selection.low}. The design matrix is standardized before the analysis is performed. This process is repeated 300 times, and Figure~\ref{fig:house_RD} reports the resulting boxplots of MSPE as $m$ increases. It can be seen that, while preserving privacy, the $(\epsilon,\delta_m)$-DP estimates gradually approach the non-private benchmark, demonstrating the effectiveness of Algorithm \ref{alg:DPHuberlow}. Moreover, a lower privacy level leads to more accurate estimates.

\begin{figure}[h]  
    \centering  
\includegraphics[width=0.7\textwidth]{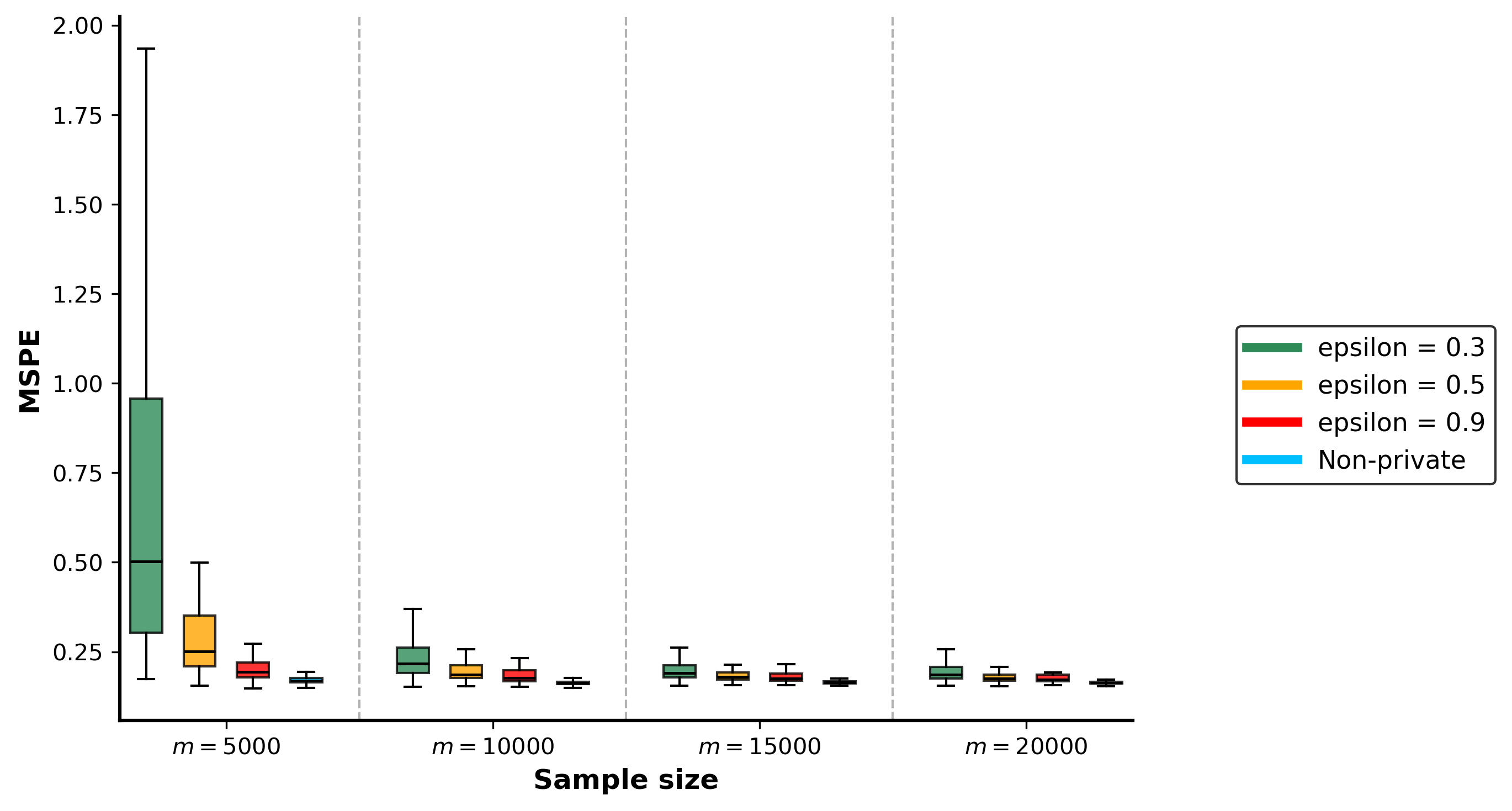}
    \caption{Boxplots of the MSPE  over 300 replications, comparing the $(\epsilon,\delta_m)$-DP Huber estimator at different privacy levels with its non-private counterpart.
    }   
    \label{fig:house_RD}  
\end{figure}

  Moreover, we report the estimated coefficients obtained from three methods applied to the California housing price dataset: the private $(0.5,10n^{-1.1})$-DP Huber estimator, the corresponding non-private Huber estimator, and the OLS  estimator. These comparisons are conducted using two forms of the response variable: the logarithm of the median house value  ($\log$(Med. Value)) and the median house value (in \$25,000). 
 From Table~\ref{tab.coef_california}, we observe that the log transformation mitigates the impact of heavy tails, resulting in similar performance between the non-private Huber and OLS estimators. In contrast, without this transformation, the response variable remains right-skewed, leading to notably different estimates between Huber and OLS. In both scenarios, the private Huber estimator closely aligns with the non-private Huber estimator. 

\begin{table}[ht]
\centering
\setlength{\tabcolsep}{2pt} 
\caption{Estimated regression coefficients for the California Housing dataset, comparing the $(0.5,10n^{-1.1})$-DP method, its non-private counterpart, and OLS.}
\scriptsize
\begin{tabular}{cccccccc}
\toprule
&  & Intercept & Med. income & Med. age & Housing med. age & Total rooms & Total population  \\
\midrule
$\log$(Med. Value)& Huber & 12.085 & 0.387 & 0.107 &$-0.091$  &0.163  &$-0.012$ \\
&DP Huber  & 12.065 & 0.401 & 0.091& $-0.187$&  0.154  &0.073\\
& OLS &12.085 & 0.419  &0.098& $-0.187$ & 0.422& $-0.187$\\ 
\midrule
Med. Value  & Huber & 8.274 & 3.283&  0.990& $-1.078$ & 1.494 & $-0.068$ \\
(in \$25,000) &DP Huber  & 7.785 & 3.227 & 0.791 &$-0.598$  &0.625 & 0.389\\
& OLS &8.274 & 3.493 & 0.927 &$-1.787$&  3.278& $-1.217$\\ 
\bottomrule
\end{tabular}
\label{tab.coef_california}
\end{table}

In the Communities and Crime dataset, the response variable is the normalized total number of violent crimes per 100000 population (\texttt{ViolentCrimesPerPop}), with 99 features characterizing 1994 communities.  
To evaluate the prediction performance of the proposed method, we vary $m$ from 500 to 1900.  With the privacy levels $(\epsilon,\delta)$ set to  $(0.9, \delta_m)$  and tuning parameters selected according to the principle in Section \ref{selection.high} based on $\mathcal{D}_{\rm tarin}$, we obtain the non-private sparse Huber estimator $\breve{\boldsymbol{\beta}}^{(T)}$, the sparse DP Huber estimator ${\boldsymbol{\beta}}^{(T)}$ obtained in Algorithm \ref{alg:DPHuberhigh}, and the sparse DP LS estimator ${\boldsymbol{\beta}}^{(T)}_{{\rm LS}}$ derived from Algorithm 4.2 in \citeS{cai2021costS}. Similar to the procedure in  Section~\ref{selection.high}, we obtain an initial estimator for the DP algorithm in each estimation. The process is repeated 300 times, yielding Figure \ref{fig:comm_RD}, which shows that while maintaining privacy, the sparse DP  Huber method approaches its non-private counterpart as $m$ increases and outperforms the  sparse DP LS method \citepS{cai2021costS}.

\begin{figure}[h]  
    \centering      \includegraphics[width=0.5\textwidth]{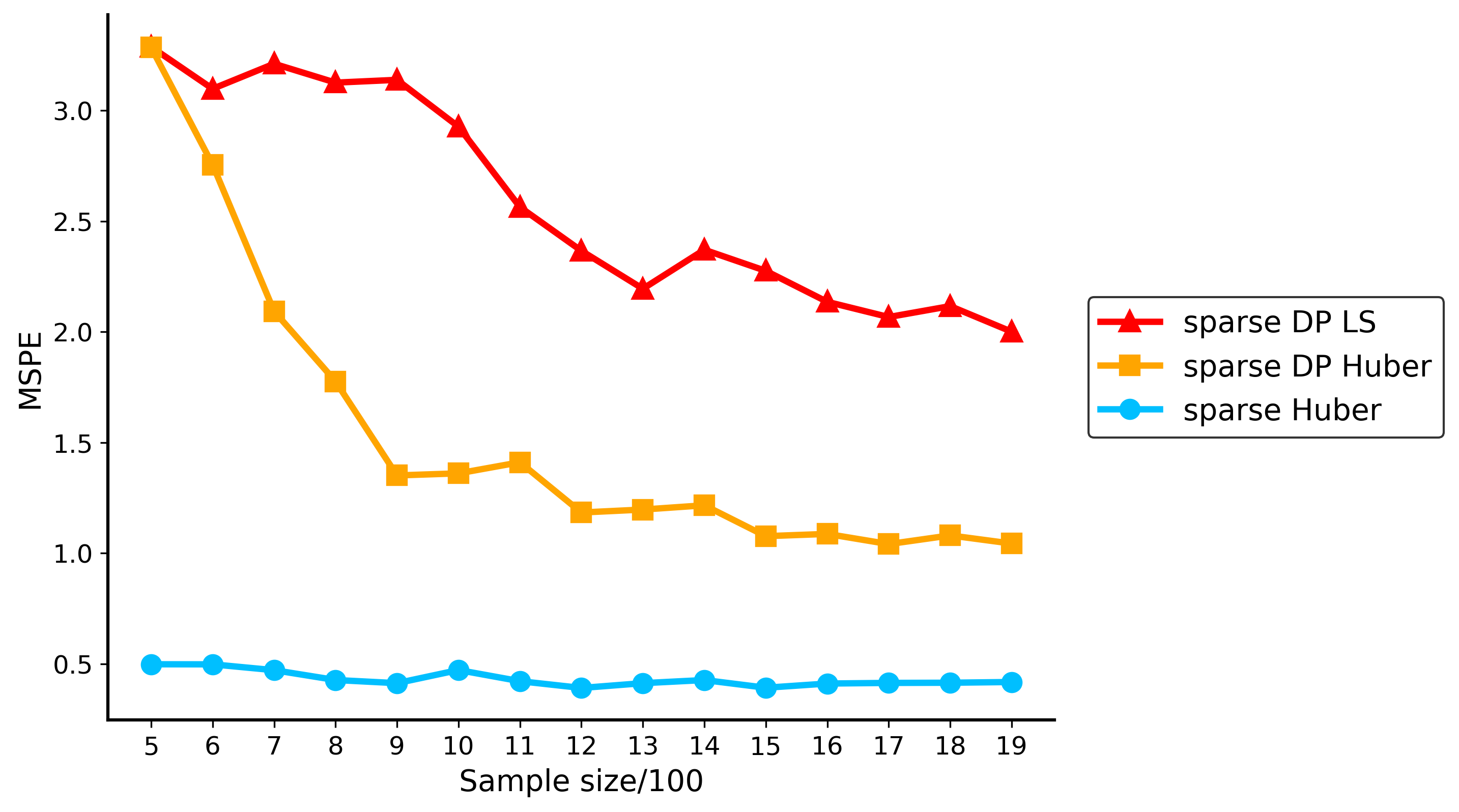}
    \caption{Plots of the average MSPE, based on 300 replications, comparing the sparse DP Huber estimator, the sparse DP LS estimator, and the non-private sparse Huber estimator under privacy levels  $(0.9, \delta_m)$ and $s=5$, as $m$ increases from 500 to 1900.
    }   
    \label{fig:comm_RD}  
\end{figure}

\section{Review on Adaptive Huber Regression}
\label{review-Huber-reg}

Recall the settings in Section~\ref{sec:adaptive-huber}. Under a sub-Gaussian condition on the covariates, \citeS{sun2020adaptiveS} showed that for any $z\geq 0$, the Huber estimator with $\tau = \tau_0 \sqrt{n/(p+z)}$, for some $\tau_0 \geq \sigma_0$, satisfies
$\|  \hat \bbeta_\tau - \bbeta^* \|_2  \lesssim \tau_0 \sqrt{(p+z)/n}$ with probability at least $1-2 e^{-z}$, provided that $n\gtrsim p+z$. Under sparse models in high-dimensional settings, where $s^*:=\| \bbeta^* \|_0 \ll \min\{n,p\}$, we consider the sparsity-constrained Huber estimator 
\begin{align}
     \hat{\bbeta}_\tau(s)   \in \arg\min_{\bbeta\in \RR^p :\, \| \bbeta \|_0 \leq s }  \hat \cL_\tau(\bbeta)\,   ,  \label{def:huber.lasso}
\end{align} 
where $s\geq 1$ is a prespecified sparsity level that should be greater than the true sparsity $s^*$. When the random covariate vectors are sub-Gaussian or sub-exponential, the finite sample properties of $\hat \bbeta_\tau$ given in \eqref{def:m.est} have been well-studied \citepS{chen2020robustS,sun2020adaptiveS}. For the sparsity-constrained Huber estimator $\hat \bbeta_\tau(s)$ given in \eqref{def:huber.lasso}, due to the non-convex nature of the sparsity-constrained optimization problem, a common practice is to find an approximate solution that possesses desirable statistical properties. To this end, a popular family of algorithms based on iterative hard thresholding (IHT) has been proposed for compressed sensing and sparse regression \citepS{blumensath2009iterativeS, jain2014iterativeS}. These methods directly project the gradient descent update onto the underlying non-convex feasible set. The analysis of these algorithms requires the loss function to satisfy certain geometric properties, such as restricted strong convexity and restricted strong smoothness \citepS{jain2014iterativeS}. These conditions, however, do not hold for the empirical Huber loss. Therefore, the statistical properties of the output of the IHT algorithm when applied to \eqref{def:huber.lasso} remain unclear.

A computationally more efficient approach is arguably the $\ell_1$-penalized estimator, defined as $\arg\min_{\bbeta \in \RR^p} \{  \hat \cL_\tau (\bbeta) + \lambda \| \bbeta \|_1 \}$, where $\lambda>0$ is a regularization parameter. Its finite-sample properties, under sub-Gaussian covariates and heavy-tailed noise variables, with or without symmetry, have been studied by \citeS{fanliwang2017S}, \citeS{loh2017statisticalS}  and \citeS{sun2020adaptiveS}. In the asymmetric case, the robustification parameter $\tau$ should be of the order $\sigma_0 \sqrt{n/\log p}$ to balance bias and robustness. The algorithm that solves $\ell_1$-penalized estimators often involves soft-thresholding. However, when combined with noise injection to preserve privacy, the privacy cost of the resulting estimator scales with $\sqrt{p}$, as studied in \citeS{liu2024DPS}. 
Therefore, in this work, we focus on privatizing the iterative hard thresholding algorithm.

\section{Proof of the Results in Section~\ref{sec:4.1}}

\subsection{  Intermediate Results for Theorem \ref{thm:low} }\label{sec.inter1}
 To begin with, we define the good events required to establish the convergence of $\bbeta^{(T)}$ obtained from Algorithm \ref{alg:DPHuberlow}.   Assume that the initial estimate $\bbeta^{(0)}$ satisfies $\bbeta^{(0)} \in {\hat \bbeta}  +  \BB^p(r_0)$ for some $r_0>0$. Given a truncation parameter $\gamma>0$, a curvature parameter $\phi_l>0$ and a smoothness parameter $\phi_u>0$, define the good events 
\begin{equation}
\begin{aligned}
&~~~~~~~~~~~~~~~~~~\cE_0   
    =  \bigg\{   \hat \bbeta \in \Theta\bigg(\frac{r_0}{2}\bigg) \bigg\} \cap  \bigg\{  \max_{i\in[n]} \| \bx_i \|_2 \leq \gamma \bigg\}\,  ,   \\ 
    &~~~\cE_1  
     = \bigg\{   \inf_{  (\bbeta_1 , \,\bbeta_2 ) \in \mathbb{N}(r_0) }  \frac{\hat \cL_\tau(\bbeta_1)  - \hat \cL_\tau (\bbeta_2) - \langle \nabla \hat \cL_\tau(\bbeta_2), \bbeta_1-\bbeta_2 \rangle}{ \| \bbeta_1 -\bbeta_2 \|_2^2}  \geq \phi_l \bigg\}\,,   \\
    &~~~~~\cE_2  
      = \bigg\{  \sup_{\bbeta_1,\, \bbeta_2 \in \RR^p} \frac{\hat \cL_\tau(\bbeta_1)  - \hat \cL_\tau (\bbeta_2) - \langle \nabla \hat \cL_\tau(\bbeta_2), \bbeta_1-\bbeta_2 \rangle}{\| \bbeta_1 -\bbeta_2 \|_2^2} \leq \phi_u   \bigg\}\, ,
\end{aligned}
\label{def:event.E0-2}
\end{equation}
where $\mathbb{N}(r_0) = \{ (\bbeta_1, \bbeta_2) :  \| \bbeta_2 - \bbeta_1 \|_2 \leq r_0,  \, {\rm and~either }~ \bbeta_1 \in \Theta(r_0/2) ~{\rm or}~ \bbeta_2 \in \Theta(r_0/2) \}$. On the event $\cE_0$, the non-private Huber estimator lies within a neighborhood of $\bbeta^*$. Event $\cE_1$ is associated with the restricted strong convexity of the empirical loss $\hat{\cL}_\tau(\cdot)$.  In the case of the squared loss with $\tau=\infty$, it holds that
$$
\hat \cL_{\infty}(\bbeta_1)  - \hat \cL_{\infty}(\bbeta_2) - \langle \nabla \hat \cL_{\infty}(\bbeta_2), \bbeta_1-\bbeta_2 \rangle  =  (\bbeta_1 - \bbeta_2) \bigg(\frac{1}{2n} \sn \bx_i \bx_i^\T \bigg)  (\bbeta_1 - \bbeta_2)\, .
$$
By taking $\phi_l=\lambda_{\min}\{ (2n)^{-1} \sn \bx_i \bx_i^\T \}$, it follows that
$$
\frac{\hat \cL_{\infty}(\bbeta_1)  - \hat \cL_{\infty}(\bbeta_2) - \langle \nabla \hat \cL_{\infty}(\bbeta_2), \bbeta_1-\bbeta_2 \rangle}{ \| \bbeta_1 - \bbeta_2 \|_2^2 } \geq \phi_l ~\mbox{for all}~ \bbeta_1 , \bbeta_2 \in \RR^p\,.
$$
For the Huber loss with $\tau>0$, strong convexity does not hold universally for all $(\bbeta_1, \bbeta_2)$. Therefore, it is necessary to restrict $(\bbeta_1, \bbeta_2)$  to the local region $\mathbb{N}(r_0)$. Event $\cE_2$ pertains to the global smoothness property, which holds for all $(\bbeta_1, \bbeta_2)$ as the Huber loss has a bounded second-order derivative that is well-defined everywhere except at two points.

Conditioned on the above good events, Theorem \ref{thm:dp.convergence} below establishes a high-probability bound for $\| \bbeta^{(T)} - \hat \bbeta \|_2$, where the probability is taken with respect to the Gaussian noise variables $\{\bg_t \}_{t=0}^{T-1}$ injected into the gradient descent.  The proof of Theorem \ref{thm:dp.convergence} is given in Section \ref{proof.thm:dp.convergence}.

\begin{theorem} \label{thm:dp.convergence}
Assume that $\bbeta^{(0)} \in \hat \bbeta + \BB^p(r_0)$ for some $r_0>0$, and the learning rate satisfies $\eta_0 \in (0, 1/(2\phi_u)] $. Let $T\geq 2\log\{r_0 n\epsilon(\sigma_0 p)^{-1}\}/\log\{(1-\rho)^{-1}\}$ with $\rho = (2  \phi_l\eta_0)^2 \in (0, 1)$. For any given $z> 0$, select the noise scale $\sigma>0$ such that
\begin{align}
   \frac{\sigma }{  \phi_l r_0   } \big\{\sqrt{p} + \sqrt{ 2(\log T + z) }\big\}     \leq 1 \, . \label{private.sample.size.requirement}
\end{align}
Then, conditioned on the event $\cE_0\cap\cE_1 \cap \cE_2$,  the DP Huber estimator $\bbeta^{(T)}$ obtained in Algorithm~\ref{alg:DPHuberlow} satisfies 
\begin{align}
    \|\bbeta^{(T)} - \hat \bbeta \|_2^2 \leq    \bigg(\frac{\sigma_0 p}{n\epsilon} \bigg)^2 +   \eta_0  \bigg(  \eta_0 +  \frac{1}{2\phi_l} \bigg)\bigg(\frac{2p}{\rho} + 3z \bigg) \sigma^2\,,  
\end{align}
with probability (over $\{ \bg_t\}_{t=0}^{T-1}$) at least $1-2e^{-z}$.
\end{theorem}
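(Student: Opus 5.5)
The plan is to treat Algorithm~\ref{alg:DPHuberlow} on $\cE_0\cap\cE_1\cap\cE_2$ as noisy gradient descent on $\hat\cL_\tau$ and prove it contracts toward $\hat\bbeta$. I would first show by induction that every iterate stays within distance $r_0$ of $\hat\bbeta$. I would then derive a one-step recursion for $\|\bbeta^{(t)}-\hat\bbeta\|_2^2$ and unroll it, controlling the accumulated noise with a weighted chi-square tail bound.

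\emph{Step 1 (reduction).} On $\cE_0$ we have $\max_i\|\bx_i\|_2\le\gamma$, so $w_\gamma(\|\bx_i\|_2)=1$ for all $i$. The update is therefore exactly $\bbeta^{(t+1)}=\bbeta^{(t)}-\eta_0\nabla\hat\cL_\tau(\bbeta^{(t)})+\eta_0\sigma\bg_t$. Since $\nabla\hat\cL_\tau(\hat\bbeta)=\bzero$, writing $\Delta_t=\bbeta^{(t)}-\hat\bbeta$ gives $\Delta_{t+1}=A_t+\eta_0\sigma\bg_t$ with $A_t=\Delta_t-\eta_0\{\nabla\hat\cL_\tau(\bbeta^{(t)})-\nabla\hat\cL_\tau(\hat\bbeta)\}$.

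\emph{Step 2 (deterministic contraction).} Suppose $\|\Delta_t\|_2\le r_0$. Because $\hat\bbeta\in\Theta(r_0/2)$ on $\cE_0$, both pairs $(\bbeta^{(t)},\hat\bbeta)$ and $(\hat\bbeta,\bbeta^{(t)})$ lie in $\mathbb{N}(r_0)$. Adding the two Bregman lower bounds from $\cE_1$ gives $\langle\nabla\hat\cL_\tau(\bbeta^{(t)})-\nabla\hat\cL_\tau(\hat\bbeta),\Delta_t\rangle\ge 2\phi_l\|\Delta_t\|_2^2$. The loss is convex with Bregman divergence at most $\phi_u\|\cdot\|_2^2$ globally by $\cE_2$, so its gradient is $2\phi_u$-Lipschitz and hence co-coercive: $\|\nabla\text{-difference}\|_2^2\le 2\phi_u\langle\nabla\text{-difference},\Delta_t\rangle$. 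Expanding $\|A_t\|_2^2$ and using $\eta_0\le 1/(2\phi_u)$ then yields $\|A_t\|_2^2\le(1-2\phi_l\eta_0)\|\Delta_t\|_2^2$.

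\emph{Step 3 (iterates stay local).} Let $\xi_t=\|\bg_t\|_2$. By Gaussian Lipschitz concentration and a union bound over $t<T$, we have $\max_t\xi_t\le\sqrt p+\sqrt{2(\log T+z)}$ with probability at least $1-e^{-z}$. On this event, \eqref{private.sample.size.requirement} gives $\eta_0\sigma\xi_t\le\eta_0\phi_l r_0$. Hence $\|\Delta_{t+1}\|_2\le(\sqrt{1-a}+a/2)r_0\le r_0$ with $a=2\phi_l\eta_0$, which closes the induction from $\|\Delta_0\|_2\le r_0$. I expect this induction, which keeps the iterates inside the region where $\cE_1$ applies, to be the main obstacle. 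It is the only place where the requirement \eqref{private.sample.size.requirement} on $\sigma$ enters.

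\emph{Step 4 (squared recursion and unrolling).} Expand $\|\Delta_{t+1}\|_2^2=\|A_t\|_2^2+2\eta_0\sigma\langle A_t,\bg_t\rangle+\eta_0^2\sigma^2\xi_t^2$. Apply Young's inequality to the cross term in the form $2\eta_0\sigma\langle A_t,\bg_t\rangle\le c\,\phi_l\eta_0\|A_t\|_2^2+\eta_0\sigma^2\xi_t^2/(2\phi_l)$, with $c$ chosen so that the leftover factor is at most $1-\rho$; here $\rho=(2\phi_l\eta_0)^2\le 2\phi_l\eta_0$ leaves the needed margin. This gives $\|\Delta_{t+1}\|_2^2\le(1-\rho)\|\Delta_t\|_2^2+\eta_0(\eta_0+1/(2\phi_l))\sigma^2\xi_t^2$. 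Unrolling, the initial term $(1-\rho)^Tr_0^2$ is at most $(\sigma_0p/(n\epsilon))^2$ by the choice of $T$. The noise term is $\eta_0(\eta_0+1/(2\phi_l))\sigma^2\sum_t(1-\rho)^{T-1-t}\xi_t^2$, a weighted sum of independent $\chi^2_p$ variables with weights at most $1$ and total weight at most $1/\rho$. The Laurent--Massart inequality bounds it by $p/\rho+2\sqrt{(p/\rho)z}+2z\le 2p/\rho+3z$ with probability at least $1-e^{-z}$. A union bound with the event from Step 3 gives the stated bound with probability at least $1-2e^{-z}$.
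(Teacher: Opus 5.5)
Your proof is correct and shares the paper's skeleton. It uses the same event $\cF$: the max-norm bound at level $\zeta_T=\sqrt{p}+\sqrt{2(\log T+z)}$ together with the weighted $\chi^2$ bound. It uses the same Young split with weight $a=2\phi_l\eta_0$, so that $(1+a)(1-a)=1-\rho$. (Your constant is forced to be $c=2$; the inequality $\rho\le a$ is not what is used.) It also unrolls the recursion the same way. Two sub-arguments differ.

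For localization, the paper (Proposition~\ref{prop:crude.bound}) argues by contradiction. An iterate leaving the ball would violate the growth bound $\hat\cL_\tau(\bbeta)-\hat\cL_\tau(\hat\bbeta)\ge\phi_l r_0\|\bbeta-\hat\bbeta\|_2$ of Lemma~\ref{lem:rsc.smootheness}, which is proved via convexity along rays. The clash comes from combining it with the descent inequality $\hat\cL_\tau(\bbeta^{(t+1)})-\hat\cL_\tau(\hat\bbeta)\le(2\eta_0)^{-1}(\|\Delta_t\|_2^2-\|\Delta_{t+1}\|_2^2)+\zeta_T\sigma\|\Delta_{t+1}\|_2$. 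You instead reuse the contraction $\|A_t\|_2\le\sqrt{1-a}\,\|\Delta_t\|_2$ and the triangle inequality with $\sqrt{1-a}+a/2\le 1$. This is shorter and needs no growth lemma. The paper takes its route because the same descent inequality and growth lemma also prove Theorem~\ref{prop:initial_value}, where the iterates start outside the ball and no contraction is available.

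For the contraction itself, the paper sandwiches $0\le\hat\cL_\tau(\bbeta^{(t)}-\eta_0\nabla\hat\cL_\tau(\bbeta^{(t)}))-\hat\cL_\tau(\hat\bbeta)$ between the one-sided $\cE_1$ bound at $(\hat\bbeta,\bbeta^{(t)})$ and the $\cE_2$ bound along the gradient step. You instead symmetrize $\cE_1$ and invoke co-coercivity. The paper's sandwich is a one-sided instance of the same device: it compares $\hat\cL_\tau(\hat\bbeta)$ with the loss value after a gradient step. Both routes therefore give the factor $1-2\phi_l\eta_0$. Your appeal to the general co-coercivity lemma is legitimate because $\hat\cL_\tau$ is convex and $\cE_2$ is a global condition. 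Finally, Laurent--Massart reproduces the bound the paper obtains via sub-gamma arguments in Lemma~\ref{lem:max.normal.l2}.
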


Theorem~\ref{thm:dp.convergence} requires that the initial estimate satisfies $\bbeta^{(0)}\in \hat \bbeta + \BB^p(r_0)$ for some prespecified $r_0>0$. This requirement is not restrictive, as $r_0$ can be chosen to be of constant order. Indeed, Theorem \ref{prop:initial_value} below demonstrates that for any initial value $\bbeta^{(0)}$, this condition is always satisfied after a few iterations, as long as the sample size $n$ is sufficiently large. The proof of Theorem \ref{prop:initial_value} is given in Section \ref{proof.prop:initial_value}.

\begin{theorem}\label{prop:initial_value}
For any initial value $\bbeta^{(0)} \in \RR^p$, let $R:=\|\bbeta^{(0)}-\hat \bbeta \|_2 \geq r_0$ and $\eta_0 \in (0, 1/(2\phi_u)] $ such that $T_0 :=  R^2(\eta_0\phi_l r_0^2)^{-1}\in \mathbb{N}_{+}$.   For any given $z>0$, select the noise scale $\sigma$ such that 
\begin{align}
     \frac{R\sigma}{  \phi_l r_0^2}\big\{\sqrt{p} + \sqrt{ 2(\log T_0 + z) }\big\}  \leq \frac{\sqrt{7}-1}{6} \, . \label{private.sample.size.requirement_initial}
\end{align}
Then, conditioned on the event $\cE_0\cap\cE_1 \cap \cE_2$, the  DP Huber estimator $\bbeta^{(T_0)}$ obtained in Algorithm~\ref{alg:DPHuberlow}  satisfies $\|\bbeta^{(T_0)}- \hat \bbeta \|_2\leq r_0$ with probability at least $1-e^{-z}$ (over $\{ \bg_t\}_{t=0}^{T_0-1}$).
\end{theorem}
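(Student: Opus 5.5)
The plan is to work on the event $\cE_0\cap\cE_1\cap\cE_2$ throughout and treat the noisy iteration as an inexact gradient descent on $\hat\cL_\tau$. There are three ingredients: a linear-growth (sharpness) bound for $\hat\cL_\tau$ outside the ball $\hat\bbeta+\BB^p(r_0)$, a per-step decrease of $\|\bbeta^{(t)}-\hat\bbeta\|_2^2$ of order $\eta_0\phi_l r_0^2$, and a uniform Gaussian bound controlling the injected noise.

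\textbf{Step 1 (reduction to exact gradient steps).} On $\cE_0$ we have $\max_i\|\bx_i\|_2\le\gamma$, so $w_\gamma(\|\bx_i\|_2)=1$ for all $i$. Each update is then exactly $\bbeta^{(t+1)}=\bbeta^{(t)}-\eta_0\nabla\hat\cL_\tau(\bbeta^{(t)})+\eta_0\sigma\bg_t$.

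For the noise, I would use Gaussian norm concentration, $\PP\{\|\bg_t\|_2\ge\sqrt p+\sqrt{2u}\}\le e^{-u}$, with $u=\log T_0+z$, and a union bound over $t<T_0$. This gives, with probability at least $1-e^{-z}$,
\[
\max_{t<T_0}\|\bg_t\|_2\le\sqrt p+\sqrt{2(\log T_0+z)}=:G .
\]
Everything after this is deterministic on this event.

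\textbf{Step 2 (sharpness away from $\hat\bbeta$).} Fix $\bbeta$ with $d:=\|\bbeta-\hat\bbeta\|_2\ge r_0$. Let $\bar\bbeta$ be the point on the segment $[\hat\bbeta,\bbeta]$ at distance $r_0$ from $\hat\bbeta$.

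On $\cE_0$, $\hat\bbeta\in\Theta(r_0/2)$, so $(\bar\bbeta,\hat\bbeta)\in\mathbb N(r_0)$. Since $\nabla\hat\cL_\tau(\hat\bbeta)=\bzero$, event $\cE_1$ gives $\hat\cL_\tau(\bar\bbeta)-\hat\cL_\tau(\hat\bbeta)\ge\phi_l r_0^2$.

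Convexity of $\hat\cL_\tau$ along the ray then gives
\[
\hat\cL_\tau(\bbeta)-\hat\cL_\tau(\hat\bbeta)\ge (d/r_0)\{\hat\cL_\tau(\bar\bbeta)-\hat\cL_\tau(\hat\bbeta)\}\ge\phi_l r_0 d .
\]
Convexity also gives $\langle\nabla\hat\cL_\tau(\bbeta),\bbeta-\hat\bbeta\rangle\ge\hat\cL_\tau(\bbeta)-\hat\cL_\tau(\hat\bbeta)$.

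For the smoothness side, $\cE_2$ says $\hat\cL_\tau$ is $2\phi_u$-smooth, so co-coercivity yields $\|\nabla\hat\cL_\tau(\bbeta)\|_2^2\le4\phi_u\{\hat\cL_\tau(\bbeta)-\hat\cL_\tau(\hat\bbeta)\}$.

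\textbf{Step 3 (one-step recursion).} Expanding the square and using $\eta_0\le1/(2\phi_u)$, the deterministic part satisfies
\[
\|\bbeta-\eta_0\nabla\hat\cL_\tau(\bbeta)-\hat\bbeta\|_2^2\le d^2-c\,\eta_0\{\hat\cL_\tau(\bbeta)-\hat\cL_\tau(\hat\bbeta)\}\le d^2-c\,\eta_0\phi_l r_0 d .
\]
With the stated step size the naive constant $c$ collapses to zero. I would recover a positive constant by splitting $2\eta_0\langle\nabla\hat\cL_\tau,\bbeta-\hat\bbeta\rangle$ and absorbing only half into the $\eta_0^2\|\nabla\hat\cL_\tau\|_2^2$ term; this is the kind of bookkeeping that produces the explicit constant in \eqref{private.sample.size.requirement_initial}.

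The noise enters through the cross term $2\eta_0\sigma\langle\bg_t,\cdot\rangle\le2\eta_0\sigma G(d+\eta_0\|\nabla\hat\cL_\tau\|_2)$ and through $\eta_0^2\sigma^2G^2$. Because $d\le R$ along the trajectory (to be shown inductively), \eqref{private.sample.size.requirement_initial} gives $\sigma G\le\frac{\sqrt7-1}{6}\,\phi_l r_0^2/R$. I expect the quadratic inequality in $x=\sigma GR/(\phi_l r_0^2)$ that keeps the noise below the drift to be exactly $3x^2+x\le\frac12$, whose root is $x=(\sqrt7-1)/6$. Granting this, the net per-step decrease is at least $\eta_0\phi_l r_0^2$ whenever $d\ge r_0$.

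\textbf{Step 4 (hitting time and staying inside).} Step 3 shows that $\|\bbeta^{(t)}-\hat\bbeta\|_2^2$ is nonincreasing while it exceeds $r_0^2$, which also justifies the induction hypothesis $d\le R$. It decreases by at least $\eta_0\phi_l r_0^2$ per step, so starting from $R^2$ it falls below $r_0^2$ within $T_0=R^2/(\eta_0\phi_l r_0^2)$ steps.

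The main obstacle is the end of the argument: showing that once the iterate enters $\hat\bbeta+\BB^p(r_0)$ it remains there up to time $T_0$. My first attempt would use local strong convexity on $\Theta(r_0)$, via $\cE_1$, which makes the gradient step a contraction toward $\hat\bbeta$ with factor $\sqrt{1-\rho}$ as in Theorem~\ref{thm:dp.convergence}. Together with $\eta_0\sigma G\ll r_0$ (implied by \eqref{private.sample.size.requirement_initial} since $R\ge r_0$), this should keep the iterate within $r_0$. Failing that, I would instead define $T_0$ as a stopping time and interpret the statement at the first entrance time.
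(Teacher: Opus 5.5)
Your Steps 1 and 2 match the paper's ingredients: the event $\cF_0$, and the linear growth of the loss outside the ball, which is Lemma~\ref{lem:rsc.smootheness}. Steps 3 and 4, however, both contain genuine gaps.

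\textbf{Step 3 (drift).} The two facts you list are $\langle\nabla\hat\cL_\tau(\bbeta),\bbeta-\hat\bbeta\rangle\ge\hat\cL_\tau(\bbeta)-\hat\cL_\tau(\hat\bbeta)$ and $\|\nabla\hat\cL_\tau(\bbeta)\|_2^2\le4\phi_u\{\hat\cL_\tau(\bbeta)-\hat\cL_\tau(\hat\bbeta)\}$. They cannot produce a drift, however you split the inner product. At $\eta_0=1/(2\phi_u)$ they only bound $\eta_0^2\|\nabla\hat\cL_\tau\|_2^2$ by $2\eta_0\{\hat\cL_\tau(\bbeta)-\hat\cL_\tau(\hat\bbeta)\}$, which cancels the whole lower bound on $2\eta_0\langle\nabla\hat\cL_\tau(\bbeta),\bbeta-\hat\bbeta\rangle$. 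For $\eta_0$ slightly below $1/(2\phi_u)$ they give only $c=2(1-2\phi_u\eta_0)$, far too small to reach the ball within $T_0$ steps.

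The missing ingredient is the co-coercive inequality
$\langle\nabla\hat\cL_\tau(\bbeta),\bbeta-\hat\bbeta\rangle\ge\hat\cL_\tau(\bbeta)-\hat\cL_\tau(\hat\bbeta)+\|\nabla\hat\cL_\tau(\bbeta)\|_2^2/(4\phi_u)$.
It holds for a convex function satisfying the global quadratic upper bound in $\cE_2$. It gives $\|\bbeta-\eta_0\nabla\hat\cL_\tau(\bbeta)-\hat\bbeta\|_2^2\le d^2-2\eta_0\{\hat\cL_\tau(\bbeta)-\hat\cL_\tau(\hat\bbeta)\}\le d^2-2\eta_0\phi_l r_0 d$ for every $\eta_0\le1/(2\phi_u)$.

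The cross term must then be bounded by $2\eta_0\sigma G\,\|\bbeta-\eta_0\nabla\hat\cL_\tau(\bbeta)-\hat\bbeta\|_2\le2\eta_0\sigma G d$. Your bound $2\eta_0\sigma G(d+\eta_0\|\nabla\hat\cL_\tau\|_2)$ can reach $4\eta_0\sigma G d$. When $R$ is close to $r_0$, \eqref{private.sample.size.requirement_initial} allows $\sigma G\approx0.27\,\phi_l r_0$, so this already exceeds the drift $\eta_0\phi_l r_0 d$ of a half split.

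With both corrections, take $d=\|\bbeta^{(t)}-\hat\bbeta\|_2\ge r_0$ and $\sigma G\le\frac{\sqrt7-1}{6}\phi_l r_0$. Then $\|\bbeta^{(t+1)}-\hat\bbeta\|_2^2\le d^2-2\eta_0 d(\phi_l r_0-\sigma G)+\eta_0^2\sigma^2G^2\le d^2-\eta_0\phi_l r_0^2$, using $\eta_0\phi_l\le1/2$. The quadratic $3x^2+x\le\frac12$ you expect does not arise from a one-step distance recursion, so granting it is not a derivation.

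\textbf{Step 4 (staying in the ball).} You need invariance of $\hat\bbeta+\BB^p(r_0)$ from the first entrance time up to $T_0$, and neither suggestion gives it.

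Suppose you keep the iterate in the ball via the recursion of Theorem~\ref{thm:dp.convergence}, with rate $\rho=(2\phi_l\eta_0)^2$ obtained after a Young split. This requires $\sigma G$ to be smaller than $\phi_l r_0$ by a positive power of $\phi_l\eta_0$: roughly $\sqrt{\phi_l\eta_0}$ for the squared recursion, and $\phi_l\eta_0$ if you read it as a norm contraction by $\sqrt{1-\rho}$. \eqref{private.sample.size.requirement_initial} does not provide this when $\phi_l\eta_0$ is small. So $\eta_0\sigma G\ll r_0$ is not the relevant comparison. The stopping-time variant proves a different statement, since the theorem concerns the fixed index $T_0$.

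What works is the unsplit bound from that proof. For $\bbeta\in\hat\bbeta+\BB^p(r_0)$, $\cE_1$ applied to the pair $(\hat\bbeta,\bbeta)$ gives $\|\bbeta-\eta_0\nabla\hat\cL_\tau(\bbeta)-\hat\bbeta\|_2^2\le(1-2\phi_l\eta_0)\|\bbeta-\hat\bbeta\|_2^2$. The triangle inequality then yields $\|\bbeta^{(t+1)}-\hat\bbeta\|_2\le(1-\phi_l\eta_0)r_0+\eta_0\sigma G\le r_0$ whenever $\sigma G\le\phi_l r_0$. The contradiction argument of Proposition~\ref{prop:crude.bound} gives the same conclusion.

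Repaired this way, your hitting-time route is valid. It needs only $\sigma G\lesssim\phi_l r_0$, without the factor $R/r_0$.

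\textbf{How the paper proceeds.} The paper avoids invariance entirely. It telescopes \eqref{pf:prop:main_derivation_copy} to bound the average excess loss by $R^2/(2\eta_0T_0)$ plus noise terms. It then shows $\hat\cL_\tau(\bbeta^{(t+1)})-\hat\cL_\tau(\bbeta^{(t)})\le\eta_0 r_{\rm priv}^2$ to pass to the final iterate, and applies Lemma~\ref{lem:rsc.smootheness} once, at $t=T_0$. The accumulated cross term $r_{\rm priv}\max_t\|\bbeta^{(t)}-\hat\bbeta\|_2$ is what produces both the factor $R$ and the constant $(\sqrt7-1)/6$.
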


The results in Theorems~\ref{thm:dp.convergence} and \ref{prop:initial_value} are deterministic in nature, providing upper bounds on the statistical errors of the DP Huber estimator obtained via noisy gradient descent, conditioned on certain favorable (random) events defined in \eqref{def:event.E0-2}.  In what follows, we demonstrate that these events occur with high probability under Assumptions~\ref{assump:design} and \ref{assump:heavy-tail}.

\begin{proposition}\label{prop:events1}
Let Assumptions \ref{assump:design} and \ref{assump:heavy-tail} hold. Then, for any given $\epsilon, z>0$ and $\tau_0\geq \sigma_0$, by selecting the robustification parameter $\tau = \tau_0\{n\epsilon(p+z)^{-1}\}^{1/(2+\iota)}$,  the event $\cE=\cE_0\cap \cE_1\cap \cE_2$ occurs and 
$$
\|  \hat \bbeta- \bbeta^*\|_2    \leq C_0   \bigg\{  \max\{ \sigma_\iota^{2+\iota} \tau_0^{-1-\iota} , \tau_0 \}   \bigg( \frac{p+z}{n \epsilon} \bigg)^{(1+\iota)/({2+\iota})} + \sigma_0 \sqrt{\frac{p+z} {n }} \bigg\}\,,
$$
with probability at least $1-5e^{-z}$, provided that $n\epsilon \geq C_1(p+z)$, and the parameters $(r_0,\gamma,\phi_l,\phi_u)$ in \eqref{def:event.E0-2} are chosen as 
\begin{align*}
    r_0 = \frac{\tau}{16\sqrt{\kappa_4\lambda_1}}\,,~~~\gamma = C_2\sqrt{p+\log n+z}\,,~~\phi_l =\frac{\lambda_p}{8}\,,~~\mbox{and}~~\phi_u=\lambda_1\,,
\end{align*}
where $C_0$ and $C_2$ are positive constants depending only on $(\upsilon_1,\lambda_1,\lambda_p,\kappa_4)$, and $C_1$ is a positive constant depending only on $(\upsilon_1,\lambda_1,\lambda_p,\kappa_4,\tau_0,\sigma_\iota)$. 
\end{proposition}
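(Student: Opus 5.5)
The proof splits into four high-probability events, each of probability at least $1-e^{-z}$ (or $1-2e^{-z}$), combined by a union bound to give $1-5e^{-z}$. These are: (a) the covariate norm bound $\max_i\|\bx_i\|_2\le\gamma$; (b) global smoothness $\cE_2$; (c) restricted strong convexity $\cE_1$ on the local region $\mathbb{N}(r_0)$; and (d) the deviation bound for $\hat\bbeta$, which gives $\hat\bbeta\in\Theta(r_0/2)$ and hence $\cE_0$.

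\emph{Steps (a) and (b).} For (a), Assumption~\ref{assump:design} makes $\bSigma^{-1/2}\bx_i$ a sub-Gaussian vector. A standard $1/2$-net argument gives $\|\bx_i\|_2\le \lambda_1^{1/2}\upsilon_1 C(\sqrt p+\sqrt{\log n+z})$ with probability $1-e^{-z}/n$ for each $i$. A union bound over $i\in[n]$ then yields the choice $\gamma=C_2\sqrt{p+\log n+z}$. For (b), $\rho_\tau''\le1$ everywhere except at two points. The Bregman remainder is therefore at most $\tfrac12(\bbeta_1-\bbeta_2)^\T\hat\bSigma(\bbeta_1-\bbeta_2)$ with $\hat\bSigma=n^{-1}\sn\bx_i\bx_i^\T$. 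Matrix concentration for sub-Gaussian vectors gives $\|\bSigma^{-1/2}\hat\bSigma\bSigma^{-1/2}-\bI\|\le C\sqrt{(p+z)/n}+C(p+z)/n\le1$ once $n\gtrsim p+z$. Hence $\lambda_{\max}(\hat\bSigma)\le2\lambda_1$, and the remainder is at most $\lambda_1\|\bbeta_1-\bbeta_2\|_2^2$, i.e. $\phi_u=\lambda_1$.

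\emph{Step (c).} This is the standard Huber RSC argument as in \citeS{sun2020adaptiveS}. Write the remainder as $\int_0^1(1-t)\,n^{-1}\sn\mathbbm 1(|y_i-\bx_i^\T\bbeta_t|\le\tau)(\bx_i^\T\bdelta)^2\,dt$ and lower bound the indicator. If, say, $\bbeta_2\in\Theta(r_0/2)$ and $\|\bdelta\|_2\le r_0$, then every point on the segment lies in $\Theta(3r_0/2)$. So it suffices to lower bound $n^{-1}\sn(\bx_i^\T\bdelta)^2\mathbbm 1(|\varepsilon_i|\le\tau/2)\mathbbm 1(|\bx_i^\T\bu|\le\tau/(3r_0))$ uniformly over $\bu\in\Theta(3r_0/2)-\bbeta^*$. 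The population version is $\ge\|\bdelta\|_{\bSigma}^2\{1-\PP(|\varepsilon|>\tau/2)^{1/2}\kappa_4^{1/2}-\dots\}$. Chebyshev gives $\PP(|\varepsilon|>\tau/2)\le4\sigma_0^2/\tau^2$, and the term $\PP(|\bx^\T\bu|>\tau/(3r_0))$ is controlled by $\kappa_4$ and $\lambda_1$. The choice $r_0=\tau/(16\sqrt{\kappa_4\lambda_1})$ is exactly what makes these losses at most $1/4$. This gives the population curvature $\ge\lambda_p/4$. To pass to the uniform empirical bound I would replace the indicators by Lipschitz truncation functions and apply Talagrand/bounded-differences with symmetrization and contraction over $\bdelta\in\SSS^{p-1}$, $\bu$ in the ball. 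That loses at most $\lambda_p/8$ when $n\gtrsim p+z$ (using $\tau\gtrsim\sigma_0$, which follows from $n\epsilon\ge C_1(p+z)$), so $\phi_l=\lambda_p/8$.

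\emph{Step (d).} On $\cE_1$, apply the localized argument: set $\bbeta_\eta=\bbeta^*+\eta(\hat\bbeta-\bbeta^*)$ with $\eta$ chosen so that $\|\bbeta_\eta-\bbeta^*\|_2\le r_0/2$. Convexity gives $\phi_l\|\bbeta_\eta-\bbeta^*\|_2\le\|\nabla\hat\cL_\tau(\bbeta^*)\|_2$. Decompose $\nabla\hat\cL_\tau(\bbeta^*)$ into its mean (the bias) and a fluctuation.
\begin{itemize}
\item \emph{Bias.} Since $\EE(\varepsilon|\bx)=0$, we have $|\EE\psi_\tau(\varepsilon)|\bx|\le\EE(|\varepsilon|\mathbbm 1(|\varepsilon|>\tau)|\bx)\le\sigma_\iota^{2+\iota}\tau^{-1-\iota}$. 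So $\|\bSigma^{-1/2}\EE\psi_\tau(\varepsilon)\bx\|_2\le\sigma_\iota^{2+\iota}\tau^{-1-\iota}$.
\item \emph{Fluctuation.} Use a net over $\SSS^{p-1}$ and Bernstein's inequality. The variance is $\le\sigma_0^2$ and the summands satisfy $|\psi_\tau|\le\tau$ with sub-Gaussian $\bx$, giving $\sigma_0\sqrt{(p+z)/n}+\tau(p+z)/n$.
\end{itemize}
Substituting $\tau=\tau_0\{n\epsilon/(p+z)\}^{1/(2+\iota)}$, the bias becomes $\sigma_\iota^{2+\iota}\tau_0^{-1-\iota}\{(p+z)/(n\epsilon)\}^{(1+\iota)/(2+\iota)}$. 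Since $\epsilon$-free $n\ge n\epsilon$ (for $\epsilon\le1$; otherwise absorb), $\tau(p+z)/n\le\tau_0\{(p+z)/(n\epsilon)\}^{(1+\iota)/(2+\iota)}$. Together these give the stated rate. Finally, $n\epsilon\ge C_1(p+z)$ makes this rate smaller than $r_0/4\asymp\tau$, which forces $\eta=1$ and hence $\hat\bbeta\in\Theta(r_0/2)$, i.e. $\cE_0$.

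I expect the main obstacle to be the uniform empirical RSC in step (c), with constants tracking $\kappa_4$ and the precise $r_0$.
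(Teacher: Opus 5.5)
Your proposal is correct and follows the paper's proof: the same four ingredients (the $1/2$-net bound on $\max_i\|\bx_i\|_2$, global smoothness from Lemma~\ref{lem:gs}, local RSC from Lemma~\ref{lem:rsc} at radius $r_0=\tau/(16\sqrt{\kappa_4\lambda_1})$, and the Huber deviation bound for $\hat\bbeta$, which $n\epsilon\ge C_1(p+z)$ pushes inside $\Theta(r_0/2)$) are combined by a union bound. The only difference is that you rederive the deviation bound on $\cE_1$ by the localization argument, whereas the paper adapts Theorem~2.1 of Chen and Zhou (2020). When you carry out step (c), also truncate the unbounded factor $(\bx_i^\T\bdelta)^2$ (the paper uses the smooth truncation $\varphi_R$) so that Talagrand's inequality applies. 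Then control the resulting product of two truncated linear forms by Gaussian comparison, as the paper does, or by vector-valued contraction; scalar contraction does not apply to a function of two linear forms.
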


The proof of Proposition~\ref{prop:events1} is provided in Section~\ref{proof.e1}. With the intermediate results established above, we are now prepared to prove the main theorem in Section~\ref{sec:4.1}.

\subsection{Proof of Theorem \ref{thm:low}}

Notice that by choosing $z\asymp \log n$ in Proposition~\ref{prop:events1}, the event $\cE=\cE_0\cap \cE_1\cap \cE_2$ occurs with high probability, satisfying $\PP(\cE^c)=O(n^{-1})$. Consequently, Theorem~\ref{thm:low} follows directly from Theorem~\ref{thm:dp.convergence} and Proposition~\ref{prop:events1}.

\subsection{Proofs of the Intermediate Results}
\setcounter{lemma}{0}
\renewcommand{\thelemma}{\thesection.\arabic{lemma}}
\subsubsection{Technical Lemmas}

We begin by presenting several technical lemmas concerning the gradient and Hessian of the empirical loss, defined as
\begin{align}
\nabla \hat \cL_\tau(\bbeta)   = - \frac{1}{n} \sn   \psi_\tau ( y_i - \bx_i^\T \bbeta  )   \bx_i   ~~\mbox {and }~~ \nabla^2 \hat \cL_\tau(\bbeta)   = \frac{1}{n} \sn  \psi_\tau'( y_i - \bx_i^\T \bbeta  )  \bx_i \bx_i^\T\,. \label{sample.hessian}
\end{align}
Since the first-order derivative of the Huber loss is 1-Lipschitz continuous,  the empirical loss $\hat \cL_\tau(\cdot)$ is globally smooth (with high probability). 
To see this, recall from Assumption~\ref{assump:design} that the eigenvalues of $\bSigma=\EE(\bx_i \bx_i^\T)$ lie in $[\lambda_p, \lambda_1]$. Hence,
\begin{align}
\hat \cL_\tau(\bbeta_1) - \hat \cL_\tau(\bbeta_2)  \leq \langle \nabla \hat \cL_\tau (\bbeta_2), \bbeta_1 - \bbeta_2 \rangle + \frac{\lambda_1}{2}  \lambda_{\max}(  \hat \bS ) \cdot  \| \bbeta_1 - \bbeta_2 \|_2^2 \,, \label{global.smootheness}
\end{align}
where $\hat \bS : = n^{-1} \sn \bchi_i \bchi_i^\T$ with $\bchi_i = \bSigma^{-1/2} \bx_i$.   Furthermore, Huber's loss is locally strongly convex, as $\rho''_\tau(u) = 1$ for $u\in (-\tau , \tau)$.

\begin{lemma}\label{lem:gs}
Under Assumption \ref{assump:design}, for any $z\geq 0$, with probability at least $1-e^{-z}$, 
\begin{align*}
\hat \cL_\tau(\bbeta_1) - \hat \cL_\tau(\bbeta_2)  - \langle \nabla \hat \cL_\tau (\bbeta_2), \bbeta_1 - \bbeta_2 \rangle \leq   \lambda_1  \| \bbeta_1 - \bbeta_2 \|_2^2 \, , 
\end{align*}
holds uniformly over $\bbeta_1,\bbeta_2 \in \mathbb{R}^p$, provided that  $n\gtrsim  \upsilon_1^4 (p+z)$.
\end{lemma}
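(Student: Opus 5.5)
The plan is to start from the global-smoothness inequality \eqref{global.smootheness} already displayed above, which reduces the claim to a single spectral bound. Because $\psi_\tau$ is $1$-Lipschitz and absolutely continuous, integrating along the segment $\bbeta_2 + s(\bbeta_1-\bbeta_2)$, $s\in[0,1]$, gives
\begin{align*}
\hat \cL_\tau(\bbeta_1) - \hat \cL_\tau(\bbeta_2) - \langle \nabla \hat \cL_\tau (\bbeta_2), \bbeta_1 - \bbeta_2 \rangle
\leq \frac{1}{2n}\sn \langle \bx_i, \bbeta_1-\bbeta_2\rangle^2 .
\end{align*}
Writing $\bx_i = \bSigma^{1/2}\bchi_i$, the right-hand side equals $\frac12 \|\hat \bS^{1/2}\bSigma^{1/2}(\bbeta_1-\bbeta_2)\|_2^2$, which is at most $\frac{\lambda_1}{2}\lambda_{\max}(\hat\bS)\|\bbeta_1-\bbeta_2\|_2^2$. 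This bound is deterministic and holds simultaneously for all $\bbeta_1,\bbeta_2$. It therefore suffices to show that $\lambda_{\max}(\hat \bS)\le 2$ with probability at least $1-e^{-z}$ once $n\gtrsim \upsilon_1^4(p+z)$.

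For the spectral bound, note that $\EE(\bchi_i\bchi_i^\T)=\bI_p$. By Assumption~\ref{assump:design}(ii), each $\langle \bu,\bchi_i\rangle$ is sub-Gaussian with $\psi_2$-norm $\lesssim \upsilon_1$, uniformly over $\bu\in\SSS^{p-1}$. I would proceed with a standard covering argument.
\begin{itemize}
\item[(a)] Take a $1/4$-net $\cN$ of $\SSS^{p-1}$ with $|\cN|\le 9^p$. Then $\|\hat\bS-\bI_p\|\le 2\max_{\bu\in\cN}|\bu^\T(\hat\bS-\bI_p)\bu|$.
\item[(b)] For each fixed $\bu$, the summands $\langle\bu,\bchi_i\rangle^2-1$ are i.i.d., centered, and sub-exponential with $\psi_1$-norm $\lesssim\upsilon_1^2$. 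Bernstein's inequality then gives $|\bu^\T(\hat\bS-\bI_p)\bu|\lesssim \upsilon_1^2\{\sqrt{t/n}+t/n\}$ with probability at least $1-2e^{-t}$.
\item[(c)] Apply a union bound over $\cN$ with $t = p\log 9 + z+\log 2$.
\end{itemize}
This yields $\|\hat\bS-\bI_p\|\le C\upsilon_1^2\{\sqrt{(p+z)/n}+(p+z)/n\}$ with probability at least $1-e^{-z}$. Under $n\ge C'\upsilon_1^4(p+z)$ with $C'$ large enough, the right-hand side is at most $1$, so $\lambda_{\max}(\hat\bS)\le 2$.

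Combining the two steps gives the displayed inequality with constant $\lambda_1$, uniformly over all pairs $(\bbeta_1,\bbeta_2)$. The only step needing care is the concentration bound for $\hat\bS$. The sample-size requirement scales like $\upsilon_1^4$ because the deviation is of order $\upsilon_1^2\sqrt{(p+z)/n}$, and this has to be pushed below $1$. Everything else is deterministic, and Huber's non-differentiability at $\pm\tau$ causes no trouble since we only use the Lipschitz property of $\psi_\tau$.
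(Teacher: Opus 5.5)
Your proposal is correct and follows the paper's proof. Both arguments reduce the claim, via the smoothness inequality \eqref{global.smootheness}, to the event $\lambda_{\max}(\hat\bS)\le 2$, and then control $\|\hat\bS-\bI_p\|\lesssim \upsilon_1^2\{\sqrt{(p+z)/n}+(p+z)/n\}$ by sub-Gaussian covariance concentration. The only difference is presentation: the paper cites the proof of Theorem 6.5 in Wainwright's high-dimensional statistics text for that concentration step, whereas you write out the $1/4$-net, Bernstein, and union-bound argument behind it explicitly.
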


The proof of Lemma \ref{lem:gs} is given in Section \ref{proof.lem:gs}. The following lemma shows that, for a prespecified radius $r>0$, the empirical Huber loss is locally strongly convex (with high probability), provided that the robustification parameter $\tau$ is sufficiently large. Recall 
\begin{align}
    \kappa _4 := \sup_{\bu\in \mathbb{S}^{p-1}}   \EE \big(\langle \bSigma^{-1/2} \bx_i, \bu \rangle^4 \big)\,. \label{def:kappa4}
\end{align}

\begin{lemma} \label{lem:rsc}
Let Assumptions  \ref{assump:design} and \ref{assump:heavy-tail} hold. For any $r>0$, let the robustification parameter $\tau$ satisfy $\tau \geq 16 \max\{ \sigma_0 ,  (\kappa_4 \lambda_1)^{1/2} r \}$, where $\kappa _4$ is defined in \eqref{def:kappa4}. Then, it holds for any $z\geq 0$ that, with probability at least $1-e^{-z}$,
\begin{align*}
     \hat \cL_\tau(\bbeta_1)  - \hat \cL_\tau (\bbeta_2) - \langle \nabla \hat \cL_\tau(\bbeta_2), \bbeta_1-\bbeta_2 \rangle  \geq \frac{\lambda_p}{8} \| \bbeta_1 -\bbeta_2 \|_2^2\,,
\end{align*} 
holds uniformly for $\bbeta_1 \in \Theta(r/2)$ and $\bbeta_2 \in \bbeta_1 + \BB^p(r)$, provided that $n\gtrsim \kappa_4(\lambda_1/\lambda_p)^2 (  p + z)$.  The same bound also applies when $\bbeta_1 \in \bbeta_2 +  \BB^p(r)$ and $\bbeta_2 \in \Theta(r/2)$. 
\end{lemma}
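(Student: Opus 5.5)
Write $\bdelta=\bbeta_1-\bbeta_2$ and $r_i(\bbeta)=y_i-\bx_i^\T\bbeta$. The starting point is the integral form of the Bregman divergence,
\begin{align*}
\hat \cL_\tau(\bbeta_1)  - \hat \cL_\tau (\bbeta_2) - \langle \nabla \hat \cL_\tau(\bbeta_2), \bdelta \rangle = \frac{1}{n}\sn \int_0^1 \psi_\tau'\big(\varepsilon_i - \bx_i^\T(\bbeta_2+t\bdelta-\bbeta^*)\big)(1-t)\,{\rm d}t\,(\bx_i^\T\bdelta)^2 .
\end{align*}
Since $\psi_\tau'\geq \mathbbm{1}(|u|\leq\tau)$, this is at least
\begin{align*}
\frac{1}{2n}\sn (\bx_i^\T\bdelta)^2\,\mathbbm{1}\Big(|\varepsilon_i|\leq \frac{\tau}{2}\Big)\,\mathbbm{1}\Big(\max_{t\in[0,1]}|\bx_i^\T(\bbeta_2+t\bdelta-\bbeta^*)|\leq \frac{\tau}{2}\Big).
\end{align*}
Take $\bbeta_1\in\Theta(r/2)$ and $\|\bdelta\|_2\leq r$. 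Then every point on the segment lies in $\Theta(3r/2)$, and the last indicator is implied by $|\bx_i^\T(\bbeta_1-\bbeta^*)|\leq \tau/4$ together with $|\bx_i^\T\bdelta|\leq\tau/4$. To linearize in the nonlinear quantities, I would replace the hard indicators with Lipschitz truncations. Specifically, use $(\bx_i^\T\bdelta)^2\mathbbm{1}(|\bx_i^\T\bdelta|\leq \tau/4)\ \geq\ \varphi_{\tau\|\bdelta\|_2/(4r)}(\bx_i^\T\bdelta)$ with $\varphi_R(u)=u^2\mathbbm{1}(|u|\leq R/2)+(u-R)^2\mathbbm{1}(R/2<|u|\leq R)$, which is the device of \citeS{sun2020adaptiveS}. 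Homogeneity then reduces the claim to $\|\bdelta\|_2=r$, or equivalently to unit directions.

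\emph{Population step.} Fix $\bu=\bdelta/\|\bdelta\|_2\in\SSS^{p-1}$ and $\bv=\bbeta_1-\bbeta^*$. I would lower bound $\EE$ of the truncated summand by $\EE(\bx^\T\bu)^2$ minus three removal costs:
\begin{align*}
&\EE\{(\bx^\T\bu)^2\mathbbm{1}(|\varepsilon|>\tau/2)\}\leq (2\sigma_0/\tau)^2\lambda_1, \quad\text{using } \EE(\varepsilon^2\,|\,\bx)\leq\sigma_0^2 \text{ and Markov conditionally on } \bx,\\
&\EE\{(\bx^\T\bu)^2\mathbbm{1}(|\bx^\T\bv|>\tau/4)\}\leq \{\EE(\bx^\T\bu)^4\}^{1/2}\{\PP(|\bx^\T\bv|>\tau/4)\}^{1/2}\leq \kappa_4\lambda_1\cdot\frac{4\cdot 2r\sqrt{\lambda_1}\cdots}{\tau},\\
&\text{a similar Cauchy--Schwarz/fourth-moment bound for the } |\bx^\T\bdelta| \text{ truncation.}
\end{align*}
For the second bound, Cauchy--Schwarz with the fourth-moment bound $\kappa_4\lambda_1^2$ and Chebyshev on $\bx^\T\bv$ (variance $\leq \lambda_1 r^2/4$) give something of order $\lambda_1\sqrt{\kappa_4}\,\sqrt{\lambda_1}r/\tau$. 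Under $\tau\geq16\max\{\sigma_0,\sqrt{\kappa_4\lambda_1}r\}$, each loss is a small fraction, roughly $\lambda_p/8$ or less after crude constant bookkeeping (using $\lambda_1\ge\lambda_p$). This leaves a population lower bound of about $(\lambda_p/2)\cdot\tfrac12$, with room to spare for fluctuations, and I would tune the fractions to land at $\lambda_p/4$.

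\emph{Uniform concentration.} Define
\begin{align*}
Z=\sup_{\bu\in\SSS^{p-1},\,\bv\in\BB^p(r/2)}\Big(\EE-\frac{1}{n}\sn\Big)\,g_{\bu,\bv}(\bx_i,\varepsilon_i),
\end{align*}
where $g_{\bu,\bv}$ is the Lipschitz-smoothed summand normalized by $r^2$. Each $g$ is bounded by $\asymp\tau^2/r^2$, and I would also smooth the $\bv$ and $\varepsilon$ indicators by Lipschitz ramps. Bousquet's version of Talagrand's inequality then gives $Z\leq \EE Z+\sqrt{2v z/n}+Cbz/n$. Because the ramp is only needed relative to $\tau/r\gtrsim\sqrt{\kappa_4\lambda_1}$, the variance proxy is $\lesssim\kappa_4\lambda_1^2$ and the bound $b$ can be reduced to a constant multiple of $\lambda_1$ after rescaling. $\EE Z$ is handled by symmetrization and the Ledoux--Talagrand contraction principle, since the ramps are Lipschitz in $\bx^\T\bu$ and $\bx^\T\bv$ with constants of order $\tau/r$. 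This gives $\EE Z\lesssim \sqrt{\lambda_1}(\tau/r)\,\EE\|n^{-1}\sn e_i\bx_i\|_2\cdot r/\tau\lesssim\lambda_1\sqrt{p/n}$, up to $\upsilon_1$ factors. With $n\gtrsim\kappa_4(\lambda_1/\lambda_p)^2(p+z)$, $Z\leq\lambda_p/8$ with probability $1-e^{-z}$, which yields the bound $\lambda_p/8$. The symmetric case follows by swapping the roles of $\bbeta_1$ and $\bbeta_2$.

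The main obstacle is the contraction step. The summand is a product of a quadratic in $\bx^\T\bu$ and indicators depending on $\bv$ and $\varepsilon$, so it is not directly a Lipschitz function of a single linear form. I would first try bounding the product of bounded Lipschitz factors by a sum, using the vector-contraction inequality over the pair $(\bx^\T\bu,\bx^\T\bv)$. Only if that fails would I fall back to an $\epsilon$-net over $(\bu,\bv)$, which costs an extra $\log$ factor.
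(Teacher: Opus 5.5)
Your architecture is the paper's. You use the same smoothed truncations $\varphi_R,\phi_R$, a lower bound on the mean of the truncated quadratic form via conditional Markov and $\kappa_4$, Bousquet's inequality, and symmetrization for the expected supremum. For that last step the paper passes to Gaussian multipliers and applies Sudakov--Fernique against the linear process $\mathbb{Z}$. Your vector-contraction route also works, with no net needed, once you rescale $\bv$ to the unit ball: $(a,b)\mapsto\varphi_R(a)\phi_{\tau/4}(rb/2)$ is then Lipschitz with constant at most $\sqrt{2}\max\{R,R^2r/\tau\}$. However, two steps fail as written.

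\emph{Population step.} You bound the removal costs additively by multiples of $\lambda_1$ (about $\lambda_1/64$ and $\lambda_1/8$) and then call them at most $\lambda_p/8$ ``using $\lambda_1\ge\lambda_p$''. That inequality points the wrong way. An additive loss proportional to $\lambda_1$ is not small against the main term $\EE(\bx^\T\bu)^2=\|\bu\|_{\bSigma}^2\ge\lambda_p$ unless $\lambda_1/\lambda_p$ is bounded by an absolute constant, and the lemma (with explicit constants $16$ and $\lambda_p/8$) does not assume that. Keep every loss multiplicative in $\|\bu\|_{\bSigma}^2$, as the paper does:
\begin{itemize}
\item Conditional Markov gives $\EE\{(\bx^\T\bu)^2\mathbbm{1}(|\varepsilon|>\tau/2)\}\le(4\sigma_0^2/\tau^2)\|\bu\|_{\bSigma}^2$.
\item Using $\{\EE(\bx^\T\bu)^4\}^{1/2}\le\kappa_4^{1/2}\|\bu\|_{\bSigma}^2$, your Cauchy--Schwarz/Chebyshev bound, taken at level $\tau/8$ where the ramp $\phi_{\tau/4}$ starts, becomes $4(\kappa_4\lambda_1)^{1/2}(r/\tau)\|\bu\|_{\bSigma}^2\le\|\bu\|_{\bSigma}^2/4$.
\item The $\bx^\T\bu$-truncation loss becomes $(4\kappa_4\lambda_1/R^2)\|\bu\|_{\bSigma}^2$.
\end{itemize}

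\emph{Truncation level.} With $R=\tau/(4r)$, the envelope of the summand is $R^2/4\asymp(\tau/r)^2$ and $\varphi_R$ is $R$-Lipschitz. The lemma allows any $r>0$ and only bounds $\tau/r$ from below. So Bousquet's $bz/n$ term and your contraction bound would force $n$ to grow with $\tau/r$, and the compensating factor $r/\tau$ in your bound on $\EE Z$ has no source under that choice.

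Instead, fix $R$ at a constant admissible level, $R=4(\kappa_4\lambda_1)^{1/2}\le\tau/(4r)$. This is legitimate because $\varphi_R(u)\le u^2\mathbbm{1}(|u|\le R)$ for every $R$. It is the paper's device (a constant normalized level $\tau_0/(2r)$ with $\tau_0=16(\kappa_4\lambda_1)^{1/2}r$), adjusted to your residual budget. Then:
\begin{itemize}
\item the envelope is $4\kappa_4\lambda_1$ and the variance proxy is $\kappa_4\lambda_1^2$;
\item the Lipschitz constants are $R$ and $R^2r/\tau\le(\kappa_4\lambda_1)^{1/2}$;
\item hence $\EE Z\lesssim\kappa_4^{1/2}\lambda_1\sqrt{p/n}$, and $Z\lesssim\kappa_4^{1/2}\lambda_1\sqrt{(p+z)/n}$ with probability at least $1-e^{-z}$ once $n\gtrsim\kappa_4 z$.
\end{itemize}
The mean is at least $(1-1/64-1/4-1/4)\|\bu\|_{\bSigma}^2>0.48\lambda_p$. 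Under $n\gtrsim\kappa_4(\lambda_1/\lambda_p)^2(p+z)$ you get $Z\le0.23\lambda_p$. The Bregman divergence is therefore at least $\frac{1}{2}(0.48-0.23)\lambda_p\|\bbeta_1-\bbeta_2\|_2^2\ge(\lambda_p/8)\|\bbeta_1-\bbeta_2\|_2^2$. Smoothing the $\varepsilon$-indicator is unnecessary, since it does not depend on the parameters.
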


The proof of Lemma \ref{lem:rsc} is given in Section \ref{proof.lem:rsc}.
Next, we focus on the  local growth property of the empirical Huber loss, particularly outside a neighborhood of $\hat \bbeta$.

\begin{lemma}  \label{lem:rsc.smootheness}
Conditioned on $\cE_0 \cap \cE_1$ with $\cE_0$ and $\cE_1$ defined in \eqref{def:event.E0-2}, we have
\begin{align}
  &  \hat \cL_\tau(\bbeta ) - \hat \cL_\tau(\hat \bbeta )    \geq   \phi_l  r_0    \| \bbeta - \hat \bbeta \|_2 \,,     \label{restricted.loss.difference} 
\end{align}
for all $\bbeta  \notin    \hat \bbeta + \BB^p(r_0)$. 
\end{lemma}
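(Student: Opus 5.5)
The plan is to combine convexity of $\hat\cL_\tau$ with the local strong convexity guaranteed on $\cE_1$, working along the segment from $\hat\bbeta$ to $\bbeta$. Fix $\bbeta\notin\hat\bbeta+\BB^p(r_0)$, so $R:=\|\bbeta-\hat\bbeta\|_2>r_0$. Define the point $\tilde\bbeta=\hat\bbeta+(r_0/R)(\bbeta-\hat\bbeta)$ on the boundary of $\hat\bbeta+\BB^p(r_0)$; it satisfies $\|\tilde\bbeta-\hat\bbeta\|_2=r_0$.

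First I would lower bound the loss increase at $\tilde\bbeta$. On $\cE_0$ we have $\hat\bbeta\in\Theta(r_0/2)$. Hence the pair $(\bbeta_1,\bbeta_2)=(\tilde\bbeta,\hat\bbeta)$ lies in $\mathbb{N}(r_0)$, since $\|\bbeta_1-\bbeta_2\|_2=r_0$ and $\bbeta_2\in\Theta(r_0/2)$. On $\cE_1$ this gives
$$\hat\cL_\tau(\tilde\bbeta)-\hat\cL_\tau(\hat\bbeta)-\langle\nabla\hat\cL_\tau(\hat\bbeta),\tilde\bbeta-\hat\bbeta\rangle\ge\phi_l r_0^2.$$
Since $\hat\bbeta$ is a minimizer of the convex differentiable function $\hat\cL_\tau$, the gradient $\nabla\hat\cL_\tau(\hat\bbeta)$ vanishes. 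Therefore $\hat\cL_\tau(\tilde\bbeta)-\hat\cL_\tau(\hat\bbeta)\ge\phi_l r_0^2$.

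Next I would transfer this bound to $\bbeta$ using convexity along the ray. The map $g(t)=\hat\cL_\tau(\hat\bbeta+t(\bbeta-\hat\bbeta))-\hat\cL_\tau(\hat\bbeta)$ is convex with $g(0)=0$, so $g(t)/t$ is nondecreasing in $t>0$. Comparing $t=r_0/R$ with $t=1$ gives
$$g(1)\ge\frac{R}{r_0}\,g(r_0/R)\ge\frac{R}{r_0}\,\phi_l r_0^2=\phi_l r_0\|\bbeta-\hat\bbeta\|_2,$$
which is \eqref{restricted.loss.difference}.

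The only delicate point is making sure the pair used belongs to $\mathbb{N}(r_0)$; this is why the boundary point at distance exactly $r_0$ is chosen and why $\cE_0$ is invoked. Beyond that the argument is routine.
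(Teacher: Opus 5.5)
Your proof is correct and follows the paper's argument. Both rescale $\bbeta-\hat\bbeta$ to the boundary point at distance $r_0$, apply $\cE_1$ to the pair $(\hat\bbeta+\bzeta_0,\hat\bbeta)$ using $\hat\bbeta\in\Theta(r_0/2)$ from $\cE_0$ and $\nabla\hat\cL_\tau(\hat\bbeta)=\bzero$, and then transfer the bound outward by convexity. The only cosmetic difference is how the scaling inequality $\hat R_\tau(\bzeta)\ge v^{-1}\hat R_\tau(v\bzeta)$ is obtained: the paper integrates the gradient-monotonicity statement of Lemma C.1 in Sun et al.\ (2020), whereas you use the equivalent elementary fact that $g(t)/t$ is nondecreasing for a convex $g$ with $g(0)=0$.
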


The proof of Lemma \ref{lem:rsc.smootheness} is given in Section \ref{proof.lem:rsc.smootheness}.
The following lemma provides upper bounds for the i.i.d. standard normal random vectors $\{ \bg_t\}_{t=0}^{T-1}$ in Algorithm \ref{alg:DPHuberlow}. In particular, inequality \eqref{weighted.chi-square.concentration} is a slightly improved version of the tail bound in Lemma~11 of \citeS{cai2020costS}.

\begin{lemma} \label{lem:max.normal.l2}
Let $\bg_0, \bg_1,\ldots, \bg_{T-1} \in \RR^p$ ($T \geq 1$) be independent standard multivariate normal random vectors. Then, for any $z\geq 0$,
\begin{align}
	 \PP \bigg\{  \max_{t\in\{0\}\cup [T-1]} \| \bg_t \|_2 \geq   p^{1/2} + \sqrt{2(\log T + z ) }   \bigg\} \leq e^{-z} \,.  \label{max.chi-square.concentration}
\end{align}
Moreover, for any $\rho\in (0,1)$, with probability at least $1-e^{-z}$, it holds that
\begin{align}
	\sum_{t=0}^{T-1} \rho^t \| \bg_t \|_2^2 \leq \frac{p}{1-\rho} + 2 \sqrt{\frac{p z}{1-\rho^2}} + 2 z   \leq \frac{2p}{1-\rho} +\bigg(\frac{1}{1+\rho} + 2 \bigg)z \,. \label{weighted.chi-square.concentration}
\end{align}
\end{lemma}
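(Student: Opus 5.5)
The first inequality \eqref{max.chi-square.concentration} follows from Gaussian concentration plus a union bound. The map $\bg\mapsto\|\bg\|_2$ is $1$-Lipschitz, and $\EE\|\bg_t\|_2\le(\EE\|\bg_t\|_2^2)^{1/2}=\sqrt p$. The Gaussian concentration inequality for Lipschitz functions therefore gives, for each fixed $t$ and $u\ge0$,
\[
\PP\big(\|\bg_t\|_2\ge \sqrt p+u\big)\le e^{-u^2/2}.
\]
Choose $u=\sqrt{2(\log T+z)}$, so that the right-hand side equals $e^{-z}/T$. A union bound over the $T$ indices $t\in\{0\}\cup[T-1]$ then yields \eqref{max.chi-square.concentration}. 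This step is routine.

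For \eqref{weighted.chi-square.concentration}, observe that $\sum_{t=0}^{T-1}\rho^t\|\bg_t\|_2^2=\sum_{t}\sum_{j=1}^p a_{t}\,g_{t,j}^2$ with weights $a_t=\rho^t$. This is a weighted sum of independent $\chi^2_1$ variables, so I would apply the Laurent--Massart inequality: if $Z_k$ are i.i.d.\ $\cN(0,1)$ and $a_k\ge0$, then
\[
\PP\Big(\sum_k a_k(Z_k^2-1)\ge 2\|a\|_2\sqrt z+2\|a\|_\infty z\Big)\le e^{-z}.
\]
The three weight quantities are computed as follows:
\begin{itemize}
\item the sum of the weights is $p\sum_{t<T}\rho^t\le p/(1-\rho)$;
\item the squared $\ell_2$-norm is $\|a\|_2^2=p\sum_t\rho^{2t}\le p/(1-\rho^2)$;
\item the largest weight is $\|a\|_\infty=\rho^0=1$.
\end{itemize}
Substituting these gives, with probability at least $1-e^{-z}$, the first bound $p/(1-\rho)+2\sqrt{pz/(1-\rho^2)}+2z$.

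For the second, simplified bound I would use AM--GM on the cross term:
\[
2\sqrt{\frac{pz}{1-\rho^2}}=2\sqrt{\frac{p}{1-\rho}\cdot\frac{z}{1+\rho}}\le \frac{p}{1-\rho}+\frac{z}{1+\rho}.
\]
Adding the remaining terms $p/(1-\rho)+2z$ gives exactly $2p/(1-\rho)+\{1/(1+\rho)+2\}z$.

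The only mildly delicate point is invoking Laurent--Massart with heterogeneous weights, whose constants are stated precisely for nonnegative weights. If needed, it can be reproved by the Chernoff bound on $\log\EE e^{\lambda a_k(Z_k^2-1)}\le \lambda^2a_k^2/(1-2\lambda a_k)$ for $0<\lambda<1/(2\|a\|_\infty)$. I do not expect a genuine obstacle.
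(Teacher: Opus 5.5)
Your proposal is correct and follows the paper's proof: Lipschitz Gaussian concentration plus a union bound over $t$ gives the maximum bound. For the weighted sum, the paper treats each $\|\bg_t\|_2^2-p$ as sub-gamma with parameters $(2p,2)$, aggregates these into a sub-gamma variable with parameters $(2p/(1-\rho^2),2)$, and applies Chernoff. That is the same log-MGF computation underlying the Laurent--Massart bound you invoke, and your AM--GM step correctly supplies the final simplification the paper leaves implicit.
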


The proof of Lemma \ref{lem:max.normal.l2} is given in Section \ref{proof.lem:max.normal.l2}.
The following lemma restates Theorem~7.3 in \citeS{bousquet2003concentrationS} with slight modification, which is a refined version of Talagrand’s inequality.
\begin{lemma} \label{lem:bousquet}
    Let $\mathcal{F}$ be a countable set of functions from some Polish space $\mathcal{X}$ to $\mathbb{R}$, and $\xi_1,\ldots,\xi_n$ be independent random variables taking their values in $\mathcal{X}$. Assume that all functions $f$ in $\mathcal{F}$ are measurable, square-integrable, and satisfy $\sup_{x\in \cX} |f(x)| \leq B$, $\mathbb{E}  \{f (\xi_i ) \}=0$ for all $i\in[n]$, and $\sn \sup _{f \in \mathcal{F}} \EE \{f^2 (\xi_i )\} \leq K $. Define
    $$
    Q:=\sup _{f \in \mathcal{F}} \sn f (\xi_i ) \,. 
    $$
   For any $z\geq 0$, it holds that, with probability at least $1-e^{-z}$, 
    $$ 
    Q \leq \EE(Q)+\sqrt{ 2 K z + 4 B\EE(Q)z}+\frac{B z}{3} \leq \frac{5}{4} \EE(Q) + \sqrt{2 K z} +  \frac{13 }{3}Bz\, .
    $$
\end{lemma}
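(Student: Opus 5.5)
The first inequality is Bousquet's version of Talagrand's concentration inequality for suprema of empirical processes, rescaled and with the variance proxy written as $K$. The second inequality then follows from two elementary bounds.

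\textbf{Step 1 (normalize).} Theorem~7.3 in \citeS{bousquet2003concentrationS} is stated for functions with $\sup_x f(x)\le 1$. I would apply it to the class $\{f/B: f\in\cF\}$ and then multiply back by $B$. In its standard form, with $v := \sum_{i=1}^n\sup_{f}\EE\{f^2(\xi_i)\} + 2B\,\EE(Q)$, the theorem gives
\begin{align*}
\PP\big\{Q\ge \EE(Q)+\sqrt{2vz}+Bz/3\big\}\le e^{-z}.
\end{align*}
Using $\sum_{i=1}^n\sup_f\EE\{f^2(\xi_i)\}\le K$, we get $2vz\le 2Kz+4B\,\EE(Q)z$, which is exactly the first inequality.

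\textbf{Step 2 (non-identical distributions).} This is the step I expect to be the main obstacle. Bousquet states the result for i.i.d.\ $\xi_i$ with variance term $n\sigma^2$, where $\sigma^2\ge\sup_f \mathrm{Var}\{f(\xi_1)\}$. For independent but non-identically distributed $\xi_i$, I would re-run his entropy-method proof. That proof uses the modified log-Sobolev (tensorization) inequality together with the pointwise bounds $f\le B$ and $\EE\{f(\xi_i)\}=0$, applied one coordinate at a time. Each coordinate then contributes $\sup_f\EE\{f^2(\xi_i)\}$ to the variance proxy instead of $\sigma^2$. Summing these contributions replaces $n\sigma^2$ by $K$ and changes nothing else. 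Countability of $\cF$ ensures that $Q$ is measurable.

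\textbf{Step 3 (simplified bound).} First, $\EE(Q)\ge 0$: for any fixed $f\in\cF$ we have $Q\ge\sum_i f(\xi_i)$, and the right-hand side has mean zero. So all square roots are well defined. By subadditivity of the square root,
\begin{align*}
\sqrt{2Kz+4B\,\EE(Q)z}\le\sqrt{2Kz}+\sqrt{4B\,\EE(Q)z}.
\end{align*}
By AM--GM, $2\sqrt{ab}\le a+b$ with $a=\EE(Q)/4$ and $b=4Bz$, so
\begin{align*}
\sqrt{4B\,\EE(Q)z}\le \EE(Q)/4+4Bz.
\end{align*}
Collecting terms gives the coefficient $1+1/4=5/4$ on $\EE(Q)$ and $1/3+4=13/3$ on $Bz$, which is the second inequality.
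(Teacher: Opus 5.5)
Your proposal is correct and follows the paper's route: the paper gives no separate proof, treating the lemma as Theorem~7.3 of Bousquet (2003) applied to $\{f/B: f\in\cF\}$ and rescaled, and the second inequality follows exactly from your subadditivity and AM--GM step (using $\EE(Q)\ge 0$). Your Step~2 should not be needed, since as far as I recall that theorem is already stated for independent, not necessarily identically distributed, $\xi_i$ with variance proxy $\sn\sup_{f\in\cF}\EE\{f^2(\xi_i)\}$, which is why the lemma is phrased in terms of $K$.
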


\subsubsection{Proof of Theorem~\ref{thm:dp.convergence}}\label{proof.thm:dp.convergence}

Define $\cE= \cE_0 \cap \cE_1 \cap \cE_2$, where $\cE_0$, $\cE_1$ and $\cE_2$ are the events given in \eqref{def:event.E0-2}. To control the random perturbations in noisy gradient descent, for some $\rho\in (0,1)$ to be determined, define 
\begin{align}
\cF =   \bigg\{   \max_{ t\in \{0\}\cup [T-1]} \| \bg_t \|_2 \leq  \zeta_T \bigg\} \cap  \bigg\{ \sum_{t=0}^{T-1} (1-\rho)^t  \| \bg_{T-1-t} \|_2^2 \leq 2\rho^{-1} p + 3z  \bigg\}  \,,  \label{def:G}
\end{align}
where $ \zeta_T  := p^{1/2} + \sqrt{2(\log T +z ) }$. It follows from Lemma~\ref{lem:max.normal.l2} that $\PP( \cF)  \geq 1-2 e^{-z}$.
In the following, we prove the result by conditioning on the event $\cE  \cap \cF$.  Starting with an initial value $\bbeta^{(0)} \in  \hat \bbeta + \BB^p(r_0)$, Proposition \ref{prop:crude.bound} shows that all successive iterates will remain within the local neighborhood $ \hat \bbeta + \BB^p(r_0)$. The proof of Proposition \ref{prop:crude.bound} is given in Section \ref{proof.prop:crude.bound}.

\begin{proposition}  \label{prop:crude.bound}
Under the conditions of Theorem~\ref{thm:dp.convergence}, and conditioning on $\cE \cap \cF$, all iterates $\bbeta^{(t)}$, $t\in[T]$, remain within the local region $\hat \bbeta  + \BB^p(r_0)$.
\end{proposition}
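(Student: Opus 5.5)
We argue by induction on $t$, conditioning throughout on $\cE\cap\cF$. The claim is that if $\bbeta^{(t)}\in\hat\bbeta+\BB^p(r_0)$, then $\bbeta^{(t+1)}\in\hat\bbeta+\BB^p(r_0)$; the base case is the assumption on $\bbeta^{(0)}$.

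First we remove the clipping. On $\cE_0$ we have $\max_i\|\bx_i\|_2\le\gamma$, so $w_\gamma(\|\bx_i\|_2)=1$ for every $i$. The update \eqref{noisy.gd} is therefore exactly
\[
\bbeta^{(t+1)}=\bbeta^{(t)}-\eta_0\nabla\hat\cL_\tau(\bbeta^{(t)})+\eta_0\sigma\bg_t ,
\]
and it suffices to show $\|\bbeta^{(t)}-\eta_0\nabla\hat\cL_\tau(\bbeta^{(t)})-\hat\bbeta\|_2\le r_0-\eta_0\sigma\zeta_T$. On $\cF$ we have $\|\bg_t\|_2\le\zeta_T$, so the triangle inequality then gives the claim.

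Next we contract the noiseless step. Write $\bdelta=\bbeta^{(t)}-\hat\bbeta$ and use $\nabla\hat\cL_\tau(\hat\bbeta)=\bzero$. Since $\|\bdelta\|_2\le r_0$ and $\hat\bbeta\in\Theta(r_0/2)$ on $\cE_0$, both pairs $(\bbeta^{(t)},\hat\bbeta)$ and $(\hat\bbeta,\bbeta^{(t)})$ lie in $\mathbb N(r_0)$. Applying $\cE_1$ in both orders and adding gives $\langle\nabla\hat\cL_\tau(\bbeta^{(t)}),\bdelta\rangle\ge 2\phi_l\|\bdelta\|_2^2$. Global smoothness $\cE_2$, through the usual co-coercivity argument for convex $\phi_u$-smooth functions (valid because $\hat\cL_\tau$ is convex), gives $\|\nabla\hat\cL_\tau(\bbeta^{(t)})\|_2^2\le 2\phi_u\langle\nabla\hat\cL_\tau(\bbeta^{(t)}),\bdelta\rangle$. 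Expanding the square,
\[
\|\bdelta-\eta_0\nabla\hat\cL_\tau\|_2^2\le\|\bdelta\|_2^2-\eta_0(2-2\eta_0\phi_u)\langle\nabla\hat\cL_\tau,\bdelta\rangle\le(1-2\phi_l\eta_0)\|\bdelta\|_2^2 ,
\]
using $\eta_0\le 1/(2\phi_u)$. Since $\rho=(2\phi_l\eta_0)^2$, this yields the contraction factor $\sqrt{1-\sqrt\rho}\le 1-\sqrt\rho/2=1-\phi_l\eta_0$.

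Finally we check the noise budget: we need $(1-\phi_l\eta_0)r_0+\eta_0\sigma\zeta_T\le r_0$, that is, $\sigma\zeta_T\le\phi_l r_0$. This is precisely condition \eqref{private.sample.size.requirement}, because $\zeta_T=\sqrt p+\sqrt{2(\log T+z)}$. The induction therefore closes for all $t\in[T]$.

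The main obstacle is obtaining a contraction factor linear in $\phi_l\eta_0$, so that the additive noise term $\eta_0\sigma\zeta_T$ is absorbed without slack. The gradient lower bound must come from combining $\cE_1$ in both orders, which is legitimate only because $\hat\bbeta\in\Theta(r_0/2)$. The gradient upper bound must come from co-coercivity rather than a cruder Lipschitz bound; otherwise the constant would not match \eqref{private.sample.size.requirement}.
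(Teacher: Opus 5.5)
Your proof is correct and closes under exactly the paper's noise condition $\sigma\zeta_T\le\phi_l r_0$, but it takes a different route from the paper.

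\textbf{The paper's route.} The paper argues by contradiction at the level of function values. Suppose $\|\bbeta^{(t+1)}-\hat\bbeta\|_2>r_0$. Then the growth Lemma~\ref{lem:rsc.smootheness} (restricted strong convexity on the sphere of radius $r_0$, pushed outward by convexity) gives
$\hat\cL_\tau(\bbeta^{(t+1)})-\hat\cL_\tau(\hat\bbeta)\ge\phi_l r_0\|\bbeta^{(t+1)}-\hat\bbeta\|_2$.
The three-point inequality \eqref{pf:prop:main_derivation}, built only from $\cE_2$, convexity and $\eta_0\le 1/(2\phi_u)$, bounds the same gap from above by
$(2\eta_0)^{-1}\bigl(\|\bbeta^{(t)}-\hat\bbeta\|_2^2-\|\bbeta^{(t+1)}-\hat\bbeta\|_2^2\bigr)+\zeta_T\sigma\|\bbeta^{(t+1)}-\hat\bbeta\|_2$.
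Together these force $\|\bbeta^{(t+1)}-\hat\bbeta\|_2\le\|\bbeta^{(t)}-\hat\bbeta\|_2\le r_0$, a contradiction.

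\textbf{Your route.} You instead apply $\cE_1$ in both orders at the pair $(\bbeta^{(t)},\hat\bbeta)$ and combine it with co-coercivity. This gives a direct distance contraction. It avoids the growth lemma and the contradiction, and it yields a quantitative recursion
$\|\bbeta^{(t+1)}-\hat\bbeta\|_2\le(1-\phi_l\eta_0)\|\bbeta^{(t)}-\hat\bbeta\|_2+\eta_0\sigma\zeta_T$.
Unrolled, this already gives $\|\bbeta^{(T)}-\hat\bbeta\|_2\le(1-\phi_l\eta_0)^T r_0+\sigma\zeta_T/\phi_l$, which is more than mere confinement.

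\textbf{What the paper's route buys.} It never invokes strong convexity at $\bbeta^{(t)}$. So \eqref{pf:prop:main_derivation} holds whether or not $\bbeta^{(t)}$ lies in $\hat\bbeta+\BB^p(r_0)$. This is exactly what lets the paper reuse it in Theorem~\ref{prop:initial_value} for arbitrary initializations. Your contraction is unavailable there, because it needs the induction hypothesis in order to apply $\cE_1$.
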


Next, we establish a contraction property for the noisy gradient descent iterates. Define
\begin{align*}
     \wt \bbeta^{(t+1)}  = \bbeta^{(t)} -  \eta_0  \nabla \hat \cL_\tau  (\bbeta^{(t)}) ~~\mbox{and}~~ \bh_t = \sigma \bg_t\, , \ \ t \in \{0\}\cup [T-1]\,,
\end{align*}
and note that $\bbeta^{(t+1)} = \wt \bbeta^{(t+1)} + \eta_0 \bh_t$ conditioned on the event $\cE_0$. Here, $\sigma>0$ is either $\sigma_{{\rm dp}}$  or $\sigma_{{\rm gdp}}$ from \eqref{dp.noise.scale}.
Under condition \eqref{private.sample.size.requirement}, Proposition~\ref{prop:crude.bound} ensures that $\bbeta^{(t)} \in \hat \bbeta  + \BB^p(r_0)$ for all $t\in \{0\}\cup[T]$. Similarly, it can be shown that the non-private gradient descent iterates $\wt \bbeta^{(t)}$ also remain within the ball $\hat \bbeta  + \BB^p(r_0)$ for $t\in[T]$.
For simplicity, set
\begin{align*}
	\bdelta^{(t)} = \bbeta^{(t )} - \hat  \bbeta ,  \quad \wt \bdelta^{(t )} = \wt \bbeta^{(t)} - \hat \bbeta ~~\mbox{ for }~ t \in \{0\}\cup [T]\,.
\end{align*}
For any $\eta \in (0, 1]$, at each iteration, we bound  $\| \bdelta^{(t+1)} \|_2 = \| \wt  \bdelta^{(t+1)} + \eta_0 \bh_t \|_2$  as follows:  
\begin{align}
  \| \bdelta^{(t+1)}   \|_2^2  
	& =  \| \wt \bdelta^{(t+1)}  \|_2^2 + \eta_0^2 \| \bh_t \|_2^2 + 2\eta_0 \langle \wt  \bdelta^{(t+1)}  ,  \bh_t \rangle \nn \\
	& \leq  ( 1 + \eta ) \| \wt \bdelta^{(t+1)}   \|_2^2 + (1+ \eta^{-1} ) \eta_0^2 \| \bh_t \|_2^2 \nn \\
	& = (1+  \eta  )  \| \bdelta^{(t)}   \|_2^2   + (1+ \eta^{-1}  ) \eta_0^2 \| \bh_t \|_2^2   \label{case2.1} \\
	&~~~~~+ 2\eta_0 (1+\eta)      \underbrace{ \bigg\{ \frac{\eta_0}{2}   \| \nabla \hat \cL_\tau(\bbeta^{(t)}  )\|_2^2 -  \langle \bbeta^{(t)} -  \hat \bbeta , \nabla \hat \cL_\tau (\bbeta^{(t)} ) \rangle\bigg\} }_{  \Pi_1  }  \, .  \nn
\end{align}
To bound the term $\Pi_1$, we use the local strong convexity and smoothness properties of $\hat \cL_\tau(\cdot)$. From \eqref{def:event.E0-2}, we observe that
\begin{align*}
 \hat \cL_\tau( \hat \bbeta ) - \hat \cL_\tau(\bbeta^{(t)})  -  \langle \nabla \hat \cL_\tau(\bbeta^{(t)}) , \hat \bbeta - \bbeta^{(t)} \rangle  & \geq \phi_l  \| \bbeta^{(t)} - \hat \bbeta \|_2^2\,,  \\
 \hat \cL_\tau(\wt \bbeta^{(t+1 )}) - \hat \cL_\tau(  \bbeta^{(t)} ) - \langle \nabla \hat \cL_\tau( \bbeta^{(t)} ) , \wt \bbeta^{(t+1)} - \bbeta^{(t)} \rangle & \leq    \phi_u  \| \wt \bbeta^{(t+1)} - \bbeta^{(t)} \|_2^2 \, .
\end{align*} 
Together, these upper and lower bounds imply
\begin{align*}
  0 & \leq   \hat \cL_\tau(\wt \bbeta^{(t+1 )})  -  \hat \cL_\tau( \hat \bbeta )  \\  
 & \leq \langle \nabla \hat \cL_\tau(\bbeta^{(t)})  , \wt \bbeta^{(t+1)} - \hat\bbeta   \rangle  +\phi_u  \| \wt \bbeta^{(t+1)} - \bbeta^{(t)} \|_2^2 -  \phi_l \| \bbeta^{(t)} -  \hat\bbeta \|_2^2    \\
 & = \langle  \nabla \hat \cL_\tau(\bbeta^{(t)})  , \wt \bbeta^{(t+1)} -  \hat\bbeta   \rangle  +     \eta_0^2 \phi_u  \|  \nabla \hat \cL_\tau(\bbeta^{(t)} ) \|_2^2 - \phi_l \| \bbeta^{(t)} - \hat \bbeta \|_2^2  \\
  & = \langle  \nabla  \hat \cL_\tau(\bbeta^{(t)})  ,  \bbeta^{(t)} -  \hat\bbeta   \rangle   - \eta_0 ( 1 -  \phi_u  \eta_0  )    \| \nabla \hat \cL_\tau(\bbeta^{(t)} ) \|_2^2    -\phi_l  \| \bbeta^{(t)} - \hat \bbeta \|_2^2 \\
  & \leq \langle  \nabla  \hat \cL_\tau(\bbeta^{(t)})  ,  \bbeta^{(t)} -  \hat\bbeta   \rangle   -  \frac{\eta_0}{2}  \| \nabla \hat \cL_\tau(\bbeta^{(t)} ) \|_2^2    -\phi_l  \| \bbeta^{(t)} - \hat \bbeta \|_2^2 \,,
\end{align*}
where we used the learning rate condition $\eta_0 \leq 1/(2\phi_u)$ in the final step.
Substituting this into \eqref{case2.1} gives
\begin{align*}
   \| \bdelta^{(t+1)}   \|_2^2 
& \leq   \big\{  1 + \eta -  2\phi_l  \eta_0    (1+\eta )  \big\} \| \bdelta^{(t)}   \|_2^2 +  (  1 +  \eta^{-1}  )\eta_0^2 \| \bh_t \|_2^2   \nn \\
& =   (1+\eta)   (  1 - 2   \phi_l   \eta_0   )  \| \bdelta^{(t)}  \|_2^2 +  (  1 +  \eta^{-1}  ) \eta_0^2\| \bh_t \|_2^2  \, .  \nn
\end{align*}
Taking $\rho = \eta^2$ and $\eta = 2   \phi_l \eta_0$, we conclude that
\begin{align*}
  \| \bdelta^{(t+1)}   \|_2^2 \leq  (1-\rho )   \| \bdelta^{(t)}  \|_2^2 +   (  1 + \eta^{-1} ) \eta_0^2\| \bh_t \|_2^2\, , \ \  t\in \{0\}\cup [T-1] \, . \nn
\end{align*}
This recursive bound further implies
\begin{align}
\| \bbeta^{(T)} - \hat  \bbeta \|_2^2  & \leq  (1-\rho )^Tr_0^2  +   (1+ \eta^{-1} )  \eta_0^2 \sigma^2  \sum_{t=0}^{T-1} (1-\rho )^t \| \bg_{T-1-t} \|_2^2  \, . \label{convergence.rate1}
\end{align}
Due to $ \sum_{t=0}^{T-1} (1-\rho )^t \| \bg_{T-1-t} \|_2^2 \leq 2 \rho^{-1} p + 3z$ on the event $\cF$,
as long as 
$$
T  \geq  \frac{2 \log \{ { r_0 n \epsilon }( \sigma_0 p)^{-1}\}}{\log \{(1-\rho)^{-1}\}}   \,,
$$
  the final  iterate $\bbeta^{(T)}$, conditioned on $\cE$, satisfies the bound
\begin{align*}
\| \bbeta^{(T)} -\hat \bbeta \|_2    \leq   \sqrt{ \bigg(\frac{\sigma_0 p}{n \epsilon }\bigg)^2 +   \eta_0^2 (1+ \eta^{-1} )  (2 \rho^{-1} p + 3z) \sigma^2 }    \nn\,, 
\end{align*} 
with probability (over $\{ \bg_t\}_{t=0}^{T-1}$) at least $1-2e^{-z}$. This concludes the proof.  \qed

\subsubsection{Proof of Theorem~\ref{prop:initial_value}}\label{proof.prop:initial_value}

Note that, conditioned on $\cE_0$, $\bbeta^{(t+1)} =\bbeta^{(t)} - \eta_0 \nabla \hat \cL_\tau (\bbeta^{(t)})  + \eta_0  \bh_t$ with $\bh_t =  \sigma \bg_t$,  for $t\in\{0\}\cup [T-1]$. In addition to the event $\cF$ given in \eqref{def:G}, we define 
\begin{align}\label{def:G_0}
\cF_0 =   \bigg\{   \max_{ t\in\{0\}\cup [T_0-1]} \| \bg_t \|_2 \leq  \zeta_{T_0} = p^{1/2} + \sqrt{2(\log T_0 +z ) } \bigg\}\, .
\end{align}
Note that $\PP( \cF_0)  \geq 1- e^{-z}$ by Lemma~\ref{lem:max.normal.l2}. When conditioning on $\cF_0$, we have
$$
\max_{ t\in\{0\}\cup [T_0-1]}\|\bh_t\|_2\leq  \zeta_{T_0} \sigma =:r_{{\rm priv}}\,.
$$ 
Following  \eqref{pf:prop:main_derivation} in the proof of Proposition~\ref{prop:crude.bound}, we know that, conditioning on $\cE_0\cap\cE_2\cap\cF_0$, for any $t\in\{0\}\cup [T_0-1]$, it holds that
\begin{align}
\label{pf:prop:main_derivation_copy}
\hat \cL_\tau(  \bbeta^{(t+1)} ) -  \hat \cL_\tau( \hat \bbeta  ) \leq \frac{1}{2 \eta_0} \|\bbeta^{(t)}  - \hat \bbeta  \|_2^2 - \frac{1}{2\eta_0 } \| \bbeta^{(t+1)} - \hat \bbeta  \|_2^2 +  r_{{\rm priv}}     \|  \bbeta^{(t+1)} -\hat  \bbeta  \|_2\,,
\end{align}
provided $\eta_0 \leq 1/(2 \phi_u)$, regardless of whether $\bbeta^{(t)}  \in \hat \bbeta + \BB^p(r_0)$. Summing \eqref{pf:prop:main_derivation_copy} for $t\in\{0\}\cup [T_0-1]$, we obtain
\begin{align}\label{pf:prop:initial_value:eq0}
& \sum_{t=1}^{T_0}\{\hat \cL_\tau(  \bbeta^{(t)} ) -  \hat \cL_\tau( \hat \bbeta  )\} \nn \\
&~~~~ \leq 
\frac{1}{2 \eta_0}\bigg( \|\bbeta^{(0)}  - \hat \bbeta  \|_2^2-  \| \bbeta^{(T_0)} - \hat \bbeta  \|_2^2 + 2T_0 \eta_0  r_{{\rm priv}}  \max_{t\in[T_0]}\|  \bbeta^{(t)} -\hat  \bbeta  \|_2\bigg)\,.
\end{align}
Next, we aim to bound $ \max_{ t\in [T_0]}\|  \bbeta^{(t)} -\hat  \bbeta  \|_2$. Noting that the left-side of \eqref{pf:prop:main_derivation_copy}, $\hat \cL_\tau(  \bbeta^{(t+1)} ) -  \hat \cL_\tau( \hat \bbeta  )$, is non-negative since $ \hat \bbeta $ is a global minimizer of $ \hat \cL_\tau(\cdot)$, we have
\begin{align*}
\|\bbeta^{(t+1)}  - \hat \bbeta  \|_2\leq \|\bbeta^{(t)}  - \hat \bbeta  \|_2+ 2 \eta_0  r_{{\rm priv}} \,,
\end{align*}
for any $t\in \{0\}\cup [T_0-1]$, which further implies
\begin{align*}
\max_{t\in [T_0]}\|  \bbeta^{(t)} -\hat  \bbeta  \|_2\leq \|\bbeta^{(0)}  - \hat \bbeta  \|_2+ 2T_0 \eta_0 r_{{\rm priv}}\,.
\end{align*}
Substituting this bound on $ \max_{  t\in [T_0]}\|  \bbeta^{(t)} -\hat  \bbeta  \|_2$ into the right-side of \eqref{pf:prop:initial_value:eq0}, it follows that
\begin{align}\label{pf:prop:initial_value:eq1}
\frac{1}{T_0}\sum_{t=1}^{T_0}\{\hat \cL_\tau(  \bbeta^{(t)} ) -  \hat \cL_\tau( \hat \bbeta  )\}\leq 
\frac{1}{2 \eta_0T_0} \|\bbeta^{(0)}  - \hat \bbeta  \|_2^2+  r_{{\rm priv}}  \big(\|\bbeta^{(0)}  - \hat \bbeta  \|_2+2 T_0 \eta_0 r_{{\rm priv}}\big) \,.
\end{align}
To proceed, we claim that for any $  t\in \{0\}\cup [T_0-1]$,
\begin{align}\label{pf:prop:initial_value:claim}
\hat \cL_\tau(  \bbeta^{(t+1)} ) -  \hat \cL_\tau( \bbeta^{(t)}   )\leq \eta_0 r_{{\rm priv}}^2\,.
\end{align}
Indeed, under $\cE_2$, we have
\begin{align*}
\hat \cL_\tau(  \bbeta^{(t+1)} ) -  \hat \cL_\tau( \bbeta^{(t)}   )
&\leq \langle  \nabla \hat \cL_\tau (\bbeta^{(t)} ), \bbeta^{(t+1)} -\bbeta^{(t)}  \rangle +  \phi_u  \| \bbeta^{(t+1)} -\bbeta^{(t)}  \|_2^2\\
&=-\eta_0\big\|\nabla \hat \cL_\tau (\bbeta^{(t)})  \big\|_2^2+\langle  \nabla \hat \cL_\tau (\bbeta^{(t)} ), \eta_0 \bh_t \rangle +\phi_u  \big\|- \eta_0 \nabla \hat \cL_\tau (\bbeta^{(t)})  + \eta_0 \bh_t \big\|_2^2\\
&\overset{({\rm i})}{\leq}-\frac{\eta_0}{2}\big\|\nabla \hat \cL_\tau (\bbeta^{(t)})  \big\|_2^2+(1-2\phi_u\eta_0)\langle  \nabla \hat \cL_\tau (\bbeta^{(t)} ), \eta_0 \bh_t \rangle +\phi_u \eta_0^2 \|\bh_t\|_2^2\\
&\overset{({\rm ii})}{\leq}\bigg\{\phi_u+\frac{(1-2\phi_u\eta_0)^2}{2\eta_0}\bigg\}\eta_0^2 \|\bh_t\|_2^2\overset{({\rm iii})}{\leq} \eta_0 r_{{\rm priv}}^2\,,
\end{align*}
where inequality (i) follows from the assumption $\eta_0 \leq 1/(2 \phi_u)$; in inequality (ii), we use the inequality $\langle  \nabla \hat \cL_\tau (\bbeta^{(t)} ),\eta_0 \bh_t \rangle\leq c\|\nabla \hat \cL_\tau (\bbeta^{(t)} )\|_2^2+ \eta_0^2\|\bh_t\|_2^2/(4c)$, with $c={\eta_0}\{2(1-2\phi_u\eta_0)\}^{-1}$ when $1-2\phi_u\eta_0>0$ (noting that  inequality (ii) still holds when $1-2\phi_u\eta_0=0$); inequality (iii) follows from $\phi_u\leq 1/(2\eta_0)$ and $\max_{ t\in \{0\}\cup [T_0-1]}\|\bh_t\|_2\leq r_{{\rm priv}}$, conditioning on $\cF_0$.

Therefore, the claim \eqref{pf:prop:initial_value:claim} is valid, and it further implies
\begin{align*}
\hat \cL_\tau(  \bbeta^{(T_0)} ) -  \hat \cL_\tau&(\hat \bbeta   )\leq \frac{1}{T_0}\sum_{t=1}^{T_0}\big\{\hat \cL_\tau(  \bbeta^{(t)} ) -  \hat \cL_\tau( \hat \bbeta  )\big\}+ T_0 \eta_0 r_{{\rm priv}}^2\\
&~\overset{\text{by \eqref{pf:prop:initial_value:eq1}}}{\leq }\frac{1}{2 \eta_0T_0} \|\bbeta^{(0)}  - \hat \bbeta  \|_2^2+  r_{{\rm priv}}  \big(\|\bbeta^{(0)}  - \hat \bbeta  \|_2+3 T_0\eta_0 r_{{\rm priv}}\big)  \\
&~~~~ = \frac{R^2 }{2 \eta_0T_0} +  R  r_{{\rm priv}} + 3 T_0  \eta_0 r_{{\rm priv}}^2  \, .
\end{align*}
In what follows, set $\Delta=\phi_l r_0^2$ and $T_0 = {R^2}({\eta_0\Delta})^{-1}$. The right-hand side then becomes
$$
 \frac{\Delta}{2} +   R \zeta_{T_0} \sigma   + \frac{3}{\Delta}  (   R \zeta_{T_0} \sigma  )^2\,. 
$$
Provided that 
$$
 R\zeta_{T_0}\sigma  \leq \frac{\sqrt{7}-1}{6} \Delta \,,
$$
it holds that
\begin{align} \label{pf:prop:initial_value:last_eq}
\hat \cL_\tau(  \bbeta^{(T_0)} ) -  \hat \cL_\tau(\hat \bbeta   )\leq \Delta=\phi_lr_0^2\,.
\end{align}
Now, we claim that if \eqref{pf:prop:initial_value:last_eq} holds,  then $\|\bbeta^{(T_0)} -\hat \bbeta  \|_2\leq r_0$ conditioned on $\cE_0\cap\cE_1$. To see this, suppose instead that $\|\bbeta^{(T_0)} -\hat \bbeta  \|_2>r_0$. By Lemma \ref{lem:rsc.smootheness}, it follows that $\hat \cL_\tau(\bbeta^{(T_0)} ) - \hat \cL_\tau(\hat \bbeta )    \geq   \phi_l r_0      \| \bbeta^{(T_0)} - \hat \bbeta \|_2 > \phi_l r_0^2$,
which contradicts \eqref{pf:prop:initial_value:last_eq}. This implies $\bbeta^{(T_0)} \in \hat \bbeta+\BB^p(r_0)$, thereby completing the proof. \qed

\subsubsection{Proof of Proposition~\ref{prop:events1}}\label{proof.e1}

To derive the convergence of the non-private estimator $\hat \bbeta$, we follow the proof of Theorem~2.1 in \citeS{chen2020robustS} with slight modifications.  For any $z\geq 0$, let $\tau = \tau_0 \{{n\epsilon }/({p+z} )\}^{1/(2+\iota)}$, where $\tau_0 \geq \sigma_0$. Then, it follows that with probability at least $1-2 e^{-z}$,
\begin{align*}
 \|\hat \bbeta-\bbeta^*\|_{\bSigma}  & \lesssim \frac{\sigma_{\iota}^{2+\iota}}{\tau^{1+\iota}} +  \upsilon_1 \sigma_0 \sqrt{\frac{p+z}{n }} + \upsilon_1  \tau\frac{p+z}{n} \\
 & \lesssim   \max\{ \sigma_\iota^{2+\iota} \tau_0^{-1-\iota}, \tau_0 \}
 \bigg(\frac{p+z}{n \epsilon} \bigg)^{(1+\iota)/(2+\iota)} +  \sigma_0 \sqrt{\frac{p+z}{n}}\,, 
\end{align*}
as long as $n\gtrsim (p+z)/\epsilon$. This, in turn, implies
$$
\|\hat \bbeta -\bbeta^*\|_2 \leq C_0  \bigg\{ \max\{ \sigma_\iota^{2+\iota}\tau_0^{-1-\iota}, \tau_0 \}  \bigg(\frac{p+z}{n \epsilon} \bigg)^{(1+\iota)/(2+\iota)} +  \sigma_0  \sqrt{\frac{p+z}{n}} \bigg\}\,,
$$
where $C_0>0$ is a constant depending only on $(\lambda_p, \upsilon_1)$.

 By Lemmas 5.2 and 5.3 of \citeS{Vershynin2012S},   $\|\bSigma^{-1/2}  \bx_i\|_2\leq 2\max_{\bu\in\cN_{1/2}}|\langle \bu, \bSigma^{-1/2}  \bx_i \rangle|$, where  $\cN_{1/2}$ is the $1/2$-net of $\mathbb{S}^{p-1}$  satisfying  $|\cN_{1/2}| \leq 5^p$. Then, by Assumption \ref{assump:design}, we have
\begin{align*}
    \PP(\|\bSigma^{-1/2}  \bx_i\|_2 >z) \leq \PP\bigg(2\max_{\bu\in\cN_{1/2}}|\langle \bu, \bSigma^{-1/2}  \bx_i \rangle| >z\bigg)\leq 2\cdot5^p e^{-z^2/(8\upsilon_1^2)}\,, 
\end{align*}
for any $z\geq 0$. This, in turn, implies that with probability at least $1-e^{-z}$,
$$
\| \bSigma^{-1/2}  \bx_i\|_2    \leq   2\sqrt{2} \upsilon_1 \sqrt{p\log 5+z+ \log 2}\, .
$$
As a result, it follows that $\max_{  i\in [n]}\|   \bx_i\|_2 \lesssim   \upsilon_1 \lambda_1^{1/2} (\sqrt{p+z + \log n} )$ with probability at least $1-e^{-z}$, as desired.

The last two bounds follow immediately from Lemmas \ref{lem:rsc} and   \ref{lem:gs}. This concludes the proof of Proposition~\ref{prop:events1}. \qed

\subsubsection{Proof of Proposition~\ref{prop:crude.bound}}\label{proof.prop:crude.bound}
 
Conditioned on $\cE_0$, note that $\bbeta^{(0)} \in \hat \bbeta + \BB^p(r_0)$ and $\bbeta^{(t+1)} =\bbeta^{(t)} - \eta_0 \nabla \hat \cL_\tau (\bbeta^{(t)})  + \eta_0\sigma  \bg_t$ for $t\in \{0\}\cup [T-1]$. Assume $ \|\bbeta^{(t)}  -\hat \bbeta  \|_2  \leq r_0$ for some $t\geq 0$.  Proceeding via proof by contradiction,   suppose $\| \bbeta^{(t+1)} -  \hat \bbeta  \|_2 > r_0$. From Lemma~\ref{lem:rsc.smootheness}, conditioning  on $\cE_0\cap \cE_1$, it holds that $\phi_l   r_0    \|\bbeta^{(t+1)} -  \hat \bbeta \|_2 \leq  \hat \cL_\tau(  \bbeta^{(t+1)}  ) -  \hat \cL_\tau ( \hat \bbeta ) $.
For the right-hand side, we have
\begin{align}\label{pf:prop:main_derivation}
    &   \hat \cL_\tau(  \bbeta^{(t+1)} ) -  \hat \cL_\tau( \hat \bbeta  ) =  \hat \cL_\tau(  \bbeta^{(t+1)} ) - \hat \cL_\tau(\bbeta^{(t)} ) + \hat \cL_\tau(\bbeta^{(t)} ) -  \hat \cL_\tau( \hat\bbeta  ) \nn \\
 &~~~ \stackrel{({\rm i})}{\leq } \langle  \nabla \hat \cL_\tau (\bbeta^{(t)} ), \bbeta^{(t+1)} -\bbeta^{(t)}  \rangle +  \phi_u  \| \bbeta^{(t+1)} -\bbeta^{(t)}  \|_2^2 - \langle \nabla \hat \cL_\tau(\bbeta^{(t)} ) , \hat \bbeta  -\bbeta^{(t)}  \rangle \nn\\
 &~~~ = \frac{1}{\eta_0} \langle    \bbeta^{(t)}  -   \bbeta^{(t+1)}   ,   \bbeta^{(t+1)} - \hat \bbeta  \rangle   + \phi_u  \|  \bbeta^{(t+1)} -\bbeta^{(t)}  \|_2^2  +  \sigma \langle   \bg_t  ,   \bbeta^{(t+1)} - \hat \bbeta  \rangle  \nn \\ 
 &~~~ = \frac{1}{2 \eta_0} \|\bbeta^{(t)}  - \hat \bbeta  \|_2^2 - \frac{1}{2\eta_0 } \|  \bbeta^{(t+1)} - \hat \bbeta  \|_2^2 - \frac{1}{2 \eta_0} \|  \bbeta^{(t+1)} -\bbeta^{(t)}  \|_2^2  \\ 
 &~~~~ ~~~+\phi_u  \|  \bbeta^{(t+1)} -\bbeta^{(t)}  \|_2^2 +  \sigma  \langle   \bg_t  ,   \bbeta^{(t+1)} - \hat \bbeta  \rangle  \nn\\
 &~~~ \stackrel{({\rm ii})}{\leq }  \frac{1}{2 \eta_0} \|\bbeta^{(t)}  -  \hat \bbeta   \|_2^2 - \frac{1}{2\eta_0 } \| \bbeta^{(t+1)} - \hat \bbeta  \|_2^2 + \sigma  \|  \bbeta^{(t+1)} -\hat \bbeta  \|_2 \cdot    \|  \bg_t \|_2 \nn \\
 & ~~~\stackrel{({\rm iii})}{\leq }   \frac{1}{2 \eta_0} \|\bbeta^{(t)}  - \hat \bbeta  \|_2^2 - \frac{1}{2\eta_0 } \| \bbeta^{(t+1)} - \hat \bbeta  \|_2^2 +   \zeta_T  \sigma    \|  \bbeta^{(t+1)} -\hat  \bbeta  \|_2\,, \nn
\end{align} 
where inequality (i) holds under $\cE_2$ and the convexity  of $\hat \cL_\tau(\cdot)$,
 inequality (ii) holds if $\eta_0 \leq 1/(2 \phi_u)$, and inequality (iii) uses conditioning on $ \cF$.  
Provided that $\zeta_T \sigma \leq  \phi_l  r_0$,  combining the above lower and upper bounds on $   \hat \cL_\tau(  \bbeta^{(t+1)} ) -  \hat \cL_\tau( \hat \bbeta  ) $  yields $\|  \bbeta^{(t+1)} - \hat \bbeta   \|_2  \leq  \|\bbeta^{(t)}  -\hat \bbeta   \|_2  \leq  r_0$,
which leads to a contradiction. Therefore, starting from an initial value $\bbeta^{(0)} \in  \hat \bbeta  + \BB^p (r_0)$,  and conditioning on the event $\cE \cap  \cF$ with suitably chosen parameters, we must have $\| \bbeta^{(t )} -\hat \bbeta  \|_2 \leq r_0$ for all $t\in    [T ]$. \qed

\subsection{Construction of Private Confidence Intervals}
\label{sec.CI}

In what follows, we present a Gaussian approximation result for DP Huber estimator $\bbeta^{(T)}$ obtained from Algorithm \ref{alg:DPHuberlow}, which serves as the foundation for constructing private confidence intervals. For the purpose of inference, we impose a lower bound variance assumption that $\EE(\varepsilon_i^2\,|\, \bx_i) \geq  \underline{\sigma}_0^2$ almost surely for some constant $\underline{\sigma}_0 \in (0, \sigma_0]$. This assumption complements the upper bound variance condition in Assumption~\ref{assump:heavy-tail}, which is required only for estimation.
Moreover, define
 \begin{align}\label{Xi}
 	\bXi= \bSigma^{-1}\bOmega\bSigma^{-1}~~\mbox{with}~~\bOmega= \EE(   \varepsilon_i^2 \bx_i \bx_i^{\T})  \,.
 \end{align}

\begin{theorem}
\label{thm.DPGA}
Let the assumptions in Theorem \ref{thm:low} hold. Then, the DP Huber estimator ${\bbeta}^{(T)}$ obtained from Algorithm \ref{alg:DPHuberlow}, with $\tau \asymp \sigma_\iota\{ n\epsilon/(p+\log n) \}^{1/(2+\iota)}$ for some  $\iota\in(0,1]$, satisfies 
\begin{align*}
	&~\sup_{\bu\in \RR^p}\sup_{x\in\RR}\big| \PP( \sqrt{n} \langle \bu/\|\bu\|_{\bXi},  {\bbeta}^{(T)}-\bbeta^* \rangle  \leq x )-\Phi(x)  \big| \lesssim  C_\iota     \frac{\sigma\sqrt{n}}{\underline{\sigma}_0} \sqrt{p+\log n}  \,,
\end{align*}
provided that $n \epsilon \gtrsim (\sigma_{\iota}/\underline{\sigma}_0)^{2+ 4/\iota} (p+\log n)$ and $\sigma \sqrt{p+  \log n} \lesssim \tau$, where $\sigma$ is either $\sigma_{{\rm dp}} $ or $ \sigma_{{\rm gdp}}$ as defined in \eqref{dp.noise.scale}, and $C_\iota >0$ depends only on $\iota$.
 \end{theorem}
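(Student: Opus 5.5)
The plan is to split $\bbeta^{(T)}-\bbeta^*=(\bbeta^{(T)}-\hat\bbeta)+(\hat\bbeta-\bbeta^*)$, treat the first piece as a small perturbation, and prove a Berry--Esseen bound for the linear projection of the second.

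\textbf{Step 1: the privacy perturbation.} On $\cE=\cE_0\cap\cE_1\cap\cE_2$, Theorem~\ref{thm:dp.convergence} with $z\asymp\log n$ gives, for every $\bu$,
\begin{align*}
\sqrt n\,|\langle \bu/\|\bu\|_{\bXi},\bbeta^{(T)}-\hat\bbeta\rangle|\leq \sqrt n\,\|\bbeta^{(T)}-\hat\bbeta\|_2/\lambda_{\min}^{1/2}(\bXi) .
\end{align*}
This holds with probability $1-O(n^{-1})$. The noise part of the bound is of order $\sigma\sqrt{p+\log n}$, using $\rho\asymp 1$ because $\phi_l=\lambda_p/8$ and $\eta_0\asymp 1/\lambda_1$. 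The optimization part $\sigma_0p/(n\epsilon)$ is made smaller by taking $T$ slightly larger, or absorbed into the same rate. The lower variance bound gives $\bOmega\succeq\underline\sigma_0^2\bSigma$, hence $\lambda_{\min}(\bXi)\gtrsim\underline\sigma_0^2/\lambda_1$. This produces the factor $\sigma\sqrt n\sqrt{p+\log n}/\underline\sigma_0$.

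\textbf{Step 2: Bahadur representation of $\hat\bbeta$.} I will show
\begin{align*}
\hat\bbeta-\bbeta^*=\bSigma^{-1}n^{-1}\textstyle\sum_{i}\psi_\tau(\varepsilon_i)\bx_i+\mathbf{r}_n .
\end{align*}
I follow the argument for adaptive Huber regression (Chen and Zhou 2020). First, the first-order condition gives $\nabla\hat\cL_\tau(\hat\bbeta)=\bzero$. Next, I control $\sup_{\bbeta\in\Theta(r)}\|\bSigma^{-1/2}\{\nabla\hat\cL_\tau(\bbeta)-\nabla\hat\cL_\tau(\bbeta^*)-\bSigma(\bbeta-\bbeta^*)\}\|_2$ with $r$ the radius from Proposition~\ref{prop:events1}. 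I split this into the expectation part and the empirical-process part. The expectation part is bounded using $\PP(|\varepsilon|>\tau/2)\lesssim\sigma_\iota^{2+\iota}\tau^{-2-\iota}$ and Lemma~\ref{lem:rsc}-type curvature. The empirical-process part is bounded via Lemma~\ref{lem:bousquet} and a covering argument. The result is $\|\mathbf{r}_n\|_2\lesssim r\sqrt{(p+\log n)/n}+r^2$.

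\textbf{Step 3: the leading term.} The leading term is $n^{-1/2}\sum_i\langle\bv,\psi_\tau(\varepsilon_i)\bSigma^{-1}\bx_i\rangle$ with $\bv=\bu/\|\bu\|_{\bXi}$. I make three adjustments. First, I replace $\psi_\tau(\varepsilon_i)$ by $\varepsilon_i$ in the mean: the bias is $|\EE\psi_\tau(\varepsilon)\bx|\lesssim\sigma_\iota^{2+\iota}\tau^{-1-\iota}$, times $\sqrt n$. Second, I handle the variance: $\EE\psi_\tau^2(\varepsilon)$ differs from $\EE\varepsilon^2$ by at most $\sigma_\iota^{2+\iota}\tau^{-\iota}$, so the normalized variance is $1+O(\cdot)$. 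Third, I apply Berry--Esseen for independent summands with third moment bounded by $\EE|\varepsilon|^{2+\iota}\tau^{1-\iota}$ times sub-Gaussian factors. Since $\iota\le1$, this gives a Kolmogorov error of order $\sigma_\iota^{2+\iota}\tau^{1-\iota}/(\underline\sigma_0^3\sqrt n)$. I then plug in $\tau\asymp\sigma_\iota\{n\epsilon/(p+\log n)\}^{1/(2+\iota)}$.

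\textbf{Step 4: combining.} I use the standard perturbation inequality: if $|W-W'|\le\Delta$ with probability $1-\pi$, then
\begin{align*}
\sup_x|\PP(W\le x)-\Phi(x)|\le\sup_x|\PP(W'\le x)-\Phi(x)|+\Delta+\pi .
\end{align*}
The main obstacle is bookkeeping, namely showing that all non-privacy errors are dominated by $C_\iota\sigma\sqrt n\sqrt{p+\log n}/\underline\sigma_0$. These are the bias $\sqrt n\,\tau^{-1-\iota}$, the Bahadur remainder $\sqrt n\,r^2$, and the Berry--Esseen term, and the $O(n^{-1})$ failure probabilities. Here $\sigma\gtrsim\tau\sqrt T\sqrt{p+\log n}/(n\epsilon)$, since $\gamma\asymp\sqrt{p+\log n}$. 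I expect the comparison to reduce to powers of $n\epsilon/(p+\log n)$ and to be exactly where the sample-size condition $n\epsilon\gtrsim(\sigma_\iota/\underline\sigma_0)^{2+4/\iota}(p+\log n)$ enters, with $C_\iota$ collecting the $\iota$-dependent constants.
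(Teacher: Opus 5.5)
Your proposal is correct and follows essentially the paper's route. You bound $\|\bbeta^{(T)}-\hat\bbeta\|_2$ via Theorem~\ref{thm:dp.convergence} together with $\|\bu\|_{\bXi}\geq\underline{\sigma}_0\|\bu\|_{\bSigma^{-1}}$. You then prove a Bahadur representation plus Berry--Esseen bound for $\hat\bbeta$, which is the content of the paper's Lemmas~\ref{lem.Bahadur} and~\ref{lem.GA}. Finally, you use $\sigma\geq\sigma_{\rm gdp}$ to show the privacy term dominates everything else. Your third-moment Berry--Esseen step, based on $|\psi_\tau(u)|^3\leq\tau^{1-\iota}|u|^{2+\iota}$, is a harmless variant of the paper's $(2+\iota)$-moment version and is also dominated under the stated sample-size condition.

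One correction: the Huber Hessian is not Lipschitz, so the mean part of your Bahadur remainder should be $r\{\sigma_\iota^{2+\iota}+r^{2+\iota}\}\tau^{-2-\iota}$ rather than $r^2$. After multiplying by $\sqrt{n}/\underline{\sigma}_0$, this corrected term is still dominated by $\sigma\sqrt{n}\sqrt{p+\log n}/\underline{\sigma}_0$.
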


Since the asymptotic covariance matrix $\bXi$ in \eqref{Xi} is unknown, directly constructing a confidence interval for $\langle \bu, \bbeta^* \rangle$ is infeasible. To address this, we develop a DP estimator of $\bXi$.  For some robustification parameter $\tau_1 >0$ and truncation parameter $\gamma_1>0$,  define
 \begin{align*}
 	\widehat{\bSigma}_{ \gamma_1 ,\epsilon}= \widehat\bSigma_{\gamma_1}+\varsigma_{1 }\bE ~~\mbox{and}~~\widehat\bOmega_{ \tau_1 , \gamma_1 ,  \epsilon}(\bbeta)=\widehat\bOmega_{{\tau_1}, \gamma_1}(\bbeta)+\varsigma_{2 }\bE\,,
 \end{align*}
where $\varsigma_{1 }$ and $\varsigma_{2}$ are noise scale parameters depending on $\epsilon$ or $(\epsilon, \delta)$, $\bE\in \RR^{p\times p}$ is a symmetric random matrix with independent standard normal entries in the upper-triangular and diagonal positions, and
 \begin{align}\label{def.var}
 \widehat\bSigma_{\gamma_1} = \frac{1}{n}\sn  \bx_i\bx_i^{\T}w_{\gamma_1}^2(\|\bx_i\|_2)~~\mbox{and}~~\widehat\bOmega_{{\tau_1},\gamma_1}(\bbeta) = \frac{1}{n}\sn \psi_{\tau_1}^2(y_i-\bx_i^{\T}\bbeta)\bx_i\bx_i^{\T}w_{\gamma_1}^2(\|\bx_i\|_2)\,.
 \end{align}
Lemma \ref{lem.dp} below ensures that for any $\tau_1,\gamma_1 > 0$ and $\bbeta \in \RR^p$, both  $\widehat{\bSigma}_{\gamma_1,\epsilon}$ and $\widehat\bOmega_{\tau_1,\gamma_1,\epsilon}(\bbeta)$ satisfy
 \begin{itemize}
 	\item[(1)] $\epsilon$-GDP with $ \varsigma_{1 }=  2\gamma_1^2 (n\epsilon)^{-1}$ and $\varsigma_{2 }=2\gamma_1^2\tau_1^2  (n\epsilon)^{-1}$;
 	\item[(2)] $(\epsilon,\delta)$-DP with $ \varsigma_{1 }=  2\gamma_1^2\sqrt{2\log(1.25/\delta)}(n\epsilon)^{-1}$ and $\varsigma_{2 }=2\gamma_1^2\tau_1^2 \sqrt{2\log(1.25/\delta)}(n\epsilon)^{-1}$.
 \end{itemize}
However, the perturbed matrices $\widehat{\bSigma}_{\gamma_1,\epsilon}$ and $\widehat\bOmega_{\tau_1,\gamma_1,\epsilon}(\bbeta)$ may not be positive semi-definite. We further project both matrices onto the cone of positive definite matrices, defined by $\{\bH:\bH \succeq \zeta \bI\}$, and obtain
 \begin{align*}
 	\widehat{\bSigma}_{\gamma_1,\epsilon}^{+}= \arg\min_{\bH \succeq \zeta \bI}\|\bH-	\widehat{\bSigma}_{\gamma_1,\epsilon} \| ~~\mbox{and}~~\widehat\bOmega_{\tau_1,\gamma_1,\epsilon}^{+}(\bbeta)= \arg\min_{\bH \succeq \zeta \bI}\|\bH-\widehat\bOmega_{\tau_1,\gamma_1,\epsilon}(\bbeta)\|\,,
 \end{align*}
 where $\zeta>0$ is a sufficiently small constant. 
 By the post-processing property, the projected matrices $\widehat{\bSigma}_{\gamma_1,\epsilon}^{+}$ and $\widehat\bOmega_{\tau_1,\gamma_1,\epsilon}^{+}(\bbeta)$ inherit the same privacy guarantees as $\widehat{\bSigma}_{\gamma_1,\epsilon}$ and $\widehat\bOmega_{\tau_1,\gamma_1,\epsilon}(\bbeta)$, respectively.
 Define
 \begin{align*}
 	\widetilde\bXi_{\tau_1,\gamma_1,\epsilon}(\bbeta)  =   (\widehat{\bSigma}_{\gamma_1,\epsilon }^{+} )^{-1}\widehat\bOmega_{\tau_1,\gamma_1,\epsilon }^{+}(\bbeta) (\widehat{\bSigma}_{\gamma_1,\epsilon }^{+} )^{-1}\,.
 \end{align*}

By the basic composition properties of DP, $\widetilde\bXi_{\tau_1,\gamma_1,\epsilon}(\bbeta)$ satisfies either $(3\epsilon, 3\delta)$-DP or $\sqrt{3}\epsilon$-GDP. 
Assuming that $\bbeta$ is a consistent estimator of $\bbeta^*$, the consistency of $\widetilde\bXi_{\tau_1,\gamma_1,\epsilon}(\bbeta)$ as an estimator of $\bXi$ is established in Lemma \ref{lem.varest}.

\begin{corollary}
\label{cor.dpGA}
Let the assumptions in  Theorem \ref{thm.DPGA} hold. Set $\gamma_1 \asymp \sqrt{p+\log n}$ and $\tau_1 \asymp \tau \asymp \sigma_\iota\{ n\epsilon/(p+\log n) \}^{1/(2+\iota)}$.  
Then, the  DP Huber estimator ${\bbeta}^{(T)}$ obtained from Algorithm \ref{alg:DPHuberlow}   satisfies 
 \begin{align*}
	& \sup_{\bu\in \RR^p}\sup_{x\in\RR}\big| \PP( \sqrt{n} \langle \bu/\|\bu\|_{\widetilde\bXi_{\tau_1,\gamma_1,\epsilon}(\bbeta^{(T)})},  {\bbeta}^{(T)}-\bbeta^* \rangle  \leq x )-\Phi(x)  \big|  \\
  & \leq R_{n}^* \asymp \frac{\sigma\sqrt{n}  }{\underline{\sigma}_0} \sqrt{p+\log n} + \underbrace{\frac{\tau_1 (p+\log n)}{\underline{\sigma}_0^{2}}     \bigg(\sigma + \frac{\sigma_0 }{\sqrt{n}} \bigg)   }_{R_{n }^{**}}\, , 
 \end{align*}
provided that $n \epsilon \gtrsim (\sigma_\iota / \underline{\sigma}_0 )^{2+4/\iota} (p+\log n)$, $\sigma \sqrt{p+\log n} \lesssim \tau$ and $R_{n }^{**}\ll 1$.
\end{corollary}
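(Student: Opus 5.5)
The plan is to reduce Corollary~\ref{cor.dpGA} to Theorem~\ref{thm.DPGA} by replacing the unknown norm $\|\bu\|_{\bXi}$ with the private plug-in $\|\bu\|_{\widetilde\bXi}$, where $\widetilde\bXi:=\widetilde\bXi_{\tau_1,\gamma_1,\epsilon}(\bbeta^{(T)})$. For fixed $\bu$ write $W=\sqrt{n}\langle \bu/\|\bu\|_{\bXi},\bbeta^{(T)}-\bbeta^*\rangle$ and $\hat r=\|\bu\|_{\bXi}/\|\bu\|_{\widetilde\bXi}$. The studentized statistic is then $\hat r W$. On the event $\{|\hat r-1|\le \Delta\}$ we have, for $x\ge 0$ (the case $x<0$ is symmetric),
\begin{align*}
\PP\{W\le x/(1+\Delta)\}-\PP(|\hat r -1|>\Delta)\le \PP(\hat rW\le x)\le \PP\{W\le x/(1-\Delta)\}+\PP(|\hat r-1|>\Delta).
\end{align*}
Theorem~\ref{thm.DPGA} replaces each $W$-probability by $\Phi$, at cost $C\sigma\sqrt{n}\sqrt{p+\log n}/\underline{\sigma}_0$, which is the first term of $R_n^*$. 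The standard bound $\sup_x|\Phi(x/(1\pm\Delta))-\Phi(x)|\lesssim\Delta$ then leaves an error of order $\Delta+\PP(|\hat r-1|>\Delta)$. It therefore suffices to show that $|\hat r-1|\lesssim R_n^{**}$ uniformly in $\bu$ with probability $1-O(n^{-1})$, since $n^{-1}$ is dominated by the other terms.

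For uniformity over $\bu$, note that $|\hat r^{-2}-1|=|\bu^\T(\widetilde\bXi-\bXi)\bu|/\bu^\T\bXi\bu\le \|\bXi^{-1/2}(\widetilde\bXi-\bXi)\bXi^{-1/2}\|$. Because $\bXi\succeq \underline{\sigma}_0^2\lambda_1^{-1}\bI$ (using $\bOmega\succeq\underline{\sigma}_0^2\bSigma$), this reduces to bounding $\|\widetilde\bXi-\bXi\|/\underline{\sigma}_0^2$. The difference is handled through the sandwich structure:
\begin{align*}
\widetilde\bXi-\bXi=(\hat\bSigma^{+})^{-1}(\hat\bOmega^{+}-\bOmega)(\hat\bSigma^{+})^{-1}+\big\{(\hat\bSigma^{+})^{-1}\bOmega(\hat\bSigma^{+})^{-1}-\bSigma^{-1}\bOmega\bSigma^{-1}\big\},
\end{align*}
so it suffices to control $\|\hat\bSigma^+-\bSigma\|$ and $\|\hat\bOmega^+-\bOmega\|$. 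I would invoke Lemma~\ref{lem.varest} for this, which I expect to supply these bounds, with $\bbeta=\bbeta^{(T)}$ consistent by Theorem~\ref{thm:low}. The bound should contain the following pieces. The projection at most doubles the distance to any matrix in the cone, and $\bSigma,\bOmega$ lie in the cone for small $\zeta$. The Gaussian noise contributes $\varsigma_2\|\bE\|\lesssim \gamma_1^2\tau_1^2\sqrt{p}/(n\epsilon)$ (with a $\log$ factor in the DP case); with $\gamma_1^2\asymp p+\log n$ this is of order $\tau_1(p+\log n)\sigma$ up to the ratio $\tau_1/\tau$, because $\sigma\asymp\gamma\tau\sqrt{T}/(n\epsilon)$. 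Clipping and truncation bias is small, since $\gamma_1$ exceeds $\max_i\|\bx_i\|_2$ with high probability and $\tau_1\gtrsim\sigma_\iota\cdot(\text{growing})$ controls $\EE\{\varepsilon^2\mathbbm{1}(|\varepsilon|>\tau_1)\}$. Sampling fluctuation of the truncated $\varepsilon_i^2\bx_i\bx_i^\T$ is of order $\tau_1(p+\log n)/\sqrt n$ by a matrix Bernstein bound with envelope $\tau_1^2\gamma_1^2$. Finally, replacing $\bbeta^*$ by $\bbeta^{(T)}$ uses $|\psi_{\tau_1}^2(a)-\psi_{\tau_1}^2(b)|\le 2\tau_1|a-b|$, giving $\tau_1\|\bbeta^{(T)}-\bbeta^*\|_2\,\|n^{-1}\sum_i|\bx_i|\bx_i\bx_i^\T\|$, which is again $\lesssim \tau_1(p+\log n)(\sigma+\sigma_0/\sqrt n)$ after inserting the rate from Theorem~\ref{thm:low}. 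Dividing by $\underline{\sigma}_0^2$ produces exactly $R_n^{**}$.

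The main obstacle is the plug-in step inside $\hat\bOmega$, which requires a uniform-in-$\bbeta$ bound over a ball around $\bbeta^*$ while $\bbeta^{(T)}$ depends on the same data, and this must be kept at order $\tau_1(p+\log n)$ rather than incurring extra $\sqrt p$ factors. I would first try the deterministic Lipschitz bound above combined with a high-probability bound $\max_i\|\bx_i\|_2\lesssim\sqrt{p+\log n}$, avoiding empirical-process arguments altogether. Once that is in place, the assumption $R_n^{**}\ll 1$ ensures $\Delta<1/2$, and the inequalities above finish the proof.
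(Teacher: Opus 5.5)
Your proposal is correct and follows essentially the same route as the paper. You bound $\|\bbeta^{(T)}-\bbeta^*\|_2\le r_1\asymp\sigma\sqrt{p+\log n}+\sigma_0\sqrt{(p+\log n)/n}$ via Theorem~\ref{thm:low}; the paper makes explicit, using $\sigma\sqrt{p+\log n}\gtrsim\sigma_\iota\{(p+\log n)/(n\epsilon)\}^{(1+\iota)/(2+\iota)}$, the absorption of the bias term that your phrase ``inserting the rate'' leaves implicit. You then apply Lemma~\ref{lem.varest} on $\Theta(r_1)$ with $\varsigma_2\lesssim\tau_1\sigma\sqrt{p+\log n}$, and use $\|\bu\|_{\bXi}^2\ge\underline{\sigma}_0^2\|\bu\|_{\bSigma^{-1}}^2$ to get a uniform-in-$\bu$ ratio bound of order $R_n^{**}$.

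The only difference is cosmetic. To pass from Theorem~\ref{thm.DPGA} to the studentized statistic, you use an explicit event sandwich together with $\sup_x|\Phi(x/(1\pm\Delta))-\Phi(x)|\lesssim\Delta$, whereas the paper simply cites Lemma~A.7 of Spokoiny and Zhilova; your version is, if anything, more transparent.
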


\begin{remark}\label{remark.inference}
 In the case of $\epsilon$-GDP, we set $\sigma=\sigma_{\rm gdp}\asymp \gamma \tau \sqrt{T}/(n\epsilon)$ with $\gamma \asymp \sqrt{p+\log n}$. Then, the error terms $R_n^*$ and $ R^{**}_n$ in Corollary \ref{cor.dpGA} are of order
 $$
 R_{n}^*  \lesssim   \frac{\sigma_{\iota}^2}{\underline{\sigma}_0^2}\sqrt{T n}  \bigg(\frac{p+\log n}{n\epsilon}\bigg)^{(1+\iota)/(2+\iota)} \,,
$$
and
$$
   R^{**}_n \asymp   \frac{\sigma_{\iota}^2}{\underline{\sigma}_0^2} \sqrt{ T ( p+\log n )}\bigg(\frac{p+\log n}{n\epsilon}\bigg)^{ \iota /(2+\iota)} + \frac{\sigma_{\iota}^2}{\underline{\sigma}_0^2}\sqrt{n }\bigg(\frac{p+\log n}{n\epsilon}\bigg)^{(1+\iota)/(2+\iota)} \, .
$$
Recalling that $T \asymp \log n$, asymptotic normality holds provided that $(p/\epsilon)^{2+2/\iota} \log^{1+2/\iota}(n) =o(n)$ as $n\to \infty$. In the case of $(\epsilon,\delta)$-DP, we have $\sigma=\sigma_{\rm  dp}\asymp\gamma \tau T\sqrt{\log(T/\delta)}/(n\epsilon)$. Following similar calculations, asymptotic normality holds under the dimension growth condition $(p/\epsilon)^{2+2/\iota} \log^{2+4/\iota}(n) \log^{1+2/\iota}(\delta^{-1}\log n) =o(n)$ as $n\to \infty$.
\end{remark}

\subsubsection{Technique Lemmas}
\label{tech.lems}

The proofs of Theorem \ref{thm.DPGA} and Corollary \ref{cor.dpGA} rely on the following technical lemmas, whose proofs are deferred to Section \ref{proof.lemmas}.

   Lemma \ref{lem.Bahadur} establishes the  Bahadur representation of $\widehat{\bbeta} - \bbeta^*$, which serves as a key component in demonstrating its asymptotic normality. 
The proof of Lemma \ref{lem.Bahadur} is given in Section \ref{proof.lem.Bahadur}.

 \begin{lemma}\label{lem.Bahadur}
 	Let Assumptions \ref{assump:design} and \ref{assump:heavy-tail} hold, and suppose $\tau \asymp \sigma_{\iota}\{ n\epsilon/(p+\log n) \}^{1/(2+\iota)}$.
    Then, with probability at least $1-Cn^{-1}$, we have
 	\begin{align*}
 		&~\bigg\|\bSigma^{1/2}(\widehat{\bbeta}-\bbeta^*)-\frac{1}{n}\sn \psi_{\tau}(\varepsilon_i )\bSigma^{-1/2}\bx_i\bigg\|_2 \\
 		&~~~~ \lesssim \sigma_\iota\bigg\{ \bigg( \frac{p+\log n}{n\epsilon} \bigg)^{(3+2\iota)/(2+\iota)} + \sqrt{\frac{p+\log n}{n}}   \bigg( \frac{p+\log n}{n\epsilon} \bigg)^{(1+\iota)/(2+\iota)} + \frac{p+\log n}{n}\bigg\}\,,
 	\end{align*}
 	provided that $n \epsilon \gtrsim   p+\log n$. 
 \end{lemma}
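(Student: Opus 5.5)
We follow the standard route for Bahadur representations of adaptive Huber estimators (as in \citeS{chen2020robustS} and \citeS{sun2020adaptiveS}). Work with $\bdelta=\bSigma^{1/2}(\widehat\bbeta-\bbeta^*)$ and the first-order condition $\nabla\hat\cL_\tau(\widehat\bbeta)=\bzero$. We decompose
\[
\bSigma^{-1/2}\{\nabla\hat\cL_\tau(\widehat\bbeta)-\nabla\hat\cL_\tau(\bbeta^*)\}-\bdelta = \bSigma^{-1/2}\{\nabla\hat\cL_\tau(\widehat\bbeta)-\nabla\hat\cL_\tau(\bbeta^*)\}-\EE\text{-part}-\bdelta + \EE\text{-part},
\]
so the left side of the lemma equals
\[
\bigl\|\bSigma^{-1/2}\nabla\hat\cL_\tau(\widehat\bbeta)-\bSigma^{-1/2}\nabla\hat\cL_\tau(\bbeta^*)-\bdelta\bigr\|_2 ,
\]
because $-\bSigma^{-1/2}\nabla\hat\cL_\tau(\bbeta^*)=n^{-1}\sn\psi_\tau(\varepsilon_i)\bSigma^{-1/2}\bx_i$.

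The first step is localization. By Proposition~\ref{prop:events1} with $z\asymp\log n$ and $\tau_0\asymp\sigma_\iota$, on an event of probability $1-Cn^{-1}$ we have
\[
\|\bdelta\|_2\le r_*:=C\sigma_\iota\Bigl\{\Bigl(\tfrac{p+\log n}{n\epsilon}\Bigr)^{(1+\iota)/(2+\iota)}+\sqrt{\tfrac{p+\log n}{n}}\Bigr\}.
\]
It therefore suffices to bound, uniformly over $\|\bSigma^{1/2}(\bbeta-\bbeta^*)\|_2\le r_*$, the process
\[
\Delta(\bbeta)=\bSigma^{-1/2}\{\nabla\hat\cL_\tau(\bbeta)-\nabla\hat\cL_\tau(\bbeta^*)\}-\bSigma^{1/2}(\bbeta-\bbeta^*).
\]
We split $\Delta=\{\Delta-\EE\Delta\}+\EE\Delta$.

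The second step handles the expected part. Write
\[
\EE\Delta(\bbeta)=\EE\bigl[\{\psi_\tau(\varepsilon)-\psi_\tau(\varepsilon-\bx^\T\bv)-\bx^\T\bv\}\bSigma^{-1/2}\bx\bigr],\qquad \bv=\bbeta-\bbeta^*.
\]
The bracket vanishes unless $|\varepsilon|$ or $|\varepsilon-\bx^\T\bv|$ exceeds $\tau$, and it is bounded by $|\bx^\T\bv|$. Using Markov's inequality with the $(2+\iota)$-th conditional moment together with the sub-Gaussian design, we get
\[
\|\EE\Delta\|_2\lesssim r_*\Bigl(\frac{\sigma_\iota^{2+\iota}}{\tau^{2+\iota}}+\frac{r_*^2}{\tau^2}\Bigr).
\]
Plugging in $\tau$, this gives $r_*(p+\log n)/(n\epsilon)$. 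After expanding $r_*$, this produces the $(3+2\iota)/(2+\iota)$ term and the cross term, up to using $\epsilon\le1$.

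The third step, which I expect to be the main obstacle, controls the centered empirical process $\sup_{\|\bv\|\le r_*}\|\Delta-\EE\Delta\|_2$. Write it as
\[
\sup_{\bu\in\mathbb{S}^{p-1}}\sup_{\bv}\ n^{-1}\sn \bigl(f_{\bu,\bv}(\varepsilon_i,\bx_i)-\EE f_{\bu,\bv}\bigr),
\qquad
f_{\bu,\bv}=\{\psi_\tau(\varepsilon)-\psi_\tau(\varepsilon-\bx^\T\bv)-\bx^\T\bv\}\,\bu^\T\bSigma^{-1/2}\bx .
\]
Since $\psi_\tau$ is $1$-Lipschitz, each $f$ has size $\lesssim|\bx^\T\bv|\,|\bu^\T\bSigma^{-1/2}\bx|$, with variance $\lesssim r_*^2$. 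The expectation of the supremum is handled by symmetrization and a contraction/covering argument (a $1/2$-net over $\bu$, and Rademacher contraction in $\bv$). This yields
\[
\EE\sup\lesssim r_*\sqrt{(p+\log n)/n}.
\]
For the tail bound we use Lemma~\ref{lem:bousquet}. Because the design is only sub-Gaussian rather than bounded, we first truncate: on $\max_i\|\bx_i\|_2\lesssim\sqrt{p+\log n}$ (as in the proof of Proposition~\ref{prop:events1}) we can apply the bounded-class version, and the truncation bias is negligible. Alternatively, one can apply Lemma~\ref{lem:bousquet} to the truncated functions and absorb the remainder.

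Combining the pieces, the centered process contributes
\[
r_*\sqrt{(p+\log n)/n}+r_*(p+\log n)/n .
\]
Expanding $r_*$ recovers the cross term and the $\sigma_\iota(p+\log n)/n$ term. The lower-order pieces are absorbed under the condition $n\epsilon\gtrsim p+\log n$.
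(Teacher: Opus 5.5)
Your Steps 1 and 2 follow the paper's proof. You use the same process $\Delta$, the same split into $(1-\EE)\Delta$ and $\EE\Delta$, and you bound the mean part by tail probabilities. The paper goes through $\nabla^2\EE\hat\cL_\tau$ and obtains $r\{\sigma_\iota^{2+\iota}\tau^{-2-\iota}+r^{2+\iota}\tau^{-2-\iota}\}$; your $r_*\{\sigma_\iota^{2+\iota}\tau^{-2-\iota}+r_*^2\tau^{-2}\}$ works equally well once $(p+\log n)/(n\epsilon)\lesssim 1$.

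The gap is in Step 3. The tools you name do not give $r_*\sqrt{(p+\log n)/n}$ under the lemma's hypothesis $n\epsilon\gtrsim p+\log n$, for two reasons.

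\textbf{Envelope.} On the truncation event $\max_i\|\bSigma^{-1/2}\bx_i\|_2\lesssim\sqrt{p+\log n}$, the best uniform bound is $|f_{\bu,\bv}|\le|\bx^\T\bv|\,|\bu^\T\bSigma^{-1/2}\bx|\lesssim r_*(p+\log n)$. Nothing better holds in sup-norm: take $|\varepsilon|$ large and align $\bu$ and $\bSigma^{1/2}\bv$ with $\bSigma^{-1/2}\bx$. Lemma~\ref{lem:bousquet} with $z\asymp\log n$ therefore leaves a remainder of order $r_*(p+\log n)\log n/n$, not the $r_*(p+\log n)/n$ you record. Absorbing it into the stated bound needs $n\gtrsim(p+\log n)\log^2 n$.

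\textbf{Expectation.} Contraction in $\bv$ for a fixed $\bu$ controls $\EE\sup_{\bv}$, but the maximum over the $5^p$ net points sits inside the expectation. Moving it outside requires concentration for each $\bu$ at level $z\asymp p+\log n$ plus a union bound. With the envelope above, that costs $r_*(p+\log n)^2/n$, which is not $\lesssim r_*\sqrt{(p+\log n)/n}$ unless $n\gtrsim(p+\log n)^3$.

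\textbf{Missing ingredient.} Use the sub-exponential structure of the increments instead of boundedness. Because $\psi_\tau$ is nondecreasing and $1$-Lipschitz, the map $t\mapsto\psi_\tau(\varepsilon)-\psi_\tau(\varepsilon-t)-t$ is $1$-Lipschitz and vanishes at $0$. Hence $\bu^\T\{\Delta(\bv)-\Delta(\bv')\}$ and $(\bu-\bu')^\T\Delta(\bv)$ are averages of independent centred products of two sub-Gaussian factors. Their sub-exponential scales are $\lesssim\|\bSigma^{1/2}(\bv-\bv')\|_2$ and $\lesssim r_*\|\bu-\bu'\|_2$, respectively. A chaining (local-deviation) bound for processes with such Bernstein-type increments, over this $2p$-dimensional index set, gives $\sup\|(1-\EE)\Delta\|_2\lesssim r\{\sqrt{(p+z)/n}+(p+z)/n\}$ with probability at least $1-e^{-z}$, with no truncation of the design. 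This is $\lesssim r\sqrt{(p+z)/n}$ once $n\gtrsim p+z$. The paper does not reprove this step; it imports exactly this bound, \eqref{bound:I1}, from (B.13) of Chen and Zhou (2020). With that bound in place of your Step 3, the rest of your argument (taking $r\asymp r_*$ and expanding) goes through as in the paper.
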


Building on Lemma \ref{lem.Bahadur}, the following result establishes the asymptotic normality of the Huber estimator $\widehat{\bbeta}$.
The proof is provided in Section~\ref{proof.lem.GA}. 
 
 \begin{lemma}\label{lem.GA}
 	Let Assumptions \ref{assump:design} and \ref{assump:heavy-tail} hold. 
    Then, the Huber estimator $\widehat{\bbeta}$ with $\tau \asymp \sigma_\iota\{ n\epsilon/(p+\log n) \}^{1/(2+\iota)}$ for some  $\iota\in(0,1]$  satisfies  
  \begin{align*}
 	&~\sup_{\bu\in \RR^p}\sup_{x\in\RR}\big| \PP( \sqrt{n} \langle \bu/\|\bu\|_{\bXi},  {\widehat\bbeta} -\bbeta^* \rangle  \leq x )-\Phi(x)  \big|\lesssim C_\iota' \frac{\sigma_{\iota}\sqrt{n}}{\underline{\sigma}_0}   \bigg( \frac{p+\log n}{n\epsilon} \bigg)^{(1+\iota)/(2+\iota)} \, ,
 \end{align*}
as long as $n \epsilon \gtrsim (\sigma_{\iota}/\underline{\sigma}_0)^{2+4 /\iota} (p+\log n)$, where $C_\iota'>0$ depends only on $\iota$. 
 \end{lemma}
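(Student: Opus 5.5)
We use the Bahadur representation (Lemma~\ref{lem.Bahadur}) to reduce $\sqrt{n}\langle \bu/\|\bu\|_{\bXi}, \widehat\bbeta-\bbeta^*\rangle$ to a normalized sum of independent terms. That sum is handled by a Berry--Esseen bound, and the remainder by an anti-concentration (Lipschitz smoothing) argument.

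Fix $\bu$ and write $\bv=\bSigma^{-1/2}\bu/\|\bu\|_{\bXi}$. Then
\[
\sqrt{n}\langle \bu/\|\bu\|_{\bXi},\widehat\bbeta-\bbeta^*\rangle = S_n + R_n,
\qquad S_n:=\frac{1}{\sqrt n}\sn \psi_\tau(\varepsilon_i)\langle \bv,\bSigma^{-1/2}\bx_i\rangle ,
\]
where $|R_n|\le \sqrt n\|\bv\|_2\cdot(\text{Bahadur remainder})$ with probability $1-Cn^{-1}$. Note that $\|\bv\|_2^2=\bu^\T\bSigma^{-1}\bu/\bu^\T\bXi\bu\le 1/\underline\sigma_0^2$, because $\bOmega\succeq\underline\sigma_0^2\bSigma$.

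\textbf{Step 1 (centering and variance).} Let $\xi_i=\psi_\tau(\varepsilon_i)\langle\bv,\bSigma^{-1/2}\bx_i\rangle$.
\begin{itemize}
\item Since $\EE(\varepsilon_i\,|\,\bx_i)=0$, we have $|\EE\{\psi_\tau(\varepsilon_i)\,|\,\bx_i\}|\le \EE\{|\varepsilon_i|\mathbbm 1(|\varepsilon_i|>\tau)\,|\,\bx_i\}\le \sigma_\iota^{2+\iota}\tau^{-1-\iota}$. Hence $|\sqrt n\,\EE\xi_i|\lesssim \sqrt n\,\sigma_\iota^{2+\iota}\tau^{-1-\iota}/\underline\sigma_0$.
\item $|\EE\xi_i^2-1|$ is bounded by $\EE\{\varepsilon_i^2\mathbbm 1(|\varepsilon_i|>\tau)\langle\bv,\bchi_i\rangle^2\}\le \sigma_\iota^{2+\iota}\tau^{-\iota}\|\bv\|_2^2$, using unit normalization $\bv^\T\bSigma^{-1/2}\bOmega\bSigma^{-1/2}\bv=1$ and sub-Gaussianity. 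The resulting variance error is $\lesssim (\sigma_\iota/\underline\sigma_0)^2\{(p+\log n)/(n\epsilon)\}^{\iota/(2+\iota)}$.
\end{itemize}
With $\tau\asymp\sigma_\iota\{n\epsilon/(p+\log n)\}^{1/(2+\iota)}$, the bias term equals $\sqrt n(\sigma_\iota/\underline\sigma_0)\{(p+\log n)/(n\epsilon)\}^{(1+\iota)/(2+\iota)}$, which is exactly the claimed rate.

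\textbf{Step 2 (Berry--Esseen).} Apply Berry--Esseen to the centered $\xi_i$ with the third moment controlled by $\EE|\xi_i|^{2+\iota}\lesssim \sigma_\iota^{2+\iota}\|\bv\|_2^{2+\iota}$. This is the nonuniform-moment version, with error $\lesssim n^{-\iota/2}(\sigma_\iota/\underline\sigma_0)^{2+\iota}$. We then convert the variance discrepancy and the mean shift into Kolmogorov distance via the standard Gaussian comparison bounds $|\Phi(x/s)-\Phi(x)|\lesssim|s-1|$ and $|\Phi(x+a)-\Phi(x)|\lesssim|a|$.

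\textbf{Step 3 (remainder).} Use $\PP(S_n+R_n\le x)\le \PP(S_n\le x+r)+\PP(|R_n|>r)$ with $r=\sqrt n\,\underline\sigma_0^{-1}\times(\text{Bahadur bound})$, so that $\PP(|R_n|>r)\le Cn^{-1}$, and symmetrically for the lower bound. Then use $\Phi$'s Lipschitz property. The three Bahadur terms, multiplied by $\sqrt n/\underline\sigma_0$, are each dominated by $\sqrt n(\sigma_\iota/\underline\sigma_0)\{(p+\log n)/(n\epsilon)\}^{(1+\iota)/(2+\iota)}$ when $n\epsilon\gtrsim p+\log n$. The delicate one is $\sqrt n\,(p+\log n)/n$; it is dominated because $(1+\iota)/(2+\iota)\le 1$ and $\epsilon\le1$.

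\textbf{Main obstacle.} The main obstacle is checking that the Berry--Esseen term $n^{-\iota/2}(\sigma_\iota/\underline\sigma_0)^{2+\iota}$, the variance error, and the $Cn^{-1}$ term are all absorbed into the stated rate, possibly inflating $C_\iota'$. I would try to show each is at most the main term times a constant. This is where the sample-size condition $n\epsilon\gtrsim(\sigma_\iota/\underline\sigma_0)^{2+4/\iota}(p+\log n)$ enters: it guarantees the variance error is $\le 1/2$, so normalization is legitimate. If direct domination fails for the Berry--Esseen term, I would note that the bound is trivial unless the right side is below one, and use that to compare powers of $n$.
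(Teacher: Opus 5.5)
Your proposal is correct and follows essentially the same route as the paper: the Bahadur representation (Lemma~\ref{lem.Bahadur}) reduces the problem to i.i.d.\ terms $\psi_\tau(\varepsilon_i)\langle \bu,\bSigma^{-1}\bx_i\rangle$, followed by the $(2+\iota)$-moment Berry--Esseen bound, the variance comparison via $\bOmega\succeq\underline{\sigma}_0^2\bSigma$ (with the sample-size condition keeping the standard deviation ratio in $[1/2,3/2]$), a Gaussian comparison, and Lipschitz smoothing of $\Phi$ to absorb the remainder. The differences are cosmetic. The paper folds the mean $\EE(S_{i,\bu})$ into the Bahadur remainder before applying Berry--Esseen instead of treating it as a separate shift. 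The paper also leaves implicit the absorption of the Berry--Esseen and variance-comparison terms into the main rate; you correctly flag this step, and it does go through under $n\epsilon\gtrsim(\sigma_\iota/\underline{\sigma}_0)^{2+4/\iota}(p+\log n)$ with $\epsilon\le 1$, since both ratios to the main term reduce to powers of $\epsilon/(p+\log n)$.
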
 
 
Since $\bXi= \bSigma^{-1}\bOmega\bSigma^{-1}$ is typically unknown, the next two lemmas study the properties of $\widehat\bSigma_{\gamma_1}$ and $\widehat\bOmega_{\tau_1,\gamma_1}(\bbeta)$, as defined in \eqref{def.var}, which serve as estimators of $\bSigma$ and $\bOmega$, respectively.
These results provide theoretical guarantees for the subsequent private estimation of $\bXi$. 
The proofs of Lemmas \ref{lem.dp} and \ref{lem.divvar} are given in Sections \ref{proof.lem.dp} and \ref{proof.lem.divvar}, respectively.

 \begin{lemma}\label{lem.dp}
 	Let $\bE\in \RR^{p\times p}$ be a symmetric random matrix whose upper-triangular and diagonal
 	entries are i.i.d. $\cN(0,1)$. Then, for any $\tau_1>0$, $\gamma_1 >0$, and $\bbeta\in\RR^p$, both $\widehat\bSigma_{\gamma_1}+2\gamma_1^2\sqrt{2\log(1.25/\delta)}   (n\epsilon)^{-1}\bE$ and $\widehat\bOmega_{\tau_1,\gamma_1}(\bbeta)+2\gamma_1^2\tau_1^2\sqrt{2\log(1.25/\delta)}  (n\epsilon)^{-1}\bE$ satisfy   $(\epsilon,\delta)$-DP. 
    Moreover, the estimates $\widehat\bSigma_{\gamma_1}+2\gamma_1^2  (n\epsilon)^{-1}\bE$ and $\widehat\bOmega_{\tau_1,\gamma_1}(\bbeta)+2\gamma_1^2\tau_1^2 (n\epsilon)^{-1}\bE$ satisfy $\epsilon$-GDP.
 \end{lemma}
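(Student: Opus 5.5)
The statement to prove is Lemma~\ref{lem.dp}. The plan is to reduce both claims to the Gaussian mechanism (Lemma~\ref{glmechanism}) and its GDP analogue (Lemma~\ref{gdp.mechanism}). We vectorize the upper-triangular part (including the diagonal) of each symmetric matrix, since this part determines the matrix and $\bE$ is built from i.i.d.\ $\cN(0,1)$ entries in exactly these positions. Under this vectorization the mechanism is $\mathrm{vech}(\cM(\bX))+\varsigma\,\bg$ with $\bg\sim\cN(\bzero,\bI_{p(p+1)/2})$. Reconstructing the full symmetric matrix is then a deterministic map, so Lemma~\ref{post} transfers the guarantee to the matrix output.

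The key step is bounding the $\ell_2$-sensitivity of the vech map. Fix adjacent datasets differing only in observation $i$. For $\widehat\bSigma_{\gamma_1}$ the difference is
\[
n^{-1}\big\{\bx_i\bx_i^{\T}w_{\gamma_1}^2(\|\bx_i\|_2)-\bx_i'\bx_i'^{\T}w_{\gamma_1}^2(\|\bx_i'\|_2)\big\}.
\]
Set $\bv=\bx_i w_{\gamma_1}(\|\bx_i\|_2)$, so that $\|\bv\|_2\le\gamma_1$. The vech of $\bv\bv^{\T}$ has $\ell_2$ norm at most the Frobenius norm $\|\bv\bv^{\T}\|_F=\|\bv\|_2^2\le\gamma_1^2$. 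By the triangle inequality the sensitivity is therefore at most $2\gamma_1^2/n$.

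For $\widehat\bOmega_{\tau_1,\gamma_1}(\bbeta)$, with $\bbeta$ fixed (not data-dependent), the same argument applies with $\bv=\psi_{\tau_1}(y_i-\bx_i^{\T}\bbeta)\bx_i w_{\gamma_1}(\|\bx_i\|_2)$. Because $|\psi_{\tau_1}|\le\tau_1$, we have $\|\bv\|_2\le\gamma_1\tau_1$, and the sensitivity bound becomes $2\gamma_1^2\tau_1^2/n$.

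Plugging $B=2\gamma_1^2/n$, respectively $2\gamma_1^2\tau_1^2/n$, into Lemma~\ref{glmechanism} gives noise scale $B\epsilon^{-1}\sqrt{2\log(1.25/\delta)}$, which matches the stated $(\epsilon,\delta)$-DP scales. Using Lemma~\ref{gdp.mechanism} with $\epsilon^{-1}{\rm sens}_2$ gives the $\epsilon$-GDP scales $2\gamma_1^2(n\epsilon)^{-1}$ and $2\gamma_1^2\tau_1^2(n\epsilon)^{-1}$.

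The only real subtlety is the vectorization. If one naively used the full $p^2$ vectorization, the off-diagonal entries would be counted twice, the sensitivity would still be bounded by the Frobenius norm, and the noise would be correlated. Working on the upper-triangular coordinates, where the noise is genuinely i.i.d.\ standard normal, avoids this. Bounding the vech norm by the Frobenius norm is legitimate because dropping the lower-triangular coordinates only decreases the $\ell_2$ norm.
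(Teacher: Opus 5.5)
Your proposal is correct and matches the paper's proof essentially step for step. The paper likewise vectorizes the $p(p+1)/2$ upper-triangular and diagonal entries and bounds their $\ell_2$-sensitivity by the Frobenius norm, using $\|\bx\bx^{\T}\|_{\rm F}=\|\bx\|_2^2$ together with clipping. This gives $2\gamma_1^2/n$ for $\widehat\bSigma_{\gamma_1}$, and $2\gamma_1^2\tau_1^2/n$ for $\widehat\bOmega_{\tau_1,\gamma_1}(\bbeta)$ via $|\psi_{\tau_1}|\le\tau_1$; the paper then applies Lemmas~\ref{glmechanism} and \ref{gdp.mechanism} and transfers the guarantee to the matrix by post-processing.
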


 \begin{lemma}\label{lem.divvar}
 	Let Assumptions \ref{assump:design} and \ref{assump:heavy-tail} hold,  and suppose $\gamma_1 \asymp \sqrt{p+\log n}$.
    Then, with probability at least $1-Cn^{-1}$, we have $\|\widehat\bSigma_{ \gamma_1}-\bSigma\| \lesssim \sqrt{(p+\log n)/n}  $ and
 \begin{align*}
	\sup_{\bbeta\in\Theta(r)}	\|\widehat\bOmega_{\tau_1,\gamma_1}(\bbeta)- \bOmega \|  \lesssim &~ \tau_1 r \sqrt{p+\log n}+    \tau_1^2\frac{p+\log n}{n} + \frac{\sigma_{\iota}^{2+\iota}}{\tau_1^{\iota}}\,,
\end{align*}
provided that $n \gtrsim p+\log n$.
 \end{lemma}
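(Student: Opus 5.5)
We treat the two matrices separately, working on the event $\{\max_{i\in[n]}\|\bx_i\|_2\le \gamma_1\}$. By the net argument in the proof of Proposition~\ref{prop:events1}, this event has probability at least $1-Cn^{-1}$ when $\gamma_1 \asymp \sqrt{p+\log n}$ with a large enough constant. On it $w_{\gamma_1}\equiv 1$, so $\widehat\bSigma_{\gamma_1}=n^{-1}\sn \bx_i\bx_i^\T$.

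\textbf{Step 1: the bound for $\widehat\bSigma_{\gamma_1}$.} This reduces to the standard covariance concentration for sub-Gaussian vectors. Write $n^{-1}\sn\bchi_i\bchi_i^\T-\bI$ with $\bchi_i=\bSigma^{-1/2}\bx_i$ and use a $1/4$-net of $\mathbb{S}^{p-1}$. Bernstein's inequality for the sub-exponential variables $\langle\bu,\bchi_i\rangle^2-1$ then gives an operator norm of order $\upsilon_1^2\{\sqrt{(p+z)/n}+(p+z)/n\}$. Take $z=\log n$, use $n\gtrsim p+\log n$, and multiply by $\lambda_1$.

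\textbf{Step 2: decomposing $\widehat\bOmega_{\tau_1,\gamma_1}(\bbeta)-\bOmega$.} For $\bbeta\in\Theta(r)$ set $\bdelta=\bbeta-\bbeta^*$, so the residual is $\varepsilon_i-\bx_i^\T\bdelta$. Write
\[
\widehat\bOmega_{\tau_1,\gamma_1}(\bbeta)-\bOmega = \mathrm{I}(\bbeta)+\mathrm{II}+\mathrm{III},
\]
where
\begin{align*}
\mathrm{I}(\bbeta) &= n^{-1}\textstyle\sum_i\{\psi_{\tau_1}^2(\varepsilon_i-\bx_i^\T\bdelta)-\psi_{\tau_1}^2(\varepsilon_i)\}\bx_i\bx_i^\T,\\
\mathrm{II} &= n^{-1}\textstyle\sum_i\psi_{\tau_1}^2(\varepsilon_i)\bx_i\bx_i^\T-\EE\{\psi_{\tau_1}^2(\varepsilon)\bx\bx^\T\},\\
\mathrm{III} &= \EE\{(\psi_{\tau_1}^2(\varepsilon)-\varepsilon^2)\bx\bx^\T\}.
\end{align*}

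The bias term is handled as follows. Since $|\psi_{\tau_1}^2(u)-u^2|\le u^2\mathbbm 1(|u|>\tau_1)\le |u|^{2+\iota}\tau_1^{-\iota}$, conditioning on $\bx$ gives $\|\mathrm{III}\|\le \sigma_\iota^{2+\iota}\tau_1^{-\iota}\lambda_1$.

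The variance term $\mathrm{II}$ has summands bounded by $\tau_1^2\gamma_1^2$ and conditional second moment at most $\tau_1^2\sigma_0^2$ times $\bx\bx^\T$ fourth moments. For fixed $\bu$ on a net, Bernstein gives $\tau_1\sigma_0\sqrt{(p+\log n)/n}+\tau_1^2\gamma_1^2(p+\log n)/n$. This is too large by the factor $\gamma_1^2$. To avoid it, I would instead bound the quadratic form $\bu^\T\mathrm{II}\bu$ with the weights $\psi_{\tau_1}^2\le\tau_1^2$ and sub-exponential $\langle\bu,\bx_i\rangle^2$. That is, apply Bernstein to the variables $\psi_{\tau_1}^2(\varepsilon_i)\langle\bu,\bx_i\rangle^2$, whose $\psi_1$-norm is at most $\tau_1^2\upsilon_1^2$. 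This yields $\tau_1\sigma_0\sqrt{(p+\log n)/n}+\tau_1^2(p+\log n)/n$. The first piece is absorbed into $\tau_1 r\sqrt{p+\log n}$ when $r\gtrsim\sigma_0/\sqrt n$; otherwise I would keep it, and note that it is dominated once $\tau_1\gtrsim\sigma_0$ and $n\gtrsim p+\log n$. I expect it to fit under $\tau_1^2(p+\log n)/n+\sigma_\iota^{2+\iota}\tau_1^{-\iota}$ via AM-GM in the regime used.

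\textbf{Step 3: the uniform term $\mathrm{I}(\bbeta)$, which I expect to be the main obstacle because it must hold uniformly in $\bbeta$.} Use the deterministic inequality $|\psi_{\tau_1}^2(a)-\psi_{\tau_1}^2(b)|\le 2\tau_1|a-b|$, which follows since $\psi_{\tau_1}$ is $1$-Lipschitz and bounded by $\tau_1$. This gives
\[
\|\mathrm{I}(\bbeta)\|\le 2\tau_1\sup_{\bu\in\mathbb S^{p-1}} n^{-1}\textstyle\sum_i |\bx_i^\T\bdelta|\langle\bu,\bx_i\rangle^2.
\]
On the good event, $|\bx_i^\T\bdelta|\le\gamma_1 r$, so this is at most $2\tau_1\gamma_1 r\,\lambda_{\max}(n^{-1}\sum_i\bx_i\bx_i^\T)\lesssim\tau_1 r\sqrt{p+\log n}$ by Step 1. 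This bound requires no empirical-process argument and is uniform in $\bbeta$. Combining the three pieces with a union bound gives probability $1-Cn^{-1}$.
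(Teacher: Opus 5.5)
Your architecture matches the paper's proof. You work on the same event $\{\max_i\|\bx_i\|_2\le\gamma_1\}$, where $w_{\gamma_1}\equiv1$, and use the same three-term split. The uniform term gets the same deterministic bound $|\psi_{\tau_1}^2(a)-\psi_{\tau_1}^2(b)|\le 2\tau_1|a-b|$, giving $\sup_{\Theta(r)}\|\mathrm{I}(\bbeta)\|\lesssim\tau_1 r\sqrt{p+\log n}$, and the bias term $\mathrm{III}$ gets the same bound. Step 1 is also fine.

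The gap is in the variance term $\mathrm{II}$. You build the variance proxy from the second moment, $\EE\{\psi_{\tau_1}^4(\varepsilon_i)\mid\bx_i\}\le\tau_1^2\sigma_0^2$. This leaves a term $\tau_1\sigma_0\sqrt{(p+\log n)/n}$, which is \emph{not} bounded by the right-hand side of the lemma when $r$ is small and $\iota>0$. AM-GM applied to the two target terms only controls their geometric mean $\sqrt{\tau_1^{2-\iota}\sigma_\iota^{2+\iota}(p+\log n)/n}$. Your term exceeds that by the factor $\sigma_0\tau_1^{\iota/2}/\sigma_\iota^{1+\iota/2}$, which is unbounded in $\tau_1$.

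Concretely, take $\tau_1\asymp\sigma_\iota\{n\epsilon/(p+\log n)\}^{1/(2+\iota)}$. At $r=0$ the right-hand side is of order $\sigma_\iota^{2+\iota}\tau_1^{-\iota}$. The ratio of your term to it is of order $(\sigma_0/\sigma_\iota)\,\epsilon^{(1+\iota)/(2+\iota)}\{n/(p+\log n)\}^{\iota/\{2(2+\iota)\}}\to\infty$. So neither ``dominated once $\tau_1\gtrsim\sigma_0$'' nor ``fits via AM-GM in the regime used'' holds.

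You are right that the term is absorbed into $\tau_1 r\sqrt{p+\log n}$ when $r\gtrsim\sigma_0/\sqrt n$. That would cover the downstream use with $r=r_1\gtrsim\sigma_0\sqrt{(p+\log n)/n}$, but the lemma is stated for arbitrary $r$.

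The missing ingredient is to put the $(2+\iota)$-th moment, not the second, into the variance proxy.
\begin{itemize}
\item Since $\psi_{\tau_1}^4(u)=\min\{|u|,\tau_1\}^4\le\tau_1^{2-\iota}|u|^{2+\iota}$ for $0\le\iota\le 2$, Assumption~\ref{assump:heavy-tail} gives $\EE\{\psi_{\tau_1}^4(\varepsilon_i)\mid\bx_i\}\le\tau_1^{2-\iota}\sigma_\iota^{2+\iota}$.
\item Combine this with $\psi_{\tau_1}^2\le\tau_1^2$ and the sub-Gaussian moments $\EE|\bx_i^\T\bu|^{2k}\le\frac{k!}{2}\,4(2\upsilon_1^2\lambda_1)^k$. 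This yields $\EE|\psi_{\tau_1}^2(\varepsilon_i)(\bx_i^\T\bu)^2|^k\le\frac{k!}{2}\,v\,c^{k-2}$ with $v\asymp\tau_1^{2-\iota}\sigma_\iota^{2+\iota}$ and $c\asymp\tau_1^2$.
\item The moment form of Bernstein's inequality, plus a union bound over a net, gives $\|\mathrm{II}\|\lesssim\sqrt{\tau_1^{2-\iota}\sigma_\iota^{2+\iota}(p+\log n)/n}+\tau_1^2(p+\log n)/n$.
\item The first term is at most $\frac12\{\tau_1^2(p+\log n)/n+\sigma_\iota^{2+\iota}\tau_1^{-\iota}\}$ by AM-GM, uniformly in $r$.
\end{itemize}
This is exactly what the paper does.

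Note also that a Bernstein bound driven only by the $\psi_1$-norm $\tau_1^2\upsilon_1^2$, as you phrase it, would give a leading term of order $\tau_1^2\sqrt{(p+\log n)/n}$. To get any noise-moment factor into the leading term, you need the variance-sensitive moment version.
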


 Lemma \ref{lem.varest} below demonstrates that $\widetilde\bXi_{\tau_1,\gamma_1,\epsilon}(\bbeta)$ serves as a valid and privacy-preserving estimator of $\bXi$, provided that $\|\bbeta-\bbeta^*\|_2$ is sufficiently small.
 The proof of Lemma \ref{lem.varest} is given in Section \ref{proof.lem.varest}.

  \begin{lemma}\label{lem.varest}
 Let Assumptions \ref{assump:design} and \ref{assump:heavy-tail} hold. Set $\gamma_1 \asymp \sqrt{p+\log n}$ and $\tau_1 \asymp \sigma_\iota \{ n \epsilon / (p+\log n) \}^{1/(2+\iota) }$. 
 Then, with probability at least $1-Cn^{-1}$,  
\begin{align*}
    \sup_{\bbeta\in\Theta(r)}\|	\widetilde\bXi_{\tau_1,\gamma_1,\epsilon}(\bbeta)- \bXi\| \lesssim  (\tau_1 r +\varsigma_2   )\sqrt{p+\log n}\, ,
\end{align*}
holds as long as $\sqrt{ ({p+\log n})/{n}} + \varsigma_{1 }\sqrt{p+\log n}\ll 1$.
 \end{lemma}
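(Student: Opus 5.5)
Write $\widehat\bSigma_\epsilon=\widehat{\bSigma}_{\gamma_1,\epsilon}$ and $\widehat\bOmega_\epsilon(\bbeta)=\widehat\bOmega_{\tau_1,\gamma_1,\epsilon}(\bbeta)$, with $\widehat\bSigma^+_\epsilon$, $\widehat\bOmega^+_\epsilon(\bbeta)$ their projections. The plan is a perturbation argument for the sandwich form $\bSigma^{-1}\bOmega\bSigma^{-1}$. It combines Lemma~\ref{lem.divvar}, which controls the non-private estimators $\widehat\bSigma_{\gamma_1}$ and $\widehat\bOmega_{\tau_1,\gamma_1}(\bbeta)$, with a spectral-norm bound on the noise matrix $\bE$. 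I would then carry the errors through the projection onto $\{\bH\succeq\zeta\bI\}$ and the matrix inversion.

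\textbf{Step 1 (noise).} Since $\bE$ is a symmetric Gaussian (Wigner-type) matrix, standard concentration gives $\|\bE\|\lesssim\sqrt{p}+\sqrt{\log n}\lesssim\sqrt{p+\log n}$ with probability at least $1-n^{-1}$. Combining this with Lemma~\ref{lem.divvar} yields
\[
\|\widehat\bSigma_\epsilon-\bSigma\|\lesssim \sqrt{(p+\log n)/n}+\varsigma_1\sqrt{p+\log n}
\]
and
\[
\sup_{\bbeta\in\Theta(r)}\|\widehat\bOmega_\epsilon(\bbeta)-\bOmega\|\lesssim \tau_1 r\sqrt{p+\log n}+\tau_1^2\tfrac{p+\log n}{n}+\tfrac{\sigma_\iota^{2+\iota}}{\tau_1^{\iota}}+\varsigma_2\sqrt{p+\log n}.
\]
With $\tau_1\asymp\sigma_\iota\{n\epsilon/(p+\log n)\}^{1/(2+\iota)}$, the two middle terms should be of order $\tau_1\,(p+\log n)/(n\epsilon)\cdot\sigma_\iota^{1+\iota}\tau_1^{-1-\iota}$-type quantities. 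I expect these to be absorbed into $\varsigma_2\sqrt{p+\log n}\asymp \gamma_1^2\tau_1^2\sqrt{p+\log n}/(n\epsilon)$, since $\gamma_1^2\asymp p+\log n$. Checking this absorption is a routine rate comparison that I will do but not belabor; if a residual term survives, I would state it as part of the $\lesssim$ under the running condition $n\epsilon\gtrsim p+\log n$.

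\textbf{Step 2 (projection).} The projection onto $\{\bH\succeq\zeta\bI\}$ in spectral norm can be taken as eigenvalue clipping at $\zeta$. Assuming $\bSigma,\bOmega\succeq\zeta\bI$ (by Assumption~\ref{assump:design} and the lower variance bound, for $\zeta$ small enough), $\bSigma$ is feasible and hence $\|\bH^+-\bH\|\le\|\bSigma-\bH\|$. By the triangle inequality, $\|\bH^+-\bSigma\|\le 2\|\bH-\bSigma\|$, and the same holds for $\bOmega$. Under the hypothesis $\sqrt{(p+\log n)/n}+\varsigma_1\sqrt{p+\log n}\ll1$, $\widehat\bSigma^+_\epsilon$ is within $o(1)$ of $\bSigma$. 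Hence $\lambda_{\min}(\widehat\bSigma^+_\epsilon)\ge\lambda_p/2$ and $\|(\widehat\bSigma^+_\epsilon)^{-1}-\bSigma^{-1}\|\lesssim\|\widehat\bSigma^+_\epsilon-\bSigma\|$.

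\textbf{Step 3 (sandwich).} Write $\bA=(\widehat\bSigma^+_\epsilon)^{-1}$ and $\bB=\widehat\bOmega^+_\epsilon(\bbeta)$, and expand
\[
\bA\bB\bA-\bSigma^{-1}\bOmega\bSigma^{-1}=(\bA-\bSigma^{-1})\bB\bA+\bSigma^{-1}(\bB-\bOmega)\bA+\bSigma^{-1}\bOmega(\bA-\bSigma^{-1}).
\]
Bound each piece using $\|\bOmega\|\le\sigma_0^2\lambda_1$, $\|\bA\|\lesssim1$, and $\|\bB\|\le\|\bOmega\|+\|\bB-\bOmega\|$. The $\bSigma$-errors contribute $\sqrt{(p+\log n)/n}+\varsigma_1\sqrt{p+\log n}$. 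Since $\varsigma_1=\varsigma_2/\tau_1^2$ and $\tau_1\gtrsim\sigma_0$, this is dominated by $\varsigma_2\sqrt{p+\log n}$ plus the $\Omega$ error, up to constants depending on $\sigma_0$. The $\bOmega$-error contributes $(\tau_1 r+\varsigma_2)\sqrt{p+\log n}$.

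Taking the supremum over $\bbeta\in\Theta(r)$ (uniformity is inherited from Lemma~\ref{lem.divvar}, and $\bE$ does not depend on $\bbeta$) gives the claim. I expect the main obstacle to be the bookkeeping in Step~1: verifying that the bias term $\sigma_\iota^{2+\iota}/\tau_1^{\iota}$ and the $\sqrt{(p+\log n)/n}$ term are indeed dominated by $(\tau_1r+\varsigma_2)\sqrt{p+\log n}$ under the chosen $\tau_1$ and $\gamma_1$.
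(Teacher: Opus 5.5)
Your proposal is correct and follows essentially the paper's route. You bound $\|\bE\|\lesssim\sqrt{p+\log n}$, control each projection by optimality against a feasible comparison point, and plug in Lemma~\ref{lem.divvar}. You then expand the sandwich and absorb the leftover terms into $(\tau_1 r+\varsigma_2)\sqrt{p+\log n}$.

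The absorption you left unverified does go through, using $\sigma_0^2\varsigma_1\le\varsigma_2$ (since $\tau_1\ge\sigma_0$) and the identity $\sigma_\iota^{2+\iota}/\tau_1^{\iota}\asymp\tau_1^2(p+\log n)/(n\epsilon)\lesssim\varsigma_2$. This replaces the garbled order you wrote in Step~1, and the remaining terms $\tau_1^2(p+\log n)/n$ and $\sigma_0^2\sqrt{(p+\log n)/n}$ are likewise dominated, so no hedge is needed.

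The only real difference is the comparison point in the projection step. You compare against the population $\bSigma$ and $\bOmega$, which for $\bOmega\succeq\zeta\bI$ uses the lower-variance condition from the inference section. The paper compares against the non-private $\widehat\bSigma_{\gamma_1}$ and $\widehat\bOmega_{\tau_1,\gamma_1}(\bbeta)$, implicitly treating them as feasible.
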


\subsubsection{Proof of Theorem \ref{thm.DPGA}}

First, note that 
 \begin{align}\label{div1}
 	&~ | \sqrt{n}\langle \bu/\|\bu\|_{\bXi}, {\bbeta}^{(T)}-\bbeta^* \rangle -    \sqrt{n} \langle \bu/\|\bu\|_{ \bXi}, {\widehat\bbeta} -\bbeta^* \rangle| 
   \leq  \sqrt{n}\frac{\|\bu\|_2}{\|\bu\|_{\bXi}}\|\bbeta^{(T)}-\widehat{\bbeta} \|_2 \,.
 \end{align}
 Applying Theorem \ref{thm:dp.convergence} in Section \ref{sec.inter1}, we have with probability at least $1-Cn^{-1}$ that
 \begin{align*} 
 	\|\bbeta^{(T)}-\widehat{\bbeta} \|_2\lesssim  \frac{\sigma_0 p }{n\epsilon}+\sigma\sqrt{p+\log n}\, ,
 \end{align*}
 as long as $\sigma \sqrt{p+\log T+\log n} \lesssim r_0$, where $\sigma \in \{ \sigma_{{\rm dp}} , \sigma_{{\rm gdp}} \}$ as defined in \eqref{dp.noise.scale}. Recall that $T \asymp \log n$, and  as shown in the proof of Lemma \ref{lem.GA}, we have $ \|\bu\|_{\bXi}   \geq \underline{\sigma}_0  \|\bu\|_{\bSigma^{-1}}$.
 Then, for any  $\bu\in \RR^p$,   it holds with probability at least $1-Cn^{-1}$ that
 \begin{align*}
 	\sqrt{n}\frac{\|\bu\|_2}{\|\bu\|_{\bXi}}\|\bbeta^{(T)}-\widehat{\bbeta} \|_2 \leq R_{n } \asymp  \frac{\sqrt{n}}{ \underline{\sigma}_0} \bigg(\frac{\sigma_0 p }{n\epsilon}+\sigma\sqrt{p+\log n}\bigg) \,,
 \end{align*}
 provided that $\sigma \sqrt{p+ \log n} \lesssim r_0$.

Next, by applying Lemma \ref{lem.GA}, for any  $\bu\in \RR^p$ and $x\in\RR$,
 \begin{align*}
 	\PP( \sqrt{n} \langle \bu/\|\bu\|_{\bXi},   {\bbeta}^{(T)}-\bbeta^* \rangle  \leq x ) \leq &~ \PP( \sqrt{n} \langle \bu/\|\bu\|_{ \bXi},   {\widehat\bbeta} -\bbeta^* \rangle  \leq x + R_{n }) +\frac{C}{n}\\
  	\leq &~ \Phi(x + R_{n })+  C_{\iota}'' \frac{\sqrt{n} \sigma_\iota}{ \underline{\sigma}_0}      \bigg( \frac{p+\log n}{n\epsilon} \bigg)^{(1+\iota)/(2+\iota)}\\
 	\leq &~ \Phi(x )+  R_{n }+  C_{\iota}''\frac{\sqrt{n} \sigma_\iota}{ \underline{\sigma}_0}     \bigg( \frac{p+\log n}{n\epsilon} \bigg)^{(1+\iota)/(2+\iota)}\,,
 \end{align*}
 provided that $n \epsilon \gtrsim (\sigma_{\iota}/ \underline{\sigma}_0 )^{2 + 4/\iota} (p+\log n)$ and $\sigma \sqrt{p +\log n} \lesssim r_0$, where $C_{\iota}''>0$ is a constant depending only on $\iota$.

 Putting together pieces, we conclude that for any  $\bu\in \RR^p$ and $x\in\RR$,
 \begin{align*}
 	&~\PP( \sqrt{n} \langle \bu/\|\bu\|_{\bXi},   {\bbeta}^{(T)}-\bbeta^* \rangle  \leq x ) -\Phi(x ) \\
 	&~~~~\lesssim  \sqrt{n}\bigg\{ \frac{\sigma}{ \underline{\sigma}_0}  \sqrt{p+\log n}+C_{\iota}''\frac{\sigma_{\iota}}{ \underline{\sigma}_0}     \bigg( \frac{p+\log n}{n\epsilon} \bigg)^{(1+\iota)/(2+\iota)}\bigg\}\,.
 \end{align*}
 A reversed inequality can be obtained via the same argument. To summarize, we obtain
 \begin{align*}
 	&~\sup_{\bu\in \RR^p}\sup_{x\in\RR}\big| \PP( \sqrt{n} \langle \bu/\|\bu\|_{\bXi},  {\bbeta}^{(T)}-\bbeta^* \rangle  \leq x )-\Phi(x)  \big|\\
 	&~~~~\lesssim \max\{ 1,  C_{\iota}'' \}\sqrt{n} \bigg\{   \frac{\sigma}{ \underline{\sigma}_0}  \sqrt{p+\log n}+   \frac{\sigma_\iota}{ \underline{\sigma}_0}  \bigg( \frac{p+\log n}{n\epsilon} \bigg)^{(1+\iota)/(2+\iota)} \bigg\}\,,
 \end{align*}
provided that $n \epsilon \gtrsim (\sigma_{\iota}/ \underline{\sigma}_0 )^{(4+2\iota)/\iota} (p+\log n)$ and $\sigma \sqrt{p+  \log n} \lesssim r_0$. Moreover, note that
\begin{align}\label{lowersigma}
    \sigma \sqrt{p+\log n} \geq \sigma_{{\rm gdp}} \sqrt{p+\log n} \asymp  \sigma_\iota \sqrt{T} \bigg(\frac{p+\log n}{n \epsilon } \bigg)^{(1+\iota)/(2+\iota)} \,,
\end{align}
indicating that the first term on the right-hand side of the Gaussian approximation inequality is the dominating one. This completes the proof of   Theorem \ref{thm.DPGA} by noting that $r_0\asymp \tau$, as stated in Theorem~\ref{thm:low}. \qed

\subsubsection{Proof of Corollary \ref{cor.dpGA}}

To being with, applying Theorem \ref{thm:low} with $\tau \asymp \sigma_{\iota}\{n\epsilon/(p+\log n)\}^{1/(2+\iota)}$ gives
\begin{align*}
\|\bbeta^{(T)} -   \bbeta^* \|_2   
   \lesssim    \frac{\sigma_0 p}{n\epsilon} +    \sigma\sqrt{p+ \log n} + \sigma_{\iota}   \bigg( \frac{p+\log n}{n \epsilon} \bigg)^{(1+\iota)/({2+\iota})} +  \sigma_{0}\sqrt{\frac{p+\log n} {n }} \,,
\end{align*}
with probability at least $1-Cn^{-1}$, provided that $n\epsilon \gtrsim p + \log n$. Moreover, it follows from \eqref{lowersigma}   that $\|\bbeta^{(T)} -   \bbeta^* \|_2  \leq r_1 \asymp  \sigma \sqrt{p+\log n} + \sigma_0 \sqrt{(p+\log n)/n}$ with high probability.

Throughout the rest of the proof, write $\widetilde\bXi =\widetilde\bXi_{\tau_1,\gamma_1,\epsilon}(\bbeta^{(T)})$ for simplicity.  Since $\EE(\varepsilon_i^2\,|\,\bx_i)\geq \underline{\sigma}_0^2$ almost surely, we have $\| \bu \|_{\bXi}^2 \geq \underline{\sigma}_0^2\|\bu\|_{\bSigma^{-1}}^2$ for any $\bu\in\RR^p$.  In addition, taking $r= r_1$ in Lemma~\ref{lem.varest}, and noting that under $\epsilon$-GDP,
$$
\varsigma_2 = \frac{2 (\gamma_1 \tau_1)^2   }{n \epsilon}  ~\mbox{ and }~
 \tau_1 \sigma \sqrt{p + \log n}  = \sqrt{T} \frac{2 \gamma \tau \tau_1 \sqrt{p+\log n} }{n \epsilon } .
$$
With $\gamma \asymp \gamma_1$, $\tau \asymp \tau_1$ and $T \asymp \log n$, it is easy to see that $\varsigma_2 \lesssim \tau_1 \sigma \sqrt{p + \log n}$. Similarly, the same bound holds for $(\epsilon, \delta)$-DP.  Then, applying Lemma \ref{lem.varest}, we obtain that  with probability at least $1-C n^{-1}$,
\begin{align*}
   &~ \bigg| \frac{ \|\bu\|_{\widetilde\bXi }^2 }{\|\bu\|_{\bXi}^2} -1  \bigg| \leq  \frac{1}{\|\bu\|_{\bXi}^2}\big|\|\bu\|_{\widetilde\bXi}^2-\|\bu\|_{\bXi}^2\big| \leq \frac{\|\bu\|_2^2}{\|\bu\|_{\bXi}^2}\|  \widetilde\bXi -\bXi  \| \leq    r_2 \asymp  \underline{\sigma}_0^{-2}    \tau_1 r_1 \sqrt{p+\log n} \,, 
\end{align*}
as long as $\sqrt{ ({p+\log n})/{n}} + \varsigma_{1 }\sqrt{p+\log n}\ll 1$. For a sufficiently large sample size such that $r_2 \ll 1$ (which, in turn, implies $\varsigma_1 \sqrt{p+\log n} \ll 1$),  Theorem \ref{thm.DPGA} and Lemma~A.7 in \citeS{SZ2015S} imply 
\begin{align*}
 \sup_{\bu\in \RR^p}\sup_{x\in\RR}\big| \PP( \sqrt{n} \langle \bu/\|\bu\|_{\widetilde\bXi },  {\bbeta}^{(T)}-\bbeta^* \rangle  \leq x )-\Phi(x)  \big| \lesssim   \frac{\sigma\sqrt{n}  }{ \underline{\sigma}_0}  \sqrt{p+\log n} + r_2 \, , 
 \end{align*}
provided that $n \epsilon \gtrsim (\sigma_\iota / \underline{\sigma}_0 )^{2+4/\iota} (p+\log n)$ and $\sigma \sqrt{p+\log n} \lesssim \tau$. This proves the claim. \qed

\setcounter{lemma}{0}
\renewcommand{\thelemma}{\thesection.\arabic{lemma}}  
\setcounter{theorem}{0}
\renewcommand{\thetheorem}{\thesection.\arabic{theorem}}  
\setcounter{definition}{0}
\renewcommand{\thedefinition}{\thesection.\arabic{definition}}
\setcounter{proposition}{0}
\renewcommand{\theproposition}{\thesection.\arabic{proposition}}
\section{Proofs for the Results in Section \ref{sec:HD}}

\subsection{Intermediate Results in Section \ref{sec:HD}}
For ease of presentation, we introduce several notations. Let $\cS = \{ S \subseteq [p]: 1\leq |S| \leq 2s \}$. Let 
$\mathbb{H}(s):= \{ \bbeta\in\mathbb{R}^p:\|\bbeta\|_0\leq s \}$ and $\mathbb{N}(s, r_0) = \{ (\bbeta_1 , \bbeta_2 ) \in \mathbb{N}(r_0) : \bbeta_1, \bbeta_2 \in \mathbb{H}(s)\}$, which is a sparse version of the local set $\mathbb{N}(r_0)$ appeared in 
\eqref{def:event.E0-2}. Given a truncation parameter $\gamma>0$, a curvature parameter $\phi_l>0$, a smoothness parameter $\phi_u>0$,  a convergence radius $r_0>0$, and a parameter $r_1>0$, we define the good events in high-dimensional settings, with a slight abuse of notation, as follows:
\begin{align} 
    &~~~~~~~~~~~~~~~~~~~~~~~~~~\cE_0     
   = \bigg\{  \max_{i\in[n]} \| \bx_i \|_\infty \leq \gamma \bigg\}\,,     \nn \\ 
    &~\cE_1  
     = \bigg\{   \inf_{ (\bbeta_1,\, \bbeta_2) \in \mathbb{N}(s, r_0) }  \frac{\hat \cL_\tau(\bbeta_1)  - \hat \cL_\tau (\bbeta_2) - \langle \nabla \hat \cL_\tau(\bbeta_2), \bbeta_1-\bbeta_2 \rangle}{ \| \bbeta_1 -\bbeta_2 \|_2^2}  \geq \phi_l   \bigg\}\,,  \nn  \\
     &~~~\cE_2 
     = \bigg\{\sup_{\bbeta_1,\, \bbeta_2\in \mathbb{H}(s)}\frac{\hat \cL_\tau(\bbeta_1)  - \hat \cL_\tau (\bbeta_2) - \langle \nabla \hat \cL_\tau(\bbeta_2), \bbeta_1-\bbeta_2 \rangle}{\| \bbeta_1 -\bbeta_2 \|_2^2} \leq \phi_u  \bigg\}\,,\label{def:event.HD.E0-4}
\\
    &~~~~~~\cE_3  
     = \bigg\{ \sup_{ \bbeta \in \Theta(r_0)\cap \mathbb{H}(s),\, S\in \cS}  \| \{\nabla\hat \cL_\tau(\bbeta)  - \nabla\hat \cL_\tau (\bbeta^*) \}_S \|_2   \leq  \frac{3}{2}\phi_u r_0 \bigg\}\,, \nn \\
     &~~~~~~~~~~~~~~~~~~~~~~\cE_4  
    =    \bigg\{  \sup_{S\in\cS} \| \{\nabla\hat \cL_\tau (\bbeta^*) \}_S  \|_2 \leq r_1  \bigg\}\,. \nn
\end{align}
In the event $\cE_0$, the truncation parameter $\gamma>0$ ensures that the   $\ell_\infty$-sensitivity  of each clipped gradient descent step is bounded. For bounded covariates (in magnitude), $\gamma$ can simply be chosen as the upper bound. For unbounded covariates that follow a sub-Gaussian distribution, we will show that if $\gamma\asymp\sqrt{\log(p\vee n)}$, then event $\cE_0$ occurs with high probability. Events $\cE_1$ and $\cE_2$ are related to the restricted strong convexity and smoothness of the empirical Huber loss in high dimensions, when restricted to sparse parameter sets. Together, the events $\cE_3$ and $\cE_4$ ensure that the empirical gradients at $\bbeta$ are well controlled within a local and sparse neighborhood of $\bbeta^*$. This property is crucial for characterizing the convergence region and establishing the error bounds of the estimator derived from  Algorithm~\ref{alg:DPHuberhigh}.

For each $t \in {0} \cap [T-1]$, let ${\bw^t_1, \ldots, \bw^t_s, \widetilde{\bw}^t}$ denote the Laplace noise variables generated by the noisy update step of  Algorithm \ref{alg:DPHuberhigh}, where $\tilde \bbeta^{(t+1)}$ from \eqref{noisy.gdnew} is used as the input to the NoisyHT algorithm. Conditioned on $\cE_0 \cap \cE_1 \cap \cE_2 \cap \cE_3 \cap \cE_4$, Theorem \ref{thm:dp.HD.convergence} provides a high-probability bound for $\| \bbeta^{(T)} - \bbeta^* \|_2$, where the probability is taken with respect to the Laplace noise variables $\{\bw^t_1, \ldots, \bw^t_s, \widetilde{\bw}^t\}_{t=0}^{T-1}$ injected into the gradient descent updates. The proof of Theorem \ref{thm:dp.HD.convergence} is given in Section \ref{proof.thm:dp.HD.convergence}.
 
\begin{theorem} \label{thm:dp.HD.convergence} 
    Assume that $\bbeta^{(0)} \in \mathbb{H}(s)\cap \Theta(r_0)$ for some $r_0>0$, and  the learning rate satisfies $\eta_0 = \eta/(2\phi_u) $ for some $\eta\in(0,1)$. Write $\lambda = 2\eta_0\gamma \tau /n>0$. For any $z >0$, let  $\lambda$ satisfy  
    $$
    \bigg( \frac{s T\lambda}{\epsilon} \bigg)^2 \{\log (p s T)+z\}^2    \log\bigg(\frac{T}{\delta}\bigg)   \leq c_0 r_0^2
    $$
    for some small constant $c_0>0$ depending only on $(\eta,\phi_l,\phi_u)$. Then, conditioned on $\cE_0 \cap \cE_1 \cap \cE_2 \cap \cE_3\cap \cE_4$, the sparse DP Huber estimator $\bbeta^{(T)}$  obtained in Algorithm~\ref{alg:DPHuberhigh}, with $T\geq 2\log\{r_0    n \epsilon (\sigma_0\log p)^{-1}\}/\log\{(1-\rho)^{-1}\} $, satisfies
        \begin{align*}
           \|\bbeta^{(T)}-\bbeta^*\|_2^2  & \leq \frac{r_1^2}{  \phi_l^2}   + \bigg(\frac{2\phi_u}{\phi_l}+\frac{1}{2}\bigg) \bigg( \frac{\sigma_0\log p}{  n\epsilon} \bigg)^2 + \frac{2 C_0 C_\eta}{\phi_l\rho } \bigg( \frac{s T\lambda}{\epsilon} \bigg)^2 \{\log (p s T)+z\}^2    \log\bigg(\frac{T}{\delta}\bigg)    \,  
        \end{align*}
    with probability at least $1-2e^{-z}$ (over $\{\bw^t_1, \ldots, \bw^t_s, \widetilde{\bw}^t\}_{t=0}^{T-1}$), provided that  
    $$
   s \geq s^*\max\big\{ 192(1+\eta^{-2})(\phi_u/\phi_l)^2 , 16\eta/(1-\eta) \big\}~\mbox{and}~r_1 \leq \phi_l r_0 / 4 \,,
    $$ 
    where $C_0>0$ is an absolute constant, $\rho  =  4\eta s^* / s  +  \eta^2\phi_l/(8\phi_u)$ and  $C_{\eta} = 4\phi_u\{ 5+4\eta(1-\eta)^{-1} +6(1-\eta)\eta^{-1}\}$.
\end{theorem}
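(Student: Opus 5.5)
We prove Theorem~\ref{thm:dp.HD.convergence} by the same two-layer strategy as Theorem~\ref{thm:dp.convergence}. First we establish a one-step contraction inequality for $\|\bbeta^{(t)}-\bbeta^*\|_2^2$ that holds deterministically on $\cE_0\cap\cdots\cap\cE_4$, modulo the Laplace noise. Then we control the noise with high probability and unroll the recursion.

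\textbf{Noise-free part of one step.} Fix $t$, let $S^{t+1}$ be the support selected by NoisyHT, and let $S^*=\mathrm{supp}(\bbeta^*)$. On $\cE_0$ the clipping is inactive, so $\tilde\bbeta^{(t+1)}=\bbeta^{(t)}-\eta_0\nabla\hat\cL_\tau(\bbeta^{(t)})$. Take $I=S^{t+1}\cup S^{t}\cup S^*$, so $|I|\le 2s+s^*$; splitting it into pieces in $\cS$ costs constants. Decompose
\[
\bbeta^{(t+1)}-\bbeta^* \;=\; \big(\tilde\bbeta^{(t+1)}_{S^{t+1}}-\bbeta^*\big)+\tilde\bw^t_{S^{t+1}} .
\]
\begin{itemize}
\item[(a)] The key peeling lemma, used as in \citeS{cai2021costS}, compares the selection with exact hard thresholding. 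For every $j\in S^*\setminus S^{t+1}$ and $k\in S^{t+1}\setminus S^*$, one has $|\tilde\beta^{(t+1)}_j|^2\le(1+c)|\tilde\beta^{(t+1)}_k|^2+C\max_i\|\bw^t_i\|_\infty^2$. This yields $\|\tilde\bbeta^{(t+1)}_{S^{t+1}}-\bbeta^*\|_2^2\le (1+C\sqrt{s^*/s})\|\tilde\bbeta^{(t+1)}_I-\bbeta^*\|_2^2+C s\max_i\|\bw^t_i\|_\infty^2$. The inflation factor is $1+4\eta s^*/s$, and it is absorbed because $s\gtrsim s^*$; this is where the requirement $s\ge s^*\max\{\cdots\}$ enters.
\item[(b)] Bound the restricted gradient step $\|\tilde\bbeta^{(t+1)}_I-\bbeta^*\|_2^2=\|\bdelta^{(t)}_I-\eta_0\{\nabla\hat\cL_\tau(\bbeta^{(t)})\}_I\|_2^2$ exactly as $\Pi_1$ in the proof of Theorem~\ref{thm:dp.convergence}. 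Split $\nabla\hat\cL_\tau(\bbeta^{(t)})=\{\nabla\hat\cL_\tau(\bbeta^{(t)})-\nabla\hat\cL_\tau(\bbeta^*)\}+\nabla\hat\cL_\tau(\bbeta^*)$. Restricted strong convexity and smoothness ($\cE_1,\cE_2$, with both points sparse and in $\Theta(r_0)$) contract the first part by a factor $1-c\,\eta^2\phi_l/\phi_u$. The second part contributes a cross term $2\eta_0\langle\bdelta^{(t)},\nabla\hat\cL_\tau(\bbeta^*)\rangle_I$, which $\cE_4$ bounds by $\eta_0 r_1\|\bdelta^{(t)}\|_2$; Young's inequality turns this into $\rho'\|\bdelta^{(t)}\|^2+C r_1^2/\phi_l$.
\end{itemize}
Combining (a) and (b) gives
\[
\|\bdelta^{(t+1)}\|_2^2\le(1-\rho)\|\bdelta^{(t)}\|_2^2+C\eta_0 r_1^2/\phi_l+C_\eta\,W_t,
\qquad
W_t:=s\max_{i\le s}\|\bw^t_i\|_\infty^2+\|\tilde\bw^t_{S^{t+1}}\|_2^2 ,
\]
with $\rho$ as stated.

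\textbf{Staying in the local region.} The contraction must hold at every step, so we show inductively that $\bbeta^{(t)}\in\Theta(r_0)\cap\mathbb{H}(s)$, as in Proposition~\ref{prop:crude.bound}. The event $\cE_3$ keeps $\|\{\nabla\hat\cL_\tau(\bbeta^{(t)})-\nabla\hat\cL_\tau(\bbeta^*)\}_S\|_2\le\frac32\phi_u r_0$, so the intermediate iterate does not leave a constant-multiple ball. The conditions $r_1\le\phi_l r_0/4$ and $C_\eta W_t\le c_0' r_0^2$ then keep the recursion's right-hand side at most $r_0^2$.

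\textbf{Controlling the noise.} Each Laplace coordinate has scale $b=2(\epsilon/T)^{-1}\lambda\sqrt{5s\log(T/\delta)}$. A union bound over the $p(s+1)T$ Laplace variables gives, with probability at least $1-e^{-z}$,
\[
\max_{t,i}\|\bw^t_i\|_\infty,\ \max_t\|\tilde\bw^t\|_\infty\;\le\; b\{\log(psT)+z\},
\]
hence $W_t\lesssim s\,b^2\{\log(psT)+z\}^2\asymp (sT\lambda/\epsilon)^2\{\log(psT)+z\}^2\log(T/\delta)$. The assumed bound on $\lambda$ makes this at most $c_0r_0^2$, which closes the induction. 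Unrolling the recursion gives
\[
\|\bdelta^{(T)}\|_2^2\le(1-\rho)^Tr_0^2+\rho^{-1}\Big(\frac{C r_1^2}{\phi_l}+C_\eta\max_t W_t\Big),
\]
and the choice of $T$ makes $(1-\rho)^Tr_0^2\le(\sigma_0\log p/(n\epsilon))^2$. Tracking constants gives the stated bound.

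\textbf{Main obstacle.} The main obstacle is step (a), quantifying how NoisyHT's approximate top-$s$ selection degrades the hard-thresholding projection. The inflation must be proportional to $s^*/s$ so that it can be absorbed into the RSC contraction. The noise loss must be additive and of order $s\max\|\bw\|_\infty^2$, and it must not interact with the inductive locality argument.
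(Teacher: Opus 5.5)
\textbf{There is a genuine gap, in step (b).} You use a different potential from the paper. You contract $\|\bdelta^{(t)}\|_2^2=\|\bbeta^{(t)}-\bbeta^*\|_2^2$, while the paper (Propositions~\ref{prop:HD.convergence.region} and \ref{prop:HD:excess_loss}) contracts the excess loss $\hat\cL_\tau(\bbeta^{(t)})-\hat\cL_\tau(\bbeta^*)$, Jain--Tewari--Kar style, and converts to $\ell_2$ only at time $T$ via $\cE_1$ and $\cE_4$.

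Write $\bv=\nabla\hat\cL_\tau(\bbeta^{(t)})-\nabla\hat\cL_\tau(\bbeta^*)$. Expanding $\|\bdelta^{(t)}-\eta_0\{\nabla\hat\cL_\tau(\bbeta^{(t)})\}_I\|_2^2$ produces $\eta_0^2\|\bv_I\|_2^2$. To get contraction you need a restricted co-coercivity bound $\|\bv_I\|_2^2\lesssim\phi_u\langle\bv,\bdelta^{(t)}\rangle$, or a Lipschitz bound $\|\bv_I\|_2\lesssim\phi_u\|\bdelta^{(t)}\|_2$. None of the five events supplies either:
\begin{itemize}
\item $\cE_1$ and $\cE_2$ are function-value inequalities. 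At $\bbeta^{(t)}$ and $\bbeta^*$ they only give $2\phi_l\|\bdelta^{(t)}\|_2^2\le\langle\bv,\bdelta^{(t)}\rangle\le2\phi_u\|\bdelta^{(t)}\|_2^2$, not a bound on $\|\bv_I\|_2$.
\item $\cE_3$ bounds $\|\bv_S\|_2$ by the fixed radius $\frac32\phi_u r_0$. That leaves a non-decaying additive term of order $\eta^2r_0^2$.
\item The $\Pi_1$ argument of Theorem~\ref{thm:dp.convergence} does not transfer. It uses $\hat\cL_\tau(\tilde\bbeta^{(t+1)})\ge\hat\cL_\tau(\hat\bbeta)$ for a global minimizer $\hat\bbeta$, and applies smoothness at the dense point $\tilde\bbeta^{(t+1)}$. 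Here $\bbeta^*$ is not a minimizer, and $\cE_2$ only covers $\mathbb{H}(s)$.
\end{itemize}
You could manufacture co-coercivity from global convexity, $\cE_1$, and $\cE_2$ applied to pairs $(\bbeta^*+c\,\bv_J,\bbeta^*)$ with $|J\setminus S^*|\le s-s^*$, covering $I$ (where $|I\setminus S^*|\le 2s$) by about three such blocks. The constant-factor loss then restricts $\eta$ to a range bounded away from $1$, not all of $(0,1)$, and changes every constant.

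\textbf{The claimed $\rho$ and constants do not follow from your route.} You say your recursion holds ``with $\rho$ as stated'', but that cannot be right. In your scheme, approximate top-$s$ selection enters as an inflation factor on the contraction, which slows it. You derive this factor as $1+C\sqrt{s^*/s}$ and then inconsistently call it $1+4\eta s^*/s$. In the theorem, $4\eta s^*/s$ is \emph{added} to $\rho$. It comes from the term $\frac{4\eta s^*}{s}\{\hat\cL_\tau(\bbeta^*)-\hat\cL_\tau(\bbeta^{(t)})\}$, a multiple of minus the current excess loss, in Proposition~\ref{prop:HD:excess_loss}. Several features of the statement are therefore products of the excess-loss argument and will not emerge from yours by ``tracking constants'':
\begin{itemize}
\item the hypothesis on $T$, written with the paper's $\rho$, which need not lower-bound your contraction rate;
\item the factor $2\phi_u/\phi_l+1/2=2(\phi_u+\phi_l/4)/\phi_l$, which comes from bounding the initial excess loss via $\cE_2\cap\cE_4$;
\item the coefficient $2C_0C_\eta/(\phi_l\rho)$;
\item the condition $s\ge 16\eta s^*/(1-\eta)$.
\end{itemize}

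\textbf{The missing idea is to use the excess loss as the potential.} Then the restricted gradient norms $\|\boldsymbol{g}^{(t)}_{S^{t+1}\cup S^t}\|_2^2$ and $\|\boldsymbol{g}^{(t)}_{S^t\cup S^*}\|_2^2$ appear only with a favourable negative sign (Lemma~\ref{lem:HD:prop:1} and the proof of Proposition~\ref{prop:HD:excess_loss}). So they never need an upper bound. RSC is used only in the lower direction, $\|\boldsymbol{g}^{(t)}_{S^t\cup S^*}\|_2^2\ge\phi_l\{\hat\cL_\tau(\bbeta^{(t)})-\hat\cL_\tau(\bbeta^*)\}+\frac34\phi_l^2\|\bdelta^{(t)}\|_2^2$, which is exactly what the function-value events can deliver. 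Your Laplace-noise control does match Lemma~\ref{lem:HD.max.Laplace}.
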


 Theorem \ref{thm:dp.HD.convergence} establishes an upper bound on the statistical errors of the sparse DP Huber estimator obtained via noisy hard thresholding, conditioned on certain favorable (random) events defined in \eqref{def:event.HD.E0-4}. Recall that the noise scale is set as $\lambda = 2\eta_0\gamma \tau /n$ in Algorithm~\ref{alg:DPHuberhigh}, and the noise scale condition in Theorem~\ref{thm:dp.HD.convergence} is not restrictive. If the condition does not hold, $r_0$ is of a smaller order than the convergence rate of $\|\bbeta^{(T)} - \bbeta^*\|_2$, implying that the initialization $\bbeta^{(0)}$ is sufficiently good and further application of NoisyIHT becomes unnecessary. In what follows, we demonstrate that these events occur with high probability under Assumptions~\ref{assump:design} and \ref{assump:heavy-tail}. The proof of Proposition \ref{prop:events2} is given in Section \ref{proof.e2}.

\begin{proposition}\label{prop:events2} 
    Let Assumptions \ref{assump:design} and \ref{assump:heavy-tail} hold, and set the robustification parameter   $\tau\geq 16\sigma_0$. Then, for any $z\geq 1/2$, the event $\cE=\cE_0 \cap \cE_1 \cap \cE_2 \cap \cE_3 \cap \cE_4$ occurs with probability at least $1 - 5e^{-z}$, provided that $n\geq C_{0}'(s\log p +z)$, and the   parameters $(r_0, \gamma, \phi_l, \phi_u,r_1)$ in \eqref{def:event.HD.E0-4} are chosen as 
    \begin{align*}
    &~~~~ r_0 =  \frac{\tau}{16\sqrt{\kappa_4\lambda_1}}  \,,~~ \gamma = C_1' \sqrt{\log p+\log n+z}\,,~~ \phi_l= \frac{\lambda_p}{8}\,, ~~\phi_u=\lambda_1 \,,\\
    &~~~~~\mbox{and}~~r_1 = C_2' \bigg\{\frac{\sigma_{\iota}^{2+\iota}} {\tau^{1+\iota} }   +     \sigma_0 \sqrt{\frac{s\log (ep/s) + z}{n}} +  \tau \frac{s\log (ep/s) + z}{n}  \bigg\}\,,
\end{align*} 
   where $C_0'$, $C_1'$ and $C_2'$ are   positive constants depending only on $(\upsilon_1,\lambda_1,\lambda_p,\kappa_4)$.  
\end{proposition}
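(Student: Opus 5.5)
We prove Proposition~\ref{prop:events2} by bounding each event separately and taking a union bound, each failure probability being at most $e^{-z}$.

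\textbf{Events $\cE_0$ and $\cE_2$.} For $\cE_0$, each coordinate $x_{i,j}$ is sub-Gaussian with parameter $\lesssim \upsilon_1\lambda_1^{1/2}$ by Assumption~\ref{assump:design}, taking $\bu=\bSigma^{1/2}\be_j/\|\bSigma^{1/2}\be_j\|_2$. A union bound over $np$ entries then gives $\max_{i,j}|x_{i,j}|\le C_1'\sqrt{\log p+\log n+z}$ with probability at least $1-e^{-z}$. For $\cE_2$, we use the same reasoning as in \eqref{global.smootheness}: since $\psi_\tau$ is $1$-Lipschitz, the Bregman remainder is at most $\frac12(\bbeta_1-\bbeta_2)^\T(n^{-1}\sn\bx_i\bx_i^\T)(\bbeta_1-\bbeta_2)$. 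Here $\bbeta_1-\bbeta_2$ is $2s$-sparse, so we need only the $2s$-sparse upper eigenvalue of $n^{-1}\sn\bx_i\bx_i^\T$. A net argument over supports gives $\binom{p}{2s}\,5^{2s}$ points, and Bernstein's inequality for the sub-exponential variables $\langle\bchi_i,\bu\rangle^2$ then bounds this eigenvalue by $2\lambda_1$ once $n\gtrsim s\log(ep/s)+z$. This yields $\phi_u=\lambda_1$.

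\textbf{Event $\cE_1$.} We mimic the proof of Lemma~\ref{lem:rsc}, restricted to $2s$-sparse directions. The remainder is lower bounded by $\frac12$ times the quadratic form with weights $\mathbbm{1}(|\varepsilon_i|\le\tau/2)\,\mathbbm{1}(|\bx_i^\T(\bbeta-\bbeta^*)|\le\tau/2)$. The condition $\tau\ge16\max\{\sigma_0,(\kappa_4\lambda_1)^{1/2}r_0\}$, which holds by the choice of $r_0$, together with Chebyshev and Cauchy--Schwarz using $\kappa_4$, shows that the expected truncated form is at least $\frac{\lambda_p}{4}\|\cdot\|_2^2$. We replace the indicators by Lipschitz surrogates and control the uniform deviation over sparse unit vectors and $\bbeta\in\Theta(r_0)\cap\mathbb{H}(s)$. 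Lemma~\ref{lem:bousquet} together with a symmetrization/contraction bound on the expectation gives a deviation of order $\sqrt{(s\log(ep/s)+z)/n}$. For $n\ge C_0'(s\log p+z)$ this is at most $\lambda_p/8$, which yields $\phi_l=\lambda_p/8$.

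\textbf{Events $\cE_4$ and $\cE_3$.} For $\cE_4$, we write $\sup_{S\in\cS}\|\{\nabla\hat\cL_\tau(\bbeta^*)\}_S\|_2=\sup_{\bu\in\mathbb{S}^{p-1}\cap\mathbb{H}(2s)}n^{-1}\sn\psi_\tau(\varepsilon_i)\langle\bx_i,\bu\rangle$ and split it into a mean (bias) part and a centered part. The bias satisfies $|\EE\{\psi_\tau(\varepsilon_i)|\bx_i\}|\le\EE(|\varepsilon_i|^{2+\iota}|\bx_i)/\tau^{1+\iota}$, because $\EE(\varepsilon_i|\bx_i)=0$; this contributes $\lesssim\sigma_\iota^{2+\iota}/\tau^{1+\iota}$. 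For the centered part, the summands $\psi_\tau(\varepsilon_i)\langle\bx_i,\bu\rangle$ have variance $\lesssim\sigma_0^2$ and bounded $\psi_1$-type behavior with scale $\tau$. Bernstein's inequality plus a $1/2$-net over $2s$-sparse spheres (cardinality $\exp\{Cs\log(ep/s)\}$) give $\sigma_0\sqrt{(s\log(ep/s)+z)/n}+\tau(s\log(ep/s)+z)/n$, which produces $r_1$. For $\cE_3$, the mean value theorem gives $\|\{\nabla\hat\cL_\tau(\bbeta)-\nabla\hat\cL_\tau(\bbeta^*)\}_S\|_2\le\|\bbeta-\bbeta^*\|_2\cdot\sup(\text{sparse eigenvalue})$, because $0\le\psi'_\tau\le1$ and $\bbeta-\bbeta^*$ is $(s+s^*)$-sparse. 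The sparse eigenvalue bound from $\cE_2$ (with sparsity $\le 3s$, assuming $s\ge s^*$) gives $\le\frac32\lambda_1 r_0$.

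The main obstacle is $\cE_1$. The truncation indicators make the empirical process non-smooth, and uniformity must hold simultaneously over $\bbeta$ in a sparse ball and over sparse directions. We will handle this by conditioning on the support union (at most $\binom{p}{3s}$ choices) and applying the low-dimensional Lemma~\ref{lem:rsc} argument on each support with $z$ replaced by $z+Cs\log(ep/s)$.
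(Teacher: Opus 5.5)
Your proposal is correct. For $\cE_0$, $\cE_2$ and $\cE_4$ it is essentially the paper's argument: coordinatewise sub-Gaussian tails with a union over $np$ entries; sparse-eigenvalue control of $\hat\bSigma$ support by support with a union over $|\cS|\le\{ep/(2s)\}^{2s}$ sets; and a bias term $\lesssim\lambda_1^{1/2}\sigma_\iota^{2+\iota}\tau^{-1-\iota}$ plus a Bernstein bound per support and a union bound. You depart from the paper for $\cE_1$ and $\cE_3$.

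For $\cE_1$, the paper (Lemma~\ref{lem:HD.rsc}) keeps one empirical-process argument over all sparse pairs and changes only the final Gaussian-complexity step. It bounds $\EE\sup\mathbb{Z}_{\bbeta_1,\bbeta_2}$ through $\ell_1$--$\ell_\infty$ duality, using $\|\bbeta_1-\bbeta^*\|_1\le\sqrt{2s}\,r/2$, $\|\bbeta_1-\bbeta_2\|_1\le\sqrt{2s}\,\|\bbeta_1-\bbeta_2\|_2$ and a maximal inequality for $\|n^{-1}\sn g_i'\chi_i\bx_i\|_\infty$. This covers all supports at once, at the price of $\sqrt{s\log p/n}$. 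You instead run the low-dimensional proof of Lemma~\ref{lem:rsc} unchanged on each support $J\supseteq S^*$ with $|J|\le 3s$, then union-bound with $z+Cs\log(ep/s)$. This is equally valid, needs no new complexity estimate, and only requires $n\gtrsim s\log(ep/s)+z$.

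For $\cE_3$, the paper (Lemma~\ref{lem:HD.huber.grad_diff.ubd}) splits into a population part bounded by $\lambda_1 r$ and a centred part. The centred part is handled by a concentration bound adapted from prior work on Huber regression, plus another union bound. Your mean-value step writes $\nabla\hat\cL_\tau(\bbeta)-\nabla\hat\cL_\tau(\bbeta^*)=\hat\bSigma_w(\bbeta-\bbeta^*)$, with $\hat\bSigma_w:=n^{-1}\sn w_i\bx_i\bx_i^\T\preceq\hat\bSigma$ and $w_i\in[0,1]$, because $\psi_\tau$ is monotone and $1$-Lipschitz. This makes $\cE_3$ a deterministic consequence of the sparse-eigenvalue event. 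It needs no separate concentration step and costs no extra $e^{-z}$, so you actually get probability $1-4e^{-z}$.

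Three small points need repair.

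First, the eigenvalue bound $2\lambda_1$ you state for $\cE_2$ makes your $\cE_3$ argument give $2\lambda_1 r_0$, not $\frac32\lambda_1 r_0$. Require $\sup_{S\in\cS}\|(\hat\bSigma-\bSigma)_{SS}\|\le\lambda_1/2$ instead; this only enlarges $C_0'$, and that single bound gives both $\cE_2$ and $\cE_3$. Also, with $\bdelta=\bbeta-\bbeta^*$, the bound $\bu^\T\hat\bSigma_w\bdelta\le(\bu^\T\hat\bSigma\bu)^{1/2}(\bdelta^\T\hat\bSigma\bdelta)^{1/2}$ shows you only need $2s$-sparse eigenvalues when $s\ge s^*$, not $3s$. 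The paper uses $s\ge s^*$ implicitly as well.

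Second, a $1/2$-net (your $5^{2s}$ count) does not control a quadratic form. Use a $1/4$-net of each $2s$-sparse sphere applied to the symmetric matrix $(\hat\bSigma-\bSigma)_{SS}$. The summands should be $\langle\bx_i,\bu\rangle^2$ for sparse $\bu$, not $\langle\bchi_i,\bu\rangle^2$.

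Third, in $\cE_1$, an indicator $\mathbbm{1}(|\bx_i^\T(\bbeta-\bbeta^*)|\le\tau/2)$ at a single endpoint does not keep the residual in $[-\tau,\tau]$ along the whole segment from $\bbeta_2$ to $\bbeta_1$. Impose it at both endpoints, or, as in the paper's $\cF_i$, also cap $|\langle\bx_i,\bbeta_1-\bbeta_2\rangle|$ (the truncated quadratic $\varphi_R$ does this). Only then is the lower bound by $\frac12\langle\bx_i,\bbeta_1-\bbeta_2\rangle^2$ valid.
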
 

\subsection{Proof of Theorem \ref{thm:high}}

Similar to the proof of Theorem \ref{thm:low}, the proof of Theorem~\ref{thm:high} follows directly from Theorem~\ref{thm:dp.HD.convergence} and Proposition~\ref{prop:events2}. \qed

\subsection{Proofs of the Intermediate Results}
\subsubsection{Technical Lemmas}

The first lemma ensures that Algorithm \ref{alg:NHT} is differentially private, provided that the vector $\bv = \bv(\bZ)$ exhibits bounded changes in value when any single datum in $\bZ$ is modified.

\begin{lemma}\label{lem.HD.noiseht}
If, for every pair of adjacent datasets $\bZ$ and $\bZ'$, we have $\|\bv(\bZ) - \bv(\bZ')\|_{\infty} < \lambda$, then for $0<\epsilon \leq 0.5$, $0<\delta \leq 0.011$, and $s \geq 10$, the NoisyHT algorithm (Algorithm \ref{alg:NHT}) with a Laplace noise scale of at least $2\epsilon^{-1}\lambda\sqrt{5s \log(1/\delta) } $ is $(\epsilon, \delta)$-DP. 
\end{lemma}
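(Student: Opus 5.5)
Since NoisyHT is exactly the ``peeling'' procedure of \citeS{dwork2018dpfdr} (as used in Lemma~3.3 of \citeS{cai2021costS}), I would prove the lemma by viewing Algorithm~\ref{alg:NHT} as an adaptive composition of $s$ Report-Noisy-Max steps followed by one Laplace release on the selected coordinates. I would then apply the advanced composition property (Lemma~\ref{compositiontheorem}) to the selection steps and the standard composition property (Lemma~\ref{standardcompositiontheorem}) to the final output.

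Write $b=2\epsilon^{-1}\lambda\sqrt{5s\log(1/\delta)}$ for the Laplace scale. First, fix a selection round $i$ and the already chosen set $S$. The map $\bZ\mapsto(|v_j(\bZ)|)_{j\in[p]\setminus S}$ changes by less than $\lambda$ in every coordinate between adjacent datasets, because $||a|-|b||\le|a-b|$. By the standard Report Noisy Max argument (Claim~3.9 of \citeS{dwork2014algorithmicS}), releasing $\arg\max_j(|v_j|+w_{i,j})$ with i.i.d.\ Laplace$(b)$ noise is then $(2\lambda/b)$-DP, i.e.\ $\epsilon_0$-DP with $\epsilon_0=\epsilon/\sqrt{5s\log(1/\delta)}$. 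The output of each round depends on previous outputs only through $S$, so the $s$ rounds form an adaptive composition. Lemma~\ref{compositiontheorem} with $\delta'=\delta/2$ then gives privacy loss
\[
\epsilon_0\sqrt{2s\log(2/\delta)}+s\epsilon_0(e^{\epsilon_0}-1).
\]
Using $e^{x}-1\le 2x$ for small $x$, the second term is at most $2\epsilon^2/\{5\log(1/\delta)\}$. The first term is $\epsilon\sqrt{2\log(2/\delta)/\{5\log(1/\delta)\}}$.

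The constraints $\epsilon\le 0.5$, $\delta\le 0.011$, $s\ge10$ enter only here, to bound the total by $\epsilon/2$. Since $\log(1/\delta)\ge \log(1/0.011)\approx4.5$, we have $\log(2/\delta)\le 1.16\log(1/\delta)$, so the first term is at most $\epsilon\sqrt{0.464}\approx0.68\epsilon$. That is already too large for $\epsilon/2$, which tells me the split must be different. I would therefore allocate the privacy budget as $(\epsilon_{\rm sel},\delta/2)$ for selection and $(\epsilon-\epsilon_{\rm sel},0)$ for the final release, and check that the final release costs little. That release adds Laplace$(b)$ noise to $\bv_S$ with $|S|=s$, whose $\ell_1$-sensitivity is below $s\lambda$. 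This gives pure DP with parameter $s\lambda/b=\epsilon\sqrt{s}/\{2\sqrt{5\log(1/\delta)}\}$, which is not small for large $s$. Hence I would instead treat the final release coordinatewise as $s$ one-dimensional Laplace mechanisms, each $2\epsilon_0$-DP (actually $\epsilon_0$-DP with sensitivity $\lambda$), and fold them into the same advanced-composition bound. That yields $2s$ mechanisms of $\epsilon_0$-DP each, with total
\[
\epsilon_0\sqrt{4s\log(1/\delta)}+2s\epsilon_0^2\cdot(\text{const})=\epsilon\sqrt{4/5}+O\!\left(\epsilon^2/\log(1/\delta)\right),
\]
and the numerical restrictions make this at most $\epsilon$, with $\delta$-part $\delta$ (choose $\delta'=\delta$).

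The post-processing property (Lemma~\ref{post}) then covers forming $\tilde P_s(\bv)+\tilde\bw_S$. The main obstacle is this constant bookkeeping: verifying that $\sqrt{4/5}+c\,\epsilon/\log(1/\delta)\le1$ holds under $\epsilon\le0.5$, $\delta\le0.011$. I would also need to handle carefully that the final noise $\tilde\bw$ is added only on the data-dependent set $S$, conditioning on $S$ so that the composition is adaptive. The condition $s\ge10$ is where I would absorb the residual $e^{\epsilon_0}-1$ slack.
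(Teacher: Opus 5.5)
Your final argument is correct, but your privacy accounting differs from the paper's.

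The paper bundles each peeling round into a single ``Private Max'' mechanism: a noisy selection followed by release of that coordinate's value with a fresh Laplace draw, which is distributionally the same as the final $\tilde\bw_S$. It proves directly that this round is $(3\lambda/b)$-DP with $b=2\epsilon^{-1}\lambda\sqrt{5s\log(1/\delta)}$. The selection costs $2\lambda/b$ via a shift-by-$2\lambda$ argument, and the value release costs $\lambda/b$. Advanced composition over $s$ rounds then gives leading term $\tfrac{3\sqrt{2}}{2\sqrt{5}}\,\epsilon\approx0.949\,\epsilon$. That leaves only about $0.05\,\epsilon$ of slack, which is why the paper needs $\epsilon\le0.5$, $\delta\le0.011$ and $s\ge10$ to push the second-order term below $0.0512\,\epsilon$.

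You instead treat the $s$ selections and the $s$ coordinate releases as $2s$ adaptively chosen pure-DP mechanisms, each bounded by $\epsilon_0=2\lambda/b$, and apply Lemma~\ref{compositiontheorem} with $\delta'=\delta$. This gives a smaller leading term, $\sqrt{4/5}\,\epsilon\approx0.894\,\epsilon$. The remainder satisfies $2s\epsilon_0(e^{\epsilon_0}-1)\le 2\epsilon^2e^{\epsilon_0}/\{5\log(1/\delta)\}\le0.05\,\epsilon$ for every $s\ge1$ when $\epsilon\le0.5$ and $\delta\le0.011$. So your final check passes with room to spare, and $s\ge10$ is not actually needed. Your route avoids proving the combined Private Max claim, while the paper's mirrors the structure of the Dwork--Su--Zhang peeling lemma. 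Both rely on the adaptive form of advanced composition.

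Three fixes for a write-up:
\begin{itemize}
\item Drop the two abandoned attempts: the $\epsilon/2$ split and the $\ell_1$-sensitivity release of $\bv_S$.
\item Correct the parenthetical about the coordinate releases. Each release $v_j+\tilde w_j$ has sensitivity below $\lambda$, so it is $(\lambda/b)=\epsilon_0/2$-DP, not $2\epsilon_0$-DP. Upper-bounding it by $\epsilon_0$ is what lets you use the homogeneous Lemma~\ref{compositiontheorem}.
\item Note that Report Noisy Max as stated for counting queries gives $\lambda/b$. The $2\lambda/b$ bound for general (non-monotone) sensitivity-$\lambda$ scores needs the two-sided shift argument, which the paper carries out explicitly.
\end{itemize}
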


The proof of Lemma \ref{lem.HD.noiseht} is given in Section \ref{proof.lem.HD.noiseht}.
Lemmas \ref{lem:HD.gs} and \ref{lem:HD.rsc} below extend Lemmas \ref{lem:gs} and \ref{lem:rsc} to high-dimensional settings under sparsity. The proofs of Lemmas \ref{lem:HD.gs} and \ref{lem:HD.rsc} are given in Sections \ref{proof.lem:HD.gs} and \ref{proof.lem:HD.rsc}, respectively.

\begin{lemma}\label{lem:HD.gs}
Under Assumption~\ref{assump:design}, for any $z\geq0$, it holds with probability at least $1-e^{-z}$ that 
$$
    \widehat{\mathcal{L}}_\tau (\boldsymbol{\beta}_1 )-\widehat{\mathcal{L}}_\tau (\boldsymbol{\beta}_2 )- \langle\nabla \widehat{\mathcal{L}}_\tau (\boldsymbol{\beta}_2 ), \boldsymbol{\beta}_1-\boldsymbol{\beta}_2 \rangle \leq \lambda_1 \|\bbeta_1 - \bbeta_2\|_2^2  
$$
for all $\bbeta_1, \bbeta_2 \in \mathbb{H}(s)$, provided that $n \gtrsim \upsilon_1^4 (s\log p + z)$. 
\end{lemma}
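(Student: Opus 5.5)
The plan is to reduce the claim to a uniform bound on restricted Hessian quadratic forms, mirroring the proof of Lemma~\ref{lem:gs}. Since $\psi_\tau$ is $1$-Lipschitz, $0\le\psi_\tau'\le 1$ almost everywhere. Applying the integral form of Taylor's theorem along the segment from $\bbeta_2$ to $\bbeta_1$ gives
\begin{align*}
\widehat{\mathcal{L}}_\tau(\bbeta_1)-\widehat{\mathcal{L}}_\tau(\bbeta_2)-\langle\nabla\widehat{\mathcal{L}}_\tau(\bbeta_2),\bbeta_1-\bbeta_2\rangle
\le \frac{1}{2n}\sn \langle \bx_i,\bbeta_1-\bbeta_2\rangle^2 .
\end{align*}
For $\bbeta_1,\bbeta_2\in\mathbb{H}(s)$, the difference $\bdelta=\bbeta_1-\bbeta_2$ is $2s$-sparse and supported on some $S\in\cS$. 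It therefore suffices to show that, with probability at least $1-e^{-z}$,
\begin{align*}
\sup_{S\in\cS}\ \sup_{\bdelta\in\RR^p:\,\mathrm{supp}(\bdelta)\subseteq S,\ \bdelta\neq\bzero}\ \frac{n^{-1}\sn\langle\bx_i,\bdelta\rangle^2}{\|\bdelta\|_2^2}\le 2\lambda_1 .
\end{align*}

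Next I fix a single support $S$ with $|S|=2s$; supersets cover smaller supports. Write $\bx_{i,S}$ for the restricted covariate and $\bSigma_{SS}$ for the corresponding principal submatrix. Assumption~\ref{assump:design}(ii) gives, for any unit $\bu$ supported on $S$, that $\langle\bx_i,\bu\rangle$ is sub-Gaussian with parameter $\upsilon_1\|\bSigma^{1/2}\bu\|_2\le\upsilon_1\lambda_1^{1/2}$. Consequently $\bSigma_{SS}^{-1/2}\bx_{i,S}$ is an isotropic sub-Gaussian vector in $\RR^{2s}$, with sub-Gaussian norm of order $\upsilon_1$ up to the ratio $\lambda_1/\lambda_p$.

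I would then apply the covariance concentration argument already used for $\hat\bS$ in Lemma~\ref{lem:gs}: a $1/4$-net of the sphere in $\RR^{2s}$ of size at most $9^{2s}$, together with Bernstein's inequality for the sub-exponential variables $\langle\bchi_{i},\bu\rangle^2-1$. This yields
\begin{align*}
\big\|\bSigma_{SS}^{-1/2}\widehat\bSigma_{SS}\bSigma_{SS}^{-1/2}-\bI\big\|\lesssim \upsilon_1^2\bigg(\sqrt{\frac{s+t}{n}}+\frac{s+t}{n}\bigg)
\end{align*}
with probability at least $1-2e^{-t}$, where $\widehat\bSigma_{SS}$ is the empirical second-moment matrix of the $\bx_{i,S}$.

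The union bound over supports comes last. There are $\binom{p}{2s}\le (ep/(2s))^{2s}$ such supports, so I take $t=2s\log(ep/(2s))+z+\log 2$. This makes the deviation at most $1$ whenever $n\gtrsim\upsilon_1^4(s\log p+z)$. On that event, for every unit $2s$-sparse $\bdelta$,
\begin{align*}
n^{-1}\sn\langle\bx_i,\bdelta\rangle^2\le 2\,\bdelta^{\T}\bSigma\bdelta\le 2\lambda_1 ,
\end{align*}
and combining this with the Taylor bound above gives the claim with constant $\lambda_1$.

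The main obstacle is tracking the sub-Gaussian constant of the restricted vectors. The standardization in Assumption~\ref{assump:design} uses the full $\bSigma^{-1/2}$, not $\bSigma_{SS}^{-1/2}$, so I expect to work directly with $\langle\bx_i,\bu\rangle$ for sparse unit $\bu$ and normalize by $\bu^{\T}\bSigma\bu\ge\lambda_p$. This puts a factor of order $(\lambda_1/\lambda_p)\upsilon_1^2$ into the deviation, which is absorbed into the $\gtrsim$ in the sample-size condition.
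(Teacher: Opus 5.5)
Your proposal is correct and is essentially the paper's argument. You use the same three steps: the $1$-Lipschitz Taylor bound by $\frac{1}{2n}\sum_{i=1}^n\langle\bx_i,\bbeta_1-\bbeta_2\rangle^2$, a per-support sub-Gaussian covariance concentration bound, and a union bound over at most $(ep/(2s))^{2s}$ supports. The only cosmetic difference is that you control the relative deviation $\|\bSigma_{SS}^{-1/2}\widehat\bSigma_{SS}\bSigma_{SS}^{-1/2}-\bI\|\le 1$, whereas the paper bounds the additive deviation $\sup_{S}\|(\widehat\bSigma-\bSigma)_{SS}\|\le\lambda_1$.

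The $\lambda_1/\lambda_p$ loss you anticipate does not actually occur. For a unit $\bw\in\RR^{|S|}$, let $\bv$ be the zero-padding of $\bSigma_{SS}^{-1/2}\bw$; then $\langle\bw,\bSigma_{SS}^{-1/2}\bx_{i,S}\rangle=\langle\bSigma^{1/2}\bv,\bSigma^{-1/2}\bx_i\rangle$ with $\|\bSigma^{1/2}\bv\|_2^2=\bw^{\T}\bw=1$. So the standardized restricted vector is sub-Gaussian with parameter exactly $\upsilon_1$.
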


\begin{lemma} \label{lem:HD.rsc}
    Let Assumptions~\ref{assump:design} and \ref{assump:heavy-tail} hold. For any $r>0$, let the robustification parameter $\tau$ satisfy $\tau \geq 16 \max\{\sigma_0,  (\kappa_4\lambda_1)^{1/2}r\}$, where $\kappa_4$ is defined in (\ref{def:kappa4}). Then, for any $z\geq 0$,  with probability at least $1-e^{-z}$, 
\begin{align*}
    \hat \cL_\tau(\bbeta_1)  - \hat \cL_\tau (\bbeta_2) - \langle \nabla \hat \cL_\tau(\bbeta_2), \bbeta_1-\bbeta_2 \rangle \geq  \frac{\lambda_p}{8}  \|\bbeta_1 - \bbeta_2\|_2^2 
\end{align*} 
holds uniformly for $\bbeta_1 \in \Theta(r/2)\cap \mathbb{H}(s)$ and $\bbeta_2 \in \{ \bbeta_1 + \BB^p(r)  \} \cap \mathbb{H}(s)$, provided that $n \gtrsim \kappa_4 \upsilon_1^2  (\lambda_1 / \lambda_p )^2 (s\log p + z)$. The same bound also applies when $\bbeta_2 \in \Theta(r/2)\cap \mathbb{H}(s)$ and $\bbeta_1 \in \{ \bbeta_2 + \BB^p(r) \} \cap \mathbb{H}(s)$. 
\end{lemma}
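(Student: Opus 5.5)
The plan is to follow the proof of Lemma~\ref{lem:rsc} (the low-dimensional restricted strong convexity) and replace its uniform-over-$\RR^p$ empirical process step with a union over sparse supports.

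\textbf{Reduction to curvature along the segment.} Fix $\bbeta_1,\bbeta_2$ and write $\bdelta=\bbeta_2-\bbeta_1$. Since $\psi_\tau$ is absolutely continuous with $\psi_\tau'(u)=\mathbbm{1}(|u|\le\tau)$, we have
\[
\hat \cL_\tau(\bbeta_1)-\hat \cL_\tau(\bbeta_2)-\langle\nabla\hat \cL_\tau(\bbeta_2),\bbeta_1-\bbeta_2\rangle=\int_0^1\frac1n\sum_{i=1}^n\mathbbm{1}\big(|y_i-\bx_i^\T(\bbeta_2-t\bdelta)|\le\tau\big)(\bx_i^\T\bdelta)^2(1-t)\,{\rm d}t .
\]
All points on the segment lie within distance $3r/2$ of $\bbeta^*$. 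On the event $\{|\varepsilon_i|\le\tau/2\}\cap\{|\bx_i^\T\bdelta|\le \tau/(2\cdot 3/2\cdot\|\bdelta\|_2^{-1} r)\}$, suitably normalized, the indicator equals one. Rescaling by homogeneity, the quantity is therefore bounded below by
\[
\frac{\|\bdelta\|_2^2}{2}\inf_{\bu}\frac1n\sum_{i=1}^n \varphi_{c\tau/r}(\bx_i^\T\bu)\,\mathbbm{1}(|\varepsilon_i|\le\tau/2),
\]
where $\bu=\bdelta/\|\bdelta\|_2$ ranges over unit vectors supported on some $S$ with $|S|\le 2s$. Here $\varphi_R(x)=x^2\mathbbm{1}(|x|\le R/2)+(x-R)^2\mathbbm{1}(R/2<|x|\le R)$ is the Lipschitz truncated square used in Lemma~\ref{lem:rsc}. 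The sparsity of $\bbeta_1,\bbeta_2\in\mathbb{H}(s)$ is what forces $\bu$ to have support of size at most $2s$.

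\textbf{Expectation bound.} The population version is identical to the low-dimensional case. It equals $\bu^\T\bSigma\bu$ minus two error terms: $\EE\{(\bx^\T\bu)^2\mathbbm{1}(|\bx^\T\bu|>c\tau/(2r))\}$, which is at most $\kappa_4\lambda_1^2 (2r/c\tau)^2$ by Cauchy--Schwarz and Markov, and $\EE\{(\bx^\T\bu)^2\mathbbm{1}(|\varepsilon|>\tau/2)\}\le \lambda_1 4\sigma_0^2/\tau^2$ by conditioning on $\bx$. The condition $\tau\ge16\max\{\sigma_0,(\kappa_4\lambda_1)^{1/2}r\}$ makes both errors small, so the mean is at least $\lambda_p/2$, say.

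\textbf{Uniform deviation over sparse directions.} This is the main obstacle, and it is handled support by support. For a fixed $S$, the supremum over unit $\bu$ supported on $S$ of the centered empirical process is bounded as in Lemma~\ref{lem:rsc}: symmetrization and the Ledoux--Talagrand contraction (since $\varphi_R$ is $R$-Lipschitz and bounded by $R^2/4$) bound its expectation by roughly $\lambda_1^{1/2}\upsilon_1 R\sqrt{|S|/n}$. Bousquet's inequality, Lemma~\ref{lem:bousquet}, then gives a tail bound of order $\sqrt{\kappa_4}\lambda_1\sqrt{z'/n}+R^2z'/n$.

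We then take a union over all $\binom{p}{2s}\le (ep/2s)^{2s}$ supports by setting $z'=z+2s\log(ep/2s)$. The bounded term $R^2z'/n$ with $R\asymp\tau/r$ is the delicate piece. I would instead normalize by working with $\varphi_R(x)/R^2$-type scaling as in the original proof, and absorb constants using $n\gtrsim\kappa_4\upsilon_1^2(\lambda_1/\lambda_p)^2(s\log p+z)$, so that the total deviation is at most $3\lambda_p/8$. This yields the curvature lower bound $\lambda_p/8$ uniformly, with probability $1-e^{-z}$.

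The symmetric statement follows by exchanging roles. The segment argument only used that one endpoint lies in $\Theta(r/2)$, so all points between lie in $\Theta(3r/2)$.
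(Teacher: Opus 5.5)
Your reduction step has a genuine gap. At a point $\bbeta=\bbeta_1+(1-t)(\bbeta_2-\bbeta_1)$ of the segment, the Hessian weight is $\mathbbm{1}\{|\varepsilon_i-\bx_i^\T(\bbeta_1-\bbeta^*)-(1-t)\bx_i^\T(\bbeta_2-\bbeta_1)|\le\tau\}$. It therefore depends on the per-observation projection $\bx_i^\T(\bbeta_1-\bbeta^*)$, not only on $\varepsilon_i$ and $\bx_i^\T\bdelta$. Knowing that the segment lies in $\Theta(3r/2)$ gives no pointwise control of this projection: the $\bx_i$ are unbounded, and $\bbeta-\bbeta^*$ is not parallel to $\bdelta$. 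So on your event $\{|\varepsilon_i|\le\tau/2\}\cap\{|\bx_i^\T\bdelta|\le\tau\|\bdelta\|_2/(3r)\}$ the indicator need not equal one, and the lower bound by an infimum over directions $\bu$ alone is unjustified.

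The paper adds the event $\{|\langle\bx_i,\bbeta_1-\bbeta^*\rangle|\le\tau/4\}$, smoothed as the factor $\phi_{\tau/4}(\langle\bx_i,\bbeta_1-\bbeta^*\rangle)$ in $V_n(\bbeta_1,\bbeta_2)$. This costs a third error term in the mean, $\EE\{\langle\bx_i,\bv\rangle^2\mathbbm{1}(|\langle\bx_i,\bbeta_1-\bbeta^*\rangle|>\tau/8)\}\le\kappa_4\lambda_1(4r/\tau)^2\|\bv\|_{\bSigma}^2$. More importantly, the empirical process then depends on the location $\bbeta_1\in\Theta(r/2)\cap\mathbb{H}(s)$ as well as on the direction, so the deviation bound must be uniform in both. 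The paper handles this by Sudakov--Fernique comparison with $\mathbb{Z}_{\bbeta_1,\bbeta_2}$. It controls the first component via $\|\bbeta_1-\bbeta^*\|_1\le\sqrt{s+s^*}\,r/2$ and an $\ell_\infty$ maximal inequality, rather than a union over supports. Your union-over-supports route can work, but then each support must also contain $\mathrm{supp}(\bbeta^*)$, i.e.\ $|S|\le 2s+s^*$, and within each support you must rerun the full two-parameter argument of Lemma~\ref{lem:rsc}.

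Second, your truncation level $R\asymp\tau/r$ for $\bx_i^\T\bu$ is unbounded, since the hypothesis only bounds $\tau/r$ from below. Both the Lipschitz constant in the contraction step and the envelope $R^2/4$ in Lemma~\ref{lem:bousquet} then grow with $\tau/r$. You would need $n\gtrsim(\tau/r)^2(s\log p+z)$ instead of the stated condition. Rescaling $\varphi_R$ by $R^2$ rescales mean and deviation alike, so it does not remove this.

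The missing observation is that $\varphi_R$ is nondecreasing in $R$. You may therefore truncate at the constant level $\tau_0/(2r)=8(\kappa_4\lambda_1)^{1/2}$, with $\tau_0=16(\kappa_4\lambda_1)^{1/2}r\le\tau$. By the fourth-moment bound, the lost mass is at most the fraction $\kappa_4\lambda_1(4r/\tau_0)^2=1/16$ of $\|\bv\|_{\bSigma}^2$, so the mean stays above $c_0\lambda_p$ with $c_0=(15/16)(7/8)$. The Lipschitz constant and the envelope $16\kappa_4\lambda_1$ are then free of $\tau$.

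Finally, with your prefactor $1/2$, a mean of $\lambda_p/2$ and a deviation of $3\lambda_p/8$ give $\lambda_p/16$, not $\lambda_p/8$. You need the deviation below $(c_0-1/4)\lambda_p$.
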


Next, we derive the   properties of the gradient of the empirical Huber loss at the true coefficient vector $\bbeta^*$. 

\begin{lemma}\label{lem:HD.huber.grad_diff.ubd}
   Under Assumption~\ref{assump:design}, for any $z\geq 1/2$ and $r>0$, it holds with probability at least $1-e^{-z}$ that
    $$
       \sup_{\bbeta \in \Theta(r)\cap\mathbb{H}(s),  \,  S\in \cS } \big\| \{ \nabla\hat \cL_\tau(\bbeta)  - \nabla\hat \cL_\tau (\bbeta^*) \}_S\big\|_2 \leq  \frac{3}{2}\lambda_1 r \,,
    $$
    provided that $n \gtrsim \upsilon_1^4 (s\log p +z)$. 
\end{lemma}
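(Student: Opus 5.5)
The plan is to use the exact mean value form of the gradient difference and reduce the bound to a uniform operator-norm estimate over sparse directions. Since $\psi_\tau$ is $1$-Lipschitz and absolutely continuous, for any $\bbeta$ we can write
\[
\nabla\hat\cL_\tau(\bbeta)-\nabla\hat\cL_\tau(\bbeta^*)=\frac1n\sum_{i=1}^n\big\{\psi_\tau(\varepsilon_i)-\psi_\tau(\varepsilon_i-\bx_i^\T\bdelta)\big\}\bx_i=\bigg\{\frac1n\sum_{i=1}^n w_i\,\bx_i\bx_i^\T\bigg\}\bdelta ,
\]
where $\bdelta=\bbeta-\bbeta^*$ and $w_i=\int_0^1\psi_\tau'(\varepsilon_i-t\bx_i^\T\bdelta)\,{\rm d}t\in[0,1]$. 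For $\bbeta\in\mathbb{H}(s)$ we have $\bdelta\in\mathbb{H}(2s)$ because $\bbeta^*$ is $s^*\le s$ sparse, which I take as implicit in the lemma. Fix $S\in\cS$, let $J={\rm supp}(\bdelta)$ and $\bu=\{\nabla\hat\cL_\tau(\bbeta)-\nabla\hat\cL_\tau(\bbeta^*)\}_S/\|\cdot\|_2$. Then the quantity of interest equals $n^{-1}\sum_i w_i(\bu^\T\bx_i)(\bx_i^\T\bdelta)$. By Cauchy--Schwarz together with $0\le w_i\le 1$, this is at most
\[
\Big(n^{-1}\sum_i(\bu^\T\bx_i)^2\Big)^{1/2}\Big(n^{-1}\sum_i(\bdelta^\T\bx_i)^2\Big)^{1/2}.
\]
Both $\bu$ and $\bdelta$ are supported on sets of size at most $2s$. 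The problem therefore reduces to the deterministic-after-conditioning bound $\sup_{\bv\in\mathbb{H}(2s)\cap\SSS^{p-1}} n^{-1}\sum_i(\bv^\T\bx_i)^2\le \tfrac32\lambda_1$. This would give the factor $\tfrac32\lambda_1 r$ directly, using $\|\bdelta\|_2\le r$. The bound holds uniformly over $\bbeta$ and $S$, with no dependence on $\tau$ or on the noise, which matches the fact that the lemma only assumes Assumption~\ref{assump:design}.

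The remaining step is a restricted upper-eigenvalue bound for the sample second-moment matrix $\hat\bSigma_n=n^{-1}\sum_i\bx_i\bx_i^\T$. Since $\bv^\T\bSigma\bv\le\lambda_1$, it suffices to show that
\[
\sup_{\bv\in\mathbb{H}(2s)\cap\SSS^{p-1}}|\bv^\T(\hat\bSigma_n-\bSigma)\bv|\le\tfrac12\lambda_1
\]
with probability at least $1-e^{-z}$. I would handle this in three stages:

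\begin{itemize}
\item For each fixed support $J$ with $|J|=2s$, write $\bv=\bSigma^{-1/2}$-whitened coordinates via $\bchi_i=\bSigma^{-1/2}\bx_i$. Each $(\bv^\T\bx_i)^2=(\bv^\T\bSigma^{1/2}\bchi_i)^2$ is sub-exponential with $\psi_1$-norm $\lesssim\upsilon_1^2\lambda_1$ by Assumption~\ref{assump:design}(ii).
\item Apply a $1/4$-net argument on the unit sphere of $\RR^J$, of cardinality $9^{2s}$, together with Bernstein's inequality. This gives a deviation of order $\upsilon_1^2\lambda_1\{\sqrt{(s+z')/n}+(s+z')/n\}$ on that support. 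This is essentially the same computation as in the proof of Lemma~\ref{lem:gs}, restricted to $2s$ coordinates.
\item Take a union bound over the $\binom{p}{2s}\le (ep/2s)^{2s}$ supports by setting $z'=z+2s\log(ep/2s)$.
\end{itemize}

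The resulting deviation is at most $\tfrac12\lambda_1$ as soon as $n\gtrsim\upsilon_1^4(s\log p+z)$. In fact, the events needed here are of the same type as those already used in Lemma~\ref{lem:HD.gs}, so the two could share the event.

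I expect the main obstacle to be bookkeeping rather than conceptual. First, the sparsity level must be tracked correctly: the difference vector lives in $\mathbb{H}(2s)$ and $S$ has size up to $2s$, so the restricted eigenvalue must be controlled at level $2s$ (or $4s$ if one prefers to bound $\bu$ and $\bdelta$ jointly via their union support). Second, the constant $\tfrac32$ has to come out cleanly from the concentration step, with the sample-size constant absorbing $\upsilon_1^4$. The requirement $z\ge1/2$ presumably serves only to absorb the $\log 2$ and net constants into $z$.
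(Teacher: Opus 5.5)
Your proof is correct, and it takes a different route from the paper's. The paper splits the restricted gradient difference into a mean and a centered fluctuation:
\begin{itemize}
\item the mean is at most $\|\bbeta-\bbeta^*\|_{\bSigma}\|\bu\|_{\bSigma}\le\lambda_1 r$, by the $1$-Lipschitz property of $\psi_\tau$;
\item the fluctuation is bounded for each fixed $S$ by an empirical-process estimate adapted from inequality (2.5) of Chen and Zhou (2020), then union-bounded over $\cS$ to give $C\lambda_1\upsilon_1^2 r\sqrt{\{s\log(ep/s)+z\}/n}\le\lambda_1 r/2$.
\end{itemize}

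You instead write the difference pathwise as $(n^{-1}\sum_i w_i\bx_i\bx_i^\T)\bdelta$ with $w_i\in[0,1]$ and apply Cauchy--Schwarz. The lemma then becomes a deterministic consequence of a single restricted upper-eigenvalue event for $n^{-1}\sum_i\bx_i\bx_i^\T$ over $2s$-sparse unit vectors. This is essentially the event established in the proof of Lemma~\ref{lem:HD.gs}. The only change is that the deviation threshold is tightened from $\lambda_1$ to $\lambda_1/2$, which affects only the constant in $n\gtrsim\upsilon_1^4(s\log p+z)$.

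Your route has several advantages:
\begin{itemize}
\item it is more elementary (a net plus Bernstein, with no symmetrization or chaining);
\item it does not need $z\ge 1/2$;
\item it makes the union over supports of both $\bbeta$ and $S$ explicit, whereas the paper's fixed-$S$ step leaves the supremum over supports of $\bbeta$ implicit;
\item it lets the events $\cE_2$ and $\cE_3$ share a single high-probability event.
\end{itemize}

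The paper's decomposition instead controls the deviation of the gradient difference from its mean. That is the sharper kind of statement needed for the Bahadur representation in Lemma~\ref{lem.Bahadur}. For the present lemma only an upper bound is required, so your cruder pathwise bound loses nothing, and both arguments give the same leading constant $\lambda_1$. Both also use $s^*\le s$ so that $\bbeta-\bbeta^*\in\mathbb{H}(2s)$, as you noted.
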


\begin{lemma}\label{lem:HD.huber.grad_oracle.ubd}
    Under Assumptions \ref{assump:design} and \ref{assump:heavy-tail}, for any $z>0$, it holds with probability at least $1-e^{-z}$ that
    $$
    \sup_{S\in \cS} \big\| \{ \nabla\hat \cL_\tau (\bbeta^*)\}_S \big\|_2 \lesssim  \lambda_1^{1/2} \bigg\{  \frac{\sigma_{\iota}^{2+\iota}}{\tau^{1+\iota}} + \sigma_0 \sqrt{\frac{s\log(ep/s) + z}{n}} + \tau \frac{s\log(ep/s) + z}{n} \bigg\} \,.
    $$
\end{lemma}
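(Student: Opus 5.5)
We must prove Lemma \ref{lem:HD.huber.grad_oracle.ubd}: a uniform bound on the sparse-restricted norm of $\nabla\hat\cL_\tau(\bbeta^*) = -n^{-1}\sn \psi_\tau(\varepsilon_i)\bx_i$.

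The plan is to reduce to a supremum over a net and split into a bias part and a centered fluctuation. For $S\in\cS$, write $\|\{\nabla\hat\cL_\tau(\bbeta^*)\}_S\|_2=\sup_{\bu\in\SSS^{p-1},\,{\rm supp}(\bu)\subseteq S}\langle \bu,\nabla\hat\cL_\tau(\bbeta^*)\rangle$. The relevant set of directions is therefore the union over $|S|\le 2s$ of unit spheres in $\RR^S$.

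\textbf{Bias term.} We bound $\|\{\EE\nabla\hat\cL_\tau(\bbeta^*)\}_S\|_2$. Since $\EE(\varepsilon_i\,|\,\bx_i)=0$, we have
\[
|\EE\{\psi_\tau(\varepsilon_i)\,|\,\bx_i\}| = |\EE\{(\varepsilon_i-\psi_\tau(\varepsilon_i))\,|\,\bx_i\}| \le \EE\{|\varepsilon_i|\mathbbm{1}(|\varepsilon_i|>\tau)\,|\,\bx_i\} \le \sigma_\iota^{2+\iota}/\tau^{1+\iota}.
\]
Hence
\[
\langle\bu,\EE\psi_\tau(\varepsilon_i)\bx_i\rangle \le \frac{\sigma_\iota^{2+\iota}}{\tau^{1+\iota}}\,\EE|\langle\bu,\bx_i\rangle| \le \lambda_1^{1/2}\frac{\sigma_\iota^{2+\iota}}{\tau^{1+\iota}}.
\]

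\textbf{Fluctuation.} For a fixed $S$, take a $1/2$-net $\cN_S$ of the unit sphere in $\RR^S$, of cardinality at most $5^{2s}$. Then $\|\bv_S\|_2\le 2\max_{\bu\in\cN_S}\langle\bu,\bv\rangle$. The union over $S$ has total cardinality at most $\binom{p}{2s}5^{2s}\le\exp\{C s\log(ep/s)\}$. For a fixed $\bu$, the summands $\xi_i=\psi_\tau(\varepsilon_i)\langle\bu,\bx_i\rangle-\EE(\cdot)$ satisfy:
\begin{itemize}
\item[(a)] $\EE\xi_i^2\le\sigma_0^2\EE\langle\bu,\bx_i\rangle^2\le\sigma_0^2\lambda_1$, because $\psi_\tau^2(\varepsilon)\le\varepsilon^2$ and conditioning on $\bx_i$ gives the bound;
\item[(b)] $\EE|\xi_i|^k\le \tau^{k-2}\sigma_0^2\,\EE|\langle\bu,\bx_i\rangle|^k\lesssim k!\,(C\upsilon_1\lambda_1^{1/2}\tau)^{k-2}\sigma_0^2\lambda_1$, using $|\psi_\tau|\le\tau$ together with the sub-Gaussian moments of $\langle\bu,\bx_i\rangle$ from Assumption \ref{assump:design}.
\end{itemize}
Bernstein's inequality (moment version) then gives, with probability at least $1-2e^{-x}$,
\[
\frac{1}{n}\sum_i\xi_i\lesssim\lambda_1^{1/2}\bigl(\sigma_0\sqrt{x/n}+\tau x/n\bigr).
\]
Take $x=Cs\log(ep/s)+z$ and apply a union bound over the combined net. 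This yields the middle and last terms of the lemma uniformly over $S\in\cS$ with probability at least $1-e^{-z}$, after adjusting constants.

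\textbf{Main obstacle.} The delicate step is the Bernstein moment condition (b). The summands are unbounded, because $\langle\bu,\bx_i\rangle$ is only sub-Gaussian, so Lemma \ref{lem:bousquet} does not apply directly. I would handle this by conditioning on $\bx_i$ to use $|\psi_\tau|\le\tau$ and $\EE(\psi_\tau^2\,|\,\bx_i)\le\sigma_0^2$. The $k$-th moment then reduces to $\tau^{k-2}\sigma_0^2\EE|\langle\bu,\bx_i\rangle|^k$, and sub-Gaussian moment growth gives $(Ck)^{k/2}\lambda_1^{k/2}\upsilon_1^k\le k!\,C^k\lambda_1^{k/2}\upsilon_1^k$. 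The bias centering only changes constants. Finally, combine the bias and fluctuation terms using $\|\bv_S\|_2\le\|(\bv-\EE\bv)_S\|_2+\|(\EE\bv)_S\|_2$, where the bias part uses the direct sup over unit $\bu$ rather than the net.
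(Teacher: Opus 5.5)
Your proposal is correct and follows essentially the same route as the paper: bound the bias via $|\EE\{\psi_\tau(\varepsilon_i)\,|\,\bx_i\}|\le\sigma_\iota^{2+\iota}\tau^{-1-\iota}$ together with $\EE|\langle\bu,\bx_i\rangle|\le\lambda_1^{1/2}$, then control the centered part on each fixed support $S$ with a Bernstein-type bound and take a union bound over $\cS$. The only difference is that you carry out the fixed-$S$ step explicitly, using a $1/2$-net and the moment form of Bernstein's inequality, whereas the paper invokes the argument for (B.8) in Chen and Zhou (2020).
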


The proofs of Lemmas \ref{lem:HD.huber.grad_diff.ubd} and \ref{lem:HD.huber.grad_oracle.ubd} are given in Sections \ref{proof.lem:HD.huber.grad_diff.ubd} and \ref{proof.lem:HD.huber.grad_oracle.ubd}, respectively. Lemma \ref{lem:HD.max.Laplace} provides a high probability bound on the maximum scale of noise injected per iteration due to the `peeling' process of Algorithm~\ref{alg:DPHuberhigh}. The proof of Lemma \ref{lem:HD.max.Laplace} is given in Section \ref{proof.lem:HD.max.Laplace}.

\begin{lemma} \label{lem:HD.max.Laplace}
    Let $\bw^t_1, \ldots, \bw^t_s, \widetilde{\bw}^t \in \RR^p$ for $t\in \{0\}\cup [T-1]$, with some $T\geq 1$, be independent random vectors whose entries independently follow the Laplace distribution $\operatorname{Laplace}(\lambda_{\operatorname{dp}})$, where $ \lambda_{{\rm dp}} = 2\epsilon^{-1}T \lambda \sqrt{5s \log( T/\delta)}$. Let
     $$
    W_t := \sum_{i=1}^s  \|\bw^t_i\|_{\infty}^2+\|\widetilde{\bw}_{S^{t+1}}^t\|_2^2   \quad \text{for}~~ t\in \{0\}\cup [T-1]\,, 
    $$  
    where $S^{t+1} \subseteq  [p]$ with $|S^{t+1}|\leq s$, and $\widetilde{\bw}^t_{S^{t+1}} \in \RR^{p}$ is obtained by setting all entries of $\widetilde{\bw}^t$ to $0$, except those with indices in $S^{t+1}$. Then, there exists an absolute constant $C_0>0$ such that for any $z>0$, the event 
\begin{align*}
\cE_W = \bigg\{\max_{t\in \{0\}\cup [T-1]} W_t \leq C_0  \bigg( \frac{sT \lambda}{\epsilon} \bigg)^2 \{\log (p s T)+z\}^2   \log\bigg(\frac{T}{\delta}\bigg)   \bigg\}\,, 
\end{align*}
occurs with probability at least $1-2e^{-z}$ (over $\{\bw^t_1, \ldots, \bw^t_s, \widetilde{\bw}^t\}_{t=0}^{T-1}$).
\end{lemma}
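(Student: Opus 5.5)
We need to prove Lemma \ref{lem:HD.max.Laplace}: a high-probability bound on $\max_t W_t$, where $W_t=\sum_{i=1}^s\|\bw_i^t\|_\infty^2+\|\widetilde\bw^t_{S^{t+1}}\|_2^2$ and all entries are i.i.d.\ Laplace$(\lambda_{\rm dp})$ with $\lambda_{\rm dp}=2\epsilon^{-1}T\lambda\sqrt{5s\log(T/\delta)}$. The plan is a union bound over the at most $T(s+1)p$ Laplace entries, followed by a crude deterministic bound on $W_t$.

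First, I would fix a tail bound for a single Laplace variable: $\PP(|w|>\lambda_{\rm dp}u)=e^{-u}$ for $u\ge 0$. The collection $\{w^t_{i,j}:t\in\{0\}\cup[T-1],\,i\in[s],\,j\in[p]\}\cup\{\widetilde w^t_j\}$ has at most $T(s+1)p\le 2psT$ elements. Set $u=\log(2psT)+z$. A union bound then gives, with probability at least $1-e^{-z}$,
\[
\max_{t,i}\|\bw_i^t\|_\infty\le\lambda_{\rm dp}\{\log(2psT)+z\}
\quad\text{and}\quad
\max_t\|\widetilde\bw^t\|_\infty\le\lambda_{\rm dp}\{\log(2psT)+z\}.
\]
One could also split this into two events each of probability $1-e^{-z}$ to match the stated $1-2e^{-z}$.

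Second, on this event I would bound $W_t$ deterministically. Since $|S^{t+1}|\le s$, we have $\|\widetilde\bw^t_{S^{t+1}}\|_2^2\le s\|\widetilde\bw^t\|_\infty^2$. Hence $W_t\le 2s\,\lambda_{\rm dp}^2\{\log(2psT)+z\}^2$ for every $t$, regardless of how $S^{t+1}$ is chosen. This is important because $S^{t+1}$ is data- and noise-dependent; using the sup-norm over all $p$ coordinates removes any dependence on it. Substituting $\lambda_{\rm dp}^2=20\,\epsilon^{-2}T^2\lambda^2 s\log(T/\delta)$ yields
\[
W_t\le 40\Big(\frac{sT\lambda}{\epsilon}\Big)^2\{\log(2psT)+z\}^2\log(T/\delta).
\]

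Finally, I would absorb constants. Note $\log(2psT)\le 2\log(psT)$ whenever $psT\ge 2$, and the case $psT=1$ is handled by $z>0$ with a larger constant. Thus $\{\log(2psT)+z\}^2\le 4\{\log(psT)+z\}^2$, and the bound holds with an absolute $C_0$ (e.g.\ $C_0=160$).

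There is no real obstacle here. The only point needing care is the choice of noise scale. The algorithm as stated uses privacy parameters $(\epsilon/T,\delta/T)$, so the per-step scale is exactly $\lambda_{\rm dp}$ above. If the advanced-composition scale is used instead, it is smaller, and the bound still holds.
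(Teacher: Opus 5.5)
Your proposal is correct and is essentially the paper's argument (Laplace tail bound, union bound over the noise entries, then summing), with one harmless variation. You bound $\|\widetilde{\bw}^t_{S^{t+1}}\|_2^2\le s\|\widetilde{\bw}^t\|_\infty^2$ using all $p$ coordinates, whereas the paper union-bounds only over the $s$ coordinates in $S^{t+1}$ to get $\log(sT)$. That step tacitly uses that $\widetilde{\bw}^t$ is drawn independently of $S^{t+1}$; your version needs no such independence and costs nothing, since $\log(pT)\le\log(psT)$.

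One small correction. Your claim that $psT=1$ is ``handled by $z>0$ with a larger constant'' is false as stated, because $(\log 2+z)/z$ is unbounded as $z\to0$. It is moot, though, and there are two easy fixes. You can use the two-event split you mention, with thresholds $\lambda_{\rm dp}\{\log(psT)+z\}$ for the $\bw^t_i$ and $\lambda_{\rm dp}\{\log(pT)+z\}$ for the $\widetilde{\bw}^t$, each failing with probability at most $e^{-z}$; this gives $C_0=40$ and exactly the stated $1-2e^{-z}$. Alternatively, observe that $1-2e^{-z}\le 0$ when $z\le\log 2$, so the claim is vacuous there.
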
 

Lemma \ref{lem.cons} below  extends Lemma {\rm 1} from \citeS{jain2014iterativeS} and Lemma {\rm A.3} from \citeS{cai2021costS} by accounting for the noise in the private selection of the support set. The proof of Lemma \ref{lem.cons} is given in Section \ref{proof.lem.cons}.

\begin{lemma}\label{lem.cons}
    Let $\tilde P_s$ be defined as in Algorithm \ref{alg:NHT}. For any $\bxi \in \RR^p$, let $S$ be the index set determined by Algorithm \ref{alg:NHT} with $\bv=\bxi$. Moreover, let $\hat{\bxi}\in \mathbb{R}^p$ such that $\|\hat{\bxi}\|_0 = \smallhat{s}\leq s$. Then, for any $c>0$ and index set $\tilde{S}\subseteq [p]$ satisfying $S\subseteq \tilde{S}$,
 \begin{align}\label{cons.bound}
     \| \{ \tilde P_s(\bxi) - \bxi\}_{\tilde{S}} \|_2^2 \leq (1+c^{-1})\frac{|\tilde{S}|-s}{|\tilde{S}|-\smallhat{s}}\| 
\{\hat{\bxi}-\bxi\}_{\tilde{S}} \|_2^2 + 4(1+c)\sum_{i\in[s]}\|\bw_i\|_{\infty}^2\,,
 \end{align}
 where $\{\bw_1,\ldots,\bw_s\}$ are specified in Algorithm \ref{alg:NHT}.
\end{lemma}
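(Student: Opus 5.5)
The plan is to adapt the deterministic argument for Lemma~1 of \citeS{jain2014iterativeS}, as in Lemma~A.3 of \citeS{cai2021costS}, tracking the extra error from the noisy peeling selection. Write $\bv=\bxi$ and $S=\{j_1,\ldots,j_s\}$ for the indices selected in order by Algorithm~\ref{alg:NHT}. Since $\tilde P_s(\bxi)=\bxi_S$ and $S\subseteq\tilde S$, the left side of \eqref{cons.bound} equals $\|\bxi_{\tilde S\setminus S}\|_2^2$. The whole task is therefore to bound the mass of $\bxi$ on the unselected part of $\tilde S$.

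First, compare the selected set with the ideal top-$s$ set. At step $i$ the selected index $j_i$ satisfies $|\xi_{j_i}|+w_{i,j_i}\ge |\xi_j|+w_{i,j}$ for every unselected $j$. Hence $|\xi_{j_i}|\ge |\xi_j|-2\|\bw_i\|_\infty$. Pair the elements of $\tilde S\setminus S$ that would lie in the true top-$s$ of $\bxi_{\tilde S}$ with the selected elements of $S$ that are outside that top-$s$. Using $(a+b)^2\le (1+c)a^2+(1+c^{-1})b^2$, this gives
\begin{align*}
\|\bxi_{\tilde S\setminus S}\|_2^2 \le (1+c^{-1})\|\bxi_{\tilde S\setminus S^\circ}\|_2^2 + 4(1+c)\sum_{i\in[s]}\|\bw_i\|_\infty^2,
\end{align*}
where $S^\circ$ is the exact top-$s$ index set of $\bxi_{\tilde S}$. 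Making this exchange argument tight is the main obstacle. Each swapped pair uses a distinct step $i$, so the noise terms do not double count. The argument then needs $|\xi_j|\le |\xi_{j'}|+2\|\bw_i\|_\infty$ for a matched $j'\in S$, taken at the step $i$ when $j'$ was chosen while $j$ was still available. This holds because $j$ is never chosen.

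Second, bound the noiseless term $\|\bxi_{\tilde S\setminus S^\circ}\|_2^2$ by the standard Jain et al.\ averaging argument. The coordinates in $\tilde S\setminus S^\circ$ are the $|\tilde S|-s$ smallest magnitudes of $\bxi$ on $\tilde S$. Let $\hat S=\mathrm{supp}(\hat\bxi)$, with $|\hat S|\le \smallhat{s}$, and note $\|(\hat\bxi-\bxi)_{\tilde S}\|_2^2\ge \|\bxi_{\tilde S\setminus \hat S}\|_2^2$. The set $\tilde S\setminus\hat S$ has at least $|\tilde S|-\smallhat{s}$ elements. The average squared magnitude over the $|\tilde S|-s$ smallest coordinates is at most the average over any $|\tilde S|-\smallhat{s}\ge|\tilde S|-s$ coordinates. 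This yields
\begin{align*}
\|\bxi_{\tilde S\setminus S^\circ}\|_2^2\le \frac{|\tilde S|-s}{|\tilde S|-\smallhat{s}}\|\bxi_{\tilde S\setminus\hat S}\|_2^2\le \frac{|\tilde S|-s}{|\tilde S|-\smallhat{s}}\|(\hat\bxi-\bxi)_{\tilde S}\|_2^2 .
\end{align*}
Combining the two displays gives \eqref{cons.bound}. The edge case $|\tilde S|=s$ is trivial, since then $S=\tilde S$ and the left side is zero.
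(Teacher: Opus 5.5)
Your proposal is correct and follows the paper's proof. You split $\tilde S\setminus S$ against the exact top-$s$ set of $\bxi_{\tilde S}$, apply the noisy exchange bound with constants $(1+c^{-1})$ and $4(1+c)$, and finish with the Jain et al.\ averaging step against $\mathrm{supp}(\hat\bxi)$; the only difference is that you derive the exchange inequality directly from the peeling rule (every never-selected index is still available at each step, so distinct selected partners use distinct noise vectors $\bw_i$), whereas the paper cites Lemma~3.4 of Cai et al.\ for it.
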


\subsubsection{Proof of Theorem~\ref{thm:dp.HD.convergence}}\label{proof.thm:dp.HD.convergence}

Let $\cE : =  \cE_0 \cap  \cE_1 \cap  \cE_2\cap  \cE_3\cap \cE_4$, where $\cE_0$, $\cE_1$, $\cE_2$, $\cE_3$ and $\cE_4$ are given by \eqref{def:event.HD.E0-4}. By Lemma~\ref{lem:HD.max.Laplace}, for any $z\geq 0$ given, let the noise level $\lambda>0$ satisfy 
\begin{align}\label{bound.ro}
  \bigg( \frac{s T\lambda}{\epsilon} \bigg)^2 \{\log (p s T)+z\}^2  \log\bigg(\frac{T}{\delta}\bigg)   \leq c_0r_0^2
\end{align} 
for some constant $c_0>0$ to be determined. Then, with probability at least $1-2e^{-z}$, $\max_{t\in \{0\}\cup [T-1]}W_t \leq  (c_0 C_0) r_0^2$, where $C_0>0$ is the absolute constant appearing in Lemma~\ref{lem:HD.max.Laplace}. This conclusion, together with $\cE$, serves as a prerequisite for Proposition~\ref{prop:HD.convergence.region} below. Specifically, Proposition~\ref{prop:HD.convergence.region} shows that, starting with an initial value $\bbeta^{(0)} \in \bbeta^* + \BB^p(r_0)$, all subsequent iterates will remain within the local neighborhood under certain conditions.

\begin{proposition}  \label{prop:HD.convergence.region}

Assume that $\bbeta^{(0)} \in \bbeta^* + \BB^p(r_0)$, and let $\eta\in (0, 1)$.
Then, conditioned on $\cE$, we have $\|\bbeta^{(t)}-\bbeta^*\|_2 \leq r_0$ for all $t\in [T]$, provided that
$$
     s \geq  192s^* (1 + \eta^{-2})\frac{\phi_u^2}{\phi_l^2}  \,, \quad r_1  \leq  \frac{\phi_l r_0}{4} ~~\mbox{ and }~~ \max_{t\in \{0\}\cup [T-1]} W_t \leq \frac{\eta^2 \phi_l^2r_0^2}{512 \phi_u^2} \, ,
$$
where the quantity $r_1 $ is defined in \eqref{def:event.HD.E0-4}.
\end{proposition}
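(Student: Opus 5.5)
We prove Proposition~\ref{prop:HD.convergence.region} by induction on $t$. We show that if $\bbeta^{(t)}\in\mathbb{H}(s)\cap\Theta(r_0)$, then $\bbeta^{(t+1)}\in\Theta(r_0)$; sparsity of $\bbeta^{(t+1)}$ is automatic from NoisyHT. Fix $t$ and write $\tilde\bbeta^{(t+1)}=\bbeta^{(t)}-\eta_0\nabla\hat\cL_\tau(\bbeta^{(t)})$. On $\cE_0$ the clipping is inactive, so this is exactly the clipped update in Algorithm~\ref{alg:DPHuberhigh}. Let $S^{t+1}$ be the selected support, $S^*={\rm supp}(\bbeta^*)$, and $\tilde S=S^{t+1}\cup S^*\cup{\rm supp}(\bbeta^{(t)})$, so $|\tilde S|\le 2s+s^*$. (If the events in $\cE$ need index sets of size at most $2s$, we instead take $\tilde S=S^{t+1}\cup S^*$ and handle ${\rm supp}(\bbeta^{(t)})$ through the gradient restricted to the union; the argument is the same.)

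\textbf{Step 1 (decomposition).} Since $\bbeta^{(t+1)}=\tilde P_s(\tilde\bbeta^{(t+1)})+\tilde\bw^t_{S^{t+1}}$ is supported in $\tilde S$, we have
$$\|\bbeta^{(t+1)}-\bbeta^*\|_2\le\|\{\tilde P_s(\tilde\bbeta^{(t+1)})-\tilde\bbeta^{(t+1)}\}_{\tilde S}\|_2+\|\{\tilde\bbeta^{(t+1)}-\bbeta^*\}_{\tilde S}\|_2+\|\tilde\bw^t_{S^{t+1}}\|_2 .$$
For the first term we apply Lemma~\ref{lem.cons} with $\hat\bxi=\bbeta^*$, $\smallhat s=s^*$ and $c$ to be chosen. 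The resulting factor $(|\tilde S|-s)/(|\tilde S|-s^*)$ is at most about $(s+s^*)/(2s)$, and more precisely it can be made $\lesssim 1/2+s^*/s$. Together with the Laplace contribution, this gives
$$\|\bbeta^{(t+1)}-\bbeta^*\|_2\le\Bigl(1+\sqrt{(1+c^{-1})(1/2+O(s^*/s))}\Bigr)\|\{\tilde\bbeta^{(t+1)}-\bbeta^*\}_{\tilde S}\|_2+C(1+c)^{1/2}\sqrt{W_t}.$$
This crude triangle inequality loses a constant factor close to $2$. To avoid that loss we work with squared norms instead, as in Jain et al. The selection is taken relative to $\bbeta^*$ on $\tilde S$, and the key quantity becomes $\sqrt{s^*/s}$ times the restricted distance. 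The condition $s\ge192 s^*(1+\eta^{-2})\phi_u^2/\phi_l^2$ is what makes $\sqrt{s^*/s}\,\phi_u/\phi_l$ small.

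\textbf{Step 2 (one-step gradient bound on $\tilde S$).} We write
$$\{\tilde\bbeta^{(t+1)}-\bbeta^*\}_{\tilde S}=\{\bbeta^{(t)}-\bbeta^*-\eta_0(\nabla\hat\cL_\tau(\bbeta^{(t)})-\nabla\hat\cL_\tau(\bbeta^*))\}_{\tilde S}-\eta_0\{\nabla\hat\cL_\tau(\bbeta^*)\}_{\tilde S}.$$
The last term is at most $\eta_0 r_1$ on $\cE_4$. For the first term we expand the square. The cross term is bounded using restricted strong convexity ($\cE_1$, applied in both orders and summed, which gives $\langle\nabla\hat\cL_\tau(\bbeta^{(t)})-\nabla\hat\cL_\tau(\bbeta^*),\bbeta^{(t)}-\bbeta^*\rangle\ge2\phi_l\|\bbeta^{(t)}-\bbeta^*\|_2^2$). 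The quadratic term is bounded by $\eta_0^2\cdot\frac94\phi_u^2r_0^2$ using $\cE_3$, or more sharply via smoothness on $\cE_2$. With $\eta_0=\eta/(2\phi_u)$ this yields a contraction
$$\|\{\bbeta^{(t)}-\bbeta^*-\eta_0(\cdot)\}_{\tilde S}\|_2^2\le(1-c_1\eta\phi_l/\phi_u)\,r_0^2 .$$
Here $(\bbeta^{(t)},\bbeta^*)\in\mathbb{N}(s,r_0)$ holds because $\bbeta^*\in\Theta(r_0/2)$ trivially.

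\textbf{Step 3 (closing the induction).} We combine the pieces with $c$ of order $\eta^{-2}$, which is the source of the $(1+\eta^{-2})$ factor in the sparsity condition. The three contributions are: the contraction factor (strictly less than $1$), the $r_1\le\phi_l r_0/4$ term, and the noise term $\sqrt{W_t}\le\eta\phi_l r_0/(16\sqrt2\,\phi_u)$. Together they give $\|\bbeta^{(t+1)}-\bbeta^*\|_2^2\le r_0^2$.

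The main obstacle is the bookkeeping in Step 3. The factor $(1+c^{-1})$ from Lemma~\ref{lem.cons}, multiplied by the contraction, must stay below $1$ with enough slack to absorb the $r_1$ and $W_t$ terms. This forces the specific constants $192$ and $512$. If the slack turns out to be insufficient, we would first try to enlarge $s/s^*$ so that the selection error is a small multiple of the restricted distance, rather than tightening the descent step.
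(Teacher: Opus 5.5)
Your proposal has a genuine gap. It sits in Step~2, on which the whole distance recursion rests.

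The set issue you flag in Step~1 is repairable. Take $\tilde S=S^{t+1}\cup S^*$, so that Lemma~\ref{lem.cons} with $\hat\bxi=\bbeta^*$ gives the factor $s^*/(s-s^*)$. Then use $\|v_{\tilde S}\|_2\le\|v_{I^t}\|_2$ with $I^t=S^t\cup S^{t+1}\cup S^*$. This, however, puts the entire burden on Step~2.

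Write $\bdelta=\bbeta^{(t)}-\bbeta^*$ and $\Delta\bg=\nabla\hat\cL_\tau(\bbeta^{(t)})-\nabla\hat\cL_\tau(\bbeta^*)$. You need $\|\{\bdelta-\eta_0\Delta\bg\}_{I^t}\|_2^2\le(1-c_1\eta\phi_l/\phi_u)r_0^2$ for every $\eta\in(0,1)$, and the events in $\cE$ do not give this.
\begin{itemize}
\item Event $\cE_3$ only gives the uniform bound $\|\{\Delta\bg\}_S\|_2\le\frac32\phi_u r_0$. So the quadratic term is $\eta_0^2\cdot\frac94\phi_u^2r_0^2=\frac{9}{16}\eta^2r_0^2$, even before splitting $I^t$, which can have more than $2s$ elements.
\item Event $\cE_1$, applied in both orders, gains only $4\eta_0\phi_l\|\bdelta\|_2^2=2\eta(\phi_l/\phi_u)\|\bdelta\|_2^2$.
\item When $\|\bdelta\|_2=r_0$, the factor $1-2\eta\phi_l/\phi_u+\frac{9}{16}\eta^2$ is below one only if $\eta<32\phi_l/(9\phi_u)$. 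With $\phi_l=\lambda_p/8$ and $\phi_u=\lambda_1$ as in Proposition~\ref{prop:events2}, that threshold is at most $4/9$.
\item The ``sharper via $\cE_2$'' alternative is not available. Bounding $\|\{\Delta\bg\}_{I^t}\|_2^2$ by a multiple of $\phi_u\langle\bdelta,\Delta\bg\rangle$ is co-coercivity. That needs smoothness at points such as $\bbeta^{(t)}-L^{-1}\{\Delta\bg\}_{I^t}$, which are not $s$-sparse, while $\cE_2$ only controls Bregman divergences between points of $\mathbb{H}(s)$.
\end{itemize}

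Separately, choosing $c\asymp\eta^{-2}$ is not the source of the factor $(1+\eta^{-2})$. It only pushes $1+c^{-1}$ to $1$, and it inflates the noise term $4(1+c)\sum_i\|\bw_i^t\|_\infty^2$ by $\eta^{-2}$. Under the stated $W_t$ bound that term is then of order $\phi_l r_0/\phi_u$. This exceeds the available slack, of order $\eta\phi_l r_0/\phi_u$, when $\eta$ is small.

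The missing idea is to never ask the gradient map to contract, and to track the excess loss instead, as the paper does. The paper applies $\cE_2$ only at the pair $(\bbeta^{(t)},\bbeta^{(t+1)})$, both of which lie in $\mathbb{H}(s)$. It applies $\cE_1$ only at $(\bbeta^*,\bbeta^{(t)})$. Combining these with the three-point identity for $\langle\bbeta^{(t)}-\bbeta^{(t+1)},\bbeta^{(t+1)}-\bbeta^*\rangle$ gives
\[
\hat\cL_\tau(\bbeta^{(t+1)})-\hat\cL_\tau(\bbeta^*)\le\Big(\frac{\phi_u}{\eta}-\phi_l\Big)r_0^2-\frac{\phi_u}{\eta}\|\bbeta^{(t+1)}-\bbeta^*\|_2^2+\frac{2\phi_u}{\eta}\langle\bbeta^{(t+1)}-\tilde\bbeta^{(t+1)},\bbeta^{(t+1)}-\bbeta^*\rangle .
\]
Here the choice $\eta_0=\eta/(2\phi_u)$ makes the coefficient $\phi_u-(2\eta_0)^{-1}$ of $\|\bbeta^{(t+1)}-\bbeta^{(t)}\|_2^2$ nonpositive for every $\eta\in(0,1)$.

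The remaining inner product involves only the thresholding error on $S^{t+1}\cup S^*$. There Lemma~\ref{lem.cons} gives the small factor $2s^*/s$. The gradient enters only through the crude bound $\frac32\phi_u r_0+r_1\le2\phi_u r_0$ from $\cE_3\cap\cE_4$, which is harmless because it is multiplied by $s^*/s$.

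Convexity and $\cE_4$ give the lower bound $-\frac{\phi_l}{8}\|\bbeta^{(t+1)}-\bbeta^*\|_2^2-\frac{2}{\phi_l}r_1^2$. Comparing the two bounds, the restricted-strong-convexity slack $\phi_l r_0^2$ absorbs three terms:
\begin{itemize}
\item the thresholding error, using $c=2$, which gives $192(1+\eta^{-2})$ with $1+\eta^{-2}=(1+\eta^2)/\eta^2$ coming from $\|\{\bbeta^*-\tilde\bbeta^{(t+1)}\}_{S^{t+1}\cup S^*}\|_2^2\lesssim(1+\eta^2)r_0^2$;
\item the $r_1^2$ term, using $r_1\le\phi_l r_0/4$;
\item the noise, using the $W_t$ condition.
\end{itemize}
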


Conditioned on $\cE_0 \cap \cE_1 \cap \cE_2$, Proposition \ref{prop:HD:excess_loss} below provides an iterative bound on the excess risk, $\hat\cL_\tau(\bbeta^{(t)})-\hat\cL_\tau(\bbeta^*)$, which is a key step in controlling the statistical error.
\begin{proposition}\label{prop:HD:excess_loss}
    Assume that $\bbeta^{(t)} \in \mathbb{H}(s)\cap \Theta(r_0)$ for some $r_0>0$, and that the learning rate satisfies $\eta_0 = \eta/(2\phi_u)$ for some $\eta\in(0,1)$. Then, conditioned on the event $\cE_0\cap\cE_1 \cap \cE_2$, the $(t+1)$-th iterate $\bbeta^{(t+1)}$ satisfies 
        \begin{align*}
            \hat \cL_\tau(\bbeta^{(t+1)})-\hat \cL_\tau(\bbeta^{*})\leq \bigg( 1- \frac{4\eta s^*}{s}  -\frac{\eta^2\phi_l}{8\phi_u}  \bigg)    \{\hat \cL_\tau(\bbeta^{(t)})-\hat \cL_\tau(\bbeta^{*}) \} + C_{\eta}W_t
            \end{align*}
    provided that
    $s \geq  16s^* \max \{8\phi_u^2/(3\eta^2\phi_l^2),  \eta/(1-\eta) \}$, where $W_t =  \sum_{i=1}^s \|\bw^t_i\|_{\infty}^2+\|\widetilde{\bw}_{S^{t+1}}^t\|_2^2$ is defined in Lemma~\ref{lem:HD.max.Laplace}, and $C_{\eta} = 4\phi_u\{ 5+4\eta(1-\eta)^{-1} +6(1-\eta)\eta^{-1}\}$. 
\end{proposition}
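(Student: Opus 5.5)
The plan is to compare $\hat\cL_\tau(\bbeta^{(t+1)})$ with $\hat\cL_\tau(\bbeta^{(t)})$ using restricted smoothness ($\cE_2$), and then to use restricted strong convexity ($\cE_1$) to relate the one-step decrease to the excess risk $\hat\cL_\tau(\bbeta^{(t)})-\hat\cL_\tau(\bbeta^*)$. This follows the standard IHT analysis of \citeS{jain2014iterativeS}, with Lemma~\ref{lem.cons} absorbing the noisy support selection.

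\textbf{Step 1 (local expansion).} Write $\bg=\nabla\hat\cL_\tau(\bbeta^{(t)})$, $S^t=\mathrm{supp}(\bbeta^{(t)})$, $S^{t+1}$ for the set selected by NoisyHT, and $S^*=\mathrm{supp}(\bbeta^*)$. Set $I=S^t\cup S^{t+1}\cup S^*$, so $|I|\le 2s+s^*$. On $\cE_0$ the clipping is inactive, so $\tilde\bbeta^{(t+1)}=\bbeta^{(t)}-\eta_0\bg$ and $\bbeta^{(t+1)}=\tilde P_s(\tilde\bbeta^{(t+1)})+\widetilde\bw^t_{S^{t+1}}$. By $\cE_2$ (slightly enlarged to cover the relevant sparsity, or by splitting supports),
\[
\hat\cL_\tau(\bbeta^{(t+1)})-\hat\cL_\tau(\bbeta^{(t)})\le\langle\bg_I,\bbeta^{(t+1)}-\bbeta^{(t)}\rangle+\phi_u\|\bbeta^{(t+1)}-\bbeta^{(t)}\|_2^2 .
\]
Since $\eta_0=\eta/(2\phi_u)$, completing the square gives
\[
\langle\bg_I,\bu\rangle+\phi_u\|\bu\|_2^2=\frac{\phi_u}{\eta}\bigl(\eta\|\bu+\eta_0\bg_I\|_2^2-\eta\,\eta_0^2\|\bg_I\|_2^2\bigr)+(1-\eta)\phi_u\|\bu\|_2^2\cdot(\text{adjusted}).
\]
In practice I will bound $\|\bbeta^{(t+1)}-\tilde\bbeta^{(t+1)}\|_{I}^2$ by splitting off the Laplace perturbation $\widetilde\bw^t_{S^{t+1}}$ via $(a+b)^2\le(1+c)a^2+(1+c^{-1})b^2$, which produces the $\|\widetilde\bw^t_{S^{t+1}}\|_2^2$ part of $W_t$.

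\textbf{Step 2 (apply Lemma~\ref{lem.cons}).} Take $\bxi=\tilde\bbeta^{(t+1)}$, $\tilde S=I$, and $\hat\bxi=\bbeta^*$ (so $\smallhat s=s^*$). This gives
\[
\|\{\tilde P_s(\bxi)-\bxi\}_I\|_2^2\le(1+c^{-1})\frac{|I|-s}{|I|-s^*}\,\|\{\bbeta^*-\tilde\bbeta^{(t+1)}\}_I\|_2^2+4(1+c)\sum_{i}\|\bw^t_i\|_\infty^2 .
\]
The ratio $(|I|-s)/(|I|-s^*)\le (s+s^*)/(2s)$ is at most about $\tfrac12+s^*/s$. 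The expansion of $\|\bbeta^*-\bbeta^{(t)}+\eta_0\bg_I\|^2$ then produces $\langle\bg,\bbeta^*-\bbeta^{(t)}\rangle$, which by $\cE_1$ is bounded by $\hat\cL_\tau(\bbeta^*)-\hat\cL_\tau(\bbeta^{(t)})-\phi_l\|\bbeta^{(t)}-\bbeta^*\|_2^2$. This is legitimate because $\bbeta^{(t)}\in\Theta(r_0)$ and both points are $s$-sparse, so the pair lies in $\mathbb N(s,r_0)$.

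\textbf{Step 3 (gradient term and combination).} The remaining negative $-\|\bg_I\|^2$ terms are combined with the gap between $\hat\cL_\tau(\bbeta^{(t)})$ and $\hat\cL_\tau(\bbeta^*)$ through the restricted strong convexity inequality $\|\bg_I\|_2^2\ge 2\phi_l\{\hat\cL_\tau(\bbeta^{(t)})-\hat\cL_\tau(\bbeta^*)\}\cdot(\text{const})$. This comes from minimizing the RSC lower bound over directions supported on $I$.

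Collecting terms should give the contraction factor $1-4\eta s^*/s-\eta^2\phi_l/(8\phi_u)$. The $s^*/s$ piece comes from the Lemma~\ref{lem.cons} ratio. The $\eta^2\phi_l/\phi_u$ piece comes from the gradient-norm/RSC step. The noise collects into $C_\eta W_t$, with the $c$ choices producing the constants $4\eta/(1-\eta)$ and $6(1-\eta)/\eta$.

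The conditions $s\ge16s^*\max\{8\phi_u^2/(3\eta^2\phi_l^2),\eta/(1-\eta)\}$ are exactly what make the coefficient of $\|\bbeta^{(t)}-\bbeta^*\|_2^2$ nonpositive after absorbing it with $-\phi_l\|\cdot\|^2$.

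The main obstacle is the bookkeeping in Step 3: choosing $c$ in Lemma~\ref{lem.cons} and the Young splitting so that the residual $\|\bbeta^{(t)}-\bbeta^*\|^2$ terms cancel and the contraction constant comes out exactly as stated. I would first set $c=\eta/(1-\eta)$ and adjust from there.
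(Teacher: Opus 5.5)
The gap is in Step 2. Applying Lemma~\ref{lem.cons} with $\tilde S=I=S^t\cup S^{t+1}\cup S^*$ gives the ratio $(|I|-s)/(|I|-s^*)$, which equals $(s+s^*)/(2s)\ge 1/2$ when $|I|=2s+s^*$, as your own estimate shows. This is a constant, and no choice of $s$ makes it small, so there is no $s^*/s$ term to harvest. After expanding $\|\{\bbeta^*-\bbeta^{(t)}+\eta_0\bg\}_I\|_2^2$ and applying $\cE_1$, you would be left with a positive term of order $\tfrac12(\phi_u-\eta\phi_l)\|\bbeta^{(t)}-\bbeta^*\|_2^2$. The only negative term available is about $-\tfrac12\phi_u\eta_0^2\|\bg_I\|_2^2=-\eta^2\|\bg_I\|_2^2/(8\phi_u)$. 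Under RSC/RSS alone, $\|\bg_{S^t\cup S^*}\|_2$ can be as small as order $\phi_l\|\bbeta^{(t)}-\bbeta^*\|_2$. So the estimate closes only if $\phi_u/\phi_l$ is bounded by an absolute constant, which is not assumed; in Proposition~\ref{prop:events2}, $\phi_u/\phi_l=8\lambda_1/\lambda_p\ge 8$. The inflation $s\gtrsim(\phi_u/\phi_l)^2s^*$ can beat the condition number only if the set on which Lemma~\ref{lem.cons} is applied has cardinality at most $s+s^*$. This is a missing step, not a matter of choosing constants.

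The missing idea is the swap argument behind Lemma~\ref{lem:HD:prop:2}. Let $S^t$ be the index set selected at the previous step, so $|S^t|=|S^{t+1}|=s$ and $|S^t\setminus S^{t+1}|=|S^{t+1}\setminus S^t|\ge|I^t\setminus(S^t\cup S^*)|$. Choose $R\subseteq S^t\setminus S^{t+1}$ with $|R|=|I^t\setminus(S^t\cup S^*)|$. The noisy peeling selected the coordinates in $I^t\setminus(S^t\cup S^*)\subseteq S^{t+1}$ over those in $R$, and $\tilde\bbeta^{(t+1)}=-\eta_0\bg$ on that set. Lemma~3.4 of \citeS{cai2021costS} then gives $\|\tilde\bbeta^{(t+1)}_R\|_2^2\le(1+c_1)\eta_0^2\|\bg_{I^t\setminus(S^t\cup S^*)}\|_2^2+4(1+c_1^{-1})\sum_i\|\bw_i^t\|_\infty^2$. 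Since $\bbeta^{(t+1)}_R=\bzero$, this pays for the $R$-block of $\|\{\bbeta^{(t+1)}-\bbeta^{(t)}+\eta_0\bg\}_{I^t}\|_2^2$ using only the part $\eta_0^2\|\bg_{I^t\setminus(S^t\cup S^*)}\|_2^2$ of the negative term, leaving $c_1\eta_0^2\|\bg_{I^t\setminus(S^t\cup S^*)}\|_2^2$ plus noise. Lemma~\ref{lem.cons} is then applied on $I^t\setminus R\supseteq S^{t+1}$, whose cardinality is $|S^t\cup S^*|\le s+s^*$, and this yields ratio $\le s^*/s$.

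There is a second problem. With $\eta<1$, your completion of the square leaves the cross term $(1-\eta)\langle\bbeta^{(t+1)}-\bbeta^{(t)},\bg\rangle$, and your ``(adjusted)'' placeholder does not handle it. The paper bounds it by $-(\eta_0/4)\|\bg_{S^t\cup S^{t+1}}\|_2^2+12W_t/\eta_0$ in Lemma~\ref{lem:HD:prop:1}, again using that hard thresholding keeps the larger coordinates. That negative term is what absorbs the leftover $c_1\eta_0^2\|\bg_{I^t\setminus(S^t\cup S^*)}\|_2^2$ from the swap. This is where $c_1=(1-\eta)/(4\eta)$, the condition $s\ge16\eta s^*/(1-\eta)$, and the $4\eta/(1-\eta)$ and $6(1-\eta)/\eta$ pieces of $C_\eta$ come from.
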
 

To apply Proposition~\ref{prop:HD.convergence.region}, we take $c_0 = (\eta \phi_l )^2/(512 C_0 \phi_u^2 )$ in 
\eqref{bound.ro}. By the definition of $\bbeta^{(t)}$ obtained in Algorithm \ref{alg:DPHuberhigh}, we know $\bbeta^{(t)}\in \mathbb{H}(s)$ for any $t\in[T]$. Then, with the initial estimate satisfying $\|\bbeta^{(0)} - \bbeta^*\|\leq r_0$, it follows from Propositions \ref{prop:HD.convergence.region} and   \ref{prop:HD:excess_loss} that 
\begin{align*}
\hat \cL_\tau(\bbeta^{(t+1)})-\hat \cL_\tau(\bbeta^*) \leq(1- \rho) \{ \hat \cL_\tau(\bbeta^{(t)})-\hat \cL_\tau( \bbeta^*) \} +C_{\eta}W_t\,, \quad t\in \{0\}\cup [T-1]\,, 
\end{align*}
where $\rho  =  4\eta s^* / s  +  \eta^2\phi_l / (8\phi_u)$ and $C_{\eta} = 4\phi_u\{5+4\eta(1-\eta)^{-1}+6(1-\eta)\eta^{-1}\}$.
Summing the above inequality over $t\in \{0\}\cup [T-1]$ yields 
\begin{align*}
\hat \cL_\tau(\bbeta^{(T)})-\hat \cL_\tau(\bbeta^*) \leq (1-\rho )^T \{ \hat \cL_\tau(\bbeta^{(0)})-\hat \cL_\tau( \bbeta^*) \} +C_{\eta}\sum_{t=0}^{T-1}(1-\rho )^{T-1-t}W_t\,.
\end{align*} 
Since $\| \bbeta^{(0)} - \bbeta^* \|_2 \leq r_0$, on the event $\cE_2\cap \cE_4$ with $r_1 \leq \phi_l r_0/4$, we have 
\begin{align*}
\hat \cL_\tau(\bbeta^{(0)})-\hat \cL_\tau( \bbeta^*)&\leq \langle \nabla \hat \cL_\tau(\bbeta^*),\bbeta^{(0)}-\bbeta^*\rangle+\phi_u\|\bbeta^{(0)}-\bbeta^*\|_2^2\\
&\leq \sup_{S\in \cS} \big\| \{\nabla\hat \cL_\tau (\bbeta^*) \}_S \big\|_2 \cdot \|\bbeta^{(0)}-\bbeta^*\|_2  + \phi_u \|\bbeta^{(0)}-\bbeta^*\|_2^2 \\
& \leq  r_0 r_1 +  \phi_u r_0^2   \leq (\phi_u  + \phi_l/4 ) r_0^2 \,.
\end{align*}
Consequently,
\begin{align*}
\hat \cL_\tau(\bbeta^{(T)})-\hat \cL_\tau(\bbeta^*) 
& \leq   (1-\rho )^T  (\phi_u  + \phi_l/4 ) r_0^2 + C_{\eta}\sum_{t=0}^{T-1}(1-\rho )^{T-1-t}W_t\,.
\end{align*}
On the other hand, on the event $\cE_1$ it holds
\begin{align}
  \hat \cL_\tau(\bbeta^{(T)})-\hat \cL_\tau(\bbeta^*) & \geq \langle \nabla \hat \cL_\tau(\bbeta^*),\bbeta^{(T)}-\bbeta^*\rangle + \phi_l\|\bbeta^{(T)}-\bbeta^*\|_2^2 \nn \\
  & \geq -\sup_{S\in \cS} \big\| \{\nabla\hat \cL_\tau (\bbeta^*) \}_S \big\|_2 \cdot  \|\bbeta^{(T)}-\bbeta^* \|_2 + \phi_l \|\bbeta^{(T)}-\bbeta^*\|_2^2 \nn \\
  & \geq  - r_1  \|\bbeta^{(T)}-\bbeta^* \|_2 + \phi_l \|\bbeta^{(T)}-\bbeta^*\|_2^2 \,. \nn 
\end{align}
Combining the above upper and lower bounds yields
\begin{align}
     \phi_l \|\bbeta^{(T)}-\bbeta^*\|_2^2  
    \leq &~ r_1  \|\bbeta^{(T)}-\bbeta^*\|_2  +  (1-\rho)^T  (\phi_u  + \phi_l/4 )  r_0^2 + C_\eta  \sum_{t=0}^{T-1}(1-\rho )^{T-1-t}W_t  \nn \\
    \leq & ~\frac{\phi_l}{2}\|\bbeta^{(T)}-\bbeta^*\|_2^2 + \frac{r_1^2}{2\phi_l} +     (1-\rho)^T  (\phi_u  + \phi_l/4 )  r_0^2 \nn\\
   &~+ C_\eta  \max_{t\in \{0\}\cup [T-1]} W_t \cdot  \sum_{k=0}^{\infty}(1-\rho )^k  \,. \nn 
\end{align}
Let $T\geq 2\log\{  r_0    n \epsilon (\sigma_0\log p )^{-1}\}/\log\{(1-\rho)^{-1}\}$ such that $(1-\rho)^T  r_0^2 \leq \{({ \epsilon n})^{-1}{\sigma_0\log p}\}^2$. Then,
\begin{align*}
   \|\bbeta^{(T)}-\bbeta^*\|_2^2 \leq \frac{r_1^2}{ \phi_l^2}   + 
 \bigg(\frac{2\phi_u}{\phi_l}+\frac{1}{2}\bigg) \bigg(\frac{\sigma_0\log p}{ n\epsilon}\bigg)^2 + \frac{2C_\eta}{\phi_l\rho } \max_{t\in \{0\}\cup [T-1]} W_t \,.
\end{align*}
This completes the proof of the theorem. \qed 

\subsubsection{Proof of Proposition \ref{prop:events2}} \label{proof.e2}
Let $\bu = \bSigma^{1/2}\be_j/\|\bSigma^{1/2}\be_j\|_2\in \mathbb{S}^{p-1}$, where   $\be_j$ is the unit vector with 1 in the $j$-th position and 0 elsewhere. Then by Assumption \ref{assump:design}, due to $\|\bSigma^{1/2}\be_j\|_2 \leq \lambda_1^{1/2}$, for any $z\geq 0$, 
 \begin{align}\label{tail.xij}
    \PP(|x_{i,j}|\geq   \upsilon_1 \lambda_1^{1/2} z) \leq \PP(|x_{i,j}|\geq \|\bSigma^{1/2}\be_j\|_2 \upsilon_1 z)= \PP(| \langle \bu, \bSigma^{-1/2} \bx_i\rangle | \geq \upsilon_1 z ) \leq 2 e^{-z^2/2}\,.
 \end{align} 
 Taking the union bound over $i\in [n]$ and $j\in [p]$, we have
$$
\mathbb{P}\bigg\{\max_{i\in[n]}\|\bx_i\|_{\infty} \geq   \upsilon_1\lambda_1^{1/2}\sqrt{2 \log(2np)+2 z}\bigg\} \leq e^{-z} \,.
$$
The last four bounds follow directly from Lemmas \ref{lem:HD.rsc}, \ref{lem:HD.gs}, \ref{lem:HD.huber.grad_diff.ubd}, and \ref{lem:HD.huber.grad_oracle.ubd}. This completes the proof of Proposition~\ref{prop:events2}. $\hfill\Box$

\subsubsection{Proof of Proposition~\ref{prop:HD.convergence.region}}

We provide a proof by induction to show that $\|\bbeta^{(t)} - \bbeta^*\|_2 \leq r_0$ for all $t \geq 0$. By assumption,  the base case $\|\bbeta^{(0)}-\bbeta^*\|_2 \leq r_0$ holds. For the inductive step, assume that $ \|\bbeta^{(t)}-\bbeta^*\|_2 \leq r_0$ for some $t\geq 0$. We aim to prove that $\|\bbeta^{(t+1)}-\bbeta^*\|_2 \leq r_0$ also holds. To this end, we bound $\hat \cL_\tau(\bbeta^{(t+1)}) - \hat \cL_\tau(\bbeta^*)$ from above and below in terms of $\| \bbeta^{(t+1)} - \bbeta^* \|_2$.

We begin by bounding  $\hat \cL_\tau(\bbeta^{(t+1)}) -  \hat \cL_\tau (\bbeta^*)$ from above. From \eqref{def:event.HD.E0-4}, it follows that
\begin{align*}
    \hat\cL_\tau(\bbeta^{(t+1)})-\hat\cL_\tau(\bbeta^{(t)})\leq&\, \langle\nabla\hat\cL_\tau(\bbeta^{(t)}),\bbeta^{(t+1)}-\bbeta^{(t)}\rangle+\phi_u\|\bbeta^{(t+1)}-\bbeta^{(t)}\|_2^2\,,\\
    \hat\cL_\tau(\bbeta^{(t)})-\hat\cL_\tau(\bbeta^*) \leq& -\langle\nabla\hat\cL_\tau(\bbeta^{(t)}),\bbeta^*-\bbeta^{(t)}\rangle-\phi_l\|\bbeta^{(t)}-\bbeta^*\|_2^2\,.
\end{align*} 
Together, these two bounds imply 
\begin{align} \hat\cL_\tau(\bbeta^{(t+1)})-\hat\cL_\tau(\bbeta^*)  \leq&~\langle\nabla\hat\cL_\tau(\bbeta^{(t)}),\bbeta^{(t+1)}-\bbeta^*\rangle+\phi_u\|\bbeta^{(t+1)}-\bbeta^{(t)}\|_2^2-\phi_l\|\bbeta^{(t)}-\bbeta^*\|_2^2 \nn  \\
        \leq&~ \phi_u\|\bbeta^{(t+1)}-\bbeta^{(t)}\|_2^2-\phi_l\|\bbeta^{(t)}-\bbeta^*\|_2^2 + \eta_0^{-1}\langle\bbeta^{(t)}-\bbeta^{(t+1)},\bbeta^{(t+1)}-\bbeta^*\rangle \nn \\
        &+ \eta_0^{-1}\langle\eta_0\nabla\hat\cL_\tau(\bbeta^{(t)})-\bbeta^{(t)}+\bbeta^{(t+1)},\bbeta^{(t+1)}-\bbeta^*\rangle \label{HD.conv.region.init} \\
        \leq&~ \{\phi_u-(2\eta_0)^{-1}\}\|\bbeta^{(t+1)}-\bbeta^{(t)}\|_2^2+\{(2\eta_0)^{-1}-\phi_l\} \|\bbeta^{(t)}-\bbeta^*\|_2^2 \nn \\
        & - (2\eta_0)^{-1}\|\bbeta^{(t+1)} - \bbeta^*\|_2^2 +\eta_0^{-1}\langle\bbeta^{(t+1)} -\tilde\bbeta^{(t+1)} , \bbeta^{(t+1)} - \bbeta^*\rangle\,,  \nn
\end{align}
where $\tilde\bbeta^{(t+1)}$ is specified in Algorithm \ref{alg:DPHuberhigh}. In the last step, we used the facts that $\widetilde{\bbeta}^{(t+1)} = \bbeta^{(t)} - \eta_0 \nabla\hat\cL_\tau(\bbeta^{(t)})$ conditioning on $\cE_0$, and 
\begin{align*}
  \langle\bbeta^{(t)}- \bbeta^{(t+1)} , \bbeta^{(t+1)} -\bbeta^*\rangle  = \frac{1}{2} \big\{ \|\bbeta^{(t)}-\bbeta^*\|_2^2 -  \|\bbeta^{(t+1)} -\bbeta^*\|_2^2 -  \|\bbeta^{(t+1)} -\bbeta^{(t)}\|_2^2 \big\} \,.
\end{align*} 
 
Taking $\eta_0 = \eta/(2\phi_u)$ for some $\eta\in(0, 1)$,  \eqref{HD.conv.region.init} implies
\begin{align} \label{HD.conv.region.init1}
         \hat\cL_\tau(\bbeta^{(t+1)})-\hat\cL_\tau(\bbeta^*)
        & \leq (\phi_u\eta^{-1}-\phi_l) r_0^2 - \phi_u\eta^{-1}\|\bbeta^{(t+1)} - \bbeta^*\|_2^2\nn\\ &~~~~ +2\phi_u\eta^{-1}\underbrace{\langle\bbeta^{(t+1)} -\tilde\bbeta^{(t+1)} , \bbeta^{(t+1)} - \bbeta^*\rangle}_{\Pi_2}\,.
\end{align} 
Write $S^{*}={\rm{supp}}(\bbeta^{*})$, and denote by $S^{t+1}$  the index set $S$ determined by Algorithm \ref{alg:NHT} when $\bv$ is selected as $\tilde\bbeta^{(t+1)}$. It is obvious that ${\rm{supp}}(\bbeta^{(t+1)})\subseteq S^{t+1}$. Since $\bbeta^{(t+1)} =\tilde P_s(\tilde\bbeta^{(t+1)} )+\tilde \bw^t_{S^{t+1}}$, it follows that 
\begin{align} \label{HD.conv.region.sub1}
      \Pi_2 & =\langle\tilde P_s(\tilde\bbeta^{(t+1)}) -\tilde\bbeta^{(t+1)},\bbeta^{(t+1)} - \bbeta^*\rangle + \langle \tilde \bw^t_{S^{t+1}} ,\bbeta^{(t+1)} - \bbeta^*\rangle\nn\\
        &\leq \frac{8\phi_u}{\eta\phi_l} \big\| \{\tilde P_s(\tilde\bbeta^{(t+1)}) -\tilde\bbeta^{(t+1)} \}_{S^{t+1}\cup S^*}\big\|_2^2 + \frac{\eta \phi_l}{16 \phi_u}  \|\bbeta^{(t+1)}-\bbeta^*\|_2^2 + \frac{8\phi_u}{\eta \phi_l} \|\tilde \bw^t_{S^{t+1}} \|_2^2\,.
\end{align}

Recall that $|S^*|=s^*$. By Lemma \ref{lem.cons}, for any constant $c>0$, it holds that
\begin{align*}
&\big\| \{\tilde P_s(\tilde\bbeta^{(t+1)}) -\tilde\bbeta^{(t+1)} \}_{S^{t+1}\cup S^*}\big\|_2^2\\
&~~~\leq (1+c^{-1}) \frac{|S^{t+1}\cup S^*|-s}{|S^{t+1}\cup S^*|-s^*}   \big\|\{\bbeta^*-\tilde \bbeta^{(t+1)}\}_{S^{t+1}\cup S^*}\big\|_2^2+4(1+c)\sum_{i=1}^s\|\bw_i^t\|_\infty^2\\
&~~~\leq (1+c^{-1} ) \frac{2s^*}{s} \Big[ \|\bbeta^*-\bbeta^{(t)} \|_2^2+\big\|(2\phi_u)^{-1}\eta \, \{\nabla \hat \cL_\tau (\bbeta^{(t)})\}_{S^{t+1}\cup S^*}\big\|_2^2\Big]+4(1+c)\sum_{i=1}^s\|\bw_i^t\|_\infty^2\,,
\end{align*} 
where we used the properties that  $|S^{t+1}\cup S^*|\leq s+s^*$  and $\tilde\bbeta^{(t+1)} = \bbeta^{(t)} - (2\phi_u)^{-1}\eta \, \nabla \hat \cL_\tau (\bbeta^{(t)})$ conditioned on $\cE_0$.  Conditioned further on $\cE_3\cap \cE_4$ with $r_1 \leq \phi_l r_0/4$, we have
\begin{align*}
& \big\| \{\nabla \hat \cL_\tau(\bbeta^{(t)}) \}_{S^{t+1}\cup S^*}\big\|_2  \leq   \frac{3}{2}\phi_u r_0  +  r_1  <  2 \phi_u r_0 \, . 
\end{align*} 
Combining the last two displays with the assumption $\|\bbeta^*-\bbeta^{(t)}\|_2\leq r_0$, we obtain
\begin{align*}
        \big\|\{\tilde P_s(\tilde\bbeta^{(t+1)}) -\tilde\bbeta^{(t+1)}\}_{S^{t+1}\cup S^*}\big\|_2^2   \leq (1+c^{-1}) (1+ \eta^2)\frac{2s^*}{s} r_0^2    +4(1+c)\sum_{i=1}^s\|\bw_i^t\|_\infty^2\,.
\end{align*}
Plugging this bound into \eqref{HD.conv.region.sub1} yields 
\begin{equation}\label{HD.conv.region.sub3}
    \begin{aligned}
        \Pi_2 & \leq \frac{\eta \phi_l}{16 \phi_u}   \|\bbeta^{(t+1)}-\bbeta^*\|_2^2+ (1+c^{-1})(1+ \eta^2)\frac{s^*}{s}\frac{16\phi_u}{\eta\phi_l}   r_0^2 + 4(1+c) \frac{8\phi_u}{\eta\phi_l} W_t\,, 
    \end{aligned}
\end{equation} 
where $W_{t}=\sum_{i =1}^s\|\bw_{i}^t\|_{\infty}^{2}+\|\tilde{\bw}_{S^{t+1}}^t\|_{2}^{2}$.

By plugging \eqref{HD.conv.region.sub3} into \eqref{HD.conv.region.init1}, it follows that
\begin{align}\label{HD.conv.region.loss.ubd}
        \hat \cL_\tau(\bbeta^{(t+1)}) - \hat \cL_\tau(\bbeta^*) 
        &\leq \bigg(\frac{\phi_l}{8}-\frac{\phi_u}{\eta}\bigg)   \|\bbeta^{(t+1)}-\bbeta^*\|_2^2+ 4(1+c)\frac{16\phi_u^2}{\eta^2\phi_l}   W_t\nn\\
        &~~~~+
        \bigg\{\frac{\phi_u}{\eta}-\phi_l+(1+c^{-1})(1+ \eta^2)\frac{s^*}{s}\frac{32\phi_u^2}{\eta^2\phi_l}   \bigg\}  r_0^2   \,.  
\end{align} 
On the other hand, the convexity of $\hat \cL_\tau(\bbeta)$ implies that, given $\cE_4$ holds,
\begin{equation}\label{HD.conv.region.loss.lbd}
    \begin{aligned}
        \hat \cL_\tau(\bbeta^{(t+1)}) -  \hat \cL_\tau (\bbeta^*) & \geq  \langle \nabla \hat \cL_\tau(\bbeta^*),\bbeta^{(t+1)}-\bbeta^*  \rangle \geq -   |   \langle \nabla \hat \cL_\tau(\bbeta^*),\bbeta^{(t+1)}-\bbeta^*  \rangle   | \\
        & \geq - \frac{\phi_l}{8} \|\bbeta^{(t+1)}-\bbeta^*\|_2^2 - \frac{2}{\phi_l}r_1^2\,.
    \end{aligned}
\end{equation}
Combining \eqref{HD.conv.region.loss.ubd} with \eqref{HD.conv.region.loss.lbd}, and rearranging the terms, it follows that 
\begin{align*}
      & \bigg(\frac{\phi_u}{\eta} - \frac{\phi_l}{4}\bigg)  \cdot \big(\|\bbeta^{(t+1)}-\bbeta^*\|_2^2 - r_0^2\big) \\ 
      &~~~ \leq  \bigg\{ -\frac{3\phi_l}{4} +(1+c^{-1})(1+ \eta^2)\frac{s^*}{s}\frac{32\phi_u^2}{\eta^2\phi_l}\bigg\} \cdot r_0^2  +   \frac{2r_1^2}{\phi_l}    +  4(1+c)\frac{16\phi_u^2}{\eta^2\phi_l} \cdot W_t \\ 
    & ~~~\leq \bigg\{ - \frac{5\phi_l}{8} +(1+c^{-1})(1+ \eta^2)\frac{s^*}{s}\frac{32\phi_u^2}{\eta^2\phi_l}\bigg\} \cdot r_0^2  +  4(1+c)\frac{16\phi_u^2}{\eta^2\phi_l} \cdot W_t\, .
\end{align*}
Provided that $s\geq 128(1+c^{-1}) (1 +  \eta^{-2}) (\phi_u/\phi_l)^2 s^*$ and
$$
\max_{t\in \{0\}\cup [T-1]}W_t \leq \frac{3 \eta^2 \phi_l^2 }{512(1+c) \phi_u^2 } r_0^2 \,,
$$ 
the right-hand side of the above inequality is less than or equals to zero.

Finally, by setting $c=2$, we obtain $\|\bbeta^{(t+1)}-\bbeta^*\|_2 \leq r_0$, provided that $s \geq 192(1+\eta^{-2})(\phi_u/\phi_l)^2 s^*$, $r_1\leq  \phi_l r_0/4$ and $\max_{t\in \{0\}\cup [T-1]}W_t \leq   (\eta \phi_l/\phi_u)^2 r_0^2/512$.   This completes the inductive step and, consequently, the proof.
\qed
	
\subsubsection{Proof of Proposition~\ref{prop:HD:excess_loss}}

For ease of notation, we write $S^{*}=\rm{supp}(\bbeta^{*})$, and denote by $S^{t}$ and $S^{t+1}$, respectively,  the index set $S$ determined by Algorithm \ref{alg:NHT} when $\bv$ is selected as $\tilde\bbeta^{(t)}$ and $\tilde\bbeta^{(t+1)}$. Moreover, let $I^{t}=S^t\cup S^{t+1}\cup S^*$ and $\boldsymbol{g}^{(t)}=\nabla \hat\cL_\tau(\bbeta^{(t)})$. Under $\cE_2$, it follows that 
\begin{align*}
      \hat \cL_\tau(\bbeta^{(t+1)})-\hat \cL_\tau(\bbeta^{(t)}) 
    &\leq \langle \bbeta^{(t+1)}-\bbeta^{(t)},\boldsymbol{g}^{(t)}\rangle+\phi_u\| \bbeta^{(t+1)}-\bbeta^{(t)}\|_2^2\\
    &=\phi_u\big\| \{\bbeta^{(t+1)}-\bbeta^{(t)}+\eta_0\boldsymbol{g}^{(t)} \}_{I^t}\big\|_2^2-\phi_u\eta_0^2\big\| \boldsymbol{g}^{(t)}_{I^t}\big\|_2^2\\
    &~~~+(1-2\eta_0\phi_u)\underbrace{\langle \bbeta^{(t+1)}-\bbeta^{(t)},\boldsymbol{g}^{(t)}\rangle}_{\Pi_3}\,.
\end{align*}
Lemma \ref{lem:HD:prop:1} provides an upper bound of $\Pi_3$, whose proof is given in Section \ref{proof.lem:HD:prop:1}.
\begin{lemma} \label{lem:HD:prop:1}
    Conditioned on the event $\cE_0$, we have
    $$
        \langle\bbeta^{(t+1)}-\bbeta^{(t)}, \boldsymbol{g}^{(t)}\rangle \leq -\frac{\eta_0}{4}\big\|\boldsymbol{g}^{(t)}_{S^{t+1} \cup S^t}\big\|_2^2+  \frac{12}{\eta_0} W_t\,, 
    $$ 
    for $t\in \{0\}\cup [T-1]$, where $W_t = \sum_{i=1}^s\|\bw^t_i\|_{\infty}^2+\|\widetilde{\bw}_{S^{t+1}}^t\|_2^2 $ is defined in Lemma~\ref{lem:HD.max.Laplace}. In the noiseless case where $W_t = 0$, it holds that
    $$
     \langle\bbeta^{(t+1)}-\bbeta^{(t)}, \boldsymbol{g}^{(t)}\rangle \leq -\frac{\eta_0}{2}\big\|\boldsymbol{g}^{(t)}_{S^{t+1} \cup S^t}\big\|_2^2\, .
    $$
\end{lemma}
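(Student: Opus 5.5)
The plan is to write the increment $\bbeta^{(t+1)}-\bbeta^{(t)}$ explicitly on coordinates and then use the selection rule of the peeling step (Algorithm~\ref{alg:NHT}). Work on $\cE_0$, so that $\tilde\bbeta^{(t+1)}=\bbeta^{(t)}-\eta_0\bg^{(t)}$. Recall ${\rm supp}(\bbeta^{(t)})\subseteq S^t$ and $\bbeta^{(t+1)}=\tilde\bbeta^{(t+1)}_{S^{t+1}}+\tilde\bw^t_{S^{t+1}}$. Then
\[
\bbeta^{(t+1)}-\bbeta^{(t)} = -\eta_0\bg^{(t)}_{S^{t+1}} - \bbeta^{(t)}_{S^t\setminus S^{t+1}} + \tilde\bw^t_{S^{t+1}},
\]
and therefore
\[
\langle\bbeta^{(t+1)}-\bbeta^{(t)},\bg^{(t)}\rangle = -\eta_0\|\bg^{(t)}_{S^{t+1}}\|_2^2 - \sum_{j\in S^t\setminus S^{t+1}}\beta^{(t)}_j g^{(t)}_j + \langle\tilde\bw^t_{S^{t+1}},\bg^{(t)}\rangle .
\]
The first term is the source of descent, and the last term is handled by Young's inequality, $\langle\tilde\bw^t_{S^{t+1}},\bg^{(t)}\rangle\le a\eta_0\|\bg^{(t)}_{S^{t+1}}\|_2^2+(4a\eta_0)^{-1}\|\tilde\bw^t_{S^{t+1}}\|_2^2$, with a small $a$. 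Most of the work goes into the middle term.

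For the middle term I will fix a pairing. Both $S^t$ and $S^{t+1}$ have cardinality $s$, so $|S^t\setminus S^{t+1}|=|S^{t+1}\setminus S^t|$ and there is a bijection $\pi$ from $S^t\setminus S^{t+1}$ onto $S^{t+1}\setminus S^t$. The pairing should be chosen compatibly with the peeling order. Take $k=\pi(j)$, and suppose $k$ was selected in round $i$ while $j$ was still available (it is never selected). Then
\[
|\tilde\beta^{(t+1)}_j|\le |\tilde\beta^{(t+1)}_k|+|w^t_{i,k}|+|w^t_{i,j}|\le \eta_0|g^{(t)}_k|+2\|\bw^t_i\|_\infty,
\]
where I used $\tilde\beta^{(t+1)}_k=-\eta_0 g^{(t)}_k$ because $k\notin S^t$. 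Each round index is used at most once, so $\sum_j \|\bw^t_{i(j)}\|_\infty^2\le\sum_{i\in[s]}\|\bw^t_i\|_\infty^2$.

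Next put $a=\beta^{(t)}_j$ and $b=\eta_0 g^{(t)}_j$, so that $\tilde\beta^{(t+1)}_j=a-b$ and $-ab=\{(a-b)^2-a^2-b^2\}/2\le\{(a-b)^2-b^2\}/2$. This gives
\[
-\beta^{(t)}_jg^{(t)}_j\le \frac{1}{2\eta_0}\Big\{(1+c)\eta_0^2 (g^{(t)}_{\pi(j)})^2+4(1+c^{-1})\|\bw^t_{i(j)}\|_\infty^2\Big\}-\frac{\eta_0}{2}(g^{(t)}_j)^2 .
\]

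Summing over $j$ and combining with the first term, using $\|\bg^{(t)}_{S^{t+1}}\|_2^2=\|\bg^{(t)}_{S^{t+1}\cap S^t}\|_2^2+\|\bg^{(t)}_{S^{t+1}\setminus S^t}\|_2^2$, the coefficient of $\|\bg^{(t)}_{S^{t+1}\cup S^t}\|_2^2$ comes out at most $-\eta_0\min\{(1-c)/2-a,\,1/2\}$. The noise contributes $\{2(1+c^{-1})/\eta_0\}\sum_i\|\bw^t_i\|_\infty^2+(4a\eta_0)^{-1}\|\tilde\bw^t_{S^{t+1}}\|_2^2$. I expect $c=1/4$ and $a=1/8$ to land exactly at $-\eta_0/4$ for the gradient, with noise constants $10/\eta_0$ and $2/\eta_0$, both at most $12/\eta_0$; that yields $12W_t/\eta_0$. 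In the noiseless case set $c=0$ and drop $a$: the bound becomes $-\eta_0\|\bg^{(t)}_{S^{t+1}}\|_2^2+\frac{\eta_0}{2}\|\bg^{(t)}_{S^{t+1}\setminus S^t}\|_2^2-\frac{\eta_0}{2}\|\bg^{(t)}_{S^t\setminus S^{t+1}}\|_2^2\le-\frac{\eta_0}{2}\|\bg^{(t)}_{S^{t+1}\cup S^t}\|_2^2$.

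The main obstacle is the pairing step. It has to be justified that every unselected $j\in S^t\setminus S^{t+1}$ can be matched to a distinct selected $k$ together with a distinct round noise vector $\bw^t_i$. My first attempt is to match in selection order: enumerate $S^{t+1}\setminus S^t$ by the round in which each element was chosen, and assign the elements of $S^t\setminus S^{t+1}$ to them arbitrarily. Since $j$ is never selected, it was available in every round, so the comparison inequality holds in each round and the order of assignment does not matter.
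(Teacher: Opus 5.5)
Your proposal is correct and is essentially the paper's proof. It uses the same decomposition of $\langle\bbeta^{(t+1)}-\bbeta^{(t)},\boldsymbol{g}^{(t)}\rangle$ and the same bound $-ab\le\{(a-b)^2-b^2\}/2$ on $S^t\setminus S^{t+1}$. Your parameters $a=1/8$, $c=1/4$ correspond to the paper's $c_1=2\eta_0^{-1}$, $c_2=1/4$, so you reach the same constant $12/\eta_0$. The only difference is that you prove inline, via the pairing, the comparison $\|\tilde\bbeta^{(t+1)}_{S^t\setminus S^{t+1}}\|_2^2\le(1+c)\eta_0^2\|\boldsymbol{g}^{(t)}_{S^{t+1}\setminus S^t}\|_2^2+4(1+c^{-1})\sum_{i\in[s]}\|\bw^t_i\|_\infty^2$, which the paper imports from Lemma 3.4 of Cai et al.\ (2021). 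Your remark that an unselected index is available in every round, so any bijection works, correctly settles the step you flagged as the main obstacle.
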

Lemma~\ref{lem:HD:prop:1}, together with $\|\boldsymbol{g}^{(t)}_{I^t}\|_2^2=\|\boldsymbol{g}^{(t)}_{I^t\setminus (S^t\cup S^*)}\|_2^2+\| \boldsymbol{g}^{(t)}_{S^t\cup S^*}\|_2^2$, yields
\begin{align*}
     \hat \cL_\tau(\bbeta^{(t+1)})-\hat \cL_\tau(\bbeta^{(t)}) 
    \leq&\, \underbrace{\phi_u\big\| \{\bbeta^{(t+1)}-\bbeta^{(t)}+\eta_0\boldsymbol{g}^{(t)} \}_{I^t}\big\|_2^2-\phi_u\eta_0^2\big\| \boldsymbol{g}^{(t)}_{I^t\setminus (S^t\cup S^*)}\big\|_2^2}_{\Pi_4}\\
    & -\phi_u\eta_0^2\big\| \boldsymbol{g}^{(t)}_{S^t\cup S^*}\big\|_2^2-(1-2\phi_u\eta_0) \frac{\eta_0}{4}\big\| \boldsymbol{g}^{(t)}_{S^t\cup S^{t+1}}\big\|_2^2+(1-2\phi_u\eta_0)\frac{12}{\eta_0} W_t\,.   
\end{align*}
To upper bound $\Pi_4$, we introduce the following lemma, whose proof is given in Section \ref{proof.lem:HD:prop:2}.
\begin{lemma}\label{lem:HD:prop:2} 
Conditioned on the event $\cE_0$, for any $c_1 >0$, we have
    \begin{align*}
        & \big\|\{\bbeta^{(t+1)}-\bbeta^{(t)}+\eta(2\phi_u)^{-1} \boldsymbol{g}^{(t)}\}_{I^t}\big\|_2^2 - \eta_0^2\big\|\boldsymbol{g}^{(t)}_{I^t \setminus (S^t \cup S^*)}\big\|_2^2 \\ 
        & ~~~ \leq   (4 s^*/s)\big\|\{\bbeta^* - \bbeta^{(t)} + \eta_0 \boldsymbol{g}^{(t)}\}_{I^t}\big\|_2^2 + c_1 \eta_0^2\big\|\boldsymbol{g}^{(t)}_{I^t \setminus(S^t \cup S^*)}\big\|_2^2 + 4 (5+c_1^{-1})W_t
    \end{align*} 
     for $t\in \{0\}\cup [T-1]$, where $W_t = \sum_{i=1}^s \|\bw^t_i\|_{\infty}^2+\|\widetilde{\bw}_{S^{t+1}}^t\|_2^2  $,  is defined in Lemma~\ref{lem:HD.max.Laplace}. In the noiseless case where $W_t = 0$, it holds that
      \begin{align*}
         \big\|\{\bbeta^{(t+1)}-\bbeta^{(t)}+\eta(2\phi_u)^{-1} \boldsymbol{g}^{(t)}\}_{I^t}\big\|_2^2- \eta_0^2\big\|\boldsymbol{g}^{(t)}_{I^t \setminus (S^t \cup S^*)}\big\|_2^2  \leq   \frac{s^*}{s}\big\| \{\bbeta^* - \bbeta^{(t)} + \eta_0 \boldsymbol{g}^{(t)}\}_{I^t}\big\|_2^2\, . 
    \end{align*} 
\end{lemma}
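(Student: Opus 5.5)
The plan is to reduce the left-hand side to a statement about the hard-thresholding residual of $\tilde\bbeta^{(t+1)}$ restricted to $I^t$, and then control it with a noisy version of the Jain--Tewari--Kar comparison argument, in the spirit of Lemma~\ref{lem.cons}.

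\textbf{Step 1 (reduction).} On $\cE_0$ the clipping is inactive, so $\tilde\bbeta^{(t+1)}=\bbeta^{(t)}-\eta_0\bg^{(t)}$. Here $\eta_0=\eta(2\phi_u)^{-1}$. Since $\bbeta^{(t+1)}=\tilde P_s(\tilde\bbeta^{(t+1)})+\tilde\bw^t_{S^{t+1}}$, we get
\[
\{\bbeta^{(t+1)}-\bbeta^{(t)}+\eta_0\bg^{(t)}\}_{I^t}=\{\tilde P_s(\tilde\bbeta^{(t+1)})-\tilde\bbeta^{(t+1)}\}_{I^t}+\tilde\bw^t_{S^{t+1}} .
\]
The first vector vanishes on $S^{t+1}$ and equals $-\tilde\bbeta^{(t+1)}$ on $I^t\setminus S^{t+1}$, while the second is supported on $S^{t+1}$. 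The two pieces are therefore orthogonal, and the squared norm equals $\|\tilde\bbeta^{(t+1)}_{I^t\setminus S^{t+1}}\|_2^2+\|\tilde\bw^t_{S^{t+1}}\|_2^2$. The last term is at most $W_t$.

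Next I use supports. We have ${\rm supp}(\bbeta^{(t)})\subseteq S^t$ (the support of $\bbeta^{(t)}$ is contained in the index set selected at step $t$), and ${\rm supp}(\bbeta^*)=S^*$. Hence on $I^t\setminus(S^t\cup S^*)$ we have $\tilde\bbeta^{(t+1)}=-\eta_0\bg^{(t)}$ and $\{\bbeta^*-\bbeta^{(t)}+\eta_0\bg^{(t)}\}=\eta_0\bg^{(t)}$. Set $A=S^{t+1}\setminus(S^t\cup S^*)$ and $B=(S^t\cup S^*)\setminus S^{t+1}$. Then $I^t\setminus(S^t\cup S^*)$ is the disjoint union of $A$ and $I^t\setminus(S^t\cup S^*\cup S^{t+1})$, and $I^t\setminus S^{t+1}$ is the disjoint union of $B$ and $I^t\setminus(S^t\cup S^*\cup S^{t+1})$. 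The $\|\bg\|^2$ contributions on $I^t\setminus(S^t\cup S^*\cup S^{t+1})$ therefore cancel exactly, and the left-hand side is at most
\[
\|\tilde\bbeta^{(t+1)}_B\|_2^2-\|\tilde\bbeta^{(t+1)}_A\|_2^2+W_t .
\]

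\textbf{Step 2 (noisy top-$s$ comparison).} In the noiseless case $S^{t+1}$ holds the $s$ largest entries in magnitude. Every entry in $A$ then dominates every entry in $B$, and the standard counting argument (Lemma~1 of Jain et al.) compares $\|\tilde\bbeta^{(t+1)}_B\|_2^2$ with a $s^*/s$ fraction of $\|\{\tilde\bbeta^{(t+1)}-\bbeta^*\}_{I^t}\|_2^2$. The key facts are $|S^{t+1}|=s$ and $|S^t\cup S^*|\le s+s^*$. Together with $\bbeta^*_B$ living only on $S^*\setminus S^{t+1}$, this gives the noiseless inequality with constant $s^*/s$.

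With noise, the peeling step ensures the following. If $j$ is selected in round $i$ while $k\in B$ is still available, then $|\tilde\beta_j|+w_{i,j}\ge|\tilde\beta_k|+w_{i,k}$, so $|\tilde\beta_k|\le|\tilde\beta_j|+2\|\bw^t_i\|_\infty$. I would pair the elements of $B$ with elements of $A$ (or with the rounds of selection) exactly as in the proof of Lemma~\ref{lem.cons}. Applying $(a+b)^2\le(1+c)a^2+(1+c^{-1})b^2$ then bounds $\|\tilde\bbeta^{(t+1)}_B\|_2^2-\|\tilde\bbeta^{(t+1)}_A\|_2^2$ by
\[
(4s^*/s)\,\|\{\bbeta^*-\tilde\bbeta^{(t+1)}\}_{I^t}\|_2^2 + c_1\eta_0^2\|\bg^{(t)}_{I^t\setminus(S^t\cup S^*)}\|_2^2 + 4(4+c_1^{-1})\sum_i\|\bw^t_i\|_\infty^2 .
\]
The $c_1\eta_0^2\|\bg\|^2$ term is where the Young slack from cross terms on $A$ is deposited, because $\tilde\bbeta^{(t+1)}_A=-\eta_0\bg^{(t)}_A$. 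Adding back the $\tilde\bw$ term yields the constant $4(5+c_1^{-1})W_t$. Since $\bbeta^*-\tilde\bbeta^{(t+1)}=\bbeta^*-\bbeta^{(t)}+\eta_0\bg^{(t)}$, this is the claimed bound. Setting all noise to zero removes both the $c_1$ slack and the factor $4$, recovering the $s^*/s$ statement.

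\textbf{Main obstacle.} The delicate step is the cardinality bookkeeping in Step~2. One must show that the weak domination $|\tilde\beta_k|\le|\tilde\beta_j|+2\|\bw_i\|_\infty$ suffices to transfer mass from $B$ to $A$, with only a factor $O(s^*/s)$ of $\|\{\bbeta^*-\tilde\bbeta^{(t+1)}\}_{I^t}\|_2^2$ left over. That leftover comes from indices of $B$ inside $S^*$ that cannot be matched. I would first try applying Lemma~\ref{lem.cons} directly with $\bxi=\tilde\bbeta^{(t+1)}$, $\hat\bxi=\bbeta^*$ and $\tilde S=I^t$, using $(|I^t|-s)/(|I^t|-s^*)\le 2s^*/s$. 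If the subtraction of $\|\tilde\bbeta_A\|^2$ does not come out cleanly that way, I would redo its matching argument by hand on the sets $A$ and $B$.
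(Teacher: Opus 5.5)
Your Step~1 is correct, and a little cleaner than the paper's. The orthogonal split gives the exact identity
\[
\text{LHS}=\|\tilde\bbeta^{(t+1)}_B\|_2^2-\|\tilde\bbeta^{(t+1)}_A\|_2^2+\|\tilde\bw^t_{S^{t+1}}\|_2^2 .
\]
Note that $I^t\setminus(S^t\cup S^*)=A$ and $I^t\setminus S^{t+1}=B$ exactly; the third set you introduce is empty.

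The gap is Step~2, which carries the whole content of the lemma and is not actually carried out. The one concrete route you give fails. Lemma~\ref{lem.cons} with $\tilde S=I^t$ produces the factor $(|I^t|-s)/(|I^t|-s^*)$. This ratio is increasing in $|I^t|$ and is at most $s^*/s$ only when $|I^t|\le s+s^*$. But $|I^t|=s+|B|$ can reach $2s+s^*$, for example when $S^t,S^{t+1},S^*$ are pairwise disjoint, and then the factor is $(s+s^*)/(2s)\approx 1/2$. That route also never uses the subtracted $\|\tilde\bbeta^{(t+1)}_A\|_2^2$.

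Your heuristic that the leftover indices lie in $S^*$ is also false. Only the count is controlled: $|B|-|A|=|S^t\cup S^*|-s\le s^*$. If $S^*\subseteq S^{t+1}$ and $S^*\cap S^t=\emptyset$, then $B\cap S^*=\emptyset$ while $|B|-|A|=s^*$. Even when the leftover does lie in $S^*$, $\|\bbeta^*_{(\cdot)}\|_2$ is not controlled by the right-hand side, so a Jain-type averaging step cannot be avoided.

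The missing idea is to split $B$ before invoking Lemma~\ref{lem.cons}.
\begin{itemize}
\item Pick any $R\subseteq B$ with $|R|=|A|$. This is possible since $|B|-|A|\ge 0$; the paper takes $R\subseteq S^t\setminus S^{t+1}$.
\item Since $A\subseteq S^{t+1}$ and $R\cap S^{t+1}=\emptyset$, your round-by-round domination plus Young's inequality (Lemma~3.4 of Cai et al.) gives
\[
\|\tilde\bbeta^{(t+1)}_R\|_2^2\le(1+c_1)\|\tilde\bbeta^{(t+1)}_A\|_2^2+4(1+c_1^{-1})\sum_i\|\bw_i^t\|_\infty^2 .
\]
This absorbs the subtracted term and leaves exactly the slack $c_1\eta_0^2\|\bg^{(t)}_A\|_2^2$.
\item For the remainder, $\tilde S:=I^t\setminus R\supseteq S^{t+1}$ has $|\tilde S|=|S^t\cup S^*|\le s+s^*$. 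Moreover $\|\{\tilde P_s(\tilde\bbeta^{(t+1)})-\tilde\bbeta^{(t+1)}\}_{\tilde S}\|_2^2=\|\tilde\bbeta^{(t+1)}_{B\setminus R}\|_2^2$.
\item Lemma~\ref{lem.cons} with $\hat\bxi=\bbeta^*$ and $c=1$ then gives
\[
\|\tilde\bbeta^{(t+1)}_{B\setminus R}\|_2^2\le(2s^*/s)\|\{\bbeta^*-\tilde\bbeta^{(t+1)}\}_{I^t}\|_2^2+8\sum_i\|\bw_i^t\|_\infty^2 .
\]
\end{itemize}
Adding these bounds to your Step~1 identity yields the lemma. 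The noise terms total $(12+4c_1^{-1})\sum_i\|\bw^t_i\|_\infty^2+\|\tilde\bw^t_{S^{t+1}}\|_2^2$, which is at most $4(5+c_1^{-1})W_t$.

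With zero noise, you get the exact domination $\|\tilde\bbeta^{(t+1)}_R\|_2^2\le\|\tilde\bbeta^{(t+1)}_A\|_2^2$. Together with Lemma~\ref{lem.cons} letting $c\to\infty$, this gives the $s^*/s$ version.
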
 
It follows from Lemma~\ref{lem:HD:prop:2} that  
\begin{align}\label{HD.prop1.sub0}
     \hat \cL_\tau(\bbeta^{(t+1)})-\hat \cL_\tau(\bbeta^{(t)}) 
        & \leq 4\phi_u (s^*/s)\underbrace{\big\| \{\bbeta^* - \bbeta^{(t)} +\eta_0 \boldsymbol{g}^{(t)} \}_{I^t}\big\|_2^2}_{\Pi_5} \nn\\
        &~~~+ c_1\phi_u\eta_0^2\big\|\boldsymbol{g}^{(t)}_{I^t \setminus(S^t \cup S^*)}\big\|_2^2 -\phi_u\eta_0^2\big\| \boldsymbol{g}^{(t)}_{S^t\cup S^*}\big\|_2^2 -(1-2\phi_u\eta_0)\frac{\eta_0}{4}\big\|\boldsymbol{g}^{(t)}_{S^t\cup S^{t+1}}\big\|_2^2\nn\\
        &~~~+ \bigg\{ 4\phi_u(5+c_1^{-1})+(1-2\phi_u\eta_0)\frac{12}{\eta_0}\bigg\}\cdot W_t\,.  
\end{align} 
Next, we aim to bound $\Pi_5$ from above. Under the assumption that  $\bbeta^{(t)} \in  \bbeta^* + \BB^p(r_0)$, by the local strong convexity condition in \eqref{def:event.HD.E0-4}, 
\begin{align*}
    \Pi_5 &=2\eta_0\langle \bbeta^*-\bbeta^{(t)}, \boldsymbol{g}^{(t)}  \rangle+
     \|\bbeta^*-\bbeta^{(t)} \|_2^2+\eta_0^2\big\| \boldsymbol{g}^{(t)}_{I^t}\big\|_2^2\\
    & \leq 2\eta_0 \{ \hat \cL_\tau(\bbeta^{*})-\hat \cL_\tau(\bbeta^{(t)})  \} +
    (1-2\eta_0\phi_l) \|\bbeta^*-\bbeta^{(t)} \|_2^2+\eta_0^2\big\| \boldsymbol{g}^{(t)}_{I^t}\big\|_2^2\,. 
\end{align*}
Combining the upper bound of $\Pi_5$ with \eqref{HD.prop1.sub0} yields 
\begin{equation}\label{HD.prop1.sub1}
    \begin{aligned}
        & \hat \cL_\tau(\bbeta^{(t+1)})-\hat \cL_\tau(\bbeta^{(t)}) \\
        &~~~\leq 8\phi_u(s^*/s)\eta_0 \cdot \{\hat \cL_\tau(\bbeta^{*})-\hat \cL_\tau(\bbeta^{(t)})\}+
        4\phi_u(s^*/s)(1-2\eta_0\phi_l) \cdot \|\bbeta^*-\bbeta^{(t)} \|_2^2 \\
        &~~~~~~~+ 4\phi_u(s^*/s)\eta_0^2\cdot \big\| \boldsymbol{g}^{(t)}_{I^t}\big\|_2^2 + c_1\phi_u\eta_0^2\cdot\big\| \boldsymbol{g}^{(t)}_{I^t\setminus(S^t\cup S^*)}\big\|_2^2\\
        &~~~~~~~-\phi_u\eta_0^2\cdot\big\|  \boldsymbol{g}^{(t)}_{S^t\cup S^*}\big\|_2^2-(1-2\phi_u\eta_0)(\eta_0/4)\cdot\big\| \boldsymbol{g}^{(t)}_{S^t\cup S^{t+1}}\big\|_2^2\\
        &~~~~~~~+\{ 4\phi_u(5+c_1^{-1})+(1-2\phi_u\eta_0 )(12/\eta_0)\}\cdot W_t\,. 
    \end{aligned}
\end{equation}
Note that $\| \boldsymbol{g}^{(t)}_{I^{t}} \|_2^2= \|\boldsymbol{g}^{(t)}_{I^{t}\setminus (S^t\cup S^*)} \|_2^2+ \| \boldsymbol{g}^{(t)}_{S^t\cup S^*} \|_2^2$ and $\| \boldsymbol{g}^{(t)}_{S^t\cup S^{t+1}} \|_2^2\geq  \|\boldsymbol{g}^{(t)}_{I^{t}\setminus (S^t\cup S^*)} \|_2^2$.
By taking $\eta_0 = \eta/(2\phi_u)$ for some $\eta\in (0, 1)$ and $c_1 = (1- \eta )/(4 \eta)$, \eqref{HD.prop1.sub1} becomes 
\begin{align*}
     \hat \cL_\tau(\bbeta^{(t+1)})-\hat \cL_\tau(\bbeta^{(t)})
         \leq &~ \frac{4 \eta s^*}{s} \{\hat \cL_\tau(\bbeta^{*})-\hat \cL_\tau(\bbeta^{(t)})\} + \frac{4 s^*}{s}   ( \phi_u -   \eta \phi_l ) \| \bbeta^* - \bbeta^{(t)} \|_2^2  \\
     & ~ + \frac{\eta^2}{\phi_u} \bigg(   \frac{  s^* }{ s } - \frac{  1-\eta }{ 16  \eta} \bigg)  \big\|\boldsymbol{g}^{(t)}_{I^t\setminus (S^t\cup S^*)} \big\|_2^2  + \frac{\eta^2}{\phi_u} \bigg( \frac{  s^*}{  s } - \frac{1}{4 } \bigg) \big\| \boldsymbol{g}^{(t)}_{S^t\cup S^*}\big\|_2^2 \\
     & ~+ C_\eta W_t \,,
\end{align*}
where $C_{\eta} = 4\phi_u\{ 5+4\eta(1-\eta)^{-1}+6(1-\eta)\eta^{-1} \}$. Under the assumption that
$$
    s \geq s^*\max\big\{128\phi_u^2/(3\eta^2\phi_l^2), 16\eta/(1-\eta) \big\}\,,  
$$
we further have
\begin{equation} \label{HD.prop1.sub2}
    \begin{aligned}
    \hat \cL_\tau(\bbeta^{(t+1)})-\hat \cL_\tau(\bbeta^{(t)}) 
    & \leq
     \frac{4 \eta s^*}{s}  \{\hat \cL_\tau(\bbeta^{*})-\hat \cL_\tau(\bbeta^{(t)})\}  \\
    &~~~~-  \frac{\eta^2}{8 \phi_u} \underbrace{\bigg( \big\| \boldsymbol{g}^{(t)}_{S^t\cup S^*}\big\|_2^2- \frac{3\phi_l^2}{4} \|\bbeta^*-\bbeta^{(t)} \|_2^2\bigg)}_{\Pi_6} + \, C_{\eta} W_t \,.
\end{aligned}
\end{equation}

It remains to lower bound the term $\Pi_6$. Under the assumption $\bbeta^{(t)} \in 
\bbeta^* + \BB^p(r_0)$, and conditioned on event $\cE_1$, we have
\begin{align*}
     \hat \cL_\tau(\bbeta^{(t)})-\hat \cL_\tau(\bbeta^{*}) 
     \leq&~ \langle \boldsymbol{g}^{(t)}, \bbeta^{(t)}-\bbeta^{*}\rangle-\phi_l \| \bbeta^{(t)}-\bbeta^{*} \|_2^2\\
    \leq &~\frac{1}{\phi_l} \big\| \boldsymbol{g}^{(t)}_{S^t\cup S^*}\big\|_2^2+  \frac{\phi_l}{4} \| \bbeta^{(t)}-\bbeta^{*} \|_2^2 - \phi_l \| \bbeta^{(t)}-\bbeta^{*} \|_2^2\\
    \leq &~   \frac{1}{\phi_l} \bigg( \big\| \boldsymbol{g}^{(t)}_{S^t\cup S^*}\big\|_2^2- \frac{3\phi_l^2}{4} \|\bbeta^*-\bbeta^{(t)} \|_2^2 \bigg) \,, 
\end{align*} 
which, in turn, implies $\Pi_6 \geq \phi_l \{\hat \cL_\tau(\bbeta^{(t)})-\hat \cL_\tau(\bbeta^*) \}$. Plugging this lower bound of $\Pi_6$ into \eqref{HD.prop1.sub2} yields
\begin{align*}
    \hat \cL_\tau(\bbeta^{(t+1)})-\hat \cL_\tau(\bbeta^{*})\leq \bigg( 1- \frac{4 \eta s^* }{s} - \frac{\eta^2\phi_l}{8 \phi_u} \bigg)  \{ \hat \cL_\tau(\bbeta^{(t)})-\hat \cL_\tau(\bbeta^{*}) \} +C_{\eta}W_t\,, 
\end{align*} 
as claimed.  \qed

\section{Proofs of Technical Lemmas}\label{proof.lemmas}

\subsection{Proof of Lemma~\ref{lem:gs}}\label{proof.lem:gs}

Define $\hat \bS = n^{-1} \sn \bchi_i \bchi_i^\T$ and $\bchi_i = \bSigma^{-1/2} \bx_i$, such that $\EE(\hat \bS) = \bI_p$. By \eqref{global.smootheness}, it suffices to show that 
	$
	\lambda_{\max}( \hat \bS ) \leq 2 
	$
	holds with probability at least $1-e^{-z}$. By Weyl's inequality and the proof of Theorem 6.5 in \citeS{wainwright2019highS}, it follows that with probability at least $1-e^{-z}$,
	\begin{align*}
	\lambda_{\max}( \hat \bS ) - 1 \leq \|\hat \bS - \bI\|\leq  C\upsilon_1^2\bigg(\sqrt{\frac{p+z}{n}}+\frac{p+z}{n}\bigg) 
	\end{align*}
	for some universal constant $C>0$. Therefore, when $n\gtrsim  \upsilon_1^4(p+z)$, we know that $\lambda_{\max}( \hat \bS )\leq 2$ with probability at least $1-e^{-z}$. \qed

\subsection{Proof of Lemma~\ref{lem:rsc}}\label{proof.lem:rsc}

For each pair 
$(\bbeta_1, \bbeta_2)$ of parameters, set $\bdelta=\bbeta_1-\bbeta_2$ and note that
\begin{align*}
	& \hat \cL_\tau(\bbeta_1)  - \hat \cL_\tau (\bbeta_2) - \langle \nabla \hat \cL_\tau(\bbeta_2), \bbeta_1-\bbeta_2 \rangle   \\ 
	&~~~ =  \int_0^1 \langle \nabla \hat \cL_\tau (\bbeta_2 + u \bdelta) -\nabla  \hat \cL_\tau(\bbeta_2), \bdelta \rangle \,{\rm d} u \\ 
	& ~~~=  \frac{1}{n} \sn \int_0^1  \big\{ \psi_\tau(  y_i - \langle  \bx_i ,  \bbeta_2 \rangle ) -\psi_\tau( y_i -  \langle \bx_i, \bbeta_2+u\bdelta \rangle  ) \big\}  \langle \bx_i, \bdelta \rangle \,{\rm d} u \\
	&~~~ \geq  \frac{1}{n} \sn \int_0^1  \big\{ \psi_\tau(  y_i - \langle  \bx_i ,  \bbeta_2 \rangle ) -\psi_\tau( y_i -  \langle \bx_i, \bbeta_2+u\bdelta \rangle  ) \big\}  \langle \bx_i, \bdelta \rangle  \mathbbm{1}(\cF_i  ) \, {\rm d} u \, ,
\end{align*}
where the last inequality follows from the fact that $\{\psi_{\tau}(a)-\psi_{\tau}(b)\}(a-b)\geq 0$ for any $a,b\in\RR$, and $\cF_i := \{ |\varepsilon_i | \leq \tau/4\} \cap \{ | \langle \bx_i, \bbeta_1 - \bbeta^* \rangle | \leq \tau/4 \} \cap \{ | \langle \bx_i, \bbeta_1 -\bbeta_2 \rangle | \leq  \tau_0 \| \bbeta_1 - \bbeta_2 \|_2/(2r) \}$ for some $0<\tau_0\leq \tau$ to be determined.
For each $i\in[n]$, it holds conditioned on $\cF_i$ that  $| y_i - \langle  \bx_i ,  \bbeta_2 \rangle | \leq \tau$ and $| y_i -  \langle \bx_i, \bbeta_2+u\bdelta \rangle| \leq \tau$ for all $\bbeta_1\in \Theta(r/2)$, $\bbeta_2 \in \bbeta_1 +\BB^p(r)$ and $u\in[0,1]$. Consequently,
\begin{align}
 & \hat \cL_\tau(\bbeta_1)  - \hat \cL_\tau (\bbeta_2) - \langle \nabla \hat \cL_\tau(\bbeta_2), \bbeta_1-\bbeta_2 \rangle\nn \\
 &~~~ \geq  \frac{1}{n} \sn   \int_0^1 u \,{\rm d} u  \cdot  \langle \bx_i, \bdelta \rangle^2 \mathbbm{1} ( \cF_i  ) =   \frac{1}{2 n} \sn    \langle \bx_i, \bdelta \rangle^2 \mathbbm{1}( \cF_i  ) \, .\label{Fi}
\end{align}
For any $R>0$, define  functions  $\varphi_R(u)=u^2 \mathbbm{1}(|u|\leq R/2) + \{ u\, {\rm{sign}}(u) -R \}^2 \mathbbm{1}( R/2 < |u| \leq R)$ and  $\phi_R(u)= \mathbbm{1}(|u|\leq R/2) + \{ 2 - (2u/R) {\rm{sign}}(u)\} \mathbbm{1} (R/2 < |u| \leq R)$, which are smoothed versions of $u\mapsto u^2 \mathbbm{1}(|u|\leq R)$ and $u\mapsto \mathbbm{1}(|u|\leq R)$.
Moreover,  note that $\varphi_{cR}(cu) = c^2 \varphi_{R}(u)$ for any $c > 0$ and $\varphi_0(u) = 0$. The right-hand side of the  inequality \eqref{Fi} can be further bounded from below by
\begin{align}
 & \frac{1}{2n} \sn   \chi_i  \varphi_{\| \bdelta \|_2  \tau_0/(2 r)} (\langle \bx_i, \bdelta \rangle  )   \phi_{\tau/4} (\langle \bx_i, \bbeta_1 - \bbeta^* \rangle )   \nn \\
 &~~~ = \frac{1}{2} \| \bdelta \|_2^2 \cdot 
  \underbrace{ \frac{1}{  n  } \sn   \chi_i   \varphi_{  \tau_0/(2r)} (\langle \bx_i, \bdelta /\| \bdelta \|_2 \rangle  )   \phi_{\tau /4} (\langle \bx_i, \bbeta_1 - \bbeta^* \rangle )  }_{ =: V_n(\bbeta_1, \bbeta_2) } \,, \label{def:Vn}
\end{align}
where $\chi_i = \mathbbm{1}(|\varepsilon_i|\leq \tau/4)$. We  will lower bound $\EE  \{V_n(\bbeta_1, \bbeta_2) \}$ and upper bound $-V_n(\bbeta_1, \bbeta_2)  + \EE \{ V_n(\bbeta_1, \bbeta_2) \}$, respectively, for $\bbeta_1 \in  \Theta(r/2)$ and $\bbeta_2 \in \bbeta_1 + \BB^p(r)$.  Letting $\bv =   \bdelta / \| \bdelta \|_2  \in \mathbb{S}^{p-1}$,  from Markov's inequality and Assumption \ref{assump:heavy-tail}, we see that
\begin{align*}
 \EE  \{V_n(\bbeta_1, \bbeta_2) \}& =\frac{1}{n}\sn\EE \big\{ \varphi_{\tau_0/(2 r)} ( \langle  \bx_i , \bv \rangle  )   \phi_{\tau /4} (\langle \bx_i, \bbeta_1 - \bbeta^*\rangle ) \cdot \chi_i \big\}  \\
 & \geq     \bigg( 1 - \frac{16\sigma_0^2}{\tau^2} \bigg)  \cdot \frac{1}{n}\sn \EE \bigg\{ \langle \bx_i, \bv \rangle^2  \mathbbm{1}\bigg(  | \langle \bx_i, \bv \rangle | \leq \frac{\tau_0}{4 r}\bigg)     \mathbbm{1}\bigg(  | \langle \bx_i, \bbeta_1 - \bbeta^* \rangle | \leq  \frac{\tau}{8} \bigg)\bigg\} \\ 
 & \geq    \bigg( 1 - \frac{16\sigma_0^2}{\tau^2} \bigg)   \cdot \frac{1}{n}\sn \EE  \bigg \{  \langle \bx_i, \bv \rangle^2 -     \langle \bx_i, \bv \rangle^2  \mathbbm{1}\bigg(  | \langle \bx_i, \bv \rangle | > \frac{\tau_0}{4 r}\bigg)  \\
 &~~~~~~~~~~~~~~~~~~~~~~~~~~~~~~~~~~~~~~~~~~~~-       \langle \bx_i, \bv \rangle^2  \mathbbm{1}\bigg(  | \langle \bx_i, \bbeta_1 - \bbeta^* \rangle | >  \frac{\tau}{8} \bigg)  
 \bigg \}  \, . 
\end{align*}
By the definition of $\kappa_4$ in \eqref{def:kappa4},
\begin{align*}
&\frac{1}{n}\sn \EE \bigg\{\langle \bx_i, \bv \rangle^2  \mathbbm{1}\bigg(  | \langle \bx_i, \bv \rangle | > \frac{\tau_0}{4 r}\bigg) \bigg\} \\
 &~~~ \leq \bigg(\frac{4r}{\tau_0}\bigg)^2\cdot \frac{1}{n}\sn \EE \big(\langle \bx_i, \bv \rangle^4\big) \leq \bigg(\frac{4r}{\tau_0}\bigg)^2   \kappa_4  \| \bv \|_{\bSigma}^4 \leq \kappa_4 \lambda_1 \bigg(\frac{4  r}{\tau_0}\bigg)^2 \| \bv \|_{\bSigma}^2 \,.
\end{align*}
Recall $\bbeta_1 \in  \Theta(r/2)$. Similarly, setting $\bs = (\bbeta_1 - \bbeta^*)/\|\bbeta_1 - \bbeta^*\|_2 \in \mathbb{S}^{p-1}$, it follows that
\begin{align*}
&\frac{1}{n} \sn \EE  \bigg\{\langle \bx_i, \bv \rangle^2  \mathbbm{1}\bigg(  | \langle \bx_i, \bbeta_1 - \bbeta^* \rangle | >  \frac{\tau}{8}\bigg)  \bigg\} 
= \frac{1}{n} \sn \EE  \bigg\{\langle \bx_i, \bv \rangle^2  \mathbbm{1}\bigg(  | \langle \bx_i, \bs \rangle | >  \frac{\tau}{8\|\bbeta_1 - \bbeta^*\|_2}\bigg)  \bigg\} \\
&~~~\leq   \bigg(\frac{4r}{\tau}\bigg)^2 \cdot \frac{1}{n}\sn \EE \big(\langle \bx_i, \bv \rangle^2\cdot \langle \bx_i, \bs \rangle^2\big)   \leq \kappa_4  \bigg(\frac{4r}{\tau}\bigg)^2 \| \bv \|_{\bSigma}^2 \| \bs \|_{\bSigma}^2   \leq \kappa_4  \lambda_1 \bigg(\frac{4r}{\tau}\bigg)^2 \| \bv \|_{\bSigma}^2 \,.
\end{align*}
Combining the above inequalities, we conclude that
\begin{align*}
 & \frac{1}{n}\sn \EE  \bigg \{  \langle \bx_i, \bv \rangle^2 -     \langle \bx_i, \bv \rangle^2  \mathbbm{1}\bigg(  | \langle \bx_i, \bv \rangle | > \frac{\tau_0}{4 r}\bigg)  -       \langle \bx_i, \bv \rangle^2  \mathbbm{1}\bigg(  | \langle \bx_i, \bbeta_1 - \bbeta^* \rangle | > \frac{\tau}{8} \bigg)  
 \bigg \}  \\
 &~~~ \geq \| \bv \|_{\bSigma}^2  \bigg\{  1 - \kappa_4 \lambda_1 \bigg(\frac{4  r}{\tau_0}\bigg)^2 - \kappa_4  \lambda_1 \bigg(\frac{4r}{\tau}\bigg)^2 \bigg\} \geq \lambda_p \bigg\{  1 - 2 \kappa_4 \lambda_1 \bigg(\frac{4  r}{\tau_0}\bigg)^2  \bigg\}\, ,
\end{align*} 
which further implies $ \EE  \{V_n(\bbeta_1, \bbeta_2) \}\geq  ( 1 - 16\sigma_0^2/\tau^2  )\cdot \{  1- 2\kappa_4\lambda_1( 4r/\tau_0)^2   \} \cdot \lambda_p$ for any $\bbeta_1 \in  \Theta(r/2)$ and $   \bbeta_2 \in \bbeta_1 + \BB^p(r)$. Consequently, we set $\tau_0 =  16 (\kappa_4 \lambda_1 )^{1/2}  r$, such that as long as $\tau \geq  16 \max\{  \sigma_0 , (\kappa_4 \lambda_1 )^{1/2}  r \}$,  
\begin{align}
    \inf_{  \bbeta_1 \in  \Theta(r/2) , \, \bbeta_2 \in \bbeta_1 + \BB^p(r)}  \EE\{ V_n (\bbeta_1 , \bbeta_2 ) \} \geq    c_0  \lambda_p \,,  \label{mean.Vn.lbd}
\end{align} 
where $c_0=(1-1/16)(1-1/8) \approx 0.82$.

Next, we upper bound the supremum
\begin{align*}
\Omega(r)    : = \sup_{ \bbeta_1 \in  \Theta(r/2) , \, \bbeta_2 \in \bbeta_1 + \BB^p(r) }   \big[   -  V_n(\bbeta_1, \bbeta_2 )  +   \EE  \{ V_n(\bbeta_1, \bbeta_2 ) \}\big] \,.
\end{align*}
{Write $V_n(\bbeta_1, \bbeta_2)-\EE \{V_n(\bbeta_1, \bbeta_2) \}  = n^{-1} \sn [v_i(\bbeta_1, \bbeta_2 )-\EE\{ v_i(\bbeta_1, \bbeta_2 )\}]$, where 
$$
    v_i(\bbeta_1, \bbeta_2 ) =   \chi_i  \varphi_{   \tau_0 /(2r)} (\langle \bx_i, \bv \rangle  )   \phi_{\tau /4} (\langle \bx_i, \bbeta_1 - \bbeta^* \rangle ) \,.  
$$
Notice that $\chi_i=\mathbbm{1}(|\varepsilon_i|\leq \tau/4)$, $0\leq \varphi_R(u)\leq R^2/4$ and  $0\leq \phi_R(u)\leq 1$. Then $ v_i(\bbeta_1, \bbeta_2 ) \in  [0,   \tau_0^2 / (4 r)^2] $, which implies $\sup_{\bbeta_1,\bbeta_2} |    v_i(\bbeta_1, \bbeta_2 )-\EE \{ v_i(\bbeta_1, \bbeta_2 )  \} | \leq   \tau_0^2 / (4 r)^2 $. Due to $\varphi_{R}(u)\leq u^2\mathbbm{1}(|u|\leq R)\leq u^2$,  it holds that
\begin{align*}
     \EE( [  v_i(\bbeta_1, \bbeta_2 )-\EE \{v_i(\bbeta_1, \bbeta_2 ) \}]^2) \leq \EE\{v_i^2(\bbeta_1, \bbeta_2 )\} \leq \EE(\langle \bx_i,\bv \rangle^4) \leq \kappa_4 \| \bv \|_{\bSigma}^4 \leq \kappa_4 \lambda_1^2
\end{align*}
for any $\bbeta_1$ and $ \bbeta_2$, which implies $\sup_{\bbeta_1, \bbeta_2} \EE( [  v_i(\bbeta_1, \bbeta_2 )-\EE \{v_i(\bbeta_1, \bbeta_2 ) \}]^2)    \leq \kappa_4 \lambda_1^2$. 
Applying Lemma~\ref{lem:bousquet}, we obtain that for any $z>0$,
\begin{align}
\Omega(r)  \leq \frac{5}{4} \EE \{ \Omega(r)\}  + \kappa_4^{1/2} \lambda_1 \sqrt{\frac{2    z}{n  }} + \frac{13}{3}\bigg(\frac{\tau_0 }{ 4r}\bigg)^2 \frac{ z}{ n }    \label{talagrand.concentration}
\end{align}
holds with probability at least $1-e^{-z}$. To bound the expectation $\EE \{\Omega(r)\} $, using Rademacher symmetrization and Lemma~4.5 in \citeS{ledoux2013probabilityS}, it holds
\begin{align}
\EE \{ \Omega(r)\} \leq 2 \cdot \sqrt{\frac{\pi}{2}} \cdot \EE \bigg(  \sup_{\bbeta_1, \bbeta_2}   \mathbb{G}_{\bbeta_1, \bbeta_2 }    \bigg) \,,  \label{gaussian.complexity}
\end{align}
where the supremum is taken over $( \bbeta_1 , \bbeta_2)  \in \{ \bbeta^* + \BB^p(r/2) \} \times \{ \bbeta_1 + \BB^p(r)\}$---including $(\bbeta^*, \bbeta^*)$, 
$$ 
\mathbb{G}_{\bbeta_1, \bbeta_2 }  = \frac{1}{n } \sn g_i  \chi_i  \varphi_{\tau_0 /(2 r)} ( \langle  \bx_i , \bv \rangle  )   \phi_{\tau /4} (\langle \bx_i, \bbeta_1 - \bbeta^*\rangle ) \,,  
$$
and $g_1, \ldots, g_n$ are independent standard normal variables. Conditional on $\{(y_i,\bx_i)\}_{i=1}^n$,  $\mathbb{G}_{\bbeta_1, \bbeta_2 } $ is a centered Gaussian process and $\mathbb{G}_{\bbeta^*, \bbeta^* }  = 0$. 
For any two admissible pairs $(\bbeta_1, \bbeta_2)$ and $(\bbeta_1',\bbeta_2')$, write $\bv =   (  \bbeta_1 - \bbeta_2 )/ \| \bbeta_1 - \bbeta_2 \|_2$, $\bv' = ( \bbeta_1' -\bbeta_2') / \| \bbeta'_1 - \bbeta'_2 \|_2$, and note that
\begin{align*}
\mathbb{G}_{\bbeta_1, \bbeta_2 } - \mathbb{G}_{\bbeta'_1, \bbeta'_2 }  & =  \GG_{\bbeta_1, \bbeta_2 } -  \GG_{\bbeta_1' ,  \bbeta'_1 +\bdelta } +  \GG_{\bbeta_1'  , \bbeta'_1 +\bdelta  }  - \GG_{\bbeta'_1, \bbeta'_2 }  \\
& = \frac{1}{n } \sn g_i  \chi_i   \varphi_{\tau_0/(2 r)} (\langle \bx_i, \bv \rangle ) \big\{ \phi_{\tau/4} ( \langle \bx_i, \bbeta_1 - \bbeta^* \rangle ) - \phi_{\tau/4} ( \langle \bx_i, \bbeta'_1 - \bbeta^* \rangle  ) \big\} \\
&~~~~ + \frac{1}{n } \sn g_i  \chi_i   \phi_{\tau/4} ( \langle \bx_i, \bbeta'_1 - \bbeta^* \rangle  )   \big\{ \varphi_{\tau_0 /(2 r)} (\langle \bx_i, \bv \rangle ) - \varphi_{\tau_0/(2 r)} (\langle \bx_i, \bv' \rangle )  \big\}\, .
\end{align*}
Recall that $\varphi_R$ and $\phi_R$ are, respectively, $R$- and $(2/R)$-Lipschitz continuous, and $\varphi_R(u) \leq (R/2)^2$. It follows that
\begin{align*}
 \EE^* \{ ( \GG_{\bbeta_1, \bbeta_2 } -  \GG_{\bbeta_1' ,  \bbeta'_1 +\bdelta }  )^2\} &\leq   \frac{1}{ n^2 }  \bigg( \frac{8}{\tau} \bigg)^2\bigg( \frac{\tau_0 }{4r} \bigg)^4 \sn \chi_i  \langle \bx_i, \bbeta_1 - \bbeta_1' \rangle^2 \\
 &\leq \frac{\tau_0^2 }{4   n^2 r^4 } \sn \chi_i  \langle \bx_i, \bbeta_1 - \bbeta_1'  \rangle^2 \,, \\ 
  \EE^*  \{(    \GG_{\bbeta_1'  , \bbeta'_1 +\bdelta  }  - \GG_{\bbeta'_1, \bbeta'_2 }  )^2 \}  &\leq \frac{1}{ n^2 } \bigg( \frac{\tau_0 }{2r} \bigg)^2  \sn  \chi_i  \langle \bx_i, \bv- \bv' \rangle^2 = \frac{\tau_0^2 }{4   n^2 r^2 }\sn   \chi_i\langle \bx_i, \bv - \bv' \rangle^2\, ,
\end{align*}
where the expectation  $\mathbb{E}^*$
  is applied only over $\{g_i\}_{i=1}^n$.  Together, the last two displays imply
\begin{align*}
\EE^*  \{( \mathbb{G}_{\bbeta_1, \bbeta_2 } - \mathbb{G}_{\bbeta'_1, \bbeta'_2 }  )^2\}  \leq  
  \frac{\tau_0^2 }{2  n^2 r^4 } \sn \chi_i  \langle \bx_i, \bbeta_1 -  \bbeta_1'  \rangle^2  +  \frac{\tau_0^2}{2 n^2r^2 }\sn  \chi_i \langle \bx_i, \bv - \bv' \rangle^2\,.
\end{align*}
Define another (conditional) Gaussian process $\{ \mathbb{Z}_{\bbeta_1, \bbeta_2} \}$ as 
\begin{align}
\mathbb{Z}_{\bbeta_1, \bbeta_2} =  \frac{\tau_0 }{\sqrt{2}  n r^2 } \sn g_i'  \chi_i \langle  \bx_i, \bbeta_1 - \bbeta^* \rangle + \frac{\tau_0}{\sqrt{2}  n r } \sn g''_i \chi_i \frac{\langle \bx_i, \bbeta_2 - \bbeta_1 \rangle }{\| \bbeta_2 - \bbeta_1 \|_2}\, ,\label{def:ep:Z}
\end{align}
where $\{ g'_i \}$ and $\{g_i''\}$ are two independent copies of $\{ g_i\}$. The above calculations show that
\begin{align*}
 \EE^{**} \{( \mathbb{G}_{\bbeta_1, \bbeta_2 } - \mathbb{G}_{\bbeta'_1, \bbeta'_2 }  )^2 \} \leq \EE^{**} \{( \mathbb{Z}_{\bbeta_1, \bbeta_2 } - \mathbb{Z}_{\bbeta'_1, \bbeta'_2 }  )^2 \} \,  ,
\end{align*}
where the expectation  $\mathbb{E}^{**}$
  is applied only over $\{g_i,g_i',g_i''\}_{i=1}^n$. Using Sudakov-Fernique's Gaussian comparison inequality (see, e.g., Theorem~7.2.11 in \citeS{vershynin2018highS}) gives
\begin{align}
 \EE^{**}   \bigg( \sup_{\bbeta_1, \bbeta_2}  \mathbb{G}_{\bbeta_1, \bbeta_2 }  \bigg)  
 \leq  \EE^{**}  \bigg( \sup_{\bbeta_1, \bbeta_2}   \mathbb{Z}_{\bbeta_1, \bbeta_2 }   \bigg)  \,. \label{gaussian.comparison}
\end{align}
The same bound also applies to unconditional expectations of the two suprema.  For the random process $\{ \mathbb{Z}_{\bbeta_1, \bbeta_2 } \}$, by the Cauchy-Schwarz inequality, we have
\begin{align*}
   \EE   \bigg( \sup_{\bbeta_1, \bbeta_2}   \mathbb{Z}_{\bbeta_1, \bbeta_2 }   \bigg)    
  \leq&\, \frac{\tau_0}{\sqrt{2} r^2} \sup_{\bbeta_1 \in \Theta(r/2) } \| \bbeta_1 -\bbeta^* \|_2  \cdot \EE \bigg(\bigg\| \frac{1}{n} \sn  g_i' \chi_i \bx_i \bigg\|_2\bigg) \\
 &~+  \frac{\tau_0}{\sqrt{2} r }  \EE\bigg(\bigg\| \frac{1}{n} \sn  g''_i \chi_i \bx_i \bigg\|_2\bigg)  \\
  \leq & \,\frac{ \tau_0}{2 r}     \sqrt{\frac{  \lambda_1  p}{2 n}} + \frac{ \tau_0 }{ r }  \sqrt{\frac{  \lambda_1 p}{  2 n}} =    24 \kappa_4^{1/2} \lambda_1 \sqrt{\frac{  p}{2 n}}\,,
\end{align*} 
where the last equality follows from $\tau_0 =  16 (\kappa_4 \lambda_1 )^{1/2}  r$. This, together with \eqref{talagrand.concentration},  \eqref{gaussian.complexity} and \eqref{gaussian.comparison}, implies that with probability at least $1-e^{-z}$,   
\begin{align}
	\Omega(r)   &  \leq \frac{5}{4} \EE \{\Omega(r)\} + \kappa_4^{1/2} \lambda_1 \sqrt{\frac{2z}{n}} +  \frac{13}{3}\bigg(\frac{\tau_0}{4r} \bigg)^2 \frac{z}{n}   \nn \\
    & \lesssim    \kappa_4^{1/2} \lambda_1 \sqrt{\frac{   p + z}{  n}}  + \kappa_4 \lambda_1 \frac{z}{n} \lesssim  \kappa_4^{1/2} \lambda_1 \sqrt{\frac{   p + z}{  n}}\,, \label{Omega.concentration}
 \end{align}
as long as $n\gtrsim   \kappa_4 z$.

Finally, combining  \eqref{def:Vn},  \eqref{mean.Vn.lbd} and \eqref{Omega.concentration}, we conclude that with probability at least $1-e^{-z}$, 
\begin{align*}
 &  \hat \cL_\tau(\bbeta_1)  - \hat \cL_\tau (\bbeta_2) - \langle \nabla \hat \cL_\tau(\bbeta_2), \bbeta_1-\bbeta_2 \rangle  \\
 &~~~ \geq   \frac{1}{2} \bigg(   c_0 \lambda_p - C  \kappa_4^{1/2}  \lambda_1 \sqrt{\frac{  p + z}{n}}   \, \bigg) \cdot  \| \bbeta_1 - \bbeta_2 \|_2^2  \geq c_1  \| \bbeta_1 - \bbeta_2 \|_2^2  
\end{align*}
holds uniformly for $\bbeta_1 \in \Theta(r/2)$ and $\bbeta_2 \in \bbeta_1+\BB^p(r)$ for some absolute constant $c_1 \in (0, \lambda_p/4)$,  provided that $n\gtrsim   \kappa_4  (\lambda_1/\lambda_p)^2( p + z )$ and $\tau \geq  16 \max\{ \sigma_0 ,   (\kappa_4 \lambda_1)^{1/2} r \}$. This completes the proof. \qed

\subsection{Proof of Lemma~\ref{lem:rsc.smootheness}}\label{proof.lem:rsc.smootheness}

Conditioned on $\hat \bbeta \in \Theta(r_0/2)$, set  $\bzeta = \bbeta - \hat \bbeta$. It follows from the first-order Taylor expansion that
\begin{align*}
\hat R_\tau( \bzeta ) := \hat \cL_\tau(\bbeta) - \hat \cL_\tau(\hat \bbeta) - \langle  \underbrace{  \nabla \hat \cL_\tau(\hat \bbeta) }_{= \textbf{0}}, \bbeta - \hat \bbeta \rangle  = \int_0^1 \langle \nabla \hat \cL_\tau(\hat \bbeta + u \bzeta ) - \nabla \hat \cL_\tau(\hat \bbeta) , \bzeta \rangle \,{\rm d} u\, .
\end{align*}
For any $v \in (0,1)$ and $u>0$,  by Lemma~C.1 in \citeS{sun2020adaptiveS},
\begin{align*}
\langle \nabla \hat \cL_\tau(\hat \bbeta + u \bzeta ) - \nabla \hat \cL_\tau(\hat \bbeta) ,  u\bzeta \rangle \geq  \frac{1}{v} \langle \nabla \hat \cL_\tau(\hat\bbeta + u v \bzeta ) - \nabla \hat \cL_\tau(\hat \bbeta) \,,  u v \bzeta \rangle  \,, 
\end{align*}
which in turn implies $\hat R_\tau( \bzeta ) \geq v^{-1} \hat R_\tau( v\bzeta )$.   Hence, for any $\bbeta \notin  \hat \bbeta + \BB^p(r_0 )$ such that $\| \bzeta \|_2 = \| \bbeta - \hat  \bbeta \|_2 > r_0$, by taking $v = r_0 /    \| \bbeta - \hat  \bbeta \|_2   \in (0,1)$ and $\bzeta_0 =   v\cdot ( \bbeta - \hat  \bbeta) \in \partial \BB^p(r_0 )$, we obtain  that
\begin{align*}
 \hat \cL_\tau(\bbeta) - \hat \cL_\tau(\hat \bbeta)  &  \geq      r_0^{-1}  \| \bbeta - \hat \bbeta \|_2 \cdot  \hat R_\tau( \bzeta_0 )   =      r_0^{-1} \| \bbeta - \hat  \bbeta \|_2   \big\{  \hat \cL_\tau(\hat \bbeta + \bzeta_0 ) - \hat \cL_\tau(\hat \bbeta)  \big\}\, .
\end{align*}
Conditioned on $\cE_0 \cap \cE_1$, $\hat \cL_\tau(\hat \bbeta + \bzeta_0 ) - \hat \cL_\tau(\hat \bbeta)   \geq  \phi_l    \| \bzeta_0 \|_2^2$.  Together, these two bounds imply
\begin{align*}
\hat \cL_\tau(\bbeta) - \hat \cL_\tau(\hat \bbeta)  \geq  \phi_l   r_0   \| \bbeta - \hat \bbeta \|_2 \,, 
\end{align*}
thus verifying \eqref{restricted.loss.difference}. \qed

\subsection{Proof of Lemma~\ref{lem:max.normal.l2}}
\label{proof.lem:max.normal.l2}

For each $t\in \{0\}\cup [T-1]$,  we apply the concentration inequality for Lipschitz functions of standard normal random variables and obtain that
\begin{align*}
 \PP \big ( \| \bg_t\|_2   \geq   p^{1/2 }  + \sqrt{2z} \big ) \leq e^{-z}	
\end{align*}
for any $z\geq 0$.
Combining this with the union bound (over $t\in \{0\}\cup [T-1]$) yields \eqref{max.chi-square.concentration}.

Notice that  $\| \bg_t \|_2^2$ follows the chi-square distribution $\chi_p^2$, which is a special case of the gamma distribution $\Gamma(p/2,2)$. The centered variable, $\| \bg_t \|_2^2- p$, is known to be sub-gamma with variance factor  $v=2p$ and  scale parameter $c=2$ \citepS{boucheron2013concentrationS}.  Let $Z= \sum_{t=0}^{T-1} \rho^t  ( \| \bg_t\|_2^2 -p )$. For each $t$ and $0<\lambda <1/c$,
\begin{align*}
 \log \EE \{e^{\lambda\rho^t  ( \| \bg_t\|_2^2 - p ) }\} \leq \frac{v \lambda^{2} \rho^{2t} }{2(1 - c \lambda\rho^t )} \leq \frac{v \lambda^2 \rho^{2t} }{2(1-c\lambda)}\,.
\end{align*}
By independence,
\begin{align*}
\log \EE (e^{\lambda Z } )= \sum_{t=0}^{T-1}  \log \EE \{e^{\lambda\rho^t  ( \| \bg_t\|_2^2 - p ) }\} \leq \frac{v \lambda^2  \sum_{t=0}^{T-1}\rho^{2t} }{2(1-c\lambda)} \leq \frac{v}{1-\rho^2}\cdot \frac{  \lambda^2  }{2(1-c\lambda)}\, .
\end{align*}
Therefore,  the centered variable $Z$ is sub-gamma with parameters $(v/(1-\rho^2), c)=(2p/(1-\rho^2), 2)$. Applying Chernoff's bound to $Z$ (see Section~2.4 of \citeS{boucheron2013concentrationS}) yields
\begin{align*}
 \PP \big\{    Z > 2 (pz)^{1/2} (1-\rho^2)^{-1/2}    + 2 z  \big\}   \leq e^{-z}  
\end{align*}
for any $z>0$.
This, combined with  $\sum_{t=0}^{T-1} \rho^t \leq 1/(1-\rho)$, proves \eqref{weighted.chi-square.concentration}.   \qed

\subsection{Proof of Lemma \ref{lem.Bahadur}}
\label{proof.lem.Bahadur}

  For $\bdelta\in \RR^p$, define the function $\Delta(\bdelta)= \bSigma^{-1/2}\{\nabla \widehat{\cL}_\tau(\bbeta^*+\bdelta)-\nabla \widehat{\cL}_\tau(\bbeta^* )\}-\bSigma^{1/2}\bdelta$. By the mean value theorem for vector-valued functions, 
 \begin{align*}
 	\EE\{\Delta(\bdelta)\} = &~ \bSigma^{-1/2}\bigg[ \int_{0}^1  \nabla^2  \EE\{ \widehat{\cL}_\tau(\bbeta^*+ u \bdelta )\}\,\mbox{d} u\bigg]  \bdelta-\bSigma^{1/2}\bdelta\\
    = &~ \int_{0}^1 [\bSigma^{-1/2}\nabla^2  \EE\{\widehat{\cL}_\tau(\bbeta^*+ u \bdelta ) \}- \bSigma^{1/2}]\,\mbox{d} u \cdot \bdelta\,.
 \end{align*}
Hence,  we have $\sup_{\|\bdelta\|_2\leq r}\|\EE\{\Delta(\bdelta)\}\|_2 \leq \sup_{\bbeta\in\Theta(r)}\| \bSigma^{-1/2}\nabla^2 \EE\{ \widehat{\cL}_\tau(\bbeta)\} - \bSigma^{1/2} \| \cdot r $ for any $r>0$, which further implies that
\begin{align}\label{bound.supdelta}
    \sup_{\|\bdelta\|_2\leq r}\| \Delta(\bdelta) \|_2 \leq \underbrace{\sup_{\|\bdelta\|_2\leq r}\| (1-\EE)
\{\Delta(\bdelta)\} \|_2}_{\rm I_1} + \underbrace{\sup_{\bbeta\in\Theta(r)}\| \bSigma^{-1/2}\nabla^2 \EE\{ \widehat{\cL}_\tau(\bbeta)\} - \bSigma^{1/2} \|}_{\rm I_2} \cdot r \,.
\end{align}
In what follows, we upper bound ${\rm I_1}$ and ${\rm I_2}$, respectively.

Note that ${\rm I_1} = \sup_{\bbeta\in\Theta(r)}\|(1-\EE)[\bSigma^{-1/2}\{\nabla \widehat{\cL}_\tau(\bbeta )-\nabla \widehat{\cL}_\tau(\bbeta^* )\}]\|_2$. Following the inequality above and (B.13) in the supplementary material of \citeS{chen2020robustS}, it can be similarly shown that, for any $z\geq 1/2$ and $r>0$, with probability at least $1-e^{-z}$,
\begin{align}\label{bound:I1}
    {\rm I_1} \lesssim  r \sqrt{\frac{p+ z}{n}}\,, 
\end{align} 
as long as $n\gtrsim p+z$.

For ${\rm I_2}$,  note that $\nabla^2  \EE \{ \widehat{\cL}_\tau(\bbeta) \} =n^{-1}\sn \EE [ \mathbbm{1}\{ |r_i(\bbeta)|\leq \tau\} \bx_i\bx_i^{\T} ]$ where $r_i(\bbeta)=y_i-\bx_i^{\T}\bbeta$. Then, 
 \begin{align*}
 	\bSigma^{-1/2}\nabla^2  \EE\{\widehat{\cL}_\tau(\bbeta)\}\bSigma^{-1/2} -\bI_{\rm p} = &~-\frac{1}{n}\sn \EE\{\mathbbm{1}\{ |r_i(\bbeta)|> \tau \} \bchi_i\bchi_i^{\T}\} \,,
 \end{align*}
where $\bchi_i = \bSigma^{-1/2}\bx_i$. By Markov's inequality, $\EE[\mathbbm{1}\{ |r_i(\bbeta)|> \tau \} \,|\,\bx_i] \lesssim \tau^{-2-\iota}\EE(|\varepsilon_i|^{2+\iota}\,|\,\bx_i)+\tau^{-2-\iota }|\bx_i^{\T}(\bbeta-\bbeta^*)|^{2+\iota }$. Under Assumptions \ref{assump:design} and \ref{assump:heavy-tail}, it follows that
\begin{align*}
    \sup_{\bbeta\in\Theta(r)}\bigg\| \frac{1}{n}\sn \EE\{\mathbbm{1}\{ |r_i(\bbeta)|> \tau \} \bchi_i\bchi_i^{\T}\} \bigg\| = &~\sup_{\bbeta\in\Theta(r)}\sup_{\bu \in \SSS^{p-1}}\bigg|\frac{1}{n}\sn \EE[\mathbbm{1}\{ |r_i(\bbeta)|> \tau \} (\bchi_i^{\T}\bu)^2]\bigg|\\
    \lesssim &~ \frac{\sigma_{\iota}^{2+\iota}}{\tau^{2+\iota}}+\frac{r^{2+\iota}}{\tau^{2+\iota}}\,,
\end{align*}
which in turn implies ${\rm I_2} \lesssim \sup_{\bbeta\in\Theta(r)}\|\EE\{\bSigma^{-1/2}\nabla^2  \widehat{\cL}_\tau(\bbeta)\bSigma^{-1/2}\} -\bI_{\rm p}\| \lesssim \sigma_{\iota}^{2+\iota}\tau^{-2-\iota}+ r^{2+\iota}\tau^{-2-\iota}$. Combining this with \eqref{bound.supdelta} and \eqref{bound:I1}, we conclude that for  any $z\geq 1/2$,
\begin{align*}
    \sup_{\|\bdelta\|_2\leq r}\| \Delta(\bdelta) \|_2 \lesssim r \bigg(\sqrt{\frac{p+ z}{n}} + \frac{\sigma_{\iota}^{2+\iota}}{\tau^{2+\iota}}+\frac{r^{2+\iota}}{\tau^{2+\iota}}\ \bigg)\,
\end{align*}
holds with probability at least $1-e^{-z}$ as long as $n\gtrsim p+z$.

Let $\widehat\bdelta=\widehat{\bbeta}-\bbeta^*$ and $\tau \asymp  \sigma_\iota  \{ n\epsilon/(p+\log n) \}^{1/(2+\iota)}$. By the proof of  Proposition \ref{prop:events1}, it follows that 
 \begin{align*} 
 	\|\widehat\bdelta\|_2=\|\widehat{\bbeta}-\bbeta^*\|_2\lesssim  \sigma_{\iota} \bigg( \frac{p+\log n}{n\epsilon} \bigg)^{(1+\iota)/(2+\iota)} + \sigma_0\sqrt{\frac{p+\log n}{n}} : = r_*\,
 \end{align*}
 with probability at least $1-Cn^{-1}$, provided that $n\gtrsim (p+\log n)/\epsilon$. Taking $r \asymp r_*$, and noting that $\nabla \widehat{\cL}_\tau(\widehat\bbeta )=0$, we have
 \begin{align*}
 	&~\bigg\|\bSigma^{1/2}(\widehat{\bbeta}-\bbeta^*)-\frac{1}{n}\sn \psi_{\tau}(\varepsilon_i )\bSigma^{-1/2}\bx_i\bigg\|_2 =\| \Delta(\widehat\bdelta)\|_2   \\
 	&~~~~\lesssim \bigg( \frac{p+\log n}{n\epsilon}+  \sqrt{\frac{p+\log n}{n}}\bigg)\bigg\{   \sigma_{\iota} \bigg( \frac{p+\log n}{n\epsilon} \bigg)^{(1+\iota)/(2+\iota)} + \sigma_0\sqrt{\frac{p+\log n}{n}}\bigg\}\\
 	&~~~~ \lesssim \sigma_\iota \bigg\{ \bigg( \frac{p+\log n}{n\epsilon} \bigg)^{(3+2\iota)/(2+\iota)} + \sqrt{\frac{p+\log n}{n}}   \bigg( \frac{p+\log n}{n\epsilon} \bigg)^{(1+\iota)/(2+\iota)} + \frac{p+\log n}{n}\bigg\}
 \end{align*}
 with probability at least $1-Cn^{-1}$, provided that $n\gtrsim    (p+\log n) /\epsilon$.  Hence, we complete the proof of Lemma   \ref{lem.Bahadur}.  \qed

\subsection{Proof of Lemma \ref{lem.GA}}
\label{proof.lem.GA}

To begin with, it follows from Lemma \ref{lem.Bahadur} that, with probability at least $1-Cn^{-1}$,
 \begin{align*}
 	&~	\bigg| \sqrt{n} \langle \bu,  \widehat{\bbeta}-\bbeta^* \rangle - \frac{1}{\sqrt{n}}\sn \psi_{\tau}(\varepsilon_i)\langle \bu, \bSigma^{-1}\bx_i \rangle \bigg| \leq \|\bu\|_{\bSigma^{-1}}\cdot \sqrt{n} R_{n,1} 
 \end{align*}
 holds uniformly over $\bu \in \RR^p$ as long as $n\gtrsim    (p+\log n) /\epsilon$, where
 \begin{align*}
 	R_{n,1}\asymp \sigma_\iota\bigg\{ \bigg( \frac{p+\log n}{n\epsilon} \bigg)^{(3+2\iota)/(2+\iota)} + \sqrt{\frac{p+\log n}{n}}   \bigg( \frac{p+\log n}{n\epsilon} \bigg)^{(1+\iota)/(2+\iota)} + \frac{p+\log n}{n}\bigg\}.
 \end{align*}

 For each $\bu \in \RR^p$, define independent random variables
 \begin{align*}
 	S_{i,\bu} =  \psi_{\tau}(\varepsilon_i)\langle \bu, \bSigma^{-1}\bx_i \rangle\,,~~i=1,\ldots,n\,.
 \end{align*}
 Under Assumptions \ref{assump:design} and \ref{assump:heavy-tail},  $	|\EE\{ \psi_{\tau}(\varepsilon_i)\,|\,\bx_i \}|  \leq \mathbb{E}\{(|\varepsilon_i|-\tau) \mathbbm{1}(|\varepsilon_i|>\tau) \,|\, \bx_i\} \leq \sigma_{\iota}^{2+\iota}\tau^{-1-\iota}$ and $\EE(|\langle \bu, \bSigma^{-1}\bx_i \rangle|)\leq \|\bu\|_{\bSigma^{-1}} $. By taking $\tau \asymp \sigma_\iota\{ n\epsilon/(p+\log n) \}^{1/(2+\iota)}$, we have 
 \begin{align}\label{bound1}
 	|\EE(S_{i,\bu} )| \leq \frac{\sigma_{\iota}^{2+\iota} \|\bu\|_{\bSigma^{-1}}}{\tau^{ 1+\iota}} 
\asymp  \|\bu\|_{\bSigma^{-1}}  \sigma_{\iota} \bigg( \frac{p+\log n}{n\epsilon} \bigg)^{(1+\iota)/(2+\iota)} \,.
 \end{align}
Substituting this into the previous bound implies that, with probability at least $1-Cn^{-1}$, 
 \begin{align}
 	&~~~\bigg| \sqrt{n} \langle \bu,  \widehat{\bbeta}-\bbeta^* \rangle - \frac{1}{\sqrt{n}}\sn \{ S_{i,\bu}-\EE(S_{i,\bu})\} \bigg| \leq \|\bu\|_{\bSigma^{-1}}\cdot \sqrt{n}R_{n,2}  \label{linear.dis}
 \end{align}
 holds uniformly over $\bu \in \RR^p$ as long as $n\gtrsim    (p+\log n) /\epsilon$, where
 \begin{align*}
 	R_{n,2}\asymp    \sigma_{\iota}   \bigg( \frac{p+\log n}{n\epsilon} \bigg)^{(1+\iota)/(2+\iota)}  \,. 
 \end{align*}

Let $Z_1\sim \cN(0, 1)$ and $\Upsilon_{\bu}^2=\var(S_{1,\bu})$. Applying the Berry-Esseen inequality (see, e.g., \citealpS{Shevtsova2014S}) to the i.i.d. random variables $\{S_{i,\bu}\}_{i=1}^n$, we obtain that
 \begin{align*}
 	\sup_{x\in \RR}\bigg| \PP\bigg[   \frac{1}{\sqrt{n}}\sn \{ S_{i,\bu}-\EE(S_{i,\bu})\}  \leq x \bigg ] - \PP( \Upsilon_\bu Z_1 \leq x ) \bigg| \leq C_{\iota}\frac{\EE\{|  S_{1,\bu}-\EE(S_{1,\bu}) |^{2+\iota}\}}{ \Upsilon_{\bu}^{2+\iota}n^{\iota/2}}\,,
 \end{align*}	
 where $C_{\iota}$ is a positive constant depending only on  $\iota\in (0,1]$. Together with \eqref{linear.dis} and the boundedness of the normal density function, we have, for any $x\in \RR$,
 \begin{align*}
 	\PP( \sqrt{n} \langle \bu,  \widehat{\bbeta}-\bbeta^* \rangle  \leq x ) \leq&~ \PP\bigg[ \frac{1}{\sqrt{n}}\sn \{ S_{i,\bu}-\EE(S_{i,\bu})\} \leq x+   \sqrt{n}R_{n,2}\|\bu\|_{\bSigma^{-1}}\bigg]+\frac{C}{n}\\
 	\leq &~ \PP(\Upsilon_\bu Z_1 \leq x+   \sqrt{n}R_{n,2}\|\bu\|_{\bSigma^{-1}})+C_{\iota}\frac{\EE\{|  S_{1,\bu}-\EE(S_{1,\bu}) |^{2+\iota}\}}{ \Upsilon_{\bu}^{2+\iota}n^{\iota/2}}+\frac{C}{n}\\
 	\leq &~ \PP(\Upsilon_\bu Z_1 \leq x )+\frac{  \sqrt{n}R_{n,2}\|\bu\|_{\bSigma^{-1}}}{\sqrt{2\pi}\Upsilon_{\bu}}+C_{\iota}\frac{\EE\{|  S_{1,\bu}-\EE(S_{1,\bu}) |^{2+\iota}\}}{ \Upsilon_{\bu}^{2+\iota}n^{\iota/2}}+\frac{C}{n}\,.
 \end{align*}

 Since $\psi_{\tau}(u) \leq |u|$ for any $u\in \RR$, by Assumption \ref{assump:heavy-tail}, it holds that
 \begin{align*}
 	\EE(|S_{1,\bu}|^{2+\iota})  \leq  \EE\{\EE( |\varepsilon_1 |^{2+\iota} \,|\,\bx_1 )|\langle \bu, \bSigma^{-1}\bx_1 \rangle|^{2+\iota}\}\leq \sigma_{\iota}^{2+\iota}\|\bu\|_{\bSigma^{-1}}^{2+\iota}\kappa_4^{(2+\iota)/4}\,,
 \end{align*} 
 which implies $\EE\{|  S_{1,\bu}-\EE(S_{1,\bu}) |^{2+\iota}\} \lesssim   \EE(|S_{1,\bu}|^{2+\iota}) \lesssim  \sigma_{\iota}^{2+\iota}\|\bu\|_{\bSigma^{-1}}^{2+\iota} $.  
 Note also that $\psi_{\tau}^2(u)=u^2-(u^2-\tau^2)\mathbbm{1}(|u|> \tau)$ and 
 $\EE(\varepsilon_i^2\,|\,\bx_i)\geq \underline{\sigma}_0^2$ almost surely. Then, we have  $ \|\bu\|_{\bXi}^2 = \bu^{\T}\bSigma^{-1}\EE( \varepsilon_i^2 \bx_i\bx_i^{\T} )\bSigma^{-1}\bu \geq \underline{\sigma}_0^2 \|\bu\|_{\bSigma^{-1}}^2$, and 
 \begin{align*}
 	\EE(S_{1,\bu}^2) =&~ (\bSigma^{-1/2}\bu)^{\T}  \EE\{ \psi_{\tau}^2(\varepsilon_1)\bchi_1\bchi_1^{\T} \}  (\bSigma^{-1/2}\bu)\\
 	=&~ (\bSigma^{-1/2}\bu)^{\T}  \EE(  \varepsilon_1^2 \bchi_1\bchi_1^{\T} )  (\bSigma^{-1/2}\bu)-(\bSigma^{-1/2}\bu)^{\T}  \EE\{ (\varepsilon_1^2-\tau^2)\mathbbm{1}(|\varepsilon_1|> \tau)\bchi_1\bchi_1^{\T} \}  (\bSigma^{-1/2}\bu)\\
 	= &~   \|\bu\|_{\bXi}^2-(\bSigma^{-1/2}\bu)^{\T}  \EE\{ (\varepsilon_1^2-\tau^2)\mathbbm{1}(|\varepsilon_1|> \tau)\bchi_1\bchi_1^{\T} \}  (\bSigma^{-1/2}\bu)\,, 
 \end{align*}
where $\bchi_1=\bSigma^{-1/2}\bx_1$. By Markov's inequality and Assumption \ref{assump:heavy-tail}, $\EE[\{ (\varepsilon_1^2-\tau^2)\mathbbm{1}(|\varepsilon_1|> \tau) \} \,|\,\bx_1]\leq \sigma_{\iota}^{2+\iota}\tau^{-\iota}$, and thus $(\bSigma^{-1/2}\bu)^{\T}  \EE\{ (\varepsilon_1^2-\tau^2)\mathbbm{1}(|\varepsilon_1|> \tau)\bchi_1\bchi_1^{\T} \}  (\bSigma^{-1/2}\bu)\leq \sigma_{\iota}^{2+\iota}\tau^{-\iota}\|\bu\|_{\bSigma^{-1}}^2$. Consequently,
 \begin{align}\label{second}
 	1- \frac{\sigma_{\iota}^{2+\iota}}{\underline{\sigma}_0^{2}\tau^{\iota}}	\leq \frac{\EE(S_{1,\bu}^2) }{  \|\bu\|_{\bXi}^2} \leq 1\,.
 \end{align}
 Moreover, by \eqref{bound1}, $|\EE(S_{i,\bu} )|^2 \leq \sigma_{\iota}^{4+2\iota}  \|\bu\|_{\bSigma^{-1}}^2/ \tau^{2+2\iota}$. It follows that
 \begin{align*} 
 	\bigg| \frac{\var(S_{1,\bu})}{  \|\bu\|_{\bXi}^2}-1 \bigg| \leq \frac{\sigma_{\iota}^{2+\iota}}{\underline{\sigma}_0^{2}\tau^{\iota}} + \frac{\sigma_{\iota}^{4+2\iota}}{\underline{\sigma}_0^{2}\tau^{2+2\iota}} \,.
 \end{align*}
 Recalling that $\tau \asymp \sigma_\iota \{ n\epsilon / (p+\log n) \}^{1/(2+\iota)}$ and letting $n \epsilon \gtrsim  (\sigma_{\iota}/ \underline{\sigma}_0)^{2+4/\iota} (p + \log n)$, this further implies
 \begin{align}\label{div.l2norm}
 	\frac{1}{2} \leq  \frac{ \Upsilon_{\bu}} {   \|\bu\|_{\bXi}}  \leq \frac{3}{2}\,.
\end{align}
Therefore, 
\begin{align*}
 	\frac{\EE\{|  S_{1,\bu}-\EE(S_{1,\bu}) |^{2+\iota}\}}{ \Upsilon_{\bu}^{2+\iota}n^{\iota/2}} \lesssim \frac{\sigma_{\iota}^{2+\iota}\|\bu\|_{\bSigma^{-1}}^{2+\iota} }{ \underline{\sigma}_0^{2+\iota}  \|\bu\|_{\bSigma^{-1}}^{2+\iota}n^{\iota/2}}=\frac{\sigma_{\iota}^{2+\iota}  }{ \underline{\sigma}_0^{2+\iota}   n^{\iota/2}}\,.
 \end{align*}
Substituting this into the right-hand side of the Berry–Esseen inequality yields
\begin{align*}
 	  \PP( \sqrt{n} \langle \bu,  \widehat{\bbeta}-\bbeta^* \rangle  \leq x )-\PP(\Upsilon_\bu Z_1 \leq x ) \lesssim \frac{\sigma_\iota}{\underline{\sigma}_0} \sqrt{n}    \bigg( \frac{p+\log n}{n\epsilon} \bigg)^{(1+\iota)/(2+\iota)}   +\frac{C_\iota \sigma_{\iota}^{2+\iota}  }{\underline{\sigma}_0^{2+\iota}  n^{\iota/2}}  
\end{align*}
for all $x\in \RR$ and  $\bu\in \RR^p$. A reversed inequality can be obtained by a similar argument.

Since 
\begin{align*} 
 	\bigg| \frac{\var(S_{1,\bu})}{ \|\bu\|_{\bXi}^2}-1 \bigg| \lesssim  \frac{\sigma_{\iota}^2}{\underline{\sigma}_0^{2} }  \bigg( \frac{p+\log n}{n \epsilon} \bigg)^{\iota/(2+\iota)}  \lesssim 1 \,, 
 \end{align*}
applying Lemma~A.7 in \citeS{SZ2015S} yields
$$
    \sup_{x\in \RR }  |  \PP(\Upsilon_\bu Z_1 \leq x) -\PP(  \| \bu \|_{\bXi} Z_1 \leq x)  | | \lesssim \frac{\sigma_{\iota}^2}{\underline{\sigma}_0^{2} }  \bigg( \frac{p+\log n}{n \epsilon} \bigg)^{\iota/(2+\iota)}\, . 
$$
Putting together the pieces, we conclude that
\begin{align*}
  \sup_{\bu\in \RR^p}\sup_{x\in\RR}\big| \PP( \sqrt{n} \langle \bu/\|\bu\|_{\bXi},  {\widehat\bbeta} -\bbeta^* \rangle  \leq x )- \PP(Z_1 \leq x) \big|  \lesssim  C'_\iota \frac{\sigma_\iota\sqrt{n}}{\underline{\sigma}_0}    \bigg( \frac{p+\log n}{n\epsilon} \bigg)^{(1+\iota)/(2+\iota)} 
\end{align*}
as long as $n\epsilon \gtrsim   (\sigma_{\iota}/\underline{\sigma}_0)^{2 + 4/\iota}  (p+\log n)$, where $C_\iota'>0$ is a constant depending on $\iota$. This completes the proof of the Lemma \ref{lem.GA}. \qed

\subsection{Proof of Lemma \ref{lem.dp}}
\label{proof.lem.dp}

Let $\bZ_n=\{(y_i,\bx_i)\}_{i=1}^n$ and $D=p(p+1)/2$.  Moreover, denote by $\bh_1(\bZ_n)$ and $\bh_2(\bZ_n)$, respectively, the $D$-dimensional vector that consists of the upper-triangular and diagonal entries of $\widehat\bSigma_{\gamma_1}=\widehat\bSigma_{\gamma_1}(\bZ_n)\in \RR^{p\times p}$ and $\widehat\bOmega_{\tau_1,\gamma_1}(\bbeta)=\widehat\bOmega_{\tau_1,\gamma_1}(\bbeta,\bZ_n)\in \RR^{p\times p}$.  Consider two datasets
 $\bZ_n$ and $\bZ_n'$  that differ by one datum, say $\bz_1=(y_1,\bx_1)\in \bZ_n$ versus $\bz_1'=(y_1',\bx_1')\in \bZ_n'$.

Notice that $w_{\gamma_1}(u)=\min\{\gamma_1/u,1 \}$ for any $u\geq 0$, and $\| \bx_1\bx_1^{\T} \|_{F}^2 = \sum_{j=1}^p\sum_{k=1}^p (x_{1,j}x_{1,k})^2 = \sum_{j=1}^px_{1,j}^2(\sum_{k=1}^p x_{1,k}^2)=\|\bx_1\|_2^4 $,
where $\|\cdot\|_{\rm F}$ denotes the Frobenius norm of a matrix. Then, it holds that 
\begin{align*}
    \| \bx_1\bx_1^{\T}w_{\gamma_1}^2(\|\bx_1\|_2)  \|_{\rm F} = \| \bx_1\bx_1^{\T}  \|_{\rm F}w_{\gamma_1}^2(\|\bx_1\|_2) = \|\bx_1\|_2^2 \cdot \min \bigg\{\frac{\gamma_1^2}{\|\bx_1\|_2^2},1\bigg\}\leq \gamma_1^2\,.
\end{align*}
Therefore, we obtain
 \begin{align*}
 	\| \bh_1(\bZ_n)-\bh_1(\bZ_n') \|_2 \leq&~ \|\widehat\bSigma_{\gamma_1}(\bZ_n)-\widehat\bSigma_{\gamma_1}(\bZ_n')  \|_{\rm F} \leq \frac{1}{n} \| \bx_1\bx_1^{\T}w_{\gamma_1}^2(\|\bx_1\|_2)-\bx_1'(\bx_1')^{\T}w_{\gamma_1}^2(\|\bx_1'\|_2) \|_{\rm F}\\
  \leq&~ \frac{1}{n}\big\{ \| \bx_1\bx_1^{\T}w_{\gamma_1}^2(\|\bx_1\|_2)  \|_{\rm F}+\|  \bx_1'(\bx_1')^{\T}w_{\gamma_1}^2(\|\bx_1'\|_2) \|_{\rm F}\big\} \leq \frac{2\gamma_1^2}{n}\,,
 \end{align*}
  Then, by  Lemma \ref{glmechanism} and Lemma \ref{gdp.mechanism}, $\bh_1(\bZ_n)+ 2\gamma_1^2 \sqrt{2\log(1.25/\delta)} (n\epsilon)^{-1}\bg$ and $\bh_1(\bZ_n)+ 2\gamma_1^2   (n\epsilon)^{-1}\bg$ with $\bg \sim \cN(\mathbf{0},\bI_D)$ are, respectively, $(\epsilon,\delta)$-DP and $\epsilon$-GDP. It follows from the post-processing property that $\widehat\bSigma_{\gamma_1}+ 2\gamma_1^2 \sqrt{2\log(1.25/\delta)} (n\epsilon)^{-1}\bE$ and $\widehat\bSigma_{\gamma_1}+ 2\gamma_1^2   (n\epsilon)^{-1}\bE$ are, respectively, also $(\epsilon,\delta)$-DP and $\epsilon$-GDP.

 For the second result in Lemma \ref{lem.dp}, note that $|\psi_{\tau_1}(u) | \leq \tau_1$ for any $u\in \RR$. Then, for any $\bbeta\in\RR^p$, 
 \begin{align*}
 	&\| \bh_2(\bZ_n)-\bh_2(\bZ_n') \|_2 \leq  \|\widehat\bOmega_{\tau_1,\gamma_1}(\bbeta,\bZ_n)-\widehat\bOmega_{\tau_1,\gamma_1}(\bbeta,\bZ_n')  \|_{\rm F}\\
    &~~~~\leq  \frac{1}{n}\| \psi_{\tau_1}^2(y_1-\bx_1^{\T}\bbeta)\bx_1\bx_1^{\T}w_{\gamma_1}^2(\|\bx_1\|_2) - \psi_{\tau_1}^2(y_1'-(\bx_1')^{\T}\bbeta)\bx_1'(\bx_1')^{\T}w_{\gamma_1}^2(\|\bx_1'\|_2)\|_{\rm F}\\
    &~~~~\leq   \frac{{\tau_1}^2}{n}\big\{\|  \bx_1\bx_1^{\T}w_{\gamma_1}^2(\|\bx_1\|_2)\|_{\rm F}+ \| \bx_1'(\bx_1')^{\T}w_{\gamma_1}^2(\|\bx_1'\|_2)\|_{\rm F} \big\}\leq \frac{2 (\gamma_1 \tau_1 )^2 }{n}\,,
 \end{align*}
 implying that, for any $\bbeta\in\RR^p$, $\widehat\bOmega_{\tau_1,\gamma_1}(\bbeta)+ 2  (\gamma_1 \tau_1 )^2 \sqrt{2\log(1.25/\delta)}  (n\epsilon)^{-1}\bE$ is   $(\epsilon,\delta)$-DP and  $\widehat\bOmega_{\tau_1,\gamma_1}(\bbeta)+ 2  (\gamma_1 \tau_1 )^2  (n\epsilon)^{-1}\bE$ is  $\epsilon$-GDP. This completes the proof of Lemma \ref{lem.dp}. \qed

 \subsection{Proof of Lemma \ref{lem.divvar}}
 \label{proof.lem.divvar}

By the proof of Proposition \ref{prop:events1},  we know that, for some sufficiently large constant $C_0>0$, the event   $\widetilde\cE_0=\{ \max_{i\in[n]}\|\bx_i\|_2 \leq C_0\upsilon_1\lambda_1^{1/2}\sqrt{p+ \log n} \}$  occurs with probability at least $1-Cn^{-1}$. By setting $\gamma_1 = C_0\upsilon_1\lambda_1^{1/2}\sqrt{p+ \log n}$, we have $w_{\gamma_1}^2(\|\bx_i\|_2)=1$ on the event $\widetilde\cE_0$. In the following, we proceed by conditioning on $\widetilde\cE_0$.

Define $r_i(\bbeta)=y_i-\bx_i^{\T}\bbeta$ such that $r_i(\bbeta^*) = \varepsilon_i$, and write
 \begin{align*} 
 	\widehat\bSigma_{\gamma_1} = \frac{1}{n}\sn  \bx_i\bx_i^{\T} ~~\mbox{and}~~\widehat\bOmega_{\tau_1,\gamma_1}(\bbeta) = \frac{1}{n}\sn \psi_{\tau_1}^2 ( r_i(\bbeta) )\bx_i\bx_i^{\T} \,.
 \end{align*}
By Theorem $6.5$ in \citeS{wainwright2019highS},   it holds with probability at least $1-Cn^{-1}$ that
 \begin{align*} 
 	\|\widehat\bSigma_{\gamma_1}-\bSigma\| \lesssim \upsilon_1^2\lambda_1\bigg( \sqrt{\frac{p+\log n}{n}}+\frac{p+\log n}{n} \bigg)\,.
 \end{align*}
To bound $\|\widehat\bOmega_{\tau_1,\gamma_1}(\bbeta)- \bOmega \|$, consider the following decomposition
 \begin{align}\label{div.omega}
 	  \|\widehat\bOmega_{\tau_1,\gamma_1}(\bbeta)- \bOmega \| \leq&~  \underbrace{\bigg\|\frac{1}{n}\sn \{\psi_{\tau_1}^2( r_i(\bbeta) )  - \psi_{\tau_1}^2 ( r_i(\bbeta^*))\}\bx_i\bx_i^{\T}\bigg\| }_{\rm I_1(\bbeta)}    +\underbrace{\bigg\| \frac{1}{n}\sn  (1-\EE)\{\psi_{\tau_1}^2( \varepsilon_i)\bx_i\bx_i^{\T}\}\bigg\| }_{\rm I_2}\nn\\
 	 &~+\underbrace{\bigg\| \frac{1}{n}\sn   \EE [\{\psi_{\tau_1}^2( \varepsilon_i)-\varepsilon_i^2\}\bx_i\bx_i^{\T}]\bigg\| }_{\rm I_3} \,.
 \end{align}
Starting from $\sup_{\bbeta\in\Theta(r)}{\rm I_1(\bbeta)}$, note that $\psi_{\tau_1}(\cdot)$ is 1-Lipschitz and $|\psi_{\tau_1}(u)|\leq \tau_1 $ for any $u\in\RR$. Therefore, $|\psi_{\tau_1}^2(r_i(\bbeta))  - \psi_{\tau_1}^2( r_i(\bbeta^*))|\leq |\psi_{\tau_1}(r_i(\bbeta))  - \psi_{\tau_1}( r_i(\bbeta^*))||\psi_{\tau_1}(r_i(\bbeta))  + \psi_{\tau_1}( r_i(\bbeta^*))| \leq 2\tau_1 \max_{i\in[n]}\|\bx_i\|_2\| {\bbeta} -\bbeta^*\|_2$, which further implies that
 \begin{align*}
 \sup_{\bbeta\in\Theta(r)}	{\rm I_1}(\bbeta) = &~  \sup_{\bbeta\in\Theta(r)}\sup_{\bu\in\SSS^{p-1}} \bigg |\frac{1}{n}\sn \{\psi_{\tau_1}^2(r_i(\bbeta))  - \psi_{\tau_1}^2( r_i(\bbeta^*))\} (\bx_i^{\T}\bu)^2\bigg | \\
 	\leq &~ 2\tau_1 r   \max_{i\in[n]}\|\bx_i\|_2 \sup_{\bu\in\SSS^{p-1}} \bigg |\frac{1}{n}\sn   (\bx_i^{\T}\bu)^2\bigg |= 2\tau_1 r   \max_{i\in[n]}\|\bx_i\|_2 \bigg\| \frac{1}{n}\sn  \bx_i\bx_i^{\T}\bigg\|\,.
 \end{align*}
 Moreover, it follows from   Assumption \ref{assump:design} that $ \|  {n}^{-1}\sn \bx_i\bx_i^{\T}  \|\leq \|{n}^{-1}\sn \bx_i\bx_i^{\T} -\bSigma\|+\|\bSigma\|\leq 2\lambda_1$ with probability at least $1-Cn^{-1}$, provided that $n\gtrsim \upsilon_1^4(p+\log n)$. Consequently,
 \begin{align}\label{bound.I1}
  \sup_{\bbeta\in\Theta(r)}	{\rm I_1}(\bbeta)  \lesssim &~  \tau_1 r \sqrt{p+\log n} \,
 \end{align}
 holds with probability at least $1-C n^{-1}$ as long as $n \gtrsim p+\log n$.  

Turning to ${\rm I_2}$, write $\bh_i = \psi_{\tau_1}(\varepsilon_i)\bx_i$. By Lemmas 5.2 and 5.4 of \citeS{Vershynin2012S}, there exists a $1/3$-net $\cN_{1/3}$ of $\SSS^{p-1}$ with cardinality $|\cN_{1/3}| \leq 7^p$, such that
 \begin{align*}
 	{\rm I_2} \leq 3 \max_{\bu\in\cN_{1/3}} \bigg |\frac{1}{n}\sn(1-\EE) (|\bh_i^{\T}\bu|^2)\bigg |\,.
 \end{align*}
Since $|\psi_{\tau_1}(u)|\leq \min\{ \tau_1 , |u| \}$ for any $u\in \RR$, for any integer $k\geq 2$, it holds that  
 \begin{align*}
 	\EE(|\bh_i^{\T}\bu|^{2k})\leq \tau_1^{2(k-2)}\EE\{\psi_{\tau_1}^4(\varepsilon_i)|\bx_i^{\T}\bu|^{2k}  \}\,.
 \end{align*}
Under Assumption \ref{assump:heavy-tail},  $\EE\{\psi_{\tau_1}^4(\varepsilon_i)\,|\,\bx_i \} \leq \tau_1^{2-\iota} \EE( |\varepsilon_i|^{2+\iota}\,|\,\bx_i )\leq \tau_1^{2-\iota}\sigma_{\iota}^{2+\iota}$.  
 By Assumption \ref{assump:design}, 
 \begin{align}\label{tail}
 	\PP(|\bx_i^{\T}\bu|>\upsilon_1\lambda_1^{1/2}z)\leq \PP(|\langle \bSigma^{1/2}\bu, \bSigma^{-1/2}\bx_i\rangle|>\upsilon_1\|\bSigma^{1/2}\bu\|_2z)\leq 2e^{-z^2/2} 
 \end{align}
for any $\bu \in \SSS^{p-1}$ and $z\geq 0$. This implies that for any $\bu \in \SSS^{p-1}$ and  each $k=2,3,\ldots$,
 \begin{align}\label{moments.x}
 	\EE(|\bx_i^{\T}\bu|^{2k}) = &~ \int_{0}^{\infty} \PP(|\bx_i^{\T}\bu|^{2k}>z)\,\mbox{d}z \leq 2(\upsilon_1\lambda_1^{1/2})^{2k}(2k)\int_{0}^{\infty}z^{2k-1}e^{-z^2/2}\,\mbox{d}z\nn\\
 	=&~ 2^{k+1}(\upsilon_1\lambda_1^{1/2})^{2k}k\Gamma(k) = \frac{k!}{2} 4(2\upsilon_1^2\lambda_1)^{k}\,.
 \end{align} 
 Then, we can conclude that
 \begin{align*}
 	 	\EE(|\bh_i^{\T}\bu|^{2k})\leq    \frac{k!}{2} 16\tau_1^{2-\iota}\sigma_{\iota}^{2+\iota}\upsilon_1^4\lambda_1^2 (2\upsilon_1^2\lambda_1\tau_1^2)^{k-2}\,.
 \end{align*}
It then follows from Bernstein's inequality (see, e.g., Proposition~2.9 in \citeS{MASSART2007S}) that, for any $z\geq 0$, 
 \begin{align*}
 	\bigg |\frac{1}{n}\sn(1-\EE) (|\bh_i^{\T}\bu|^2)\bigg | \lesssim \upsilon_1^2\lambda_1\bigg(  \sqrt{  \frac{\tau_1^{2-\iota}\sigma_{\iota}^{2+\iota} z}{n} }+ \frac{\tau_1^2z}{n} \bigg)
 \end{align*}
 holds with probability at least $1-2e^{-z}$.  Taking the union bound over $\bu\in\cN_{1/3}$ and setting $z\asymp p+\log n$, we obtain 
 \begin{align}\label{bound.I2}
 	{\rm I_2}  \lesssim    \sqrt{ \tau_1^{2-\iota}\sigma_{\iota}^{2+\iota} \frac{p+\log n}{n} }+ \tau_1^2\frac{p+\log n}{n} 
 \end{align}
 with probability at least $1-Cn^{-1}$.  

For ${\rm I_3}$, by the inequality above \eqref{second} in Section \ref{proof.lem.GA}, we also have ${\rm I_3} \lesssim \tau_1^{-\iota}\sigma_{\iota}^{2+\iota}$.  Combining this with \eqref{div.omega}, \eqref{bound.I1}  and \eqref{bound.I2}, it holds with probability at least $1-Cn^{-1}$ that 
 \begin{align*}
 \sup_{\bbeta\in\Theta(r)}	\|\widehat\bOmega_{\tau_1,\gamma}(\bbeta)- \bOmega \|  \lesssim &~ \tau_1 r \sqrt{p+\log n}+    \tau_1^2\frac{p+\log n}{n} + \frac{\sigma_{\iota}^{2+\iota}}{\tau_1^{\iota}}\,,
 \end{align*}
 provided that $n \gtrsim p+\log n$. This completes the proof of Lemma~\ref{lem.divvar}.  \qed

\subsection{Proof of Lemma \ref{lem.varest}}
\label{proof.lem.varest}

 First, by definition, for any $\epsilon>0$,
  \begin{align*}
  	\|\widehat{\bSigma}_{\gamma_1,\epsilon}^{+}-\widehat{\bSigma}_{\gamma_1,\epsilon} \|= \|\widehat{\bSigma}_{\gamma_1,\epsilon}^{+}- (\widehat{\bSigma}_{\gamma_1 } +\varsigma_{1 }\bE ) \|\leq  \|\widehat{\bSigma}_{\gamma_1 }- (\widehat{\bSigma}_{\gamma_1 } +\varsigma_{1 }\bE ) \|= \varsigma_{1 }\|\bE\|\,,
  \end{align*}
  which implies 
  \begin{align*}
  \|\widehat{\bSigma}_{\gamma_1,\epsilon}^{+}- {\bSigma}  \|\leq 	\|\widehat{\bSigma}_{\gamma_1,\epsilon}^{+}-\widehat{\bSigma}_{\gamma_1,\epsilon} \|+ \|\widehat{\bSigma}_{\gamma_1 }- {\bSigma}  \|+\varsigma_{1 }\|\bE\|\leq\|\widehat{\bSigma}_{\gamma_1 }- {\bSigma}  \|+2\varsigma_{1 }\|\bE\|\,.
  \end{align*}
 Corollary 4.4.8 in  \citeS{vershynin2018highS} implies that $\|\bE\|\lesssim \sqrt{p+\log n}$  with probability at least $1-Cn^{-1}$. Then, by Lemma \ref{lem.divvar}, it holds with probability at least $1-Cn^{-1}$ that 
  \begin{align*}
  	\|\widehat{\bSigma}_{\gamma_1,\epsilon}^{+}- {\bSigma}  \|\lesssim \sqrt{\frac{p+\log n}{n}} + \varsigma_{1 }\sqrt{p+\log n}\,,
  \end{align*}
 provided that $n\gtrsim p+\log n$. Analogously, by Lemma \ref{lem.divvar},  we have with probability at least $1-Cn^{-1}$,
  \begin{align*}
  	&~\sup_{\bbeta\in\Theta(r)}\|\widehat\bOmega_{\tau_1,\gamma_1,\epsilon}^{+}(\bbeta)- \bOmega\| \leq  \sup_{\bbeta\in\Theta(r)}\|\widehat\bOmega_{\tau_1,\gamma_1 } (\bbeta)- \bOmega\|+2\varsigma_{2 }\|\bE\|\\
  	&~~~~\lesssim  \tau_1 r \sqrt{p+\log n}+    \tau_1^2\frac{p+\log n}{n} + \frac{\sigma_{\iota}^{2+\iota}}{\tau_1^{\iota}} + \varsigma_{2 }\sqrt{p+\log n}\,.
  \end{align*}

With the above preparations,  we are now equipped to bound $\|	\widetilde\bXi_{\tau_1,\gamma_1,\epsilon}(\bbeta) - \bXi\|$ for any $\bbeta\in\RR^p$ as follows:
  \begin{align*}
  	&~\|	\widetilde\bXi_{\tau_1,\gamma_1,\epsilon}(\bbeta) - \bXi\| = \|(\widehat{\bSigma}_{\gamma_1,\epsilon}^{+})^{-1}\widehat\bOmega_{\tau_1,\gamma_1,\epsilon}^{+}(\bbeta)(\widehat{\bSigma}_{\gamma_1,\epsilon}^{+})^{-1}- \bSigma^{-1}\bOmega\bSigma^{-1}\|\\
  	&~~~~\leq  \|(\widehat{\bSigma}_{\gamma_1,\epsilon}^{+})^{-1}-\bSigma^{-1}\|^2\|\widehat\bOmega_{\tau_1,\gamma_1,\epsilon}^{+}(\bbeta)-\bOmega\|+ 2\|(\widehat{\bSigma}_{\gamma_1,\epsilon}^{+})^{-1}-\bSigma^{-1}\|\|\widehat\bOmega_{\tau_1,\gamma_1,\epsilon}^{+}(\bbeta)-\bOmega\|\|\bSigma^{-1}\|\\
  	&~~~~~~~+\|(\widehat{\bSigma}_{\gamma_1,\epsilon}^{+})^{-1}-\bSigma^{-1}\|^2\| \bOmega\|+ 2\|(\widehat{\bSigma}_{\gamma_1,\epsilon}^{+})^{-1}-\bSigma^{-1}\|\| \bOmega\|\|\bSigma^{-1}\|+\|\widehat\bOmega_{\tau_1,\gamma_1,\epsilon}^{+}(\bbeta)-\bOmega\|\|\bSigma^{-1}\|^2\,.
  \end{align*}
 As long as $\sqrt{ ({p+\log n})/{n}} + \varsigma_{1 }\sqrt{p+\log n}$ is sufficiently small, $\|\widehat{\bSigma}_{\gamma_1,\epsilon}^{+} -\bSigma\| \ll 1$ with probability at least $1-Cn^{-1}$. By Assumption \ref{assump:design}, we  have $\|\bSigma^{-1}\| = O(1)$, and  hence $\|(\widehat{\bSigma}_{\gamma_1,\epsilon}^{+})^{-1}\| = O(1)$   with probability at least $1-Cn^{-1}$.  Consequently,
  \begin{align*}
  	\|(\widehat{\bSigma}_{\gamma_1,\epsilon}^{+})^{-1}-\bSigma^{-1}\| \leq \|(\widehat{\bSigma}_{\gamma_1,\epsilon}^{+})^{-1}\|\|\widehat{\bSigma}_{\gamma_1,\epsilon}^{+} -\bSigma\|\|\bSigma^{-1}\|\lesssim \sqrt{\frac{p+\log n}{n}}+ \varsigma_{1 }\sqrt{p+\log n}\ll 1\,.
  \end{align*}
 Note that $\|\bOmega\|= \|\EE(\varepsilon_i^2 \bx_i\bx_i^{\T})\|\lesssim \sigma_0^2$.  Then,  it holds with probability at least $1-Cn^{-1}$ that
  \begin{align*} 
  	&~\sup_{\bbeta\in\Theta(r)}\|	\widetilde\bXi_{\tau_1,\gamma_1,\epsilon}(\bbeta)- \bXi\|\\
  	&~~~~ \lesssim     \sigma_0^2\sqrt{\frac{p+\log n}{n}}+\tau_1 r \sqrt{p+\log n}+    \tau_1^2\frac{p+\log n}{n} + \frac{\sigma_{\iota}^{2+\iota}}{\tau_1^{\iota}}+ (\sigma_0^2\varsigma_{1 }+\varsigma_{2 })\sqrt{p+\log n}\,,
  \end{align*} 
  provided that $\sqrt{ ({p+\log n})/{n}} + \varsigma_{1 }\sqrt{p+\log n}\ll 1$. From the definitions of $(\varsigma_{1 }, \varsigma_{2 })$, we have
$\varsigma_{2 } \geq \sigma_0^2\varsigma_{1 }$ and 
$\varsigma_{2 } \gtrsim \tau_1^2 (p+\log n)/(n\epsilon)$. Thus, taking $\tau_1 \asymp \sigma_\iota \{ n \epsilon / (p+\log n) \}^{1/(2+\iota) }$, it holds that
\begin{align*}
    \sup_{\bbeta\in\Theta(r)}\|	\widetilde\bXi_{\tau_1,\gamma_1,\epsilon}(\bbeta)- \bXi\| \lesssim  (\tau_1 r +\varsigma_2   )\sqrt{p+\log n}\,.
\end{align*}
 This completes the proof of Lemma~\ref{lem.varest}. \qed

\renewcommand{\thealgorithm}{S.\arabic{algorithm}}
\setcounter{algorithm}{0}

\subsection{Proof of Lemma~\ref{lem.HD.noiseht}}\label{proof.lem.HD.noiseht}
We first introduce the `Private Max' algorithm (Algorithm \ref{alg:privatemax}), which identifies and reports the index of the value with the largest noisy magnitude.
\begin{algorithm}
\caption{Private Max} \label{alg:privatemax}
\vspace{.2cm}
\textbf{Input:} Dataset $\bZ$, vector-valued function  $\bv= \bv(\bZ)=(v_1,\ldots,v_p)^{\T} \in \RR^p$ with each  $\text{sens}_1(v_j)$  bounded by $\Delta$, and    privacy parameter $ \epsilon $.
\begin{algorithmic}[1]
 \renewcommand{\algorithmicindent}{0pt}
\FOR{$j\in[p]$} 
\STATE ~ Set $\tilde{v}_j = |v_j(\bZ)|+w_j$, where $w_j$ is independently sampled from $\text{Laplace}(3\Delta/\epsilon)$; 
\ENDFOR
\end{algorithmic}
\textbf{Output:} Return $j^*=\argmax_{j\in[p]} \tilde{v}_j$ and $v^*_{j^*}(\bZ):=v_{j^*}(\bZ)+w$, where $w$ is a fresh draw from $\text{Laplace}(3\Delta/\epsilon)$.
\end{algorithm}
Based on Algorithm \ref{alg:privatemax}, Algorithm \ref{alg:NHT} is equivalent to the following `Peeling' algorithm 
 (Algorithm \ref{alg:peeling}) with the Laplace noise scale  $\Lambda= 2\epsilon^{-1}\lambda\sqrt{5s\log(1/\delta)}$, where the parameter $\lambda$ satisfies $\|\bv(\bZ)-\bv(\bZ')\|_{\infty} < \lambda$ for every pair of adjacent datasets $\bZ$ and $\bZ'$.  
\begin{algorithm}
\caption{Peeling} \label{alg:peeling}
\vspace{.2cm}
\textbf{Input:} Dataset $\bZ$, vector-valued function  $\bv= \bv(\bZ)=(v_1,\ldots,v_p)^{\T}  \in \RR^p$, number of invocations $s$, and the Laplace noise scale $\Lambda$.   
\begin{algorithmic}[1]
 \renewcommand{\algorithmicindent}{0pt}
\STATE Initialize $\bv_1=\bv_1(\bZ)=\bv(\bZ)$;
\FOR{$j\in[s]$} 
\STATE ~ Let $(i_j,v^*_{i_j}(\bZ))$ be returned by `Private Max' applied to $(\bZ,\bv_j)$ with Laplace noise scale $\Lambda$; 
\STATE ~ Set $\bv_{j+1}={\rm remove}(v_{i_j},\bv_j) \in \mathbb{R}^{p-j}$, i.e., remove the element $v_{i_j}$ from $\bv_j$;
\ENDFOR
\end{algorithmic}
\textbf{Output:} the $s$-tuple $\{(i_1,v^*_{i_1}(\bZ)),\ldots,(i_s,v^*_{i_s}(\bZ))\}$.
\end{algorithm}
We claim that 
\begin{align}\label{privatemax.dp}
    \text{the `Private Max', as described  in Algorithm \ref{alg:privatemax}, is}~(\epsilon,0)\text{-DP}\,.
\end{align}
Let $\Lambda= 2\epsilon^{-1}\lambda\sqrt{5s\log(1/\delta)}$. By \eqref{privatemax.dp}, we know that each `Private Max' in `Peeling' is 
 $({3\epsilon}/\{2\sqrt{5s\log(1/\delta)}\},0)$-DP. Then, by Lemma 2.9 of \citeS{dwork2018dpfdrS}, `Peeling' with the Laplace noise scale set to $\Lambda =  2\epsilon^{-1} \lambda \sqrt{5s \log(1/\delta)}$ ensures $(\tilde{\epsilon}, \delta)$-DP with 
\begin{align*}
    \tilde\epsilon =&~ \frac{3\epsilon}{2\sqrt{5s\log(1/\delta)}}\sqrt{2s\log(1/\delta)}+ \frac{3s\epsilon}{2\sqrt{5s\log(1/\delta)}}\big(e^{\frac{3\epsilon}{2\sqrt{5s\log(1/\delta)}}}-1\big)\\ 
    = &~ \frac{3\sqrt{2} }{2\sqrt{5}}\epsilon+ \frac{3s\epsilon }{2\sqrt{5s\log(1/\delta)}}\big(e^{\frac{3\epsilon}{2\sqrt{5s\log(1/\delta)}}}-1\big) \,.
\end{align*} 
Since $f(x)=ax^{1/2}(e^{ax^{-1/2}}-1)$ is monotonically decreasing in $x\geq 0$ for any $a>0$, and under the conditions $\epsilon\leq 0.5$, $\delta\leq 0.011$ and $s\geq 10$, we have ${3s}\{2\sqrt{5s\log(1/\delta)}\}^{-1}[e^{{3\epsilon}/\{2\sqrt{5s\log(1/\delta)}\}}-1] < 0.0512$. Since $3\sqrt{2}/(2\sqrt{5}) + 0.0512 < 1$, it follows that $\tilde{\epsilon} \leq \epsilon$. To confirm  Lemma \ref{lem.HD.noiseht}, it remains to show that \eqref{privatemax.dp} holds.

\noindent 
\underline{Proof of \eqref{privatemax.dp}.} For any index $j \in [p]$ and any measurable set $S \subseteq \mathbb{R}$, it suffices to show that
\begin{align*}
    \frac{\PP(\tilde{v}_j~\text{is the largest and}~v_j(\bZ)+w\in S)}{\PP(\tilde{v}_j'~\text{is the largest and}~v_j(\bZ')+w\in S)}\leq e^{\epsilon}\,,
\end{align*}
where $\tilde{v}_j'$ corresponds to $\tilde{v}_j$ but is evaluated on an adjacent database $\bZ'$.   This inequality is equivalent to 
\begin{align}\label{dp.target}
    \frac{\PP(\tilde{v}_j~\text{is largest})\PP( v_j(\bZ)+w\in S\,|\,\tilde{v}_j~\text{is largest})}{\PP(\tilde{v}_j'~\text{is largest})\PP( v_j(\bZ')+w\in S\,|\,\tilde{v}_j'~\text{is largest})}\leq e^{\epsilon}\,.
\end{align}
First, observe that by assumption $|v_j(\bZ)-v_j(\bZ')|\leq \Delta$. Then, Lemma 2.3 of \citeS{dwork2018dpfdrS} shows that 
\begin{align}\label{dp.target1}
       \frac{ \PP( v_j(\bZ)+w\in S\,|\,\tilde{v}_j~\text{is largest})}{ \PP( v_j(\bZ')+w\in S\,|\,\tilde{v}_j'~\text{is largest})}\leq e^{\epsilon/3}\,.
\end{align}
Second, let $\{w_j\}_{j=1}^p$ be the Laplace random variables generated in Algorithm \ref{alg:privatemax} and $\bw_{-j}={\rm remove}(w_{j},\bw)$, where $\bw=(w_1,\ldots,w_p)^{\T}$. Then, proving
\begin{align*} 
    \frac{\PP(\tilde{v}_j~\text{is largest}) }{\PP(\tilde{v}_j'~\text{is largest}) }\leq e^{2\epsilon/3}\,,
\end{align*}
 is equivalent to show 
 \begin{align}\label{dp.target2.equiv}
     \frac{\PP(\tilde{v}_j~\text{is largest}  \,| \, \bZ, \bw_{-j})}{\PP(\tilde{v}_j'~\text{is largest}  \,| \, \bZ', \bw_{-j})}\leq e^{2\epsilon/3}\,.
 \end{align} 
Recall $\tilde{v}_j=|v_j(\bZ)|+w_j$ and $\tilde{v}_j'=|v_j(\bZ')|+w_j$. Define 
$$
w_* = \min\big[w\in \RR: |v_j(\bZ)|+w > |v_{k}(\bZ)|+w_k~~\forall ~k \in  [p]\setminus\{j\}\big]\,.
$$
Then, $\tilde{v}_j$ is the largest value in the database  $\bZ$ if and only if $w_j\geq w_*$. Due to $\max_{j\in[p]}|v_j(\bZ)-v_j(\bZ')|\leq \Delta$, we have $|v_j(\bZ')|+\Delta+w_*\geq |v_j(\bZ)|+ w_*> |v_{k}(\bZ)|+w_k\geq |v_{k}(\bZ')|-\Delta+w_k$ for any $k\in  [p]\setminus\{j\}$, which implies $|v_j(\bZ')|+2\Delta+w_*>|v_{k}(\bZ')| +w_k$ for any $k\in  [p]\setminus\{j\}$. Thus, $\tilde{v}_j'$ is the largest value in the database  $\bZ'$ if $w_j \geq 2\Delta+w_*$. Since $w_j \sim \text{Laplace}(3\Delta/\epsilon)$, it holds that
\begin{align*}
   &\PP(\tilde{v}_j'~\text{is largest}  \,| \, \bZ', \bw_{-j}) \geq \PP(w_j  \geq 2\Delta+w_*)\\
   &~~~\geq e^{-2\epsilon/3}\PP(w_j  \geq  w_*)=  e^{-2\epsilon/3}\PP(\tilde{v}_j~\text{is largest}  \, | \, \bZ, \bw_{-j})\,, 
\end{align*} 
which yields \eqref{dp.target2.equiv}. Combining this with \eqref{dp.target1} establishes \eqref{dp.target}, as desired. Hence, we complete the proof of Lemma \ref{lem.HD.noiseht}.
\qed

\subsection{Proof of Lemma~\ref{lem:HD.gs}}\label{proof.lem:HD.gs}

With a slight abuse of notation, we denote $\bSigma_{SS}$ as the submatrix of $\bSigma$ consisting of the rows and columns indexed by $S$. Recall $\cS = \{ S \subseteq [p]: 1\leq |S| \leq 2s \}$.
For any $\bbeta_1 ,  \bbeta_2 \in \mathbb{H}(s)$ such that $\bbeta_1\neq \bbeta_2$, we have $\| \bbeta_1 - \bbeta_2 \|_0 \leq 2s$, and hence,
\begin{align*}
& \widehat{\mathcal{L}}_\tau (\boldsymbol{\beta}_1 )-\widehat{\mathcal{L}}_\tau (\boldsymbol{\beta}_2 )- \langle\nabla \widehat{\mathcal{L}}_\tau (\boldsymbol{\beta}_2 ), \boldsymbol{\beta}_1-\boldsymbol{\beta}_2 \rangle \\
&~~~ =\int_0^1 \langle\nabla \widehat{\mathcal{L}}_\tau (\boldsymbol{\beta}_2+u (\bbeta_1 - \bbeta_2) )-\nabla \widehat{\mathcal{L}}_\tau (\boldsymbol{\beta}_2 ), \bbeta_1 - \bbeta_2 \rangle \,\mathrm{d} u \\
&~~~ =\frac{1}{n} \sum_{i=1}^n \int_0^1 \big\{  \psi_\tau (y_i- \langle\bx_i, \boldsymbol{\beta}_2 \rangle )-\psi_\tau (y_i- \langle\bx_i, \boldsymbol{\beta}_2+u (\bbeta_1 - \bbeta_2) \rangle ) \big\} \langle\bx_i, \bbeta_1 - \bbeta_2 \rangle\, \mathrm{d} u \\
&~~~ \leq \frac{1}{2n} \sum_{i=1}^n \langle\bx_i, \bbeta_1 - \bbeta_2 \rangle^2 \leq \frac{1}{2} \|\bbeta_1 - \bbeta_2\|_2^2 \cdot \sup_{\bu \in \mathbb{S}^{p-1}:\, \|\bu\|_0\leq 2s} \bu^{\T} \hat \bSigma \bu \\
&~~~ = \frac{1}{2}   \|\bbeta_1 - \bbeta_2\|_2^2  \cdot  \sup_{S\in \mathcal{S} }   \| \hat \bSigma_{SS}  \|\, , 
\end{align*}
where $\hat \bSigma = n^{-1} \sn \bx_i \bx_i^\T$ and the first inequality is from the fact that $\psi_{\tau}(\cdot)$ is 1-Lipschitz.

To control  $\sup_{S\in \mathcal{S} }   \| \hat \bSigma_{SS}  \|$, we consider the decomposition
$$
    \sup_{S\in \mathcal{S}} \|\widehat{\bSigma}_{SS}\| \leq \sup_{S\in \mathcal{S} } \|(\widehat{\bSigma}-\bSigma)_{SS}\| + \sup_{S\in \mathcal{S} } \|\bSigma_{SS}\| \leq \sup_{S\in \mathcal{S} } \|(\widehat{\bSigma}-\bSigma)_{SS}\| +  \lambda_1\,. 
$$
Fix a subset $S\in \cS$ and let $\bu_S \in \mathbb{S}^{p-1}$. Under Assumption~\ref{assump:design}, the projected variable $\bu_{S}^\T \bx_{S} = (\bSigma^{1/2}\bu_S)^\T \bSigma^{-1/2}\bx$ is sub-Gaussian, satisfying that $\PP( |\bu_{S}^\T \bx_{S} | \geq  \upsilon_1  \lambda_1^{1/2} z ) \leq 2 e^{-z^2/2}$ for any $z\geq 0$. By Theorem $6.5$ in \citeS{wainwright2019highS}, there exists an absolute constant $C>0$ such that, for any $z>0$, 
 \begin{align*}
       \| (\widehat{\bSigma}-\bSigma )_{SS} \|\leq C \upsilon_1^2 \lambda_1 \bigg( \sqrt{\frac{2s+z}{n}} + \frac{2s+z}{n} \bigg) 
 \end{align*}
holds with probability at least $1-e^{-z}$. Since $| \cS | \leq  \sum_{k=1}^{2s} {p\choose k} \leq \{{ep}({2s} )^{-1}\}^{2s}$, taking a union bound over all subsets in $\cS$ yields that with probability at least $1-e^{-z}$,
$$
    \sup_{S\in \cS}  \| (\widehat{\bSigma}-\bSigma )_{SS}  \| \leq C \upsilon_1^2 \lambda_1 \bigg[ 
    \sqrt{\frac{2s \log\{ep/(2s)\}  + 2s + z}{n}}+ \frac{2s \log\{ep/(2s)\} + 2s + z}{n} \bigg] \,.
$$
Provided that $n\geq C' \upsilon_1^4 \{ s \log(p/s) + z\}$ for a sufficiently large constant $C'$, the right-hand side of the above inequality is bounded further by $\lambda_1$. 

Combining the results, we conclude that with probability at least $1-e^{-z}$,
\begin{align*}
\widehat{\mathcal{L}}_\tau (\boldsymbol{\beta}_1 )-\widehat{\mathcal{L}}_\tau (\boldsymbol{\beta}_2 )- \langle\nabla \widehat{\mathcal{L}}_\tau (\boldsymbol{\beta}_2 ), \boldsymbol{\beta}_1-\boldsymbol{\beta}_2 \rangle  \leq  \frac{1}{2}   \|\bbeta_1 - \bbeta_2\|_2^2 \cdot \sup_{S\in \mathcal{S}}  \|\widehat{\bSigma}_{SS}  \| \leq  \lambda_1 \|\bbeta_1 - \bbeta_2\|_2^2\,,  
\end{align*} 
as claimed.  \qed

\subsection{Proof of Lemma~\ref{lem:HD.rsc}}\label{proof.lem:HD.rsc}

Let $\cG_1  =  \{(\bbeta_1, \bbeta_2) \in \mathbb{H}(s) \times \mathbb{H}(s): \|\bbeta_1 - \bbeta^*\|_2\leq r/2, \|\bbeta_2 - \bbeta_1\|_2\leq r \}$ and $\cG_2  =  \{(\bbeta_1, \bbeta_2) \in \mathbb{H}(s) \times \mathbb{H}(s): \|\bbeta_2 - \bbeta^*\|_2\leq r/2, \|\bbeta_2 - \bbeta_1\|_2\leq r \}$, such that $\mathbb{N}(s, r) = \cG_1 \cup \cG_2$. Without loss of generality, we assume $(\bbeta_1, \bbeta_2) \in\cG_1$ through out the proof, and bound $\widehat{\mathcal{L}}_\tau (\boldsymbol{\beta}_1 )-\widehat{\mathcal{L}}_\tau (\boldsymbol{\beta}_2 )- \langle\nabla \widehat{\mathcal{L}}_\tau (\boldsymbol{\beta}_2 ), \boldsymbol{\beta}_1-\boldsymbol{\beta}_2 \rangle $ from below, uniformly over $(\bbeta_1, \bbeta_2) \in\cG_1$. It is straightforward to show that a similar result holds for the case $(\bbeta_1, \bbeta_2) \in\cG_2$.

For some $0<\tau_0\leq \tau$ to be determined, define the events 
$$
    \cF_i  = \{ |\varepsilon_i | \leq \tau/4\} \cap \{ | \langle \bx_i, \bbeta_1 - \bbeta^* \rangle | \leq \tau/4 \} \cap \{ | \langle \bx_i, \bbeta_1 -\bbeta_2 \rangle | \leq  \tau_0 \| \bbeta_1 - \bbeta_2 \|_2/(2r) \} \,.
$$ 
Moreover, define $\chi_i  = \mathbbm{1} ( |\varepsilon_i | \leq \tau / 4 )$ for $i\in[n]$, and let $\bv  = (\bbeta_1 - \bbeta_2) /\|\bbeta_1 - \bbeta_2\|_2\in \mathbb{S}^{p-1}$. For any $R>0$, let  $\varphi_R(\cdot) $ and  $\phi_R(\cdot) $ be the functions defined in the proof of Lemma~\ref{lem:rsc}. Then, following the proof of Lemma~\ref{lem:rsc}, it can be similarly derived that
\begin{align*}
&  \hat \cL_\tau(\bbeta_1)  - \hat \cL_\tau (\bbeta_2) - \langle \nabla \hat \cL_\tau(\bbeta_2), \bbeta_1-\bbeta_2 \rangle \\
&~~~ \geq   \frac{1}{2 n} \sn  \langle \bx_i, \bbeta_1 - \bbeta_2 \rangle^2 \mathbbm{1}(\cF_i) \\ 
    &~~~ \geq \frac{1}{2n} \sn   \chi_i  \varphi_{\| \bbeta_1 - \bbeta_2 \|_2  \tau_0/(2 r)} (\langle \bx_i, \bbeta_1 - \bbeta_2 \rangle  )   \phi_{\tau/4} (\langle \bx_i, \bbeta_1 - \bbeta^* \rangle )   \nn \\
    & ~~~= \frac{1}{2} \| \bbeta_1 - \bbeta_2 \|_2^2 \cdot 
    \underbrace{ \frac{1}{  n  } \sn   \chi_i   \varphi_{  \tau_0/(2r)} (\langle \bx_i,\bv \rangle  )   \phi_{\tau /4} (\langle \bx_i, \bbeta_1 - \bbeta^* \rangle )  }_{ =: V_n(\bbeta_1, \bbeta_2) } \\
    &~~~ = \frac{1}{2} \|\bbeta_1 - \bbeta_2\|_2^2   \big[  \EE \{V_n (\boldsymbol{\beta}_1, \boldsymbol{\beta}_2 )\} +   V_n (\boldsymbol{\beta}_1, \boldsymbol{\beta}_2 ) - \EE \{V_n (\boldsymbol{\beta}_1, \boldsymbol{\beta}_2 ) \}     \big] \,.
\end{align*}
As in the proof of Lemma~\ref{lem:rsc}, by taking $\tau_0 = 16 (\kappa_4 \lambda_1)^{1/2} r$, it holds that, as long as $\tau \geq 16 \max\{ \sigma_0, (\kappa_4 \lambda_1)^{1/2} r \}$, 
$$\
\underset{(\bbeta_1, \bbeta_2)\in \cG_1}{\inf} \EE \{V_n (\boldsymbol{\beta}_1, \boldsymbol{\beta}_2 ) \}\geq  c_0 \lambda_p\,, 
$$
where $c_0 = (1- 1/16)(1- 1/8) \approx 0.82$.

Next, we aim to bound the supremum $\Omega(r): = \sup_{(\bbeta_1, \bbeta_2) \in \cG_1} (\EE - 1) \{V_n(\bbeta_1, \bbeta_2)\}$. By \eqref{talagrand.concentration}, it holds with probability at least $1-e^{-z}$ that
$$ 
    \Omega(r) \leq  \frac{5}{4} \EE \{\Omega(r)\} + \kappa_4^{1/2} \lambda_1 \sqrt{ \frac{2 z}{n} } +   \frac{13}{3} \bigg(\frac{\tau_0}{4 r} \bigg)^2\frac{z}{n}\,. 
$$
It remains to upper bound $\EE\{ \Omega(r)\}$. Define   $\mathbb{Z}_{\bbeta_1, \bbeta_2 } $ in the same way as \eqref{def:ep:Z}. In the proof of Lemma~\ref{lem:rsc} (low-dimensional case), a key step is to bound $\mathbb{E} ( \sup_{ \bbeta_1, \bbeta_2  } \mathbb{Z}_{\bbeta_1, \bbeta_2} )$. To achieve this, we utilize the inequality $| \langle \ba, \bb \rangle | \leq \| \ba \|_2 \| \bb \|_2$ for any two vectors $\ba$ and $\bb$. In high dimensions, using instead the bound $| \langle \ba, \bb \rangle | \leq \| \ba \|_1 \| \bb \|_\infty$, we have
\begin{align}
     \EE \bigg( \sup_{(\bbeta_1, \bbeta_2)\in \cG_1}   \mathbb{Z}_{\bbeta_1, \bbeta_2 } \bigg)
	 &\leq \frac{\tau_0}{\sqrt{2} r^2} \cdot \sup_{(\bbeta_1, \bbeta_2)\in \cG_1} \| \bbeta_1 -\bbeta^* \|_1  \cdot \EE \bigg(\bigg\| \frac{1}{n} \sn  g_i' \chi_i \bx_i \bigg\|_\infty\bigg) \nn \\
	 & \quad + \frac{\tau_0}{\sqrt{2} r } \cdot \sup_{(\bbeta_1, \bbeta_2)\in \cG_1} \frac{\| \bbeta_1 -\bbeta_2 \|_1}{\| \bbeta_1 -\bbeta_2 \|_2}  \cdot  \EE \bigg(\bigg\| \frac{1}{n} \sn  g''_i \chi_i \bx_i \bigg\|_\infty \bigg) \nn \\
	 &\leq  \frac{3\tau_0\sqrt{s}}{2 r }\cdot \EE \bigg(\bigg\| \frac{1}{n} \sn  g'_i \chi_i \bx_i \bigg\|_\infty\bigg)\,,  \label{bound:gaussian-process-over-G1}
\end{align}
where the last step is based on the facts that $\| \bbeta_1 -\bbeta^* \|_1\leq \sqrt{s+s^*}\| \bbeta_1 -\bbeta^* \|_2\leq  \sqrt{2s}{r}/{2}$ and $\|\bbeta_1 - \bbeta_2\|_1 \leq \sqrt{2s} \|\bbeta_1 - \bbeta_2\|_2$ for $(\bbeta_1, \bbeta_2)\in \cG_1$.

By \eqref{tail.xij}, it holds that $ \PP(|x_{i,j}|\geq   \upsilon_1 \lambda_1^{1/2} z) \leq   2 e^{-z^2/2}$ for any $z\geq 0$. Then for $m \geq 2$, we have
\begin{align*}
	\mathbb{E}( |x_{i,j} |^{m}) &= (\upsilon_1 \lambda_1^{1/2})^{m} m \int_{0}^{\infty} t^{m-1} \mathbb{P} ( |x_{i,j} | \geq \upsilon_1 \lambda_1^{1/2} t )\, \mathrm{d}t \\
	& \leq 2  (\upsilon_1 \lambda_1^{1/2} )^{m} m \int_{0}^{\infty} t^{m-1} e^{-t^{2} / 2} \mathrm{d}t=2^{m / 2} (\upsilon_1 \lambda_1^{1/2})^{m} m \Gamma(m/2)  
\end{align*} 
for all $i\in[n]$ and $j\in[p]$.  
We first control the moments $ \mathbb{E}(|g'_i\chi_i x_{i,j} |^{m})$  for $ m\geq 2$. When $m=2$,  $\mathbb{E} (|g'_{i} \chi_i x_{i,j}|^2) \leq \sigma_{j,j} \leq \lambda_1$, where $\sigma_{j,j}$ is the $j$-th diagonal entry of $\bSigma$. 
For $m\geq 3$, since $g^{\prime}_i \sim N(0,1)$ is independent of $\bx_i$, applying the Legendre duplication formula $\Gamma(s) \Gamma(s+1 / 2)=2^{1-2 s} \sqrt{\pi} \Gamma(2s)$, we have
\begin{align*}
    \mathbb{E}(|g'_i\chi_i x_{i,j} |^{m}) & \leq 2^{m / 2} \frac{\Gamma (\frac{m+1}{2} )}{\sqrt{\pi}} \cdot 2^{m / 2} (\upsilon_1 \lambda_1^{1/2})^{m} m \Gamma(m / 2) \\
    & =2 (\upsilon_1 \lambda_1^{1/2})^{m} m !=\frac{m !}{2} \cdot 4\upsilon_1^{2} \lambda_1  \cdot (\upsilon_1 \lambda_1^{1/2})^{m-2}\,. 
\end{align*} 
For any $j\in [p]$, we denote $S_j = \sum_{i=1}^{n} g'_{i} \chi_i  x_{i, j}$.  Since $S_j$ is symmetric, by the  Bernstein inequality,
$$
\log \mathbb{E} (e^{-\lambda S_{j}}) = \log \mathbb{E} (e^{\lambda S_{j}}) \leq \frac{4 \upsilon_1^{2} \lambda_1 n \lambda^{2}}{2 (1-\upsilon_1 \lambda_1^{1/2} \lambda)}\,, 
$$
for all $\lambda \in (0,1 / (\upsilon_1 \lambda_1^{1/2}) )$. By Corollary 2.6 in \citeS{boucheron2013concentrationS},  
\begin{align*}
     \EE\bigg( \bigg\| \frac{1}{n} \sn  g'_i \chi_i \bx_i \bigg\|_\infty\bigg) & =\mathbb{E} \bigg(\max _{j\in[p]} |S_{j} / n | \bigg) =\mathbb{E}\bigg(
     \max_{j\in[p]} \, \max\{S_{j} / n,  -S_{j} / n\} \bigg) \\
     & \leq  \upsilon_1  \lambda_1^{1/2} \bigg\{ 2 \sqrt{\frac{2 \log (2 p)}{n}}+\frac{\log (2 p)}{n} \bigg\}\,.
\end{align*}
Plugging this upper bound into inequality (\ref{bound:gaussian-process-over-G1}), we obtain 
\begin{align*}
      \EE \bigg( \sup_{(\bbeta_1, \bbeta_2)\in \cG_1}   \mathbb{Z}_{\bbeta_1, \bbeta_2 } \bigg) \leq  \frac{3\tau_0 \sqrt{s}}{2 r }\cdot \upsilon_1 \lambda_1^{1/2} \bigg\{ 2 \sqrt{\frac{2 \log (2 p)}{n}}+\frac{\log (2 p)}{n}\bigg\}\,. 
\end{align*}
Under the assumption $n\gtrsim \log p$, it follows that $\EE \{\Omega(r) \}\lesssim \kappa_4^{1/2}\upsilon_1  \lambda_1 \sqrt{  s(\log p)/n}$. 
Combining the bounds derived above, we conclude that with probability at least $1-e^{-z}$,
$$
    \Omega(r)  \lesssim  \kappa_4^{1/2} \upsilon_1  \lambda_1 \sqrt{\frac{s\log p}{n}} +  \kappa_4^{1/2} \lambda_1 \sqrt{\frac{z}{n}} + \kappa_4 \lambda_1 \frac{z}{n}\,.
$$
Let the sample size satisfy $n \gtrsim \kappa_4 \upsilon_1^2  (\lambda_1 / \lambda_p )^2 (s\log p + z)$ such that $ \Omega(r)  \leq  (c_0-1/4) \lambda_p$. This result, combined with the lower bound on $ \EE\{ V_n (\boldsymbol{\beta}_1, \boldsymbol{\beta}_2 )\}$, establishes that 
$$
 \hat \cL_\tau(\bbeta_1)  - \hat \cL_\tau (\bbeta_2) - \langle \nabla \hat \cL_\tau(\bbeta_2), \bbeta_1-\bbeta_2 \rangle   \geq \frac{1}{8} \lambda_p \| \bbeta_1 - \bbeta_2 \|_2^2  
$$
holds  uniformly over $(\bbeta_1, \bbeta_2) \in \cG_1$. This completes the proof of the lemma. \qed

\subsection{Proof of Lemma~\ref{lem:HD.huber.grad_diff.ubd}}\label{proof.lem:HD.huber.grad_diff.ubd}

For any  $\bbeta \in \Theta(r)\cap\mathbb{H}(s)$ and $S\in \cS$, we have 
\begin{align*}
& \big\|\EE[\{\nabla\hat \cL_\tau(\bbeta)  - \nabla\hat \cL_\tau (\bbeta^*)\}_S]\big\|_2 \\
&~~~=\sup_{\bu\in\mathbb{S}^{p-1} : \, {\rm supp}(\bu)=S} \EE \bigg[ \frac{1}{n}\sum_{i=1}^n \{ \psi_\tau(\varepsilon_i) - \psi_\tau(\varepsilon_i-\langle \bx_i,\bbeta-\bbeta^*\rangle) \}  \bx_{i,S}^\T \bu  \bigg] \\
&~~~\leq \| \bbeta -\bbeta^* \|_{\bSigma}  \sup_{\bu\in\mathbb{S}^{p-1} : \, {\rm supp}(\bu)=S}  \| \bu \|_{\bSigma} \leq \lambda_1 r  \,.
\end{align*}
Taking the supremum over all $\bbeta \in \Theta(r) \cap \mathbb{H}(s)$ and $S\in \cS$ gives
\begin{align}
\sup_{ \bbeta \in \Theta(r)\cap\mathbb{H}(s) , \, S\in \cS}  \big\|\EE[\{\nabla\hat \cL_\tau(\bbeta)  - \nabla\hat \cL_\tau (\bbeta^*)\}_S]\big\|_2 \leq \lambda_1 r \,.  \label{mean.sup.bound}
\end{align}

To upper bound $\sup_{S \in \cS, \,  \bbeta \in \Theta(r)\cap\mathbb{H}(s)}  \|(1-\EE)[\{\nabla\hat \cL_\tau(\bbeta)  - \nabla\hat \cL_\tau (\bbeta^*)\}_S ] \|_2$, first fix the subset $S \in \cS$. Following the proof of inequality (2.5) in \citeS{chen2020robustS}, it can be similarly shown that, for any $z\geq 1/2$ and $r>0$, with probability at least $1-e^{-z}$,
$$
   \sup_{ \bbeta \in \Theta(r)\cap\mathbb{H}(s)} \big\|(1-\EE)[\{\nabla\hat \cL_\tau(\bbeta)  - \nabla\hat \cL_\tau (\bbeta^*)\}_S ] \big\|_2  \leq C \lambda_1   \upsilon_1^2r \sqrt{\frac{s+ z}{n}}\,, 
$$
as long as $n\gtrsim s+z$. Recall that $|\cS| \leq \{{ep}({2s})^{-1}\}^{2s}$. Taking a union bound over all $S \in \cS$ yields that, with probability at least $1-e^{-z}$,
$$
\sup_{S\in \cS, \, \bbeta \in \Theta(r)\cap\mathbb{H}(s)} \big\|(1-\EE)[\{\nabla\hat \cL_\tau(\bbeta)  - \nabla\hat \cL_\tau (\bbeta^*)\}_S ] \big\|_2 \leq C \lambda_1  \upsilon_1^2 r \sqrt{\frac{s\log(e p/s) + z}{n}}\,. 
$$
Combining this with \eqref{mean.sup.bound}, it follows that as long as $n\gtrsim \upsilon_1^4 (s\log p + z)$, 
$$
   \sup_{S\in \cS , \,  \bbeta \in \Theta(r)\cap\mathbb{H}(s)} \big\| \{\nabla\hat \cL_\tau(\bbeta)  - \nabla\hat \cL_\tau (\bbeta^*)\}_S  \big\|_2 \leq \frac{3}{2} \lambda_1 r
$$
holds with probability at least $1-e^{-z}$, as desired. \qed
 
\subsection{Proof of Lemma~\ref{lem:HD.huber.grad_oracle.ubd}}\label{proof.lem:HD.huber.grad_oracle.ubd}

For each $S\in \cS$, note that 
$$
    \| \EE[\{\nabla\hat \cL_\tau (\bbeta^*)\}_S]   \|_2  =\sup_{\bu\in\mathbb{S}^{p-1}: \, {\rm supp}(\bu)=S }  \frac{1}{n} \sn \EE \{ -\psi_\tau(\varepsilon_i)\bx_{i, S}^\T \bu\}\,.
$$
Under Assumption \ref{assump:heavy-tail}, we have $\EE(\varepsilon_i \,|\, \bx_i) = 0$, and thus 
$$
    -\mathbb{E}\{\psi_\tau(\varepsilon_i  )\, |\, \bx_i \}=\mathbb{E}\{\varepsilon_i \mathbbm{1}(|\varepsilon_i |>\tau)-\tau \mathbbm{1}(\varepsilon_i>\tau )+\tau \mathbbm{1}(\varepsilon_i<-\tau) \,|\, \bx_i\}\,,
$$ 
which implies 
$$
    |\mathbb{E}\{\psi_\tau(\varepsilon_i) \,| \,\bx_i\}| \leq \mathbb{E}\{(|\varepsilon_i|-\tau) \mathbbm{1}(|\varepsilon_i|>\tau) \,|\, \bx_i\} \leq  \sigma_{\iota}^{2+\iota} \tau^{-1-\iota}\,.
$$
Then, by the Cauchy-Schwartz inequality and Assumption \ref{assump:design}, we have $\|\EE[ \{\nabla\hat \cL_\tau (\bbeta^*)\}_S] \|_2 \leq  \lambda_1^{1/2} \sigma_{\iota}^{2+\iota} \tau^{-1-\iota}$. Taking the supremum over $S\in \cS$ yields
\begin{align}
    \sup_{S\in \cS}\big\|  \EE [\{ \nabla\hat \cL_\tau (\bbeta^*) \}_S] \big\|_2 \leq  \lambda_1^{1/2} \sigma^{ 2+\iota}_{\iota} \tau^{-1-\iota}\,.
\label{mean.gradient.bound}
\end{align}

It remains to upper bound $\sup_{S\in \cS}  \|(1-\EE) [\{\nabla\hat \cL_\tau (\bbeta^*)\}_S] \|_2 = \sup_{S\in \cS}  \| \bv_S \|_2$, where $\bv = n^{-1} \sum_{i=1}^n (1-\EE)\{ \psi_{\tau}(\varepsilon_i) \bx_i \} \in \RR^p$. Let $S\subseteq [p]$ be a subset with cardinality $|S|=k$ for some $1\leq k\leq 2s$. Assumption~\ref{assump:design} ensures that, for any $\bu \in \mathbb{S}^{p-1}$ that is supported on $S$, $\bx_{i, S}^\T \bu = \bx_i^\T \bu$ is sub-Gaussian, satisfying
$\PP( | \bx_{i, S}^\T \bu | \geq \lambda_1^{1/2} \upsilon_1 z )  \leq 2 e^{-z^2/2}$ for any $z\geq 0$. Then, following the proof of (B.8) in the supplementary material of \citeS{chen2020robustS}, it can be shown that, with probability at least $1- e^{-z}$,
$$
     \| \bv_S \|_2 \lesssim \lambda_1^{1/2} \upsilon_1  \bigg(  \sigma_0 \sqrt{\frac{k+z}{n}} + \tau \frac{k+z}{n} \bigg) \,.
$$
Taking a union bound over all $S \in \cS$ with $|\cS|\leq \{{ep}({2s})^{-1}\}^{2s}$, we conclude that with probability at least $1-e^{-z}$,
$$
    \sup_{S \in \cS} \| \bv_S \|_2 \lesssim \lambda_1^{1/2} \upsilon_1  \bigg\{  \sigma_0 \sqrt{\frac{s \log(ep/s) +z}{n}} + \tau \frac{s \log(ep/s) +z}{n} \bigg\} \,.
$$
This, combined with \eqref{mean.gradient.bound}, proves the claim. \qed

\subsection{Proof of Lemma~\ref{lem:HD.max.Laplace}} \label{proof.lem:HD.max.Laplace}

By the tail bound of $\operatorname{Laplace}(\lambda_{\operatorname{dp}})$, for any $z>0$, we have
$$
\PP(| w^t_{i, j}|\geq \lambda_{\operatorname{dp}}z) \leq e^{-z} ~~\mbox{ and }~~ \PP(|\widetilde{w}^t_j|\geq \lambda_{\operatorname{dp}}z) \leq e^{-z}  
$$
for all $i \in[s]$, $j\in[p]$, and $t \in \{0\}\cup [T-1]$. Taking a union bound over $j \in[p]$, $i\in[s]$, and $t \in \{0\}\cup [T-1]$, for any $z>0$, we have that for any $z>0$, with probability at least $1-e^{-z}$, it holds that
$$\max _{ t \in \{0\}\cup [T-1]} \sum_{i=1}^s\|\bw^t_i\|_{\infty}^2 \leq s\lambda_{\operatorname{dp}}^2\{z+\log(spT)\}^2\,, $$ 
and with probability at least $1-e^{-z}$,
$$
\max _{ t \in \{0\}\cup [T-1]}  \| \widetilde{\bw}^t_{S^{t+1}} \|_2^2 \leq s\lambda_{\operatorname{dp}}^2\{z+\log(sT)\}^2\,.
$$ 
Combining the last two displays with $\lambda_{{\rm dp}} = 2 \epsilon^{-1}T \lambda\sqrt{5s \log( T/\delta)}$, it follows that there exists an absolute constant $C_0>0$ such that, for any $z>0$, with probability at least $1-2e^{-z}$, 
\begin{align*}
\max _{ t \in \{0\}\cup [T-1]} W_t \leq C_0 \bigg( \frac{sT \lambda}{\epsilon} \bigg)^2 \{\log (p s T)+z\}^2   \log\bigg(\frac{T}{\delta}\bigg)  \, , 
\end{align*}
as claimed. \qed

\subsection{Proof of Lemma \ref{lem.cons}}\label{proof.lem.cons}
If $\tilde{S}\setminus S = \emptyset$ (i.e., $|\tilde{S}|=s$), it is evident that $\| \{ \tilde P_s(\bxi) - \bxi\}_{\tilde{S}} \|_2^2 = 0 $, and thus \eqref{cons.bound} holds trivially. If $|\tilde{S}|>s$, let $\bxi=(\xi_1,\ldots,\xi_p)^{\T}$, and let $H$ be the index set of the top $s$ coordinates of $\bxi_{\tilde{S}}$ in magnitude. Then, we have
    \begin{align}\label{div}
        \| \{ \tilde P_s(\bxi) - \bxi\}_{\tilde{S}} \|_2^2 = &\sum_{j\in    \tilde{S} \setminus S } \xi_{j}^2=\sum_{j\in (\tilde{S} \setminus S)\cap H^{c}}\xi_{j}^2+\sum_{j\in (\tilde{S} \setminus S)\cap H}\xi_{j}^2\nn\\
        \overset{{\rm (i)}}{\leq } & \sum_{j\in (\tilde{S} \setminus S)\cap H^{c}}\xi_{j}^2+ (1+c^{-1})\sum_{j\in S\cap H^c}\xi_{j}^2 + 4(1+c)\sum_{i\in[s]}\|\bw_i\|_{\infty}^2\nn\\
        \leq & (1+c^{-1})\sum_{j\in  \tilde{S} \cap H^{c}}\xi_{j}^2+4(1+c)\sum_{i\in[s]}\|\bw_i\|_{\infty}^2\,,
    \end{align}
    where inequality (i) follows from Lemma 3.4 of \citeS{cai2021costS}. Notice that $\tilde{S}\cap H^c$ is the set of indices corresponding to the smallest absolute values in $\bxi_{\tilde{S}}$. Let $\hat\bxi=(\smallhat{\xi}_1,\ldots,\smallhat{\xi}_p)^{\T}$, $\hat{S}={\rm supp}(\hat{\bxi})$ and $\hat{S}_0 = \hat{S} \cap \tilde{S}$. Due to $\smallhat{s}_0 := |\hat{S}_0|\leq \smallhat{s}$, it holds that
    \begin{align*}
        \frac{1}{|\tilde{S}|-s}\sum_{j\in  \tilde{S} \cap H^{c}}\xi_{j}^2 \leq  \frac{1}{|\tilde{S}|-\smallhat{s}_0}\sum_{j\in  \tilde{S} \cap \hat{S}_0^{c}}\xi_{j}^2 \leq  \frac{1}{|\tilde{S}|-\smallhat{s}}\sum_{j\in  \tilde{S} \cap \hat{S}_0^{c}}(\smallhat{\xi}_j-\xi_{j})^2 \leq \frac{1}{|\tilde{S}|-\smallhat{s}}\| \{\hat{\bxi}-\bxi\}_{\tilde{S}} \|_2^2\,.
    \end{align*}
    This, together with \eqref{div}, implies \eqref{cons.bound}.\qed 
 
\subsection{Proof of Lemma~\ref{lem:HD:prop:1}} \label{proof.lem:HD:prop:1}
Conditioned on $\cE_0$, it holds that $\bbeta^{(t+1)} = \widetilde{P}_s(\widetilde{\bbeta}^{(t+1)}) + \widetilde{\bw}^t_{S^{t+1}} = (\bbeta^{(t)} - \eta_0 \boldsymbol{g}^{(t)} + \widetilde{\bw}^t)_{S^{t+1}}$.  
This, together with $(\bbeta^{(t+1)})_{S^t\setminus S^{t+1}} = \boldsymbol{0}$ and  $S^{t+1} \cap (S^t \setminus S^{t+1}) = \emptyset$, leads to  
\begin{align}\label{inner.prod.diff.grad.bound}
     \langle\bbeta^{(t+1)}-\bbeta^{(t)}, \boldsymbol{g}^{(t)}\rangle  
        & = \langle\bbeta^{(t+1)}-\bbeta^{(t)}, \boldsymbol{g}^{(t)}\rangle_{S^{t+1}} + \langle\bbeta^{(t+1)}-\bbeta^{(t)}, \boldsymbol{g}^{(t)}\rangle_{S^t\setminus S^{t+1}} \nn\\
        & = \langle\bbeta^{(t)} - \eta_0 \boldsymbol{g}^{(t)} + \widetilde{\bw}^t -\bbeta^{(t)}, \boldsymbol{g}^{(t)}\rangle_{S^{t+1}} - \langle \bbeta^{(t)}, \boldsymbol{g}^{(t)}\rangle_{S^t\setminus S^{t+1}} \\
        & =  - \eta_0 \langle \boldsymbol{g}^{(t)}, \boldsymbol{g}^{(t)}\rangle_{S^{t+1}} +  \langle  \widetilde{\bw}^t, \boldsymbol{g}^{(t)}\rangle_{S^{t+1}} - \langle \bbeta^{(t)}, \boldsymbol{g}^{(t)}\rangle_{S^t\setminus S^{t+1}} \nn\\
        & \leq - \{\eta_0 - (4 c_1)^{-1} \}\langle \boldsymbol{g}^{(t)}, \boldsymbol{g}^{(t)}\rangle_{S^{t+1}} + c_1 \langle \widetilde{\bw}^t, \widetilde{\bw}^t\rangle_{S^{t+1}} - \langle \bbeta^{(t)}, \boldsymbol{g}^{(t)}\rangle_{S^t\setminus S^{t+1}}\nn 
\end{align} 
for any constant $c_1>0$. Here, we write $\langle \ba, \bb \rangle_S=\ba_S^{\T}\bb_S$ for any $\ba, \bb \in \RR^p$ and $S\subseteq [p]$.

It is more involved in bounding $\langle \bbeta^{(t)}, \boldsymbol{g}^{(t)}\rangle_{S^t\setminus S^{t+1}}$ from below. Since $(S^{t+1} \setminus S^t )\subseteq S^{t+1}$, $(S^t \setminus S^{t+1} )\subseteq S^t$ and $|S^{t+1} \setminus S^t|=|S^t \setminus S^{t+1}|$, 
Lemma 3.4 in \citeS{cai2021costS} yields that, for any $c_2>0$, 
\begin{equation} \label{HD.prop2.inner.sub1}
\begin{aligned}
    \langle \widetilde{\bbeta}^{(t+1)}, \widetilde{\bbeta}^{(t+1)}\rangle_{S^t \setminus S^{t+1}} & \leq (1+c_2)\langle \widetilde{\bbeta}^{(t+1)}, \widetilde{\bbeta}^{(t+1)}\rangle_{S^{t+1} \setminus S^t} +4(1+c_2^{-1}) \sum_{i=1}^s\|\bw^t_i\|_{\infty}^2\\
    & = (1+c_2)\langle \eta_0 \boldsymbol{g}^{(t)}, \eta_0 \boldsymbol{g}^{(t)}\rangle_{S^{t+1} \setminus S^t} +4(1+c_2^{-1}) \sum_{i=1}^s\|\bw^t_i\|_{\infty}^2\,. 
\end{aligned}
\end{equation}
In the last step, we used the facts that $\widetilde{\bbeta}^{(t+1)}=\bbeta^{(t)}-\eta_0 \boldsymbol{g}^{(t)}$ and $\bbeta^{(t)}_{S^{t+1}\setminus S^t}= \boldsymbol{0}$. 

In the meanwhile,  
\begin{align}\label{HD.prop2.inner.sub2}
    \langle \widetilde{\bbeta}^{(t+1)}, \widetilde{\bbeta}^{(t+1)}\rangle_{S^t \setminus S^{t+1}} & = \langle \bbeta^{(t)}-\eta_0 \boldsymbol{g}^{(t)}, \bbeta^{(t)}-\eta_0 \boldsymbol{g}^{(t)}\rangle_{S^t \setminus S^{t+1}} \nn\\
    & = \langle \bbeta^{(t)}, \bbeta^{(t)}\rangle_{S^t \setminus S^{t+1}} + \langle \eta_0 \boldsymbol{g}^{(t)}, \eta_0 \boldsymbol{g}^{(t)}\rangle_{S^t \setminus S^{t+1}} - 2 \eta_0 \langle \bbeta^{(t)} ,  \boldsymbol{g}^{(t)}\rangle_{S^t \setminus S^{t+1}}\nn \\
    & \geq  \langle \eta_0 \boldsymbol{g}^{(t)}, \eta_0 \boldsymbol{g}^{(t)}\rangle_{S^t \setminus S^{t+1}} - 2 \eta_0 \langle \bbeta^{(t)} ,  \boldsymbol{g}^{(t)}\rangle_{S^t \setminus S^{t+1}}\,. 
\end{align}
Combining \eqref{HD.prop2.inner.sub1} and \eqref{HD.prop2.inner.sub2} yields 
\begin{align*}
    & -  \langle \bbeta^{(t)},  \boldsymbol{g}^{(t)}\rangle_{S^t \setminus S^{t+1}}   \leq(1+c_2) (\eta_0/2)\big\|\boldsymbol{g}^{(t)}_{S^{t+1} \setminus S^t}\big\|_2^2-(\eta_0/2)\big\|\boldsymbol{g}^{(t)}_{S^t \setminus S^{t+1}}\big\|_2^2+2(1+c_2^{-1})\eta_0^{-1} \sum_{i=1}^s\|\bw^t_i\|_{\infty}^2 \,.
\end{align*}
Plugging this to \eqref{inner.prod.diff.grad.bound} yields
\begin{align*}
     \langle\bbeta^{(t+1)}-\bbeta^{(t)}, \boldsymbol{g}^{(t)}\rangle  &\leq(1+c_2) (\eta_0/2)\big\|\boldsymbol{g}^{(t)}_{S^{t+1} \setminus S^t}\big\|_2^2-(\eta_0/2)\big\|\boldsymbol{g}^{(t)}_{S^t \setminus S^{t+1}}\big\|_2^2 \\
     & ~~~~-  \{\eta_0-(4 c_1)^{-1}\}\big\|\boldsymbol{g}^{(t)}_{S^{t+1}}\big\|_2^2+\{c_1+2(1+c_2^{-1})\eta_0^{-1}\} W_t\,, 
\end{align*}
where $W_t = \sum_{i=1}^s\|\bw^t_i\|_{\infty}^2+\|\widetilde{\bw}_{S^{t+1}}^t\|_2^2 $. Taking $c_1=2\eta_0^{-1}, c_2=1/4$, and using the equality
$\|\boldsymbol{g}^{(t)}_{S^{t+1}}\|_2^2=\|\boldsymbol{g}^{(t)}_{S^{t+1} \setminus S^t}\|_2^2+\|\boldsymbol{g}^{(t)}_{S^t \cap S^{t+1}}\|_2^2$,
it follows that 
\begin{align*}
      \langle\bbeta^{(t+1)}-\bbeta^{(t)}, \boldsymbol{g}^{(t)}\rangle 
    & \leq-\frac{\eta_0}{4}\big\|\boldsymbol{g}^{(t)}_{S^{t+1} \setminus S^t}\big\|_2^2-\frac{7\eta_0}{8}\big\|\boldsymbol{g}^{(t)}_{S^t \cap S^{t+1}}\big\|_2^2-\frac{\eta_0}{2}\big\|\boldsymbol{g}^{(t)}_{S^t \setminus S^{t+1}}\big\|_2^2+ \frac{12}{\eta_0} W_t \\
    & \leq-\frac{\eta_0}{4}\big\|\boldsymbol{g}^{(t)}_{S^{t+1} \cup S^t}\big\|_2^2+\frac{12}{\eta_0} W_t\,. 
\end{align*}
Hence, we obtain the first result of Lemma \ref{lem:HD:prop:1}. In the noiseless case, the second result of Lemma \ref{lem:HD:prop:1} can be proved by using similar arguments.
\qed

\subsection{Proof of Lemma~\ref{lem:HD:prop:2}} \label{proof.lem:HD:prop:2}
Since $\bbeta^{(t)}_{I^t \setminus(S^t \cup S^*)}=\mathbf{0}$, conditioned on $\cE_0$, we have 
$$
\eta_0^2\big\|\boldsymbol{g}^{(t)}_{I^t \setminus(S^t \cup S^*)}\big\|_2^2=\big\|(\bbeta^{(t)}-\eta_0 \boldsymbol{g}^{(t)} )_{I^t \setminus(S^t \cup S^*)}\big\|_2^2=\big\|\widetilde{\bbeta}^{(t+1)}_{I^t \setminus(S^t \cup S^*)}\big\|_2^2\,. 
$$
We will bound the right-hand side from below by Lemma 3.4 in \citeS{cai2021costS}. Recall $I^{t}=S^t\cup S^{t+1}\cup S^*$. Since $$|S^t \setminus S^{t+1}|=|S^{t+1} \setminus S^t| \geq|S^{t+1} \setminus(S^t \cup S^*)|=|I^t \setminus(S^t \cup S^*)|\,, $$ there exists a support set $R \subseteq S^t \setminus S^{t+1}$ with cardinality $|R|=|I^t \setminus(S^t \cup S^*)|$. Applying Lemma 3.4 in \citeS{cai2021costS} yields that, for any $c_1>0$, 
$$
\big\|\widetilde{\bbeta}^{(t+1)}_R\big\|_2^2 \leq(1+c_1)\big\|\widetilde{\bbeta}^{(t+1)}_{I^t \setminus(S^t \cup S^*)}\big\|_2^2+4(1+c_1^{-1}) \sum_{i=1}^s\|\bw^t_i\|_{\infty}^2\,. 
$$
As $R \subseteq S^t \setminus S^{t+1}$, we have $\bbeta^{(t+1)}_R=\mathbf{0}$ and hence
\begin{equation} \label{on_R}
\begin{aligned}
    & -\eta_0^2\big\|\boldsymbol{g}^{(t)}_{I^t \setminus(S^t \cup S^*)}\big\|_2^2  \leq -(1+c_1)^{-1}\big\|\{\bbeta^{(t+1)}-\bbeta^{(t)}+\eta_0 \boldsymbol{g}^{(t)}\}_R\big\|_2^2 + 4 c_1^{-1} \sum_{i=1}^s\|\bw^t_i\|_{\infty}^2\,.
\end{aligned} 
\end{equation} 
By (\ref{on_R}) and the facts that $(I^t \setminus R) \cup R = I^t$ and $(I^t \setminus R) \cap R = \emptyset$, 
\begin{align*}\label{HD.prop2.target}
& \big\|\{\bbeta^{(t+1)}-\bbeta^{(t)}+\eta_0 \boldsymbol{g}^{(t)}\}_{I^t}\big\|_2^2-\eta_0^2\big\|\boldsymbol{g}^{(t)}_{I^t \setminus(S^t \cup S^*)}\big\|_2^2 \\
&~~~~~ \leq  \big\|\{\bbeta^{(t+1)}-\bbeta^{(t)}+\eta_0 \boldsymbol{g}^{(t)}\}_{I^t\setminus R}\big\|_2^2 \\
&~~~~~~~~+ c_1(1+c_1)^{-1}\big\|\{\bbeta^{(t+1)}-\bbeta^{(t)}+\eta_0 \boldsymbol{g}^{(t)}\}_{R}\big\|_2^2 + 4 c_1^{-1} \sum_{i=1}^s\|\bw^t_i\|_{\infty}^2\,.  
\end{align*} 
We bound the terms on the right-hand side one by one. Inequality (\ref{on_R}) implies
\begin{equation}\label{HD.prop2.sub1.bound}
    \begin{aligned}
    & \big\|\{\bbeta^{(t+1)}-\bbeta^{(t)}+\eta_0 \boldsymbol{g}^{(t)}\}_{R}\big\|_2^2  \leq (1+c_1)\eta_0^2\big\|\boldsymbol{g}^{(t)}_{I^t \setminus(S^t \cup S^*)}\big\|_2^2+4 c_1^{-1}(1+c_1)\sum_{i=1}^s\|\bw^t_i\|_{\infty}^2\,.
\end{aligned}
\end{equation}
For any $c_2>0$, due to $\bbeta^{(t+1)}=\widetilde{P}_s(\widetilde{\bbeta}^{(t+1)})+\widetilde{\bw}_{S^{t+1}}^t$ and $\widetilde{\bbeta}^{(t+1)}=\bbeta^{(t)}-\eta_0\boldsymbol{g}^{(t)} $ conditioned on $\cE_0$, it holds that
\begin{equation}\label{HD.prop2.sub2}
    \begin{aligned}
    & \big\|\{\bbeta^{(t+1)}-\bbeta^{(t)}+\eta_0 \boldsymbol{g}^{(t)}\}_{I^t\setminus R}\big\|_2^2\\
    & ~~~\leq (1+c_2)\big\|\{\widetilde{P}_s(\widetilde{\bbeta}^{(t+1)})-\widetilde{\bbeta}^{(t+1)}\}_{I^t \setminus R}\big\|_2^2 +(1+c_2^{-1})\|\widetilde{\bw}_{S^{t+1}}^t\|_2^2\,.
    \end{aligned}
\end{equation}
Since $R \subseteq S^t \setminus S^{t+1}$, we have $S^{t+1} \subseteq I^t \setminus R$.
By Lemma \ref{lem.cons}, for any $c_3>0$, 
\begin{equation}\label{HD.prop2.sub3}
    \begin{aligned}
    & \big\|\{\widetilde{P}_s(\widetilde{\bbeta}^{(t+1)})-\widetilde{\bbeta}^{(t+1)}\}_{I^t \setminus R}\big\|_2^2  \\
    & ~~~\leq \big(1+c_3^{-1}\big)\frac{|I^t \setminus R|-s}{|I^t \setminus R|-s^*}  \big\|\{\bbeta^* - \widetilde{\bbeta}^{(t+1)}\}_{I^t\setminus R}\big\|_2^2+4(1+c_3) \sum_{i=1}^s\|\bw^t_i\|_{\infty}^2 \\ 
    &~~~ \leq(1+c_3^{-1})(s^*/s)\big\|\{\bbeta^* - \widetilde{\bbeta}^{(t+1)}\}_{I^t\setminus R}\big\|_2^2+4(1+c_3) \sum_{i=1}^s\|\bw^t_i\|_{\infty}^2\,, 
\end{aligned}
\end{equation}
where we used the facts $$|I^t \setminus R|=|I^t|-|R|=|I^t|-|I^t \setminus(S^t \cup S^*)|=|S^t \cup S^*| \leq s+s^*, $$ and $|I^t \setminus R| \geq s \geq s^*$. Plugging \eqref{HD.prop2.sub3} into \eqref{HD.prop2.sub2} yields 
\begin{align*} 
     \big\|\{\bbeta^{(t+1)}-\bbeta^{(t)}+\eta_0 \boldsymbol{g}^{(t)}\}_{I^t\setminus R}\big\|_2^2
    & \leq (1+c_2)(1+c_3^{-1}) (s^*/s) \big\|\{\bbeta^* - \widetilde{\bbeta}^{(t+1)}\}_{I^t\setminus R}\big\|_2^2 \\
    & \quad +4 (1+c_2)(1+c_3) \sum_{i=1}^s\|\bw^t_i\|_{\infty}^2+(1+c_2^{-1})\|\tilde{\bw}_{S^{t+1}}^t\|_2^2\,.  
\end{align*} 
As $\widetilde{\bbeta}^{(t+1)} = \bbeta^{(t)} - \eta_0 \boldsymbol{g}^{(t)}$ under $\cE_0$, by taking  $c_2 = c_3 = 1$, we have 
\begin{equation}\label{HD.prop2.sub2.bound}
    \begin{aligned} 
    & \big\|\{\bbeta^{(t+1)}-\bbeta^{(t)}+\eta_0 \boldsymbol{g}^{(t)}\}_{I^t\setminus R}\big\|_2^2 \\
    & ~~~\leq \frac{4s^*}{s}\big\|\{\bbeta^* - \bbeta^{(t)} + \eta_0 \boldsymbol{g}^{(t)}\}_{I^t\setminus R}\big\|_2^2 + 16 \sum_{i=1}^s\|\bw^t_i\|_{\infty}^2+2\|\tilde{\bw}_{S^{t+1}}^t\|_2^2\,. 
    \end{aligned} 
\end{equation}

Combining \eqref{HD.prop2.sub1.bound} with \eqref{HD.prop2.sub2.bound}, we conclude that
\begin{align*}
& \big\|\{\bbeta^{(t+1)}-\bbeta^{(t)}+\eta_0 \boldsymbol{g}^{(t)}\}_{I^t}\big\|_2^2-\eta_0^2\big\|\boldsymbol{g}^{(t)}_{I^t \setminus(S^t \cup S^*)}\big\|_2^2 \\ 
 & ~~~\leq   (4 s^*/s)\big\|\{\bbeta^* - \bbeta^{(t)} + \eta_0 \boldsymbol{g}^{(t)}\}_{I^t\setminus R}\big\|_2^2 +  c_1\eta_0^2\big\|\boldsymbol{g}^{(t)}_{I^t \setminus(S^t \cup S^*)}\big\|_2^2 \\
& ~~~\quad +20  \sum_{i=1}^s\|\bw^t_i\|_{\infty}^2+2\|\tilde{\bw}_{S^{t+1}}^t\|_2^2 + 4 c_1^{-1} \sum_{i=1}^s\|\bw^t_i\|_{\infty}^2 \\
& ~~~\leq  (4 s^*/s)\big\|\{\bbeta^* - \bbeta^{(t)} + \eta_0 \boldsymbol{g}^{(t)}\}_{I^t\setminus R}\big\|_2^2 + c_1\eta_0^2\big\|\boldsymbol{g}^{(t)}_{I^t \setminus(S^t \cup S^*)}\big\|_2^2 + 4(5+c_1^{-1})W_t \\ 
& ~~~\leq  (4 s^*/s)\big\|\{\bbeta^* - \bbeta^{(t)} + \eta_0 \boldsymbol{g}^{(t)}\}_{I^t}\big\|_2^2 +   c_1\eta_0^2\big\|\boldsymbol{g}^{(t)}_{I^t \setminus(S^t \cup S^*)}\big\|_2^2 + 4(5+c_1^{-1})W_t  
\end{align*}
for any $c_1>0$. Hence, we obtain the first result of Lemma \ref{lem:HD:prop:2}. In the noiseless case, the second result of Lemma \ref{lem:HD:prop:2} can be proved by using similar arguments.\qed

\section{Privacy Guarantees and Additional Numerical Results}
\label{privacy.guarantee}

\subsection{Privacy guarantees of the initialization step}

\subsubsection{Privacy guarantee of the selection of $\tau_0$ in \eqref{ridge.beta}}

Since $|\tilde{y}_i|\leq \log n$ for all $i\in[n]$, the $\ell_1$-sensitivities of $n^{-1}\sum_{i=1}^n \tilde{y}_i$ and $  n^{-1}\sum_{i=1}^n \tilde{y}_i^2 $ are bounded by $2n^{-1} \log n $ and $n^{-1}(\log n)^2$, respectively. Hence, by the Laplace mechanism, $m_1 = n^{-1}\sum_{i=1}^n\tilde{y}_i+w_1$ and $m_2 = n^{-1}\sum_{i=1}^n\tilde{y}_i^2+w_2$ are both $(\epsilon_{\rm init}/8,0)$-DP, where $w_1\sim {\rm Laplace}(16(n\epsilon_{\rm init})^{-1}\log n)$ and $w_2 \sim {\rm Laplace}(8(n\epsilon_{\rm init})^{-1}(\log n)^2)$. Therefore, by the standard composition and post-processing properties of differential privacy (see Lemmas \ref{post} and \ref{standardcompositiontheorem}), the quantity $m_2 - m_1^2$, and hence the resulting $\tau_0$, is $(\epsilon_{\rm init}/4,0)$-DP.  \qed

\subsubsection{Privacy guarantee of the output perturbation in Section \ref{selection.low}}

Define $F(\bbeta, {\bZ} )={n}^{-1}\sum_{i=1}^n\rho_{\tau_0}(y_i-\tilde{\bx}_i^{\T}\bbeta)+(\lambda_0/2) \|\bbeta\|_2^2$, where $\tilde{\bx}_{i} = (1, \tilde{\bx}_{i,-1}^{\T})^{\T}$ and $\tilde{\bx}_{i,-1} = \bx_{i,-1} \min \{ \sqrt{p}/(6\|\bx_{i,-1}\|_2),1 \} $. Let $\widehat\bbeta_{\bZ}$ denote the ridge regression estimator given by
\begin{align*}
         \widehat\bbeta_{\bZ} = \arg\min_{ \bbeta\in \mathbb{R}^p } F(\bbeta,\bZ)  \,.
\end{align*}
Similarly, let $\widehat\bbeta_{\bZ'}$ be the estimator based on $\bZ'$, where $\bZ$ and $\bZ'$ differ in exactly one observation, say $(y_1, {\bx}_1)\in \bZ$ versus $(y_1', {\bx}_1')\in \bZ'$. By optimality, $\nabla F(\widehat\bbeta_{\bZ},\bZ)=\nabla F(\widehat\bbeta_{\bZ'},\bZ')=\bzero$. Moreover, we can write 
\begin{align*}
    F(\bbeta,\bZ) = F(\bbeta,\bZ') +\underbrace{\frac{1}{n}\{ \rho_{\tau_0}(y_1-\tilde{\bx}_1^{\T}\bbeta) - \rho_{\tau_0}(y_1'-(\tilde{\bx}_1')^{\T}\bbeta) \}}_{h(\bbeta)}\,,
\end{align*}
where $\tilde{\bx}_{1}' = (1, (\tilde{\bx}_{1,-1}')^{\T})^{\T}$ and $\tilde{\bx}_{1,-1}' = \bx_{1,-1}' \min \{ \sqrt{p}/(6\|\bx_{1,-1}'\|_2),1 \} $. Define the difference term
$h(\bbeta)= n^{-1} \{ \rho_{\tau_0}(y_1-\tilde{\bx}_1^{\T}\bbeta)-\rho_{\tau_0}(y_1'-(\tilde{\bx}_1')^{\T}\bbeta) \}$. Then
\begin{align*}
    \underbrace{\nabla F(\widehat\bbeta_{\bZ},\bZ)}_{=\bzero} = &~\nabla F(\widehat\bbeta_{\bZ},\bZ') + \nabla h(\widehat\bbeta_{\bZ}) = \nabla F(\widehat\bbeta_{\bZ},\bZ') - \nabla F(\widehat\bbeta_{\bZ'},\bZ') +\underbrace{\nabla F(\widehat\bbeta_{\bZ'},\bZ')}_{=\bzero}  + \nabla h(\widehat\bbeta_{\bZ})\, ,
\end{align*}
which implies $\nabla F(\widehat\bbeta_{\bZ},\bZ') - \nabla F(\widehat\bbeta_{\bZ'},\bZ')= -\nabla h(\widehat\bbeta_{\bZ})$, and hence
\begin{align}\label{equa}
    \|\nabla F(\widehat\bbeta_{\bZ},\bZ') - \nabla F(\widehat\bbeta_{\bZ'},\bZ')\|_2 =  \|\nabla h(\widehat\bbeta_{\bZ})\|_2\,.
\end{align}

Because $F(\cdot,\bZ')$ is $\lambda_0$-strongly convex, we have
\begin{align*}      
F(\widehat\bbeta_{\bZ},\bZ') \geq &~F(\widehat\bbeta_{\bZ'},\bZ')+\langle \nabla F(\widehat\bbeta_{\bZ'},\bZ'),  \widehat\bbeta_{\bZ}- \widehat\bbeta_{\bZ'}\rangle +  \frac{\lambda_0}{2} \| \widehat\bbeta_{\bZ}- \widehat\bbeta_{\bZ'}\|_2^2\,,\\
         F(\widehat\bbeta_{\bZ'},\bZ') \geq &~F(\widehat\bbeta_{\bZ },\bZ')+\langle \nabla F(\widehat\bbeta_{\bZ},\bZ'),  \widehat\bbeta_{\bZ'}- \widehat\bbeta_{\bZ }\rangle +  \frac{\lambda_0}{2} \| \widehat\bbeta_{\bZ}- \widehat\bbeta_{\bZ'}\|_2^2\, .
\end{align*}
Subtracting the two inequalities gives
\begin{align*}
    \langle \nabla F(\widehat\bbeta_{\bZ },\bZ')-\nabla F(\widehat\bbeta_{\bZ' },\bZ'),  \widehat\bbeta_{\bZ}- \widehat\bbeta_{\bZ'}\rangle \geq  \lambda_0 \|\widehat\bbeta_{\bZ}- \widehat\bbeta_{\bZ'}\|_2^2\,.
\end{align*}
Applying Cauchy–Schwarz and using \eqref{equa},
\begin{align*}
    \|\widehat\bbeta_{\bZ}- \widehat\bbeta_{\bZ'}\|_2 \leq \frac{1}{ \lambda_0}\|\nabla F(\widehat\bbeta_{\bZ},\bZ') - \nabla F(\widehat\bbeta_{\bZ'},\bZ')\|_2= \frac{1}{ \lambda_0}\|\nabla h(\widehat\bbeta_{\bZ})\|_2\,.
\end{align*}

Because $\|\tilde{\bx}_i\|_2^2 \leq 1+ \|\tilde{\bx}_{i,-1}\|_2^2\leq 1+p/36$ for all $i\in[n]$, and $\sup_{u\in\mathbb{R}}|\psi_{\tau_0}(u)|\leq \tau_0$, we obtain
\begin{align*}
        \|\nabla h(\widehat\bbeta_{\bZ})\|_2=&~\frac{1}{n }\| \psi_{\tau_0}(y_1-\tilde{\bx}_1^{\T}\widehat\bbeta_{\bZ})\tilde{\bx}_1 -   \psi_{\tau_0}(y_1'-(\tilde{\bx}_1')^{\T}\widehat\bbeta_{\bZ})\tilde{\bx}_1'\|_2\\
        \leq &~ \frac{1}{n }\big\{\| \psi_{\tau_0}(y_1-\tilde{\bx}_1^{\T}\widehat\bbeta_{\bZ})\tilde{\bx}_1\|_2 +\|  \psi_{\tau_0}(y_1'-(\tilde{\bx}_1')^{\T}\widehat\bbeta_{\bZ})\tilde{\bx}_1'\|_2\big\}\\
        \leq &~ \frac{ 2 \tau_0 \sqrt{1+p/36}}{n }\,,
\end{align*}
which yields the $\ell_2$-sensitivity bound
\begin{align*}
    \|\widehat\bbeta_{\bZ}- \widehat\bbeta_{\bZ'}\|_2 \leq  \frac{2\tau_0 \sqrt{1+p/36}}{\lambda_0 n}\,.
\end{align*}
Since this bound holds uniformly over all neighboring datasets $\bZ$ and $\bZ'$, the $\ell_2$-sensitivity of the estimator in \eqref{ridge.beta} is at most $2\tau_0 (\lambda_0 n)^{-1}\sqrt{1+p/36}$. By Lemma \ref{glmechanism}, the perturbed estimator $\bbeta^{(0)} = \widehat\bbeta_{\bZ} + \tilde\sigma\cdot\cN(\bzero,\bI_p)$, with  $\tilde\sigma  = 8 \tau_0(3n\epsilon_{\rm init}\lambda_0)^{-1}\sqrt{1+p/36}\sqrt{2\log(1.25/\delta_{\rm init})}$, is $(3\epsilon_{\rm init}/4,\delta_{\rm init})$-DP. \qed

\subsubsection{Privacy guarantee  of the support recovery in Section \ref{selection.high}}

The $\ell_1$-sensitivity of each scalar query ${g}_j$ ($j\in \{2,\ldots,p\}$) is bounded by $\Delta=2 n^{-1} \sqrt{\log (pn)}$. Applying the Report Noisy Max procedure \citepS[Claim 3.9]{dwork2014algorithmicS} to $( {g}_2,\ldots, {g}_p)$ with sensitivity bound $\Delta$ and privacy budget $\epsilon_{\rm init}/(2s_0)$ allows us to select a single index $j_{(1)}\in \{2,\ldots,p\}$ in an $(\epsilon_{\rm init}/(2s_0),0)$-DP manner. 
     
Repeating this selection step $s_0$ times without replacement yields the support set $\hat{\cS}_0=(j_{(1)},\ldots,j_{(s_0 )})$. By the standard composition property (Lemma \ref{standardcompositiontheorem}), the cumulative privacy loss for selecting the entire support $\hat{\cS}_0$ is $(\epsilon_{\rm init}/2,0)$-DP.
\qed

\subsection{Additional simulation results for $(\epsilon,\delta)$-DP Huber estimators}

\subsubsection{Additional simulation results in low dimensions}

Tables \ref{tab.low.est_p5} and \ref{tab.low.est_p20} report the relative $\ell_2$-errors of the $(\epsilon,\delta)$-DP Huber estimator across different combinations of sample size and SNR, for dimensions $p\in\{5,20\}$.

\begin{sidewaystable}
\centering
\scriptsize
\setlength{\tabcolsep}{2.5pt} 
\caption{Average logarithmic relative $\ell_2$-error of $(\epsilon,\delta)$-DP Huber estimators across 300 repetitions for various combinations of $(a, b, n,\epsilon)$, with dimension $p = 5$.}
\renewcommand{\arraystretch}{1.3}
\begin{tabular}{ccc cccccccc cccccccc}
\toprule
&&&\multicolumn{8}{c}{Gaussian design} &\multicolumn{8}{c}{ Uniform design}  \\\cmidrule(lr){4-11} \cmidrule(lr){12-19}
&&&\multicolumn{4}{c}{$ \cN(0,1)$ noise} &\multicolumn{4}{c}{ $ t_{2.25}$ noise}& \multicolumn{4}{c}{$ \cN(0,1)$ noise} &\multicolumn{4}{c}{ $ t_{2.25}$ noise} \\\cmidrule(lr){4-7} \cmidrule(lr){8-11}\cmidrule(lr){12-15} \cmidrule(lr){16-19}
$a$&$b$ &$n$ & non-private&$\epsilon=0.3$&$\epsilon=0.5$&$\epsilon=0.9$&non-private&$\epsilon=0.3$&$\epsilon=0.5$&$\epsilon=0.9$&non-private&$\epsilon=0.3$&$\epsilon=0.5$&$\epsilon=0.9$&non-private&$\epsilon=0.3$&$\epsilon=0.5$&$\epsilon=0.9$\\
\midrule
0.5&0.5 &2500 &$-3.666$ & $-0.271$ & $-1.257$ & $-2.109$ & $-3.071$ & $-0.178$ & $-1.073$ & $-1.923$ & $-3.705$ & $-0.298$ & $-1.311$ & $-2.129$ & $-3.071$ & $-0.190$ & $-1.098$ & $-1.952$ \\
&&5000&$-4.018$ & $-1.536$ & $-2.190$ & $-2.679$ & $-3.357$ & $-1.360$ & $-1.996$ & $-2.478$ & $-4.011$ & $-1.561$ & $-2.179$ & $-2.721$ & $-3.346$ & $-1.429$ & $-2.009$ & $-2.477$ \\
&&10000&$-4.358$ & $-2.256$ & $-2.652$ & $-3.130$ & $-3.664$ & $-2.091$ & $-2.463$ & $-2.833$ & $-4.339$ & $-2.243$ & $-2.691$ & $-3.163$ & $-3.667$ & $-2.151$ & $-2.515$ & $-2.862$ \\
&1&2500&$-3.320$ & $-0.160$ & $-1.040$ & $-1.870$ & $-2.737$ & $-0.055$ & $-0.852$ & $-1.672$ & $-3.358$ & $-0.179$ & $-1.089$ & $-1.875$ & $-2.737$ & $-0.070$ & $-0.896$ & $-1.702$ \\
&&5000&$-3.671$ & $-1.329$ & $-1.956$ & $-2.471$ & $-3.020$ & $-1.130$ & $-1.734$ & $-2.233$ & $-3.664$ & $-1.330$ & $-1.968$ & $-2.532$ & $-3.011$ & $-1.182$ & $-1.781$ & $-2.229$ \\
&&10000&$-4.012$ & $-2.097$ & $-2.446$ & $-2.964$ & $-3.325$ & $-1.862$ & $-2.229$ & $-2.598$ & $-3.992$ & $-2.087$ & $-2.525$ & $-2.957$ & $-3.329$ & $-1.905$ & $-2.293$ & $-2.617$ \\
&2&2500&$-2.973$ & $-0.040$ & $-0.815$ & $-1.625$ & $-2.398$ & $0.118$ & $-0.571$ & $-1.381$ & $-3.012$ & $-0.037$ & $-0.859$ & $-1.597$ & $-2.398$ & $0.132$ & $-0.626$ & $-1.421$ \\
&&5000&$-3.324$ & $-1.071$ & $-1.688$ & $-2.260$ & $-2.680$ & $-0.866$ & $-1.458$ & $-1.967$ & $-3.318$ & $-1.096$ & $-1.710$ & $-2.315$ & $-2.671$ & $-0.899$ & $-1.504$ & $-1.966$ \\
&&10000&$-3.665$ & $-1.855$ & $-2.256$ & $-2.723$ & $-2.983$ & $-1.595$ & $-1.966$ & $-2.343$ & $-3.645$ & $-1.879$ & $-2.310$ & $-2.731$ & $-2.987$ & $-1.655$ & $-2.030$ & $-2.355$ \\
\midrule
1&0.5&2500&$-4.359$ & $-0.817$ & $-1.822$ & $-2.680$ & $-3.716$ & $-0.725$ & $-1.614$ & $-2.457$ & $-4.398$ & $-0.846$ & $-1.880$ & $-2.693$ & $-3.713$ & $-0.723$ & $-1.686$ & $-2.494$ \\
&&5000&$-4.711$ & $-2.126$ & $-2.685$ & $-3.121$ & $-4.009$ & $-1.927$ & $-2.503$ & $-2.915$ & $-4.704$ & $-2.158$ & $-2.696$ & $-3.140$ & $-3.995$ & $-1.973$ & $-2.521$ & $-2.917$ \\
&&10000&$-5.051$ & $-2.757$ & $-3.056$ & $-3.411$ & $-4.325$ & $-2.569$ & $-2.885$ & $-3.227$ & $-5.032$ & $-2.759$ & $-3.091$ & $-3.418$ & $-4.325$ & $-2.626$ & $-2.946$ & $-3.261$ \\
&1&2500&$-4.013$ & $-0.722$ & $-1.626$ & $-2.464$ & $-3.397$ & $-0.597$ & $-1.401$ & $-2.249$ & $-4.051$ & $-0.735$ & $-1.667$ & $-2.480$ & $-3.397$ & $-0.579$ & $-1.478$ & $-2.290$ \\
&&5000&$-4.364$ & $-1.919$ & $-2.529$ & $-3.001$ & $-3.687$ & $-1.711$ & $-2.312$ & $-2.754$ & $-4.357$ & $-1.937$ & $-2.546$ & $-3.028$ & $-3.674$ & $-1.762$ & $-2.336$ & $-2.754$ \\
&&10000&$-4.705$ & $-2.640$ & $-2.946$ & $-3.334$ & $-3.998$ & $-2.400$ & $-2.733$ & $-3.086$ & $-4.685$ & $-2.642$ & $-2.986$ & $-3.342$ & $-3.999$ & $-2.465$ & $-2.798$ & $-3.110$ \\
&2&2500&$-3.666$ & $-0.592$ & $-1.382$ & $-2.236$ & $-3.071$ & $-0.448$ & $-1.164$ & $-2.002$ & $-3.705$ & $-0.595$ & $-1.438$ & $-2.245$ & $-3.071$ & $-0.427$ & $-1.219$ & $-2.044$ \\
&&5000&$-4.018$ & $-1.665$ & $-2.324$ & $-2.820$ & $-3.357$ & $-1.478$ & $-2.075$ & $-2.552$ & $-4.011$ & $-1.710$ & $-2.327$ & $-2.850$ & $-3.346$ & $-1.512$ & $-2.107$ & $-2.551$ \\
&&10000&$-4.358$ & $-2.444$ & $-2.786$ & $-3.193$ & $-3.664$ & $-2.191$ & $-2.539$ & $-2.906$ & $-4.339$ & $-2.458$ & $-2.834$ & $-3.202$ & $-3.667$ & $-2.252$ & $-2.604$ & $-2.923$ \\
\midrule
2&0.5&2500&$-5.053$ & $-1.114$ & $-2.201$ & $-3.008$ & $-4.336$ & $-1.066$ & $-2.043$ & $-2.840$ & $-5.091$ & $-1.126$ & $-2.276$ & $-2.996$ & $-4.329$ & $-1.024$ & $-2.107$ & $-2.855$ \\
&&5000&$-5.404$ & $-2.499$ & $-2.951$ & $-3.259$ & $-4.637$ & $-2.354$ & $-2.831$ & $-3.174$ & $-5.397$ & $-2.518$ & $-2.946$ & $-3.251$ & $-4.620$ & $-2.419$ & $-2.823$ & $-3.172$ \\
&&10000&$-5.744$ & $-2.949$ & $-3.151$ & $-3.501$ & $-4.973$ & $-2.853$ & $-3.115$ & $-3.439$ & $-5.725$ & $-2.953$ & $-3.188$ & $-3.498$ & $-4.968$ & $-2.890$ & $-3.178$ & $-3.478$ \\
&1&2500&$-4.706$ & $-1.052$ & $-2.080$ & $-2.941$ & $-4.027$ & $-0.978$ & $-1.893$ & $-2.725$ & $-4.745$ & $-1.063$ & $-2.150$ & $-2.930$ & $-4.023$ & $-0.941$ & $-1.953$ & $-2.743$ \\
&&5000&$-5.057$ & $-2.386$ & $-2.905$ & $-3.237$ & $-4.324$ & $-2.200$ & $-2.738$ & $-3.110$ & $-5.051$ & $-2.414$ & $-2.901$ & $-3.231$ & $-4.309$ & $-2.273$ & $-2.738$ & $-3.107$ \\
&&10000&$-5.398$ & $-2.921$ & $-3.134$ & $-3.484$ & $-4.650$ & $-2.777$ & $-3.056$ & $-3.390$ & $-5.378$ & $-2.924$ & $-3.170$ & $-3.483$ & $-4.647$ & $-2.819$ & $-3.120$ & $-3.421$ \\
&2&2500&$-4.359$ & $-0.951$ & $-1.903$ & $-2.795$ & $-3.716$ & $-0.858$ & $-1.705$ & $-2.559$ & $-4.398$ & $-0.968$ & $-1.967$ & $-2.788$ & $-3.713$ & $-0.828$ & $-1.763$ & $-2.587$ \\
&&5000&$-4.711$ & $-2.214$ & $-2.801$ & $-3.181$ & $-4.009$ & $-2.021$ & $-2.596$ & $-3.011$ & $-4.704$ & $-2.247$ & $-2.798$ & $-3.182$ & $-3.995$ & $-2.074$ & $-2.608$ & $-3.008$ \\
&&10000&$-5.051$ & $-2.848$ & $-3.091$ & $-3.451$ & $-4.325$ & $-2.661$ & $-2.966$ & $-3.313$ & $-5.032$ & $-2.848$ & $-3.125$ & $-3.451$ & $-4.325$ & $-2.709$ & $-3.031$ & $-3.336$ \\

\bottomrule
\end{tabular}
\label{tab.low.est_p5}
\end{sidewaystable}

\begin{sidewaystable}
\centering
\scriptsize
\setlength{\tabcolsep}{2.5pt} 
\caption{Average logarithmic relative $\ell_2$-error of $(\epsilon,\delta)$-DP Huber estimators over 300 repetitions for different combinations of $(a, b, n,\epsilon)$, with dimension $p = 20$.}
\renewcommand{\arraystretch}{1.3}
\begin{tabular}{ccc cccccccc cccccccc}
\toprule
&&&\multicolumn{8}{c}{Gaussian design} &\multicolumn{8}{c}{ Uniform design}  \\\cmidrule(lr){4-11} \cmidrule(lr){12-19}
&&&\multicolumn{4}{c}{$ \cN(0,1)$ noise} &\multicolumn{4}{c}{ $ t_{2.25}$ noise}& \multicolumn{4}{c}{$ \cN(0,1)$ noise} &\multicolumn{4}{c}{ $ t_{2.25}$ noise} \\\cmidrule(lr){4-7} \cmidrule(lr){8-11}\cmidrule(lr){12-15} \cmidrule(lr){16-19}
$a$&$b$ &$n$ & non-private&$\epsilon=0.3$&$\epsilon=0.5$&$\epsilon=0.9$&non-private&$\epsilon=0.3$&$\epsilon=0.5$&$\epsilon=0.9$&non-private&$\epsilon=0.3$&$\epsilon=0.5$&$\epsilon=0.9$&non-private&$\epsilon=0.3$&$\epsilon=0.5$&$\epsilon=0.9$\\
\midrule
0.5&0.5 &2500 &$-3.593$ & $0.716$ & $0.058$ & $-0.925$ & $-2.978$ & $0.769$ & $0.138$ & $-0.737$ & $-3.576$ & $0.695$ & $0.019$ & $-0.918$ & $-2.967$ & $0.742$ & $0.147$ & $-0.755$ \\
&&5000&$-3.943$ & $-0.186$ & $-1.096$ & $-1.964$ & $-3.260$ & $-0.075$ & $-0.893$ & $-1.737$ & $-3.938$ & $-0.183$ & $-1.135$ & $-1.971$ & $-3.274$ & $-0.083$ & $-0.903$ & $-1.758$ \\
&&10000&$-4.288$ & $-1.326$ & $-2.019$ & $-2.399$ & $-3.562$ & $-1.175$ & $-1.800$ & $-2.216$ & $-4.285$ & $-1.398$ & $-2.006$ & $-2.408$ & $-3.565$ & $-1.179$ & $-1.807$ & $-2.241$ \\
&1&2500&$-3.246$ & $0.748$ & $0.104$ & $-0.811$ & $-2.667$ & $0.853$ & $0.244$ & $-0.586$ & $-3.229$ & $0.730$ & $0.068$ & $-0.808$ & $-2.656$ & $0.827$ & $0.247$ & $-0.606$ \\
&&5000&$-3.596$ & $-0.124$ & $-0.965$ & $-1.800$ & $-2.942$ & $0.044$ & $-0.732$ & $-1.546$ & $-3.592$ & $-0.130$ & $-1.000$ & $-1.806$ & $-2.957$ & $0.035$ & $-0.735$ & $-1.566$ \\
&&10000&$-3.942$ & $-1.176$ & $-1.868$ & $-2.301$ & $-3.240$ & $-0.985$ & $-1.620$ & $-2.068$ & $-3.938$ & $-1.237$ & $-1.861$ & $-2.311$ & $-3.243$ & $-0.994$ & $-1.625$ & $-2.093$ \\
&2&2500&$-2.900$ & $0.817$ & $0.190$ & $-0.661$ & $-2.346$ & $0.964$ & $0.373$ & $-0.407$ & $-2.883$ & $0.772$ & $0.147$ & $-0.661$ & $-2.335$ & $0.944$ & $0.380$ & $-0.427$ \\
&&5000&$-3.250$ & $-0.027$ & $-0.796$ & $-1.587$ & $-2.616$ & $0.192$ & $-0.538$ & $-1.322$ & $-3.245$ & $-0.037$ & $-0.827$ & $-1.592$ & $-2.631$ & $0.194$ & $-0.543$ & $-1.342$ \\
&&10000&$-3.595$ & $-0.989$ & $-1.662$ & $-2.137$ & $-2.911$ & $-0.772$ & $-1.404$ & $-1.880$ & $-3.592$ & $-1.044$ & $-1.652$ & $-2.150$ & $-2.915$ & $-0.778$ & $-1.406$ & $-1.905$ \\
\midrule
1&0.5&2500&$-4.286$ & $0.590$ & $-0.038$ & $-1.117$ & $-3.586$ & $0.589$ & $-0.048$ & $-1.003$ & $-4.269$ & $0.542$ & $-0.087$ & $-1.110$ & $-3.574$ & $0.565$ & $-0.034$ & $-1.027$ \\
&&5000&$-4.636$ & $-0.286$ & $-1.313$ & $-2.167$ & $-3.881$ & $-0.256$ & $-1.183$ & $-2.047$ & $-4.631$ & $-0.281$ & $-1.336$ & $-2.175$ & $-3.896$ & $-0.254$ & $-1.189$ & $-2.065$ \\
&&10000&$-4.982$ & $-1.562$ & $-2.183$ & $-2.502$ & $-4.192$ & $-1.481$ & $-2.069$ & $-2.439$ & $-4.978$ & $-1.624$ & $-2.176$ & $-2.508$ & $-4.195$ & $-1.485$ & $-2.078$ & $-2.461$ \\
&1&2500&$-3.939$ & $0.600$ & $-0.023$ & $-1.065$ & $-3.283$ & $0.618$ & $-0.016$ & $-0.926$ & $-3.923$ & $0.553$ & $-0.073$ & $-1.060$ & $-3.272$ & $0.591$ & $-0.001$ & $-0.951$ \\
&&5000&$-4.289$ & $-0.267$ & $-1.248$ & $-2.113$ & $-3.572$ & $-0.214$ & $-1.091$ & $-1.949$ & $-4.285$ & $-0.263$ & $-1.271$ & $-2.121$ & $-3.586$ & $-0.213$ & $-1.099$ & $-1.968$ \\
&&10000&$-4.635$ & $-1.489$ & $-2.140$ & $-2.485$ & $-3.878$ & $-1.373$ & $-1.987$ & $-2.385$ & $-4.632$ & $-1.548$ & $-2.134$ & $-2.492$ & $-3.881$ & $-1.377$ & $-1.994$ & $-2.407$ \\
&2&2500&$-3.593$ & $0.620$ & $0.002$ & $-0.985$ & $-2.978$ & $0.659$ & $0.034$ & $-0.826$ & $-3.576$ & $0.574$ & $-0.046$ & $-0.983$ & $-2.967$ & $0.632$ & $0.048$ & $-0.851$ \\
&&5000&$-3.943$ & $-0.233$ & $-1.150$ & $-2.005$ & $-3.260$ & $-0.152$ & $-0.973$ & $-1.810$ & $-3.938$ & $-0.229$ & $-1.171$ & $-2.013$ & $-3.274$ & $-0.152$ & $-0.981$ & $-1.830$ \\
&&10000&$-4.288$ & $-1.374$ & $-2.048$ & $-2.440$ & $-3.562$ & $-1.234$ & $-1.866$ & $-2.298$ & $-4.285$ & $-1.431$ & $-2.044$ & $-2.448$ & $-3.565$ & $-1.238$ & $-1.871$ & $-2.321$ \\
\midrule
2&0.5&2500&$-4.979$ & $0.290$ & $-0.342$ & $-1.382$ & $-4.199$ & $0.275$ & $-0.372$ & $-1.297$ & $-4.962$ & $0.259$ & $-0.380$ & $-1.402$ & $-4.185$ & $0.255$ & $-0.357$ & $-1.328$ \\
&&5000&$-5.329$ & $-0.568$ & $-1.607$ & $-2.462$ & $-4.505$ & $-0.569$ & $-1.508$ & $-2.377$ & $-5.324$ & $-0.568$ & $-1.621$ & $-2.466$ & $-4.529$ & $-0.573$ & $-1.533$ & $-2.399$ \\
&&10000&$-5.675$ & $-1.861$ & $-2.465$ & $-2.776$ & $-4.822$ & $-1.828$ & $-2.396$ & $-2.746$ & $-5.671$ & $-1.922$ & $-2.450$ & $-2.789$ & $-4.830$ & $-1.828$ & $-2.408$ & $-2.767$ \\
&1&2500&$-4.633$ & $0.292$ & $-0.338$ & $-1.357$ & $-3.890$ & $0.280$ & $-0.361$ & $-1.252$ & $-4.616$ & $0.261$ & $-0.376$ & $-1.378$ & $-3.878$ & $0.261$ & $-0.346$ & $-1.283$ \\
&&5000&$-4.983$ & $-0.561$ & $-1.574$ & $-2.441$ & $-4.192$ & $-0.553$ & $-1.450$ & $-2.316$ & $-4.978$ & $-0.562$ & $-1.588$ & $-2.445$ & $-4.211$ & $-0.557$ & $-1.475$ & $-2.339$ \\
&&10000&$-5.328$ & $-1.825$ & $-2.452$ & $-2.772$ & $-4.506$ & $-1.761$ & $-2.355$ & $-2.723$ & $-5.325$ & $-1.886$ & $-2.438$ & $-2.785$ & $-4.511$ & $-1.761$ & $-2.366$ & $-2.745$ \\
&2&2500&$-4.286$ & $0.295$ & $-0.331$ & $-1.314$ & $-3.586$ & $0.289$ & $-0.343$ & $-1.188$ & $-4.269$ & $0.264$ & $-0.369$ & $-1.335$ & $-3.574$ & $0.270$ & $-0.329$ & $-1.218$ \\
&&5000&$-4.636$ & $-0.550$ & $-1.518$ & $-2.389$ & $-3.881$ & $-0.528$ & $-1.370$ & $-2.217$ & $-4.631$ & $-0.551$ & $-1.532$ & $-2.393$ & $-3.896$ & $-0.532$ & $-1.395$ & $-2.241$ \\
&&10000&$-4.982$ & $-1.762$ & $-2.418$ & $-2.763$ & $-4.192$ & $-1.665$ & $-2.283$ & $-2.681$ & $-4.978$ & $-1.820$ & $-2.405$ & $-2.776$ & $-4.195$ & $-1.665$ & $-2.295$ & $-2.703$ \\

\bottomrule
\end{tabular}
\label{tab.low.est_p20}
\end{sidewaystable}

\subsubsection{Additional simulation results in high dimensions}

Figure~\ref{fig.high.eps_p5000} illustrate how the logarithmic relative $\ell_2$-error varies with the sample size for both the $(\epsilon, \delta)$-DP sparse Huber estimator and its non-private counterpart, with $p=5000$ and $\epsilon \in \{0.5, 0.9\}$. Table~\ref{tab.high.est_p5000} summarizes a comparison between the sparse DP Huber estimator and the sparse DP LS estimator for $p=5000$ and $\epsilon=0.5$.

 \begin{figure}[h]
    \centering
\includegraphics[width=0.8\textwidth]{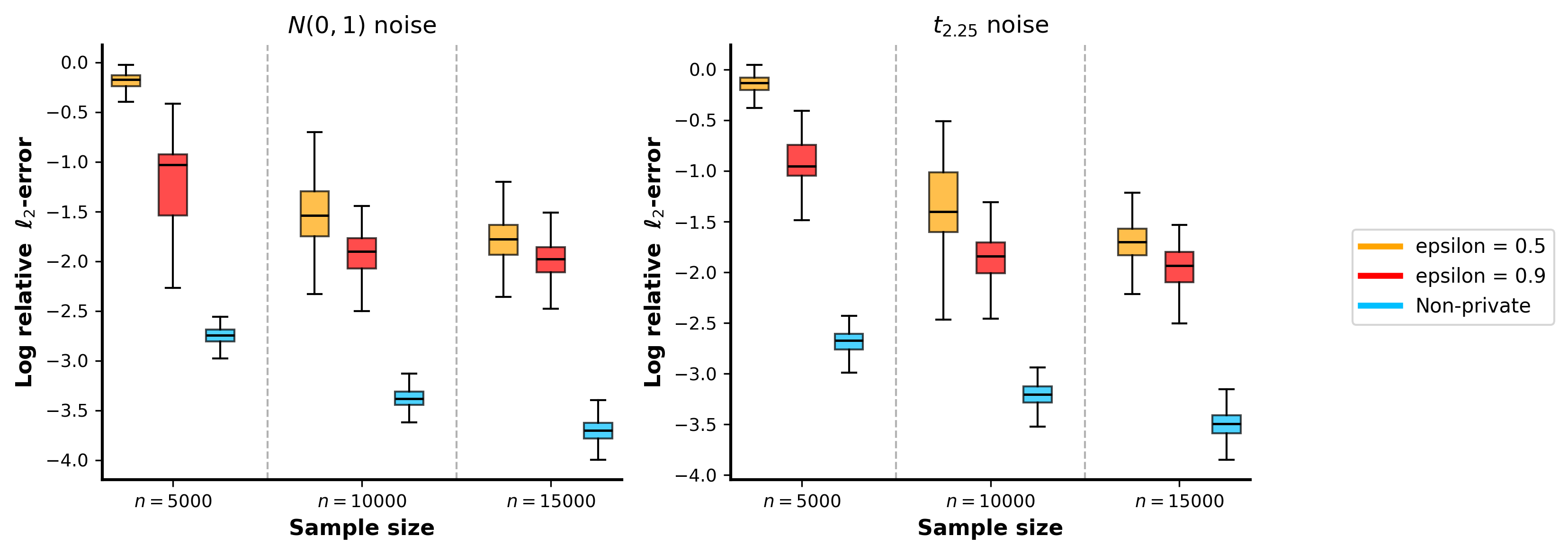}
    \caption{Boxplots of the logarithmic relative $\ell_2$-error over 300 repetitions for private and non-private sparse Huber estimators across sample sizes, with $p = 5000$ and $s^* = 10$.}
\label{fig.high.eps_p5000}
\end{figure}

\begin{table}[ht]
\centering
\footnotesize
\caption{Average logarithmic relative $\ell_2$-error of the slope coefficients over 300 repetitions across sample sizes, under privacy levels  $(0.5, 10n^{-1.1})$ with $p=5000$.}
\renewcommand{\arraystretch}{1.1}
\resizebox{0.6\textwidth}{!}{
\begin{tabular}{c c c c ccc }
\toprule
\multirow{2}{*}{Noise} & \multirow{2}{*}{$n$}  & \multirow{2}{*}{ non-private sparse  Huber} & \multirow{2}{*}{sparse DP Huber} & \multicolumn{3}{c}{ sparse DP LS} \\
\cline{5-7}
& & & & $C_R=0.1$ & $C_R=0.5$ & $C_R=1$ \\
\midrule
 $\cN(0,1)$       & 5000 & $-2.730$ &$-0.142$  & $0.080$ & $0.120$ & $0.170$ \\

&10000 & $-3.353$  &$-1.495$  & $-0.127$ & $0.032$ & $0.092$ \\

&15000 & $-3.678$   &$-1.784$  & $-0.388$ & $-0.071$ & $0.043$ \\

\midrule
$t_{2.25}$&5000 & $-2.667$  &$-0.100$  & $0.070$ & $0.133$ & $0.167$ \\

&10000 & $-3.183$  &$-1.338$  & $-0.124$ & $0.035$ & $0.099$ \\

&15000 & $-3.474$  &$-1.718$  & $-0.366$ & $-0.084$ & $0.046$ \\
\bottomrule
\end{tabular}}
\label{tab.high.est_p5000}
\end{table}

\subsection{Additional simulation results for $\epsilon$-GDP Huber estimators}
 
\subsubsection{Selection of tuning parameters}

To ensure that the overall mechanism is 
$\epsilon$-GDP, we divide the privacy budget between the initialization step and the main algorithm (Algorithm~\ref{alg:DPHuberlow}). Under the GDP framework, the composition of $T$ mechanisms with privacy parameters $\{\epsilon_t\}_{t=1}^T$ results in an overall $\epsilon$-GDP guarantee with $\epsilon= \sqrt{\epsilon_1^2+\cdots+\epsilon_T^2}$. Accordingly, we set $\epsilon_{\rm init}= \epsilon/ \sqrt{8}$ and $\epsilon_{\rm main}=\sqrt{7} \epsilon/\sqrt{8}$. Since $\sqrt{\epsilon_{\rm init}^2+\epsilon_{\rm main}^2} =\epsilon$, the combined procedure satisfies the target $\epsilon$-GDP guarantee.

\noindent
\underline{{\textbf{Initialization of}} $\bbeta^{(0)}$.} As in the $(\epsilon,\delta)$-DP setting, we obtain $\widehat\bbeta^{(0)}$ via \eqref{ridge.beta}. In the initialization step, the privacy budget $\epsilon_{\rm init}$ is split evenly: $\epsilon_{\rm init}/\sqrt{2}$ is used for selecting $\tau_0$, and $ \epsilon_{\rm init}/\sqrt{2}$ for the output-perturbation step. By the GDP composition rule, the initialization step as a whole is $\epsilon_{\rm init}$-GDP.

\begin{itemize}
    \item Selection of $\tau_0$. For each $i\in[n]$, define $\tilde{y}_i = \min\{ \log n, \max\{ -\log n ,y_i \} \}$. Let $m_1 = n^{-1}\sum_{i=1}^n\tilde{y}_i+4(n\epsilon_{\rm init})^{-1}\log n \cdot \cN(0,1)$ and $m_2 = n^{-1}\sum_{i=1}^n\tilde{y}_i^2+2(n\epsilon_{\rm init})^{-1}\log^2 n \cdot \cN(0,1)$. Set $\tau_0 = \sqrt{ m_2 - m_1^2}$ if $m_2 - m_1^2 > 0$ and $\tau_0 = 2$ otherwise. Since $m_1$ and $m_2$ are each $(\epsilon_{\rm init}/2)$-GDP, the GDP composition rule implies that the procedure for selecting $\tau_0$ is $(\epsilon_{\rm init}/\sqrt{2})$-GDP.

    \item Output perturbation. Let $\tilde{\sigma}=  2\sqrt{2}B\tau_0( n\epsilon_{\rm init}\lambda_0)^{-1}$ with $B= \sqrt{1+p/36}$. We then set $\bbeta^{(0)} = \widehat\bbeta^{(0)} + \tilde{\sigma}\cdot\cN(\bzero,\bI_p)$, which is $(\epsilon_{\rm init}/\sqrt{2})$-GDP.
\end{itemize}

\noindent
\underline{{\textbf{Selection of other parameters}}.} 
The parameters $(\eta_0,\gamma,T,\tau)$ in the $\epsilon$-GDP setting are selected in the same way as in the $(\epsilon,\delta)$-DP case; see Section \ref{selection.low} for further details.

\subsubsection{Simulation results}

Tables \ref{tab.low.est_GDP_p5}--\ref{tab.low.est_GDP_p20}  summarize the relative $\ell_2$-errors of the $ \epsilon$-GDP Huber estimator across various combinations of sample size and SNR for dimensions $p \in \{ 5, 10 , 20 \}$.

{\small{
 \setlength{\bibsep}{6pt}
\bibliographystyleS{my_temp} 
\bibliographyS{refs_final}
}}

\begin{sidewaystable}
\centering
\scriptsize
\setlength{\tabcolsep}{2.5pt} 
\caption{Average logarithmic relative $\ell_2$-error of the $\epsilon$-GDP Huber estimators over 300 repetitions across combinations of $(a, b, n,\epsilon)$, with dimension $p = 5$.}
\renewcommand{\arraystretch}{1.3}
\begin{tabular}{ccc cccccccc cccccccc}
\toprule
&&&\multicolumn{8}{c}{Gaussian design} &\multicolumn{8}{c}{ Uniform design}  \\\cmidrule(lr){4-11} \cmidrule(lr){12-19}
&&&\multicolumn{4}{c}{$ \cN(0,1)$ noise} &\multicolumn{4}{c}{ $ t_{2.25}$ noise}& \multicolumn{4}{c}{$ \cN(0,1)$ noise} &\multicolumn{4}{c}{ $ t_{2.25}$ noise} \\\cmidrule(lr){4-7} \cmidrule(lr){8-11}\cmidrule(lr){12-15} \cmidrule(lr){16-19}
$a$&$b$ &$n$ & non-private&$\epsilon=0.3$&$\epsilon=0.5$&$\epsilon=0.9$&non-private&$\epsilon=0.3$&$\epsilon=0.5$&$\epsilon=0.9$&non-private&$\epsilon=0.3$&$\epsilon=0.5$&$\epsilon=0.9$&non-private&$\epsilon=0.3$&$\epsilon=0.5$&$\epsilon=0.9$\\
\midrule
0.5&0.5 &2500 &$-3.666$ & $-2.708$ & $-3.152$ & $-3.448$ & $-3.071$ & $-2.613$ & $-2.997$ & $-3.207$ & $-3.705$ & $-2.674$ & $-3.127$ & $-3.408$ & $-3.071$ & $-2.563$ & $-2.960$ & $-3.180$ \\
&&5000&$-4.018$ & $-3.535$ & $-3.791$ & $-3.914$ & $-3.357$ & $-3.370$ & $-3.576$ & $-3.617$ & $-4.011$ & $-3.472$ & $-3.719$ & $-3.868$ & $-3.346$ & $-3.311$ & $-3.505$ & $-3.587$ \\
&&10000&$-4.358$ & $-4.142$ & $-4.236$ & $-4.294$ & $-3.664$ & $-3.836$ & $-3.894$ & $-3.893$ & $-4.339$ & $-4.081$ & $-4.195$ & $-4.278$ & $-3.667$ & $-3.803$ & $-3.905$ & $-3.922$ \\
&1&2500&$-3.320$ & $-2.471$ & $-2.869$ & $-3.127$ & $-2.737$ & $-2.326$ & $-2.716$ & $-2.927$ & $-3.358$ & $-2.456$ & $-2.872$ & $-3.116$ & $-2.737$ & $-2.293$ & $-2.700$ & $-2.904$ \\
&&5000&$-3.671$ & $-3.243$ & $-3.463$ & $-3.569$ & $-3.020$ & $-3.076$ & $-3.277$ & $-3.315$ & $-3.664$ & $-3.198$ & $-3.415$ & $-3.540$ & $-3.011$ & $-3.047$ & $-3.229$ & $-3.295$ \\
&&10000&$-4.012$ & $-3.818$ & $-3.891$ & $-3.948$ & $-3.325$ & $-3.555$ & $-3.606$ & $-3.603$ & $-3.992$ & $-3.774$ & $-3.866$ & $-3.939$ & $-3.329$ & $-3.538$ & $-3.625$ & $-3.628$ \\
&2&2500&$-2.973$ & $-2.201$ & $-2.570$ & $-2.798$ & $-2.398$ & $-2.035$ & $-2.425$ & $-2.636$ & $-3.012$ & $-2.208$ & $-2.595$ & $-2.813$ & $-2.398$ & $-2.013$ & $-2.427$ & $-2.612$ \\
&&5000&$-3.324$ & $-2.938$ & $-3.134$ & $-3.222$ & $-2.680$ & $-2.772$ & $-2.965$ & $-3.003$ & $-3.318$ & $-2.915$ & $-3.105$ & $-3.208$ & $-2.671$ & $-2.772$ & $-2.943$ & $-2.992$ \\
&&10000&$-3.665$ & $-3.485$ & $-3.545$ & $-3.595$ & $-2.983$ & $-3.266$ & $-3.308$ & $-3.301$ & $-3.645$ & $-3.463$ & $-3.533$ & $-3.592$ & $-2.987$ & $-3.263$ & $-3.334$ & $-3.323$ \\
\midrule
1&0.5&2500&$-4.359$ & $-3.385$ & $-3.837$ & $-4.104$ & $-3.716$ & $-3.234$ & $-3.607$ & $-3.802$ & $-4.398$ & $-3.350$ & $-3.736$ & $-4.025$ & $-3.713$ & $-3.149$ & $-3.539$ & $-3.751$ \\
&&5000&$-4.711$ & $-4.265$ & $-4.457$ & $-4.549$ & $-4.009$ & $-4.005$ & $-4.188$ & $-4.216$ & $-4.704$ & $-4.135$ & $-4.356$ & $-4.490$ & $-3.995$ & $-3.906$ & $-4.089$ & $-4.164$ \\
&&10000&$-5.051$ & $-4.792$ & $-4.861$ & $-4.905$ & $-4.325$ & $-4.427$ & $-4.471$ & $-4.470$ & $-5.032$ & $-4.727$ & $-4.804$ & $-4.867$ & $-4.325$ & $-4.372$ & $-4.467$ & $-4.485$ \\
&1&2500&$-4.013$ & $-3.059$ & $-3.503$ & $-3.784$ & $-3.397$ & $-2.943$ & $-3.333$ & $-3.542$ & $-4.051$ & $-3.030$ & $-3.412$ & $-3.714$ & $-3.397$ & $-2.875$ & $-3.280$ & $-3.497$ \\
&&5000&$-4.364$ & $-3.911$ & $-4.131$ & $-4.246$ & $-3.687$ & $-3.711$ & $-3.911$ & $-3.946$ & $-4.357$ & $-3.785$ & $-4.029$ & $-4.193$ & $-3.674$ & $-3.632$ & $-3.824$ & $-3.906$ \\
&&10000&$-4.705$ & $-4.474$ & $-4.567$ & $-4.621$ & $-3.998$ & $-4.166$ & $-4.216$ & $-4.211$ & $-4.685$ & $-4.405$ & $-4.516$ & $-4.599$ & $-3.999$ & $-4.115$ & $-4.217$ & $-4.233$ \\
&2&2500&$-3.666$ & $-2.774$ & $-3.195$ & $-3.460$ & $-3.071$ & $-2.661$ & $-3.059$ & $-3.275$ & $-3.705$ & $-2.761$ & $-3.136$ & $-3.410$ & $-3.071$ & $-2.617$ & $-3.029$ & $-3.234$ \\
&&5000&$-4.018$ & $-3.586$ & $-3.798$ & $-3.915$ & $-3.357$ & $-3.418$ & $-3.618$ & $-3.657$ & $-4.011$ & $-3.482$ & $-3.713$ & $-3.868$ & $-3.346$ & $-3.372$ & $-3.557$ & $-3.629$ \\
&&10000&$-4.358$ & $-4.145$ & $-4.238$ & $-4.294$ & $-3.664$ & $-3.896$ & $-3.945$ & $-3.939$ & $-4.339$ & $-4.082$ & $-4.195$ & $-4.278$ & $-3.667$ & $-3.857$ & $-3.952$ & $-3.959$ \\
\midrule
2&0.5&2500&$-5.053$ & $-4.148$ & $-4.532$ & $-4.710$ & $-4.336$ & $-3.907$ & $-4.246$ & $-4.406$ & $-5.091$ & $-4.119$ & $-4.428$ & $-4.584$ & $-4.329$ & $-3.833$ & $-4.178$ & $-4.325$ \\
&&5000&$-5.404$ & $-4.927$ & $-5.016$ & $-5.062$ & $-4.637$ & $-4.626$ & $-4.761$ & $-4.772$ & $-5.397$ & $-4.778$ & $-4.882$ & $-4.938$ & $-4.620$ & $-4.521$ & $-4.642$ & $-4.677$ \\
&&10000&$-5.744$ & $-5.263$ & $-5.291$ & $-5.348$ & $-4.973$ & $-4.963$ & $-4.991$ & $-4.998$ & $-5.725$ & $-5.172$ & $-5.185$ & $-5.228$ & $-4.968$ & $-4.888$ & $-4.952$ & $-4.962$ \\
&1&2500&$-4.706$ & $-3.812$ & $-4.256$ & $-4.481$ & $-4.027$ & $-3.621$ & $-4.005$ & $-4.183$ & $-4.745$ & $-3.782$ & $-4.162$ & $-4.391$ & $-4.023$ & $-3.560$ & $-3.952$ & $-4.123$ \\
&&5000&$-5.057$ & $-4.671$ & $-4.806$ & $-4.864$ & $-4.324$ & $-4.384$ & $-4.540$ & $-4.549$ & $-5.051$ & $-4.530$ & $-4.693$ & $-4.782$ & $-4.309$ & $-4.299$ & $-4.443$ & $-4.487$ \\
&&10000&$-5.398$ & $-5.104$ & $-5.139$ & $-5.183$ & $-4.650$ & $-4.767$ & $-4.790$ & $-4.784$ & $-5.378$ & $-5.024$ & $-5.057$ & $-5.104$ & $-4.647$ & $-4.702$ & $-4.773$ & $-4.781$ \\
&2&2500&$-4.359$ & $-3.449$ & $-3.910$ & $-4.175$ & $-3.716$ & $-3.310$ & $-3.725$ & $-3.929$ & $-4.398$ & $-3.425$ & $-3.831$ & $-4.109$ & $-3.713$ & $-3.281$ & $-3.700$ & $-3.887$ \\
&&5000&$-4.711$ & $-4.331$ & $-4.511$ & $-4.597$ & $-4.009$ & $-4.098$ & $-4.273$ & $-4.295$ & $-4.704$ & $-4.192$ & $-4.408$ & $-4.539$ & $-3.995$ & $-4.050$ & $-4.209$ & $-4.262$ \\
&&10000&$-5.051$ & $-4.840$ & $-4.901$ & $-4.941$ & $-4.325$ & $-4.540$ & $-4.564$ & $-4.545$ & $-5.032$ & $-4.763$ & $-4.838$ & $-4.902$ & $-4.325$ & $-4.481$ & $-4.562$ & $-4.560$ \\

\bottomrule
\end{tabular}
\label{tab.low.est_GDP_p5}
\end{sidewaystable}

\begin{sidewaystable}
\centering
\scriptsize
\setlength{\tabcolsep}{2.5pt} 
\caption{Average logarithmic relative $\ell_2$-error of the $\epsilon$-GDP Huber estimators over 300 repetitions across combinations of $(a, b, n,\epsilon)$, with dimension $p = 10$.}
\renewcommand{\arraystretch}{1.3}
\begin{tabular}{ccc cccccccc cccccccc}
\toprule
&&&\multicolumn{8}{c}{Gaussian design} &\multicolumn{8}{c}{ Uniform design}  \\\cmidrule(lr){4-11} \cmidrule(lr){12-19}
&&&\multicolumn{4}{c}{$\cN(0,1)$ noise} &\multicolumn{4}{c}{ $ t_{2.25}$ noise}& \multicolumn{4}{c}{$ \cN(0,1)$ noise} &\multicolumn{4}{c}{$t_{2.25}$ noise} \\\cmidrule(lr){4-7} \cmidrule(lr){8-11}\cmidrule(lr){12-15} \cmidrule(lr){16-19}
$a$&$b$ &$n$ & non-private&$\epsilon=0.3$&$\epsilon=0.5$&$\epsilon=0.9$&non-private&$\epsilon=0.3$&$\epsilon=0.5$&$\epsilon=0.9$&non-private&$\epsilon=0.3$&$\epsilon=0.5$&$\epsilon=0.9$&non-private&$\epsilon=0.3$&$\epsilon=0.5$&$\epsilon=0.9$\\
\midrule
0.5&0.5 &2500 &$-3.624$ & $-2.431$ & $-2.982$ & $-3.308$ & $-3.018$ & $-2.249$ & $-2.760$ & $-3.068$ & $-3.603$ & $-2.448$ & $-2.930$ & $-3.303$ & $-3.026$ & $-2.251$ & $-2.710$ & $-3.058$ \\
&&5000&$-3.955$ & $-3.338$ & $-3.661$ & $-3.798$ & $-3.290$ & $-3.132$ & $-3.388$ & $-3.486$ & $-3.966$ & $-3.320$ & $-3.636$ & $-3.807$ & $-3.309$ & $-3.114$ & $-3.366$ & $-3.485$ \\
&&10000&$-4.310$ & $-3.993$ & $-4.136$ & $-4.205$ & $-3.578$ & $-3.678$ & $-3.775$ & $-3.802$ & $-4.292$ & $-3.936$ & $-4.104$ & $-4.169$ & $-3.570$ & $-3.677$ & $-3.781$ & $-3.805$ \\
&1&2500&$-3.277$ & $-2.177$ & $-2.691$ & $-2.989$ & $-2.692$ & $-1.980$ & $-2.474$ & $-2.776$ & $-3.256$ & $-2.191$ & $-2.643$ & $-2.992$ & $-2.703$ & $-1.982$ & $-2.430$ & $-2.777$ \\
&&5000&$-3.609$ & $-3.020$ & $-3.331$ & $-3.465$ & $-2.964$ & $-2.834$ & $-3.096$ & $-3.192$ & $-3.619$ & $-3.010$ & $-3.314$ & $-3.473$ & $-2.982$ & $-2.827$ & $-3.083$ & $-3.204$ \\
&&10000&$-3.963$ & $-3.677$ & $-3.816$ & $-3.884$ & $-3.249$ & $-3.394$ & $-3.492$ & $-3.518$ & $-3.945$ & $-3.617$ & $-3.782$ & $-3.852$ & $-3.238$ & $-3.398$ & $-3.504$ & $-3.526$ \\
&2&2500&$-2.931$ & $-1.926$ & $-2.405$ & $-2.672$ & $-2.360$ & $-1.708$ & $-2.190$ & $-2.479$ & $-2.909$ & $-1.938$ & $-2.365$ & $-2.684$ & $-2.372$ & $-1.714$ & $-2.152$ & $-2.492$ \\
&&5000&$-3.262$ & $-2.714$ & $-3.004$ & $-3.126$ & $-2.631$ & $-2.539$ & $-2.801$ & $-2.889$ & $-3.273$ & $-2.718$ & $-3.001$ & $-3.137$ & $-2.649$ & $-2.546$ & $-2.801$ & $-2.915$ \\
&&10000&$-3.617$ & $-3.359$ & $-3.486$ & $-3.545$ & $-2.913$ & $-3.112$ & $-3.201$ & $-3.224$ & $-3.598$ & $-3.311$ & $-3.456$ & $-3.514$ & $-2.901$ & $-3.121$ & $-3.217$ & $-3.233$ \\
\midrule
1&0.5&2500&$-4.317$ & $-3.043$ & $-3.635$ & $-3.941$ & $-3.650$ & $-2.821$ & $-3.367$ & $-3.663$ & $-4.296$ & $-3.056$ & $-3.576$ & $-3.924$ & $-3.649$ & $-2.825$ & $-3.315$ & $-3.642$ \\
&&5000&$-4.648$ & $-4.019$ & $-4.290$ & $-4.379$ & $-3.924$ & $-3.758$ & $-3.985$ & $-4.059$ & $-4.659$ & $-4.002$ & $-4.258$ & $-4.378$ & $-3.940$ & $-3.740$ & $-3.952$ & $-4.044$ \\
&&10000&$-5.003$ & $-4.568$ & $-4.667$ & $-4.717$ & $-4.223$ & $-4.257$ & $-4.331$ & $-4.346$ & $-4.985$ & $-4.520$ & $-4.621$ & $-4.646$ & $-4.218$ & $-4.246$ & $-4.323$ & $-4.333$ \\
&1&2500&$-3.970$ & $-2.739$ & $-3.318$ & $-3.645$ & $-3.335$ & $-2.558$ & $-3.099$ & $-3.407$ & $-3.949$ & $-2.750$ & $-3.260$ & $-3.637$ & $-3.341$ & $-2.564$ & $-3.051$ & $-3.397$ \\
&&5000&$-4.302$ & $-3.691$ & $-4.000$ & $-4.120$ & $-3.610$ & $-3.482$ & $-3.731$ & $-3.813$ & $-4.312$ & $-3.671$ & $-3.971$ & $-4.128$ & $-3.627$ & $-3.471$ & $-3.709$ & $-3.812$ \\
&&10000&$-4.656$ & $-4.309$ & $-4.437$ & $-4.494$ & $-3.902$ & $-4.023$ & $-4.103$ & $-4.116$ & $-4.638$ & $-4.261$ & $-4.402$ & $-4.445$ & $-3.897$ & $-4.016$ & $-4.102$ & $-4.110$ \\
&2&2500&$-3.624$ & $-2.459$ & $-3.005$ & $-3.322$ & $-3.018$ & $-2.298$ & $-2.821$ & $-3.125$ & $-3.603$ & $-2.469$ & $-2.949$ & $-3.319$ & $-3.026$ & $-2.307$ & $-2.780$ & $-3.131$ \\
&&5000&$-3.955$ & $-3.353$ & $-3.670$ & $-3.806$ & $-3.290$ & $-3.191$ & $-3.454$ & $-3.539$ & $-3.966$ & $-3.334$ & $-3.645$ & $-3.814$ & $-3.309$ & $-3.193$ & $-3.445$ & $-3.556$ \\
&&10000&$-4.310$ & $-4.000$ & $-4.142$ & $-4.209$ & $-3.578$ & $-3.765$ & $-3.847$ & $-3.859$ & $-4.292$ & $-3.945$ & $-4.109$ & $-4.173$ & $-3.570$ & $-3.761$ & $-3.851$ & $-3.859$ \\
\midrule
2&0.5&2500&$-5.010$ & $-3.666$ & $-4.184$ & $-4.422$ & $-4.278$ & $-3.338$ & $-3.886$ & $-4.119$ & $-4.989$ & $-3.704$ & $-4.194$ & $-4.473$ & $-4.267$ & $-3.411$ & $-3.914$ & $-4.194$ \\
&&5000&$-5.341$ & $-4.609$ & $-4.844$ & $-4.971$ & $-4.556$ & $-4.337$ & $-4.549$ & $-4.615$ & $-5.352$ & $-4.648$ & $-4.863$ & $-5.004$ & $-4.561$ & $-4.390$ & $-4.581$ & $-4.664$ \\
&&10000&$-5.696$ & $-5.111$ & $-5.262$ & $-5.365$ & $-4.861$ & $-4.867$ & $-4.938$ & $-4.954$ & $-5.678$ & $-5.092$ & $-5.217$ & $-5.288$ & $-4.855$ & $-4.873$ & $-4.939$ & $-4.944$ \\
&1&2500&$-4.664$ & $-3.395$ & $-3.954$ & $-4.235$ & $-3.964$ & $-3.071$ & $-3.628$ & $-3.879$ & $-4.642$ & $-3.431$ & $-3.966$ & $-4.279$ & $-3.957$ & $-3.142$ & $-3.658$ & $-3.954$ \\
&&5000&$-4.995$ & $-4.414$ & $-4.659$ & $-4.763$ & $-4.239$ & $-4.091$ & $-4.316$ & $-4.380$ & $-5.006$ & $-4.451$ & $-4.682$ & $-4.798$ & $-4.249$ & $-4.158$ & $-4.361$ & $-4.440$ \\
&&10000&$-5.349$ & $-4.961$ & $-5.082$ & $-5.150$ & $-4.543$ & $-4.669$ & $-4.726$ & $-4.722$ & $-5.331$ & $-4.947$ & $-5.055$ & $-5.102$ & $-4.536$ & $-4.681$ & $-4.733$ & $-4.719$ \\
&2&2500&$-4.317$ & $-3.070$ & $-3.619$ & $-3.937$ & $-3.650$ & $-2.772$ & $-3.304$ & $-3.573$ & $-4.296$ & $-3.099$ & $-3.633$ & $-3.985$ & $-3.649$ & $-2.839$ & $-3.339$ & $-3.649$ \\
&&5000&$-4.648$ & $-4.094$ & $-4.376$ & $-4.490$ & $-3.924$ & $-3.767$ & $-4.019$ & $-4.103$ & $-4.659$ & $-4.132$ & $-4.404$ & $-4.526$ & $-3.940$ & $-3.849$ & $-4.080$ & $-4.175$ \\
&&10000&$-5.003$ & $-4.722$ & $-4.835$ & $-4.882$ & $-4.223$ & $-4.410$ & $-4.472$ & $-4.469$ & $-4.985$ & $-4.714$ & $-4.824$ & $-4.858$ & $-4.218$ & $-4.434$ & $-4.485$ & $-4.469$ \\

\bottomrule
\end{tabular}
\label{tab.low.est_GDP_p10}
\end{sidewaystable}

\begin{sidewaystable}
\centering
\scriptsize
\setlength{\tabcolsep}{2.5pt} 
\caption{Average logarithmic relative $\ell_2$-error of the $\epsilon$-GDP Huber estimators over 300 repetitions across combinations of $(a, b, n,\epsilon)$, with dimension $p = 20$.}
\renewcommand{\arraystretch}{1.3}
\begin{tabular}{ccc cccccccc cccccccc}
\toprule
&&&\multicolumn{8}{c}{Gaussian design} &\multicolumn{8}{c}{ Uniform design}  \\\cmidrule(lr){4-11} \cmidrule(lr){12-19}
&&&\multicolumn{4}{c}{$ \cN(0,1)$ noise} &\multicolumn{4}{c}{ $ t_{2.25}$ noise}& \multicolumn{4}{c}{$ \cN(0,1)$ noise} &\multicolumn{4}{c}{ $ t_{2.25}$ noise} \\\cmidrule(lr){4-7} \cmidrule(lr){8-11}\cmidrule(lr){12-15} \cmidrule(lr){16-19}
$a$&$b$ &$n$ & non-private&$\epsilon=0.3$&$\epsilon=0.5$&$\epsilon=0.9$&non-private&$\epsilon=0.3$&$\epsilon=0.5$&$\epsilon=0.9$&non-private&$\epsilon=0.3$&$\epsilon=0.5$&$\epsilon=0.9$&non-private&$\epsilon=0.3$&$\epsilon=0.5$&$\epsilon=0.9$\\
\midrule
0.5&0.5 &2500 &$-3.593$ & $-1.906$ & $-2.583$ & $-3.136$ & $-2.978$ & $-1.760$ & $-2.436$ & $-2.879$ & $-3.576$ & $-1.951$ & $-2.613$ & $-3.116$ & $-2.967$ & $-1.755$ & $-2.403$ & $-2.872$ \\
&&5000&$-3.943$ & $-3.023$ & $-3.479$ & $-3.722$ & $-3.260$ & $-2.817$ & $-3.211$ & $-3.395$ & $-3.938$ & $-3.020$ & $-3.477$ & $-3.712$ & $-3.274$ & $-2.819$ & $-3.206$ & $-3.393$ \\
&&10000&$-4.288$ & $-3.814$ & $-4.029$ & $-4.126$ & $-3.562$ & $-3.528$ & $-3.698$ & $-3.756$ & $-4.285$ & $-3.796$ & $-4.017$ & $-4.110$ & $-3.565$ & $-3.525$ & $-3.685$ & $-3.757$ \\
&1&2500&$-3.246$ & $-1.712$ & $-2.313$ & $-2.822$ & $-2.667$ & $-1.559$ & $-2.180$ & $-2.601$ & $-3.229$ & $-1.751$ & $-2.343$ & $-2.801$ & $-2.656$ & $-1.554$ & $-2.148$ & $-2.594$ \\
&&5000&$-3.596$ & $-2.708$ & $-3.151$ & $-3.398$ & $-2.942$ & $-2.532$ & $-2.920$ & $-3.111$ & $-3.592$ & $-2.708$ & $-3.151$ & $-3.390$ & $-2.957$ & $-2.536$ & $-2.921$ & $-3.113$ \\
&&10000&$-3.942$ & $-3.492$ & $-3.719$ & $-3.829$ & $-3.240$ & $-3.246$ & $-3.425$ & $-3.489$ & $-3.938$ & $-3.473$ & $-3.707$ & $-3.816$ & $-3.243$ & $-3.242$ & $-3.412$ & $-3.489$ \\
&2&2500&$-2.900$ & $-1.514$ & $-2.058$ & $-2.519$ & $-2.346$ & $-1.348$ & $-1.924$ & $-2.317$ & $-2.883$ & $-1.549$ & $-2.088$ & $-2.499$ & $-2.335$ & $-1.345$ & $-1.895$ & $-2.313$ \\
&&5000&$-3.250$ & $-2.417$ & $-2.835$ & $-3.064$ & $-2.616$ & $-2.251$ & $-2.625$ & $-2.813$ & $-3.245$ & $-2.422$ & $-2.840$ & $-3.059$ & $-2.631$ & $-2.259$ & $-2.633$ & $-2.821$ \\
&&10000&$-3.595$ & $-3.175$ & $-3.394$ & $-3.500$ & $-2.911$ & $-2.958$ & $-3.139$ & $-3.203$ & $-3.592$ & $-3.159$ & $-3.385$ & $-3.490$ & $-2.915$ & $-2.956$ & $-3.127$ & $-3.202$ \\
\midrule
1&0.5&2500&$-4.286$ & $-2.334$ & $-3.172$ & $-3.732$ & $-3.586$ & $-2.198$ & $-2.981$ & $-3.437$ & $-4.269$ & $-2.394$ & $-3.207$ & $-3.725$ & $-3.574$ & $-2.203$ & $-2.960$ & $-3.449$ \\
&&5000&$-4.636$ & $-3.670$ & $-4.086$ & $-4.287$ & $-3.881$ & $-3.419$ & $-3.792$ & $-3.957$ & $-4.631$ & $-3.673$ & $-4.086$ & $-4.271$ & $-3.896$ & $-3.425$ & $-3.792$ & $-3.957$ \\
&&10000&$-4.982$ & $-4.360$ & $-4.537$ & $-4.626$ & $-4.192$ & $-4.094$ & $-4.245$ & $-4.298$ & $-4.978$ & $-4.344$ & $-4.514$ & $-4.593$ & $-4.195$ & $-4.099$ & $-4.233$ & $-4.299$ \\ 
 &1&2500&$-3.939$ & $-2.152$ & $-2.907$ & $-3.477$ & $-3.283$ & $-2.024$ & $-2.738$ & $-3.191$ & $-3.923$ & $-2.203$ & $-2.941$ & $-3.467$ & $-3.272$ & $-2.028$ & $-2.718$ & $-3.203$ \\
&&5000&$-4.289$ & $-3.383$ & $-3.840$ & $-4.060$ & $-3.572$ & $-3.159$ & $-3.549$ & $-3.720$ & $-4.285$ & $-3.388$ & $-3.842$ & $-4.050$ & $-3.586$ & $-3.169$ & $-3.556$ & $-3.728$ \\
&&10000&$-4.635$ & $-4.156$ & $-4.348$ & $-4.430$ & $-3.878$ & $-3.877$ & $-4.034$ & $-4.078$ & $-4.632$ & $-4.141$ & $-4.334$ & $-4.408$ & $-3.881$ & $-3.887$ & $-4.029$ & $-4.085$ \\
&2&2500&$-3.593$ & $-1.959$ & $-2.620$ & $-3.161$ & $-2.978$ & $-1.840$ & $-2.478$ & $-2.908$ & $-3.576$ & $-2.003$ & $-2.653$ & $-3.150$ & $-2.967$ & $-1.844$ & $-2.461$ & $-2.922$ \\
&&5000&$-3.943$ & $-3.053$ & $-3.521$ & $-3.760$ & $-3.260$ & $-2.872$ & $-3.263$ & $-3.447$ & $-3.938$ & $-3.061$ & $-3.525$ & $-3.753$ & $-3.274$ & $-2.886$ & $-3.279$ & $-3.461$ \\
&&10000&$-4.288$ & $-3.862$ & $-4.078$ & $-4.170$ & $-3.562$ & $-3.616$ & $-3.784$ & $-3.829$ & $-4.285$ & $-3.847$ & $-4.067$ & $-4.156$ & $-3.565$ & $-3.627$ & $-3.782$ & $-3.839$ \\
\midrule
2&0.5&2500&$-4.979$ & $-2.293$ & $-2.923$ & $-3.298$ & $-4.199$ & $-2.176$ & $-2.725$ & $-3.096$ & $-4.962$ & $-2.326$ & $-2.927$ & $-3.312$ & $-4.185$ & $-2.199$ & $-2.741$ & $-3.121$ \\
&&5000&$-5.329$ & $-3.599$ & $-3.886$ & $-3.985$ & $-4.505$ & $-3.376$ & $-3.650$ & $-3.802$ & $-5.324$ & $-3.646$ & $-3.903$ & $-4.004$ & $-4.529$ & $-3.416$ & $-3.693$ & $-3.839$ \\
&&10000&$-5.675$ & $-4.296$ & $-4.382$ & $-4.422$ & $-4.822$ & $-4.098$ & $-4.201$ & $-4.261$ & $-5.671$ & $-4.341$ & $-4.419$ & $-4.461$ & $-4.830$ & $-4.141$ & $-4.249$ & $-4.307$ \\ 
&1&2500&$-4.633$ & $-2.196$ & $-2.774$ & $-3.180$ & $-3.890$ & $-2.054$ & $-2.542$ & $-2.913$ & $-4.616$ & $-2.226$ & $-2.777$ & $-3.192$ & $-3.878$ & $-2.075$ & $-2.558$ & $-2.937$ \\
&&5000&$-4.983$ & $-3.428$ & $-3.777$ & $-3.924$ & $-4.192$ & $-3.150$ & $-3.454$ & $-3.646$ & $-4.978$ & $-3.473$ & $-3.793$ & $-3.942$ & $-4.211$ & $-3.188$ & $-3.496$ & $-3.682$ \\
&&10000&$-5.328$ & $-4.219$ & $-4.329$ & $-4.371$ & $-4.506$ & $-3.915$ & $-4.048$ & $-4.128$ & $-5.325$ & $-4.262$ & $-4.364$ & $-4.408$ & $-4.511$ & $-3.959$ & $-4.096$ & $-4.172$ \\
&2&2500&$-4.286$ & $-2.066$ & $-2.566$ & $-2.970$ & $-3.586$ & $-1.908$ & $-2.324$ & $-2.671$ & $-4.269$ & $-2.092$ & $-2.570$ & $-2.980$ & $-3.574$ & $-1.928$ & $-2.340$ & $-2.694$ \\
&&5000&$-4.636$ & $-3.156$ & $-3.547$ & $-3.781$ & $-3.881$ & $-2.864$ & $-3.180$ & $-3.417$ & $-4.631$ & $-3.197$ & $-3.563$ & $-3.796$ & $-3.896$ & $-2.899$ & $-3.220$ & $-3.451$ \\
&&10000&$-4.982$ & $-4.025$ & $-4.205$ & $-4.278$ & $-4.192$ & $-3.645$ & $-3.821$ & $-3.935$ & $-4.978$ & $-4.067$ & $-4.239$ & $-4.311$ & $-4.195$ & $-3.688$ & $-3.868$ & $-3.977$ \\

\bottomrule
\end{tabular}
\label{tab.low.est_GDP_p20}
\end{sidewaystable}

\end{document}